\newtheorem{definition}{Definition}
\newtheorem{theorem}{Theorem}
\newtheorem{lemma}{Lemma}
\newtheorem{corollary}{Corollary}
\newtheorem{proposition}{Proposition}
\newtheorem{remark}{Remark}
\let\oldnl\nl
\newcommand{\nonl}{\renewcommand{\nl}{\let\nl\oldnl}}
\newcommand{\vct}{\boldsymbol }
\newcommand{\ud}{\mathrm d}
\newcommand{\kl}{\mathrm{KL}}
\newcommand{\SPAN}{\mathrm{span}}
\renewenvironment{proof}{\noindent {\it {Proof.} }}{\hfill $\Box$ \\}
\def\R{\mathbb{R}}
\def\op{\mathrm{op}}
\renewcommand{\hat}{\widehat}
\renewcommand{\tilde}{\widetilde}
\renewcommand{\bar}{\overline}
\definecolor{DSgray}{cmyk}{0,1,0,0}
\begin{document}

\title{Dynamic Assortment Optimization with Changing Contextual Information}

\author[1]{Xi Chen \thanks{Author names listed in alphabetical order.}}
\author[2]{Yining Wang}
\author[3]{Yuan Zhou}
\affil[1]{Stern School of Business, New York University}
\affil[2]{Machine Learning Department, Carnegie Mellon University}
\affil[3]{Computer Science Department, Indiana University at Bloomington}

\date{}


\maketitle

%


\begin{abstract}
In this paper, we study the dynamic assortment optimization problem under a finite selling season of length $T$. At each time period, the seller offers an arriving customer an assortment of substitutable products under a cardinality constraint, and the customer makes the purchase among offered products according to a discrete choice model. Most existing work associates each product with a real-valued fixed mean utility and assumes a multinomial logit choice (MNL) model. In many practical applications,  feature/contextual information of products is readily available. In this paper, we incorporate the feature information by assuming a linear relationship between the mean utility and the feature. In addition, we allow the feature information of products to change over time so that the underlying choice model can also be non-stationary. To solve the dynamic assortment optimization under this changing contextual MNL model, we need to simultaneously learn the underlying unknown coefficient and make the decision on the assortment. To this end, we develop an upper confidence bound (UCB) based  policy and establish the regret bound on the order of $\tilde{O}(d\sqrt{T})$, where $d$ is the dimension of the feature and $\tilde{O}$ suppresses logarithmic dependence. We further establish a lower bound $\Omega(d\sqrt{T}/{K})$, where $K$ is the cardinality constraint of an offered assortment, which is usually small. When $K$ is a constant, our policy is optimal up to logarithmic factors. In the exploitation phase of the UCB algorithm, we need to solve a combinatorial optimization for assortment optimization based on the learned information.  We further develop an approximation algorithm and an efficient greedy heuristic. The effectiveness of the proposed policy is further demonstrated by our numerical studies.

\noindent\textbf{keywords}: Dynamic assortment optimization, regret analysis, contextual information, bandit learning, upper confidence bounds.
\end{abstract}

\section{Introduction}

In operations, an important research problem facing a retailer is the selection of products/advertisements for display. For example, due to the limited shelf space, stocking restrictions, or available slots on a website, the retailer needs to carefully choose an assortment from the set of substitutable products. In an assortment optimization problem, choice model plays an important role  since it characterizes a customer's choice behavior. However,  in many scenarios, customers' choice behavior (e.g., mean utilities of products) is not given as \emph{a priori} and cannot be easily estimated due to the insufficiency of historical data. This motivates the research of dynamic assortment optimization, which has attracted a lot of attentions from the revenue management community in recent years. A typical dynamic assortment optimization problem assumes a finite selling horizon of length $T$ with a large $T$. At each time period, the seller offers an assortment of products (with the size upper bounded by $K$) to an arriving customer. The seller observes the customer's purchase decision, which further provides useful information for learning  utility parameters of the underlying choice model. The multinomial logit model (MNL) has been widely used in dynamic assortment optimization literature, see, e.g., \cite{Caro2007,  Rusmevichientong2010, Saure2013, Agrawal16MNLBandit, Agrawal17Thompson,Chen:18tight,Chen:18near}.

In the age of e-commerce, side information of products is widely available (e.g., brand, color, size, texture, popularity, historical selling information), which is important in characterizing customers' preferences for products. Moreover, some features are not static and could change over time (e.g., popularity score or ratings). The feature/contextual information of products will facilitate accurate assortment decisions that are tailored to customers' preferences. In particular, we assume at each time $t=1,\ldots, T$, each product $j$  is associated with a $d$-dimensional feature vector $v_{tj} \in \mathbb{R}^d$. To incorporate the feature information, following the classical conditional logit model \citep{McFa73}, we assume that the mean utility of product $j$ at time $t$ (denoted by $u_{tj}$) obeys a linear model
\begin{equation}\label{eq:linear}
u_{tj}= v_{tj}^\top\theta_0.
\end{equation}
Here, $\theta_0 \in \mathbb{R}^d$ is the unknown coefficient to be learned. Based on this linear structure of the mean utility, we adopt the MNL model as the underlying choice model (see Section \ref{sec:model} and Eq.~\eqref{eq:context_MNL} for more details).  As compared to the standard MNL, this \emph{changing contextual MNL} model not only incorporates rich contextual information but also allows the utility to evolve over time. The changing utility is an attractive property as it captures the reality in many applications but also brings new technical challenges in learning and decision-making.
For example, in existing works of \citep{Agrawal16MNLBandit} for plain MNL choice models,
 upper confidence bands are constructed by providing the same assortment repetitively to incoming customers
until a no-purchase activity is observed.
Such an approach, however, can no longer be applied to MNL with changing contextual information as the utility parameters of products constantly evolve with time.
To overcome such challenges, we propose a policy that performs optimization at every single time period, without repetitions of assortments in general.

Our model also allows the revenue for each product $j$ to change over time. In particular, we associate the revenue parameter $r_{tj}$ for the product $j$ at time $t$.

This model generalizes the widely adopted (generalized) linear contextual bandit from machine learning literature (see, e.g., \cite{filippi2010parametric, chu11,Abbasi-Yadkori:2011,Agrawal:13,li2017provably} and references therein) in a non-trivial way since the MNL cannot be written in a generalized linear model form (when
an assortment contains more than one product, see Section \ref{sec:related} for more details). It is also worthwhile noting that this model incorporates a personalized MNL model proposed by \cite{Wang:17:person} as a special case, where each product $j$ is associated with a fixed but unknown coefficient $\theta(j)$ and each arriving customer at time $t$ with an observable feature vector $x_t$ (see Section \ref{sec:related} for a more detailed discussion).  On the other hand, we choose to motivate our model from product contextual information since in practice, obtaining products' features is usually easier (and less sensitive) than extracting customers' preferences.

Given this contextual MNL choice model, the key challenge is how to design a policy that simultaneously learns the unknown coefficient $\theta_0$ and sequentially makes the decision on offered assortment. The performance of a dynamic policy is usually measured by the \emph{regret}, which is defined as the gap between the expected revenue generated by the policy and the oracle expected revenue when $\theta_0$ (and thus the mean utilizes) is known as a priori.

The first contribution of the paper is the construction of an upper confidence bound (UCB) policy. Our UCB policy is based on the maximum likelihood estimator (MLE)  and thus is named MLE-UCB. Although UCB  has been a well-known technique for bandit problems, how to adopt this high-level idea to solve a problem with specific structures certainly requires technical innovations (e.g., how to build a confidence interval varies from one problem to another). In particular, our MLE-UCB contains two stages. The first stage is a \emph{pure exploration stage} in which assortments are randomly offered and  a \emph{``pilot MLE''} is computed based on the observed purchase actions. As we will show in Lemma \ref{lem:pilot}, this pilot estimator serves as a good initial estimator of $\theta_0$. After the exploration phase, the MLE-UCB enters the \emph{simultaneous learning and decision-making phase}. We carefully construct an upper confidence bound of the expected revenue when offering an assortment. The added interval is based on the Fisher information matrix of the computed MLE from the previous step.  Then we solve a combinatorial optimization problem to search the assortment that maximizes the upper confidence bound. By observing the customer's purchase action based on the offered assortment, the policy updates the estimated MLE. In this update, we propose to compute a \emph{``local MLE''}, which requires the solution to be close enough to our pilot estimator. The local MLE plays an important role in MLE-UCB policy since it guarantees that the obtained estimator at each time period is also close to the unknown true coefficient $\theta_0$.



Under some mild assumptions on  features and coefficients, we are able to establish a regret bound $\tilde{O}(d \sqrt{T})$, where the $\tilde{O}$ notation suppresses logarithmic dependence on $T$, $K$ (cardinality constraint), and some other problem dependent parameters\footnote{For the ease of presentation in the introduction, we only present the dominating term under the common scenario that the selling horizon $T$ is larger than the dimensionality $d$ and the cardinality constraint $K$. Please refer to  Theorem \ref{thm:mle-ucb} for a more explicit expression of the obtained regret.}.  One remarkable aspect of our regret bound is that our regret has no dependence on the total number of products $N$ (not even in a logarithmic factor).  This makes the result attractive to online applications where $N$ is large (e.g., online advertisement). Moreover, it is also worthwhile noting the dependence of $K$ is only through a logarithmic term.

Our second contribution is to establish the lower bound result $\Omega(d\sqrt{T}/K)$. When the maximum size of an assortment $K$ is small (which  usually holds in practice), this result shows that our policy is almost optimal.

Moreover, at each time period in the exploitation phase, our UCB policy needs to solve a combinatorial optimization problem, which searches for the best assortment (under the cardinality constraint) that  minimizes the upper confidence bound of the expected revenue. Given the complicated structure of the upper confidence bound, there is no simple solution for this combinatorial problem. When $K$ is small and $N$ is not too large, one can directly search over all the possible sets with the size less than or equal to $K$. In addition to the solution of solving the combinatorial optimization exactly, {the third contribution of the work is to  provide an approximation algorithm based on dynamic programming that runs in polynomial time with respect to $N$, $K$, $T$.}  Although the proposed approximation algorithm has a theoretical guarantee, it is still not efficient for dealing with large-scale applications. To this end, we further describe a computationally efficient greedy heuristic for solving this combinatorial optimization problem.
The heuristic algorithm is based on the idea of local search by greedy swapping, with more details described in Sec.~\ref{subsec:heuristic}.



\subsection{Related work}
\label{sec:related}

Due to the popularity of data-driven revenue management, dynamic assortment optimization, which adaptively learns unknown customers' choice behavior, has received increasing attention in the past few years. Motivated by fast-fashion retailing, the work by \cite{Caro2007} first studied dynamic assortment optimization problem, but it makes a strong assumption that the demands for different product are independent. Recent works by \cite{Rusmevichientong2010, Saure2013, Agrawal16MNLBandit, Agrawal17Thompson,Chen:18tight,Chen:18near} incorporated  MNL models into dynamic assortment optimization and formulated the problem into a online regret minimization problem. In particular, for capacitated MNL,  \cite{Agrawal16MNLBandit} and \cite{Agrawal17Thompson} proposed  UCB and Thompson sampling techniques  and established the regret bound $\tilde{O}(\sqrt{NT})$ (when $T  \gg N^2$). \cite{Chen:18tight} further established a matching lower bound of $\Omega(\sqrt{NT})$. It is interesting to compare our regret to the bound for the standard MNL case. When the total number of products $N$ is much larger than $d$ (i.e., $N > d^2$), by incorporating the contextual information, the regret reduces from $\tilde{O}(\sqrt{NT})$ to  $\tilde{O}(d \sqrt{T})$. The latter one only depends on $d$ and is completely independent of the total number of products $N$, which also demonstrates the usefulness of the contextual  information. \cite{Chen:18dynamic} further studied the dynamic assortment optimization under nested logit models. We also note that to highlight our key idea and focus on the balance between learning of $\theta_0$ and revenue maximization, we study the stylized dynamic assortment optimization problems following the existing literature \citep{Rusmevichientong2010, Saure2013, Agrawal16MNLBandit, Agrawal17Thompson}, which ignore operations considerations such as price decisions and inventory replenishment.

There is another line of recent research on investigating personalized assortment optimization. {By incorporating the feature information of each arriving customer,  both the static and dynamic assortment optimization problems are studied in \cite{Chen2015} and \cite{Wang:17:person}, respectively. It is worthwhile noting that although we do not approach our work from a personalized perspective\footnote{This is because in some applications, the product features are easier to obtain by the seller as compared to customer features.}, the \emph{personalized MNL} considered in \cite{Wang:17:person} can be viewed as a special of our model.
	
	In particular, the personalized MNL assumes that each product $j$ is associated with an unknown coefficient $\theta(j) \in \mathbb{R}^D$. When a customer arrives at time $t$ with the \emph{observed feature $x_t$}, the utility of product $j$ at time $t$ is $u_{tj}=x_t^\top \theta(j)$. Now we explain how to specialize our model to obtain the personalized MNL. Let us define $\theta_0:=\{\theta(1), \ldots, \theta(N)\} \in \mathbb{R}^{DN}$ and the feature vector $v_{tj}:=(0, \ldots, x_{t}, \ldots, 0) \in \mathbb{R}^{DN}$, which is a concatenation of $N$ $D$-dimensional vectors with the $j$-th vector being $x_t$ and all other vectors being 0. Then according to our linear model in Eq.~\eqref{eq:linear}, we have $u_{tj}= v_{tj}^\top\theta_0= x_t^\top \theta(j)$, which recovers the personalized MNL model. Using our regret bound $\tilde{O}(d\sqrt{T})$ with $d=DN$ as the dimensionality of $\theta_0$, we directly obtain the regret  $\tilde{O}(DN \sqrt{T})$ for the dynamic assortment optimization under the personalized MNL. As compared to the Bayesian regret bound $\tilde{O}(DN\sqrt{KT})$ in \cite{Wang:17:person} (see Theorem 3.3. therein), our approach still saves a factor of $\sqrt{K}$.}
We also remark that our results require a slightly stronger assumption on the contextual information vectors $\{v_{tj}\}$ compared to \cite{Wang:17:person}, which allows customer feature vectors $\{x_t\}$ to be  adversarially chosen. More specifically, {a stochastic assumption} is imposed on $\{v_{tj}\}$ only during the \emph{pure exploration} phase of our proposed policy. {After this pure exploration phase, the feature vectors $\{v_{tj}\}$ can also be adversarially chosen.}
We refer the readers to Sec.~\ref{subsec:regret} for further details.

In addition, the developed techniques in our work and \cite{Wang:17:person} are different, Our policy is based on UCB, while the policy in \cite{Wang:17:person} is based on Thompson sampling. Furthermore, other research studies personalized assortment optimization in an adversarial setting rather than stochastic setting.  For example, \cite{Golrezaei2014,Chen:16recom} assumed that each customer's choice behavior is known, but that the customers' arriving sequence (or customers' types) can be adversarially chosen and took the inventory level into consideration. Since the arriving sequence can be arbitrary, there is no learning component in the problem and both \cite{Golrezaei2014} and \cite{Chen:16recom} adopted the competitive ratio as the performance evaluation metric.

Another field of related research is the \emph{contextual bandit} literature, in which the  linear contextual bandit has been widely studied as a special case (see, e.g., \cite{Dani08stochastic,Rusmevichientong:10:bandit,chu11,Abbasi-Yadkori:2011,Agrawal:13} and references therein). Some recent work extends the linear contextual bandit to  \emph{generalized linear bandit} \citep{filippi2010parametric,li2017provably}, which assumes a generalized linear reward structure. In particular, the reward $r$ of pulling an arm given the observed feature vector of this arm $x$ is modeled by
\begin{equation}
\mathbb E[r|x] = \sigma(x^\top\theta_0),
\label{eq:glm}
\end{equation}
for an unknown linear model $\theta_0$ and a known link function $\sigma:\mathbb R\to\mathbb R$. For example, for a linear contextual bandit, $\sigma$ is the identity mapping, i.e., $\mathbb E[r|x]=\sigma(x^\top\theta_0)$. For the logistic contextual bandit, we have $r \in \{0,1\}$ and $\Pr(r=1|x)=\frac{\exp(x^\top\theta_0)}{1+\exp(x^\top\theta_0)}$.  { In a standard generalized linear bandit problem (see, e.g., \cite{li2017provably}) with $N$ arms, it is assumed that a context vector $v_{tj}$ is revealed at time $t$ for each arm $j \in [N]$. Given a selected arm $i_t \in [N]$ at time $t$, the expected reward follows Eq.~\eqref{eq:glm}, i.e., $\mathbb{E}[r_t | v_{t,i_t} ]=\sigma(v_{t,i_t}^\top\theta_0)$. At first glance, our contextual MNL model is a natural extension of the generalized linear bandit to the MNL choice model. However, when the size of an assortment $K\geq 2$, the contextual MNL \emph{cannot} be written in the form of Eq.~(\ref{eq:glm}) and the denominator in the choice probability (see Eq.~\eqref{eq:context_MNL} in the next section) has a more complicated structure. Therefore, our problem is technically \emph{not} a generalized linear model and is therefore more challenging.
Moreover, in contextual bandit problems, only one arm is selected by the decision-maker at each time period. In contrast, each action in an assortment optimization problem involves a set of items, which makes the action space more complicated.}

\subsection{Notations and paper organization}

Throughout the paper, we adopt the standard asymptotic notations. In particular, we use $f(\cdot) \lesssim g(\cdot)$ to denote that $f(\cdot) = O(g(\cdot))$. Similarly, by $f(\cdot) \gtrsim g(\cdot)$, we denote $f(\cdot) = \Omega(g(\cdot))$. We also use $f(\cdot) \asymp g(\cdot)$ for $f(\cdot) = \Theta(g(\cdot))$. Throughout this paper, we will use $C_0, C_1, C_2，\ldots$ to denote universal constants. For a vector $v$ and a matrix $M$, we will use $\|v\|_2$ and $\|M\|_{\op}$ to denote the vector $\ell_2$-norm and the matrix spectral norm (i.e., the maximum singular value), respectively. Moreover, for a real-valued symmetric matrix $M$, we denote the maximum eigenvalue and the minimum eigenvalue of $M$ by $\lambda_{\max}(M)$ and $\lambda_{\min}(M)$, respectively, and define $\|v\|_M^2:=v^T M v$ for any given vector $v$. For a given integer $N$, we denote the set $\{1,\ldots, N\}$ by  $[N]$. 

The rest of the paper is organized as follows. In Section \ref{sec:model}, we introduce the mathematical formulation of our models and define the regret. In Section \ref{sec:UCB}, we describe the proposed MLE-UCB policy and provide the regret analysis. The lower bound result is provided in Section \ref{sec:lower}. In Section \ref{sec:comb}, we investigate the combinatorial optimization problem in MLE-UCB and propose the approximation algorithm and greedy heuristic. The multivariate case of the approximation algorithm is relegated to the appendix. In Section \ref{sec:numerical}, we provide the numerical studies. The conclusion and future directions are discussed in Section \ref{sec:conclusion}. Some technical proofs are provided in the online supplementary material.

\section{The problem setup}
\label{sec:model}

There are $N$ items, conveniently labeled as $1,2,\cdots,N$.
At each time $t$, a set of time-sensitive ``feature vectors'' $v_{t1}, v_{t2}, \cdots, v_{tN}\in\mathbb R^d$
and revenues $r_{t1},\cdots,r_{tN}\in[0,1]$
are observed, reflecting time-varying changes of items' revenues and customers' preferences.
A retailer, based on the features $\{v_{ti}\}_{i=1}^N$ and previous purchasing actions, picks an assortment $S_t\subseteq[N]$ under the cardinality constraint $|S_t|\leq K$ to present
to an incoming customer;
the retailer then observes a purchasing action $i_t\in S_t\cup\{0\}$ and collects the associated revenue $r_{i_t}$ of the purchased item
(if $i_t=0$ then no item is purchased and zero revenue is collected).

We use an MNL model with features to characterize how a customer makes choices.
Let $\theta_0\in\mathbb R^d$ be an \emph{unknown} time-invariant coefficient.
For any $S\subseteq[N]$, the choice model $p_{\theta_0,t}(\cdot|S)$ is specified as (let $r_0=0$ and $v_{t0}=0$)
\begin{equation}
p_{\theta_0,t}(j|S) = \frac{\exp\{v_{tj}^\top\theta_0\}}{1+\sum_{k\in S} \exp\{v_{tk}^\top\theta_0\}} \;\;\;\;\;\;\forall j\in S\cup\{0\}.
\label{eq:context_MNL}
\end{equation}
For simplicity, in the rest of the paper we use $p_{\theta,t}(\cdot|S)$ to denote the law of the purchased item  $i_t$ conditioned on given assortment $S$
at time $t$, parameterized by the coefficient $\theta\in\mathbb R^d$.
The expected revenue $R_t(S)$ of assortment $S\subseteq[N]$ at time $t$ is then given by
\begin{equation}\label{eq:rev}
R_t(S) := \mathbb E_{\theta_0,t}[r_{tj}|S] = \frac{\sum_{j\in S}r_{tj}\exp\{v_{tj}^\top\theta_0\}}{1+\sum_{j\in S}\exp\{v_{tj}^\top \theta_0\}}.
\end{equation}
{Note that throughout the paper,  we use $\mathbb E_{\theta_0,t}[\cdot |S]$  to denote the expectation with respect to the choice probabilities $p_{\theta_0,t}(j|S)$ defined in Eq.~\eqref{eq:context_MNL}.}

Our objective is to design policy $\pi$ such that the regret
\begin{equation}\label{eq:regret}
\mathrm{Regret}(\{S_t\}_{t=1}^T) =\mathbb E^\pi\sum_{t=1}^T R_t(S_t^*) - R_t(S_t) \;\;\;\;\;\text{where}\;\;S_t^*=\arg\max_{S\subseteq[N], |S|\leq K} R_t(S)
\end{equation}
is minimized. {Here, $S_t^*$ is an optimal assortment chosen when the full knowledge of choice probabilities is available (i.e., $\theta_0$ is known).}

\SetAlCapSkip{1em}
\section{An MLE-UCB policy and its regret}
\label{sec:UCB}
\begin{algorithm}[t]
	\KwInput{Number of pure explorations $T_0$, constraint radius $\tau$.}
	\KwOutput{Assortment selections $\{S_t\}_{t=1}^T\subseteq[N]$ satisfying $|S_t|\leq K$.}
	
	{{Pure exploration}}: for $t=1,\cdots,T_0$, pick $S_t=\{\ell_t\}$ for a single product $\ell_t$ sampled uniformly at random from $\{1,\cdots,N\}$
	and record purchasing actions $(i_1,\cdots,i_{T_0})$\;
	
	Compute a pilot estimator using global MLE: $\theta^*\in\arg\max_{\theta\in\mathbb R^d} \sum_{t'=1}^{T_0} \log p_{\theta,t}(i_{t'}|S_{t'})$\;
	
	\For{$t=T_0+1$ to $T$}{
		Observe revenue parameters $\{r_{tj}\}_{j=1}^N$ and preference features $\{v_{tj}\}_{j=1}^N$ at time $t$\;
		Compute local MLE $\hat\theta_{t-1} \in \arg\max_{\|\theta-\theta^*\|_2\leq\tau} \sum_{t'=1}^{t-1}\log p_{\theta,t}(i_{t'}|S_{t'})$\;
		For every assortment $S\subseteq[N]$, $|S|\leq K$, compute its upper confidence bound
		\begin{align*}
		&\bar R_t(S) :=\mathbb E_{\hat\theta_{t-1},t}[r_{tj}|S] + \min\left\{1, \omega\sqrt{\|\hat I_{t-1}^{-1/2}(\hat\theta_{t-1}) \hat M_t(\hat\theta_{t-1}|S) \hat I_{t-1}^{-1/2}(\hat\theta_{t-1})\|_\op}\right\};\\
		&\hat I_{t-1}(\theta) := \sum_{t'=1}^{t-1}\hat M_{t'}(\theta|S_{t'}); \;\; \hat M_t(\theta|S) := \mathbb E_{\theta,t}[v_{tj}v_{tj}^\top|S] - \{\mathbb E_{\theta,t}[v_{tj}|S]\}\{\mathbb E_{\theta,t}[v_{tj}|S]\}^\top;\\
		&\omega \asymp \sqrt{d\log(\rho\nu TK)};
		\end{align*}\label{alg:step6}
		
		Pick $S_t\in\arg\max_{S\subseteq[N], |S|\leq K} \bar R_t(S)$ and observe purchasing action $i_t\in S_t\cup\{0\}$\;
	}
	
	Remark: the expectations admit the following closed-form expressions:\newline
	$\mathbb E_{\theta,t}[r_{tj}|S] = \sum_{j\in S} p_{\theta,t}(j|S)r_{tj} =\frac{\sum_{j\in S}r_{tj}\exp\{v_{tj}^\top\theta\}}{1+\sum_{j\in S}\exp\{v_{tj}^\top\theta\}}$;\newline
	$\mathbb E_{\theta,t}[v_{tj}|S] = \sum_{j\in S}p_{\theta,t}(j|S)v_{tj} = \frac{\sum_{j\in S}v_{tj}\exp\{v_{tj}^\top\theta\}}{1+\sum_{j\in S}\exp\{v_{tj}^\top\theta\}}$;\newline
	$\mathbb E_{\theta,t}[v_{tj}v_{tj}^\top|S] = \sum_{j\in S}p_{\theta,t}(j|S)v_{tj}v_{tj}^\top = \frac{\sum_{j\in S}v_{tj}v_{tj}^\top\exp\{v_{tj}^\top\theta\}}{1+\sum_{j\in S}\exp\{v_{tj}^\top\theta\}}$.

	\caption{The MLE-UCB policy for dynamic assortment optimization with changing features}
	\label{alg:mle-ucb}
\end{algorithm}

We propose an MLE-UCB policy, described in Algorithm \ref{alg:mle-ucb}.

The policy can be roughly divided into two phases.
In the first \emph{pure exploration} phase,
the policy selects assortments uniformly at random, consisting of only \emph{one item}.
The objective of the pure exploration is to establish a ``pilot'' estimator of the unknown coefficient $\theta_0$, i.e., a good initial estimator for $\theta_0$.
For the simplicity of the analysis, we choose one item for each assortment in this phase, which facilitates us to adapt existing analysis in \citep{filippi2010parametric,li2017provably}
as the MNL-logit choice model reduces to a \emph{generalized linear model} when only one item is present in the assortment.
In the second phase, we use a UCB-type approach that selects $S_t$ as the assortment maximizing an \emph{upper bound} $\overline R_t(S_t)$
of the expected revenue $R_t(S_t)$.
Such upper bounds are built using a \emph{local Maximum Likelihood Estimation (MLE)} of $\theta_0$. In particular, in Step 5, instead of computing an MLE, we compute a local MLE, where the estimator $\hat{\theta}_{t-1}$ lies in a ball centered at the pilot estimator $\theta^*$ with a radius $\tau$.
This localization also simplifies the technical analysis based on Taylor expansion,
which benefits from the constraint that $\hat\theta_{t-1}$ is not too far away from $\theta^*$.

To construct the confidence bound, we introduce the matrices $\hat M_t(\hat\theta_{t-1}|S)$ and $\hat I_{t-1}(\hat\theta_{t-1})$ in Step 6 of Algorithm \ref{alg:mle-ucb}, which are empirical estimates of the \emph{Fisher's information} matrices $-\mathbb E[\nabla_\theta^2\log p(\cdot|\theta)]$
corresponding to the MNL choice model $p(\cdot|S_t)$. The population version of the Fisher's information matrices are presented in Eq.~\eqref{eq：pop_fisher} in Sec.~\ref{sec:ana_local}.
These quantities play an essential role in classical statistical analysis of maximum likelihood estimators (see, e.g., \citep{van1998asymptotic}).

The proposed MLE-UCB policy has three hyper-parameters: the coefficient $\omega>0$ that controls the lengths of confidence intervals of $R_t(S)$,
the number of pure exploration iterations $T_0$, and the radius $\tau_0$ in the local MLE formulation.
While theoretical values of $\omega,T_0$ and $\tau$ are given in Theorem \ref{thm:mle-ucb}, which potentially depend on several unknown problem parameters,
in practice we recommend the usage of $T_0=\max\{d\log T, T^{1/4}\}$, $\omega=\sqrt{d\log T}$ and $\tau=1/K$.


In the rest of this section, we give a regret analysis that shows an $\tilde O(d\sqrt{T})$ upper bound on the regret of the MLE-UCB policy.
Additionally, we prove a lower bound of $\tilde O(d\sqrt{T}/K)$ in Sec.~\ref{sec:negative} and show how the combinatorial optimization in Step 7 can be approximately computed efficiently
in Sec.~\ref{sec:comb}.

\subsection{Regret analysis}\label{subsec:regret}

To establish rigorous regret upper bounds on Algorithm \ref{alg:mle-ucb}, we impose the following assumptions:
\begin{enumerate}[leftmargin=0.5in]
	\item[(A1)] There exists a constant $\nu$ such that $\|v_{tj}\|_2\leq\nu$ for all $t$ and $j$.
	Moreover, for all $t\leq T_0$ and $j \in [N]$, $v_{tj}$ are i.i.d.~generated from an unknown distribution with the density $\mu$ satisfying that
	$\lambda_{\min}(\mathbb E_\mu vv^\top) \geq \lambda_0$ for some constant $\lambda_0>0$;
	
	\item[(A2)] There exists a constant $\rho<\infty$ such that for all $t\in[T]$ and $S\subseteq[N]$ {with $|S|\leq K$}, $\frac{p_{\theta_0,t}(j|S)}{p_{\theta_0,t}(j'|S)} \leq \rho$
	for all $j,j'\in S\cup\{0\}$.
\end{enumerate}


The item (A1) assumes that the contextual information vectors $\{v_{tj}\}$  in the pure-exploration
phase with $t \leq T_0$ are \emph{randomly generated} from a non-degenerate density.
It also places a standard boundedness condition on $\{v_{tj}\}$ for all time periods $t$.
Note that after the pure-exploration phase, we allow the contextual vectors $\{v_{tj}\}$ to be adversarially chosen,
only subject to boundedness conditions.
(A2) additionally assumes a bounded ratio between the probability of choosing any two different items in an arbitrary assortment set. {We remark that if $\|\theta_0\|_2\leq C$, then the boundedness assumption in (A1) implies (A2) with $\rho\leq e^{2 \max\{1,C\nu\}}$.}

We are now ready to state our main result that upper bounds the worst-case accumulated regret of the proposed MLE-UCB policy
in Algorithm \ref{alg:mle-ucb}.

\begin{theorem}
	Suppose that $T_0\asymp \max\{\nu^2d\log T/\lambda_0^2, \rho^2(d+\log T)/(\tau^2\lambda_0)\}$ and $\tau\asymp 1/\sqrt{\rho^2\nu^2K^2}$ in Algorithm \ref{alg:mle-ucb}, then the
	regret of the MLE-UCB policy is upper bounded by
	\begin{equation}\label{eq:thm-main}
	C_1\left[ d\sqrt{T}\cdot \log(\lambda_0^{-1}\rho\nu TK)+ {d^2\lambda_0^{-2}\rho^4\nu^2 K^2\log T}\right] + C_2,
	\end{equation}
	where $C_1, C_2>0$ are universal constants.
	\label{thm:mle-ucb}
\end{theorem}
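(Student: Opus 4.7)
The plan is to split the horizon into the $T_0$ pure-exploration rounds and the $T-T_0$ UCB rounds. The exploration rounds contribute at most $T_0 = O(d^2\lambda_0^{-2}\rho^4\nu^2 K^2 \log T)$ to the regret trivially (per-round revenue lies in $[0,1]$), which accounts exactly for the second term in \eqref{eq:thm-main}; all the work is in the UCB phase.

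First I would control the pilot estimator, showing that with probability $1-T^{-c}$ we have $\|\theta^*-\theta_0\|_2\leq\tau$. During pure exploration only one item is offered, so the MNL likelihood reduces to a logistic GLM with i.i.d.\ covariates $v_{t,\ell_t}\sim\mu$; assumption (A1) with $\lambda_{\min}(\E_\mu vv^\top)\geq\lambda_0$, combined with (A2) lower bounding $p_{\theta_0,t}(1-p_{\theta_0,t})$ by $\rho^{-2}$, forces the empirical Fisher information to have minimum eigenvalue $\gtrsim T_0\lambda_0/\rho^2$. A standard GLM MLE argument (in the spirit of \cite{filippi2010parametric,li2017provably}) then yields $\|\theta^*-\theta_0\|_2\lesssim\rho\sqrt{(d+\log T)/(T_0\lambda_0)}\leq\tau$ for the stated $T_0$. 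In particular $\theta_0$ is feasible for the local MLE at every $t>T_0$.

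The central technical step is to prove the self-normalized concentration
\begin{equation*}
\|\hat\theta_{t-1}-\theta_0\|_{\hat I_{t-1}(\hat\theta_{t-1})}^2 \;\lesssim\; d\log(\rho\nu TK) \;\asymp\; \omega^2 \qquad \forall\, t>T_0.
\end{equation*}
The strategy is to Taylor-expand the log-likelihood around $\theta_0$. The first-order term is $\sum_{t'\leq t-1}\nabla_\theta\log p_{\theta_0,t'}(i_{t'}|S_{t'})$; the identity $\nabla_\theta\log p_{\theta,t}(j|S)=v_{tj}-\E_{\theta,t}[v_{tj}|S]$ shows this is a bounded martingale-difference sum whose conditional covariance is exactly $\hat M_{t'}(\theta_0|S_{t'})$, so by a self-normalized tail bound (Abbasi-Yadkori \emph{et al.}) its norm under $\hat I_{t-1}^{-1}(\theta_0)$ is $O(\sqrt{d\log T})$. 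The second-order term is $-\int_0^1 (1-s)\,(\hat\theta_{t-1}-\theta_0)^\top H(\theta_0+s(\hat\theta_{t-1}-\theta_0))(\hat\theta_{t-1}-\theta_0)\,ds$; since both $\theta_0$ and $\hat\theta_{t-1}$ lie in the $\tau$-ball around $\theta^*$ with $\tau\asymp 1/(\rho\nu K)$, the boundedness in (A1) keeps $|v_{tj}^\top(\theta-\theta_0)|$ uniformly bounded, and a direct computation shows the MNL Hessian $H(\theta)=\sum_{t'}\hat M_{t'}(\theta|S_{t'})$ changes by at most a constant multiplicative factor inside this ball. The optimality condition for $\hat\theta_{t-1}$ then converts the score-concentration bound into the displayed $\hat I_{t-1}$-norm bound. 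This is the hardest piece because the MNL for $K\geq 2$ is not a GLM, so existing GLM analyses do not transfer off-the-shelf; one must handle the vector score and matrix Hessian for the full multinomial likelihood, and the localization by $\tau$ is what lets a single Hessian estimate stand in for the whole integral.

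The remainder is routine. For UCB validity, expanding $R_t^\theta(S)=\E_{\theta,t}[r_{tj}|S]$ gives $\nabla_\theta R_t^\theta(S)=\cov_{\theta,t}(r_{tj},v_{tj}\mid S)$, whence Cauchy--Schwarz (using $r\in[0,1]$ and $\cov(v)=\hat M_t(\theta|S)$) yields
\begin{equation*}
|R_t(S)-\E_{\hat\theta_{t-1},t}[r_{tj}|S]|\leq\|\hat\theta_{t-1}-\theta_0\|_{\hat M_t(\hat\theta_{t-1}|S)}\leq\sqrt{\|\hat I_{t-1}^{-1/2}\hat M_t\hat I_{t-1}^{-1/2}\|_\op}\cdot\|\hat\theta_{t-1}-\theta_0\|_{\hat I_{t-1}},
\end{equation*}
matching the bonus in Algorithm \ref{alg:mle-ucb} up to a constant and giving $\bar R_t(S)\geq R_t(S)$ with high probability. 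Standard UCB manipulation then bounds the per-round regret by $2\omega\sqrt{\|\hat I_{t-1}^{-1/2}\hat M_t(\hat\theta_{t-1}|S_t)\hat I_{t-1}^{-1/2}\|_\op}$. Summing via the elliptical potential identity $\det(\hat I_t)/\det(\hat I_{t-1})=\det(I+\hat I_{t-1}^{-1/2}\hat M_t\hat I_{t-1}^{-1/2})\geq 1+\|\hat I_{t-1}^{-1/2}\hat M_t\hat I_{t-1}^{-1/2}\|_\op$ telescopes to $\sum_t\min\{1,\|\cdot\|_\op\}\lesssim\log\det(\hat I_T/\hat I_{T_0})\lesssim d\log(\rho\nu T/\lambda_0)$, and one Cauchy--Schwarz pass on the square-root delivers $\sum_t\sqrt{\min\{1,\|\cdot\|_\op\}}\lesssim\sqrt{Td\log T}$. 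Multiplying by $\omega\asymp\sqrt{d\log(\rho\nu TK)}$ and adding the $T_0$ exploration term gives \eqref{eq:thm-main}.
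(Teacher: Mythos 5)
Your skeleton matches the paper's: a pilot GLM estimator from the exploration phase (the paper's Lemma \ref{lem:pilot} and Corollary \ref{cor:tau}), a Mahalanobis-norm bound on the local MLE, UCB validity via a finite-sample Delta method with the gradient identity $\nabla_\theta \mathbb E_{\theta,t}[r_{tj}|S]=\mathrm{cov}_{\theta,t}(r_{tj},v_{tj}|S)$ (Lemma \ref{lem:ucb}), and the elliptical potential summation (Lemma \ref{lem:elliptical}); your accounting of $T_0$ against the second term of \eqref{eq:thm-main} is also exactly the paper's. The genuine gap is in your central step. The self-normalized theorem of Abbasi-Yadkori et al.\ applies to martingales of the form $\sum_{t'}\eta_{t'}x_{t'}$ with \emph{scalar} conditionally sub-Gaussian noise $\eta_{t'}$ and normalization $\lambda I+\sum_{t'}x_{t'}x_{t'}^\top$; your score increments $v_{t',i_{t'}}-\mathbb E_{\theta_0,t'}[v_{t'j}|S_{t'}]$ are genuinely vector-valued, and there is no off-the-shelf bound giving $\big\|\sum_{t'}(v_{t',i_{t'}}-\mathbb E_{\theta_0,t'}[v_{t'j}|S_{t'}])\big\|^2_{I_{t-1}^{-1}(\theta_0)}\lesssim d\log(\rho\nu TK)$ with the normalization being the sum of \emph{conditional covariances}. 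The natural repair---decompose the score as $\sum_{j\in S_{t'}}(\mathbf{1}\{i_{t'}=j\}-p_{\theta_0,t'}(j|S_{t'}))v_{t'j}$ and self-normalize by $\sum_{t'}\sum_{j\in S_{t'}}v_{t'j}v_{t'j}^\top$---normalizes by the wrong matrix: under (A2) that Gram matrix can exceed the Fisher information $I_{t-1}(\theta_0)=\sum_{t'}M_{t'}(\theta_0)$ (whose weights $p_{\theta_0,t'}(j|S_{t'})$ can be as small as $1/\rho K$) by factors of order $\rho K$, so converting back to the Fisher norm inflates $\omega$ polynomially in $K$ and degrades the leading regret term to something like $d\sqrt{\rho KT}$, strictly weaker than the claimed bound whose $K$-dependence in the main term is only logarithmic.

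The paper avoids this precisely by not working with the score at all: it compares the population and empirical log-likelihood-ratio functionals $F_t(\theta)$ and $\hat F_t(\theta)$, proves the variance bound $\mathcal V^2\leq 8|F_t(\theta)|$ (Lemma \ref{lem:MV}) so that Freedman's inequality is self-normalized by the KL divergence itself, takes a union bound over an $\varepsilon$-net of the $\tau$-ball (Lemma \ref{lem:hatF-uniform}), and concludes by contradiction using the local quadratic lower bound $-F_t(\hat\theta_t)\geq\frac{1}{4}(\hat\theta_t-\theta_0)^\top I_t(\theta_0)(\hat\theta_t-\theta_0)$. That quadratic bound in turn rests on the spectral stability $I_t(\bar\theta_t)\succeq\frac{1}{2}I_t(\theta_0)$ (Lemma \ref{lem:Itheta-Itheta0}), which is not the ``direct computation'' you assert but a whitening argument crucially using $p_{\theta_0,t}(j|S)\geq 1/\rho K$ and $\tau\lesssim 1/(\rho^{1/2}\nu K)$---this is where the $\log K$ (rather than $\mathrm{poly}(K)$) dependence is won. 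Two smaller points: since $\hat\theta_{t-1}$ is a \emph{constrained} maximizer over the ball $\|\theta-\theta^*\|_2\leq\tau$, the first-order condition is the variational inequality $\langle\nabla,\theta_0-\hat\theta_{t-1}\rangle\leq 0$ (the paper sidesteps this entirely by using only $\hat F_t(\hat\theta_t)\leq 0\leq F_t(\hat\theta_t)$); and your elliptical-potential bookkeeping with a single Cauchy--Schwarz pass is fine and equivalent to the paper's. If you replace the cited tail bound with a net-plus-Freedman argument of the paper's type (or prove a covariance-normalized vector self-normalized inequality from scratch), the rest of your outline goes through.
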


In addition to universal constants, the regret upper bound established in Theorem \ref{thm:mle-ucb} has two terms.
The first term, $d\sqrt{T}\cdot \log(\lambda_0^{-1}\rho\nu TK)$, is the main regret term that scales as $\tilde O(d\sqrt{T})$ dropping logarithmic dependency.
The second $d^2\lambda_0^{-2}\rho^4\nu^2K^2\log T$ term is a minor term,  because it only scales logarithmically with the time horizon $T$.
One remarkable aspect of Theorem \ref{thm:mle-ucb} is the fact that the regret upper bound has \emph{no} dependency on the total number of items $N$ (even in a logarithmic term).
This is an attractive property of the proposed policy, which allows $N$ to be very large, even exponentially large in $d$ and $K$.


\subsection{Proof sketch of Theorem \ref{thm:mle-ucb}}

We provide a proof sketch of Theorem \ref{thm:mle-ucb} in this section. The proofs of technical lemmas are relegated to the online supplement.

The proof is divided into four steps.
In the first step, we analyze the pilot estimator $\theta^*$ obtained from the pure exploration phase of Algorithm \ref{alg:mle-ucb},
and show as a corollary that the true model $\theta_0$ is feasible to all subsequent local MLE formulations with high probability (see Corollary \ref{cor:tau}).
In the second step, we use an $\varepsilon$-net argument to analyze the estimation error of the local MLE. 
Afterwards, we show in the third step that an upper bound on the estimation error $\hat\theta_{t-1}-\theta_0$ implies
an upper bound on the estimation error of the expected revenue $R_t(S)$, hence showing that $\bar R_t(S)$ are valid upper confidence bounds.
Finally, we apply the \emph{elliptical potential lemma}, which also plays a key role in linear stochastic bandit and its variants,
to complete our proof.


\subsubsection{Analysis of pure exploration and the pilot estimator}

Our first step is to establish an upper bound on the estimation error $\|\theta^*-\theta_0\|_2$ of the pilot estimator $\theta^*$,
built using pure exploration data.
It should be noted that in the pure exploration phase ($t\in\{1,\cdots,T_0\}$),
the assortments $\{S_t\}_{t=1}^{T_0}$ only consist of one item. Therefore the observation model reduces to
a standard \emph{generalized linear model} with the sigmoid function $\sigma(x)=1/(1+e^{-x})= e^x/(1+e^x)$ as the link function, which is essentially
a logistic  regression model of observing 1 if the customer makes a purchase.

Because the choice model in the pure exploration phase reduces to a generalized linear model, we can cite existing works to upper bound the error $\|\theta^*-\theta_0\|_2$.
In particular, the following lemma is cited from \citep[Eq.~(18)]{li2017provably}, adapted to our model and parameter settings.
The details on how to adapt the result from \citep{li2017provably} provided in the supplementary material.
\begin{lemma}
	With probability $1-\delta$ it holds that
	\begin{equation}
	\|\theta^*-\theta_0\|_2 \leq \frac{2}{\kappa}\sqrt{\frac{d+\log(1/\delta)}{\lambda_{\min}(V)}}
	\;\;\;\;\text{where}\;\;\kappa = \frac{1}{2e(1+\rho)}\;\;\text{and}\;\; V = \sum_{t=1}^{T_0}v_{t,i_t}v_{t,i_t}^\top.
	\end{equation}
	\label{lem:pilot}
\end{lemma}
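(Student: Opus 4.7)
The plan is to recognize that the pure exploration phase reduces to a standard logistic generalized linear model and then to invoke the MLE estimation bound of \cite{li2017provably}. Since $|S_t|=1$ for every $t\leq T_0$, with $S_t=\{\ell_t\}$, the MNL probability specializes to $\Pr[i_t=\ell_t\mid S_t] = \sigma(v_{t\ell_t}^\top \theta_0)$ where $\sigma(u)=e^u/(1+e^u)$, so the binary indicator $Y_t := \mathbf{1}[i_t = \ell_t]$ is a Bernoulli GLM observation with feature $x_t:=v_{t\ell_t}$, link $\sigma$, and unknown parameter $\theta_0$. Under this reduction the ``global MLE'' objective in line 2 of Algorithm \ref{alg:mle-ucb} is precisely the log-likelihood $\sum_{t=1}^{T_0}\bigl[Y_t v_{t\ell_t}^\top\theta-\log(1+e^{v_{t\ell_t}^\top\theta})\bigr]$ of ordinary logistic regression.

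To invoke the GLM MLE bound, two structural hypotheses must be verified. First, (A1) supplies boundedness $\|v_{t\ell_t}\|_2\leq\nu$ and a positive-definite population design. Second, one needs a uniform lower bound $\sigma'(\xi)\geq\kappa$ along the Taylor segment between $v_{t\ell_t}^\top\theta_0$ and $v_{t\ell_t}^\top\theta^*$. Specializing (A2) to $|S|=1$ yields $e^{v_{t\ell_t}^\top\theta_0}\in[\rho^{-1},\rho]$, whence $\sigma'(v_{t\ell_t}^\top\theta_0)\geq 1/(1+\rho)^2$; widening this to $\kappa=1/(2e(1+\rho))$ along the whole segment is a short Taylor/monotonicity check once $\theta^*$ is known to lie in a bounded neighborhood of $\theta_0$. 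With these ingredients in hand, I would follow the template of \cite{li2017provably}: write the first-order optimality condition $\sum_t(Y_t-\sigma(v_{t\ell_t}^\top\theta^*))v_{t\ell_t}=0$, subtract the zero-mean quantity $\sum_t(Y_t-\sigma(v_{t\ell_t}^\top\theta_0))v_{t\ell_t}$, apply the mean value theorem to obtain $G(\theta^*-\theta_0)=Z$ with $G\succeq\kappa V$ and $Z:=\sum_t(\sigma(v_{t\ell_t}^\top\theta_0)-Y_t)v_{t\ell_t}$, and control $\|Z\|_{V^{-1}}\leq 2\sqrt{d+\log(1/\delta)}$ via the standard self-normalized martingale inequality for vector-valued martingales. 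Combining these pieces yields $\|\theta^*-\theta_0\|_2 \leq \|Z\|_{V^{-1}}/(\kappa\sqrt{\lambda_{\min}(V)}) \leq (2/\kappa)\sqrt{(d+\log(1/\delta))/\lambda_{\min}(V)}$, which is exactly the claimed bound.

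The main obstacle is the circular dependence between the lower bound $\sigma'(\xi)\geq\kappa$ and the estimation bound itself, since the former presumes $\theta^*$ already lies near $\theta_0$. The standard fix is a two-stage bootstrap: first establish a crude consistency statement placing $\theta^*$ in a constant-radius ball around $\theta_0$ using only the weaker fact that $\sigma'$ is bounded below on any compact set (which requires $\lambda_{\min}(V)$ to exceed a problem-dependent threshold, ensured by the choice of $T_0$ combined with matrix Chernoff applied to the i.i.d.~design of (A1)), and then feed this back into the linearization above to obtain the tight $\kappa$-dependent rate. Adapting the argument of \cite{li2017provably} to our notation, with $\rho$ playing the role of their bounded-likelihood-ratio parameter and producing the constant $\kappa=1/(2e(1+\rho))$, delivers the inequality of Lemma \ref{lem:pilot}.
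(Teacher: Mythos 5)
Your proposal is correct and takes essentially the same route as the paper: reduce the single-item pure-exploration phase to a logistic GLM, verify the link-derivative lower bound $\kappa = \frac{1}{2e(1+\rho)}$ by combining (A2) at $\theta_0$ with the boundedness of $x^\top(\theta-\theta_0)$ over the unit ball, and apply the MLE bound of \citep[Eq.~(18)]{li2017provably}. The only difference is presentational: the paper verifies the required infimum $\inf_{\|x\|_2\leq 1,\,\|\theta-\theta_0\|_2\leq 1}\sigma'(x^\top\theta)\geq\kappa$ uniformly over the \emph{fixed} unit-radius ball and then cites the result as a black box, so the two-stage bootstrap and self-normalized martingale argument you spell out (correctly) live inside the citation rather than in the paper's own proof.
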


The following corollary immediately follows Lemma \ref{lem:pilot},
by lower bounding $\lambda_{\min}(V)$ using standard matrix concentration inequalities.
Its proof is again deferred to the supplementary material.

\begin{corollary}
	There exists a universal constant $C_0>0$ such that for arbitrary $\tau\in(0,1/2]$,
	if $T_0 \geq C_0\max\{\nu^2d\log T/\lambda_0^2, \rho^2 (d+\log T)/(\tau^2\lambda_0)\}$ then with probability $1-O(T^{-1})$, $\|\theta^*-\theta_0\|_2\leq \tau$.
	\label{cor:tau}
\end{corollary}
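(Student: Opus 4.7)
The plan is to derive Corollary \ref{cor:tau} directly from Lemma \ref{lem:pilot} by first establishing a high-probability lower bound on $\lambda_{\min}(V)$, where $V=\sum_{t=1}^{T_0}v_{t,i_t}v_{t,i_t}^\top$, and then setting $\delta=T^{-1}$ and solving for when the resulting error bound in Lemma \ref{lem:pilot} is at most $\tau$. A union bound on the two failure events (concentration of $V$ and correctness of Lemma \ref{lem:pilot}) will deliver the stated $1-O(T^{-1})$ guarantee.

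For the eigenvalue lower bound, I would exploit the pure-exploration structure. During $t\leq T_0$ the assortments $S_t=\{\ell_t\}$ contain a single product $\ell_t$ drawn uniformly at random from $[N]$, and by (A1) every feature $v_{tj}$ is i.i.d.\ from the density $\mu$. The summands $X_t:=v_{t,i_t}v_{t,i_t}^\top$ are therefore independent PSD matrices with $\|X_t\|_\op \leq \nu^2$. The subtle point is that $X_t=0$ whenever the customer declines to purchase (i.e.\ $i_t=0$, in which case $v_{t,0}=0$), so I would invoke (A2) to keep the purchase probability away from zero: for $S_t=\{\ell_t\}$, assumption (A2) gives $p_{\theta_0,t}(0\mid S_t)\leq \rho\,p_{\theta_0,t}(\ell_t\mid S_t)$, which after dividing by $p_{\theta_0,t}(0\mid S_t)+p_{\theta_0,t}(\ell_t\mid S_t)=1$ yields $\Pr[i_t=\ell_t\mid v_{t,\ell_t}]\geq (1+\rho)^{-1}$. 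Combined with (A1) this produces
\begin{equation*}
\mathbb E[X_t]\succeq\frac{1}{1+\rho}\,\mathbb E_\mu[vv^\top]\quad\text{and hence}\quad\lambda_{\min}(\mathbb E X_t)\geq\frac{\lambda_0}{1+\rho}.
\end{equation*}
Tropp's matrix Chernoff inequality with deviation parameter $1/2$ then produces a bound of the form $\Pr[\lambda_{\min}(V)\leq T_0\lambda_0/(2(1+\rho))]\leq d\exp(-cT_0\lambda_0/((1+\rho)\nu^2))$, which is at most $T^{-1}$ whenever $T_0\gtrsim\nu^2 d\log T/\lambda_0^2$, absorbing the $\rho$ factor into the universal constant $C_0$. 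This matches the first branch of the hypothesized lower bound on $T_0$.

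Next I would apply Lemma \ref{lem:pilot} with $\delta=T^{-1}$ on the good event and substitute $\lambda_{\min}(V)\gtrsim T_0\lambda_0/\rho$ together with $\kappa^{-1}=2e(1+\rho)\asymp\rho$. This yields $\|\theta^*-\theta_0\|_2 \lesssim \rho\sqrt{(d+\log T)/(T_0\lambda_0)}$ up to constants, and requiring the right-hand side to be at most $\tau$ is precisely $T_0\gtrsim\rho^2(d+\log T)/(\tau^2\lambda_0)$, the second branch of the hypothesis. Taking $C_0$ large enough to dominate both the Chernoff and Lemma \ref{lem:pilot} constants, and unioning the two $O(T^{-1})$ failure events, closes the argument.

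The only step requiring genuine care is the matrix Chernoff argument, because the no-purchase event makes $X_t$ vanish and could in principle degenerate $\mathbb E[X_t]$; this is where assumption (A2) must be invoked to lower-bound the purchase probability by a $\rho$-dependent constant. Everything else amounts to routine algebra matching the two threshold conditions on $T_0$ after fixing $\delta=T^{-1}$.
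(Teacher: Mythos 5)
Your overall skeleton is the same as the paper's: lower-bound $\lambda_{\min}(V)$ by matrix concentration, feed it into Lemma \ref{lem:pilot} with $\delta\asymp T^{-1}$, and take a union bound. But there are two concrete quantitative gaps that prevent your argument from delivering the corollary as stated. First, an algebra slip: from your Chernoff bound $\lambda_{\min}(V)\gtrsim T_0\lambda_0/(1+\rho)$ and $\kappa^{-1}\asymp\rho$, Lemma \ref{lem:pilot} gives $\|\theta^*-\theta_0\|_2\lesssim \rho\sqrt{(1+\rho)(d+\log T)/(T_0\lambda_0)}\asymp\rho^{3/2}\sqrt{(d+\log T)/(T_0\lambda_0)}$, not $\rho\sqrt{(d+\log T)/(T_0\lambda_0)}$: you dropped the $\sqrt{1+\rho}$ coming from your degraded eigenvalue bound. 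Solving for $\tau$ then requires $T_0\gtrsim\rho^3(d+\log T)/(\tau^2\lambda_0)$, which is strictly stronger than the stated second branch $\rho^2(d+\log T)/(\tau^2\lambda_0)$. Second, ``absorbing the $\rho$ factor into the universal constant $C_0$'' in the Chernoff step is not legitimate: $C_0$ is universal while $\rho$ is a problem parameter, and your requirement $T_0\gtrsim(1+\rho)\nu^2\log(dT)/\lambda_0$ is not dominated by either stated branch in all parameter regimes (the comparison with the first branch leaves a factor of order $\rho\lambda_0/d$, and with the second branch a factor of order $\tau^2\nu^2/\rho$, neither of which is universally bounded for arbitrary $\tau\in(0,1/2]$).

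Both issues trace back to a single modeling choice: you built $V$ from the \emph{purchased} items, so the no-purchase event zeroes out rounds and costs you the factor $1/(1+\rho)$ via (A2). The paper's proof instead treats $V$ as the design matrix of the \emph{offered} single items $v_{t,\ell_t}$ --- this is the correct design matrix for the logistic-regression reduction to \citep{li2017provably} (every round contributes its context, with binary response $\vct 1\{i_t\neq 0\}$), and it is also how the same bound is reused later, cf.\ the proof of Lemma \ref{lem:elliptical}, where $\lambda_{\min}(\sum_{t\leq T_0}v_{t,j_t}v_{t,j_t}^\top)\geq 0.5\,T_0\lambda_0$ is invoked for the offered items with variance weights $p_{\theta_0,t}(j_t)(1-p_{\theta_0,t}(j_t))^2$. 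Under that reading $\mathbb E[V/T_0]=\mathbb E_\mu vv^\top$ exactly, so a standard covariance concentration bound ($\|V/T_0-\mathbb E_\mu vv^\top\|_\op\lesssim\nu\sqrt{d\log T/T_0}\leq 0.5\lambda_0$, hence $\lambda_{\min}(V)\geq 0.5\,T_0\lambda_0$) needs no $\rho$ at all --- matching the first branch --- and $\rho$ enters only through $\kappa^{-1}\asymp\rho$, which is exactly what produces the $\rho^2$ in the second branch. Your (A2) device is internally consistent and would prove a version of Corollary \ref{cor:tau} with a $\rho^3$ second branch (plus the extra Chernoff requirement), but not the statement with the stated thresholds.
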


The purpose of Corollary \ref{cor:tau} is to establish a connection between the number of pure exploration iterations $T_0$ and
the critical radius $\tau$ used in the local MLE formulation.
It shows a lower bound on $T_0$ in order for the estimation error $\|\theta^*-\theta_0\|_2$ to be upper bounded by $\tau$ with high probability,
which certifies that the true model $\theta_0$ is also a \emph{feasible} local estimator in our MLE-UCB policy.
This is an important property for later analysis of local MLE solutions $\hat\theta_{t-1}$.

\subsubsection{Analysis of the local MLE}
\label{sec:ana_local}
The following lemma upper bounds a Mahalanobis distance between $\hat\theta_t$ and $\theta_0$.
For convenience, we adopt the notation that $r_{t0}=0$ and $v_{t0}=0$ for all $t$ throughout this section.
We also define
\begin{eqnarray}
I_t(\theta) & := &\sum_{t'=1}^t M_{t'}(\theta), \label{eq：pop_fisher} \\
M_{t'}(\theta)& := &-\mathbb E_{\theta_0,t'}[\nabla^2_\theta \log p_{\theta,t'}(j|S_{t'})]  \nonumber  \\
& =  &\mathbb E_{\theta_0,t'}[v_{t'j}v_{t'j}^\top] - \{\mathbb E_{\theta_0,t'}v_{t'j}\}\{\mathbb E_{\theta,t'}v_{t'j}\}^\top
- \{\mathbb E_{\theta,t'}v_{t'j}\} \{\mathbb E_{\theta_0,t'}v_{t'j}\}^\top  +  \{\mathbb E_{\theta,t'}v_{t'j}\}\{\mathbb E_{\theta,t'}v_{t'j}\}^\top \nonumber
\end{eqnarray}
where $\mathbb E_{\theta,t'}$ denotes the expectation evaluated under the law $j\sim p_{\theta,t'}(\cdot|S_{t'})$;
that is, $p_{\theta,t'}(j|S_{t'}) =\exp\{v_{t'j}^\top\theta\}/(1+\sum_{k\in S_{t'}}\exp\{v_{t'j}^\top\theta\})$ for $j\in S_{t'}$
and $p_{\theta,t'}(j|S_{t'})=0$ for $j\notin S_{t'}$.

\begin{lemma}
	Suppose $\tau\leq 1/\sqrt{8\rho \nu^2K^2}$. Then there exists a universal constant $C>0$ such that with probability $1-O(T^{-1})$ the following holds uniformly over all $t=T_0,\cdots,T-1$:
	\begin{equation}
	(\hat\theta_t-\theta_0)^\top I_t(\theta_0)(\hat\theta_t-\theta_0) \leq C\cdot d\log(\rho\nu TK).
	\end{equation}
	\label{lem:mle}
\end{lemma}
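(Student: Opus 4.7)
The plan is a standard M-estimator Taylor analysis combined with a self-normalized concentration bound for the score. First, note that $\tau\leq 1/\sqrt{8\rho\nu^2K^2}$ is no larger than the critical radius permitted by Corollary \ref{cor:tau} (after possibly enlarging $C_0$), so with probability $1-O(T^{-1})$ one has $\|\theta^*-\theta_0\|_2\leq\tau$, making $\theta_0$ feasible for every local MLE at rounds $t\geq T_0$. Set $L_t(\theta):=\sum_{t'=1}^t \log p_{\theta,t'}(i_{t'}|S_{t'})$. A direct computation gives $\nabla^2 L_t(\theta)=-\hat I_t(\theta)$ (with $\hat I_t$ as in Step \ref{alg:step6} of Algorithm \ref{alg:mle-ucb}), so $L_t$ is concave, and because $\hat\theta_t$ is the constrained maximizer $L_t(\hat\theta_t)\geq L_t(\theta_0)$. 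A second-order Taylor expansion at $\theta_0$ then produces, for some $\bar\theta_t\in[\theta_0,\hat\theta_t]$,
\begin{equation*}
\tfrac{1}{2}(\hat\theta_t-\theta_0)^\top \hat I_t(\bar\theta_t)(\hat\theta_t-\theta_0) \;\leq\; \nabla L_t(\theta_0)^\top(\hat\theta_t-\theta_0),
\end{equation*}
whose gradient $\nabla L_t(\theta_0)=\sum_{t'=1}^t(v_{t',i_{t'}}-\mathbb E_{\theta_0,t'}[v_{t'j}|S_{t'}])$ is a sum of bounded martingale differences with conditional covariance $M_{t'}(\theta_0)$.

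Next I would compare $\hat I_t(\bar\theta_t)$ to $I_t(\theta_0)$. Rewriting
\begin{equation*}
\hat M_{t'}(\theta|S) \;=\; \tfrac{1}{2}\sum_{j,k\in S\cup\{0\}} p_{\theta,t'}(j|S)\,p_{\theta,t'}(k|S)\,(v_{t'j}-v_{t'k})(v_{t'j}-v_{t'k})^\top,
\end{equation*}
the choice of $\tau$ gives $|v_{t'j}^\top(\bar\theta_t-\theta_0)|\leq 2\tau\nu = O(1/(\sqrt{\rho}\,K))$, which pins every ratio $p_{\bar\theta_t,t'}(j|S_{t'})/p_{\theta_0,t'}(j|S_{t'})$ in a constant interval about $1$. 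Hence $\hat M_{t'}(\bar\theta_t|S_{t'})\succeq c_0\,\hat M_{t'}(\theta_0|S_{t'})=c_0\,M_{t'}(\theta_0)$ for an absolute constant $c_0>0$, and summing in $t'$ yields $\hat I_t(\bar\theta_t)\succeq c_0\,I_t(\theta_0)$. Plugging this back into the Taylor inequality and applying Cauchy--Schwarz reduces the lemma to showing that, with probability $1-O(T^{-1})$ uniformly in $t\in\{T_0,\dots,T-1\}$,
\begin{equation*}
\|\nabla L_t(\theta_0)\|_{I_t(\theta_0)^{-1}}^2 \;\lesssim\; d\log(\rho\nu TK).
\end{equation*}

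This score-norm bound is the main obstacle, because the score is a vector-valued martingale with data-dependent normalization. Following the hint in the proof sketch, I would take an $\varepsilon$-net $\mathcal N$ of the unit sphere in $\mathbb R^d$ with $\varepsilon\asymp 1/T$, so $\log|\mathcal N|\lesssim d\log T$. For each fixed $u\in\mathcal N$, the real-valued process $u^\top I_t(\theta_0)^{-1/2}\nabla L_t(\theta_0)$ is a scalar martingale whose per-round conditional variances sum to $u^\top I_t(\theta_0)^{-1/2} I_t(\theta_0) I_t(\theta_0)^{-1/2} u = \|u\|_2^2 = 1$ and whose increments are bounded using (A1)--(A2) to control $\|I_t(\theta_0)^{-1/2}(v_{t',i_{t'}}-\mathbb E_{\theta_0,t'}[v_{t'j}|S_{t'}])\|_2$. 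A Freedman or self-normalized Bernstein inequality then gives a sub-Gaussian tail of order $\sqrt{\log(1/\delta)}$; taking $\delta=T^{-c}/|\mathcal N|$ and union-bounding over $\mathcal N$ and over $t\leq T$, together with a net-to-sphere approximation step where the $\rho,\nu,K$ factors enter via the diameter of the increments, yields the desired uniform Mahalanobis bound. Combining this with the display above gives $\|\hat\theta_t-\theta_0\|_{I_t(\theta_0)}^2 \leq C\,d\log(\rho\nu TK)$ for a universal constant $C=4C'/c_0^2$, where $C'$ is the absolute constant from the score concentration.
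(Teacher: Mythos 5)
Your high-level skeleton --- feasibility of $\theta_0$ via Corollary \ref{cor:tau}, a second-order Taylor expansion of the log-likelihood at $\theta_0$, a spectral comparison $\hat I_t(\bar\theta_t)\succeq c_0\,I_t(\theta_0)$, then Cauchy--Schwarz to reduce everything to a self-normalized score bound --- is the classical score-based M-estimation route, and it is genuinely different from the paper's proof; your pairwise-difference representation of $\hat M_{t'}$ even yields a cleaner two-sided comparison than the paper's Lemma \ref{lem:Itheta-Itheta0}. However, the step you yourself flag as ``the main obstacle'' contains a real gap. For fixed $u$, the process $u^\top I_t(\theta_0)^{-1/2}\nabla L_t(\theta_0)$ is \emph{not} a martingale in $t$, and even for fixed $t$ the summands $u^\top I_t(\theta_0)^{-1/2}\varepsilon_{t'}$, with $\varepsilon_{t'}=v_{t',i_{t'}}-\mathbb E_{\theta_0,t'}[v_{t'j}|S_{t'}]$, are not martingale differences: the normalizer $I_t(\theta_0)=\sum_{t'\leq t}M_{t'}(\theta_0)$ involves the assortments $S_{t''}$ for $t''\in(t',t]$, which the policy chooses as a function of earlier purchases --- in particular of $i_{t'}$ itself --- so $\mathbb E[u^\top I_t(\theta_0)^{-1/2}\varepsilon_{t'}\,|\,\mathcal F_{t'-1}]\neq 0$ in general, the ``conditional variances sum to $1$'' computation is invalid, and Freedman's inequality does not apply. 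Nor can you fix this by applying Freedman to the legitimate scalar martingales $u^\top\nabla L_t(\theta_0)$ (predictable variance $u^\top I_t(\theta_0)u$) and union-bounding over a Euclidean net of the unit sphere: the direction achieving $\|\nabla L_t(\theta_0)\|_{I_t(\theta_0)^{-1}}$ lives in the random $I_t$-geometry, so a fixed net of the sphere controls the wrong supremum and one loses a condition-number factor of $I_t(\theta_0)$, which grows with $T$.

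The correct tool for your route is a genuine self-normalized concentration bound for vector-valued martingales with matrix normalization (the method of mixtures, as in Abbasi-Yadkori et al.\ or de la Pe\~na--Klass--Lai), which gives $\|\nabla L_t(\theta_0)\|^2_{(\lambda I+I_t(\theta_0))^{-1}}\lesssim \log\det(\lambda I+I_t(\theta_0))+\log T\lesssim d\log(\rho\nu TK)$ uniformly in $t$; the ridge term can then be removed using the high-probability lower bound $\lambda_{\min}(I_{T_0}(\theta_0))\gtrsim T_0\lambda_0/\rho^3$ that the paper establishes in the proof of Lemma \ref{lem:elliptical}. With that substitution your argument closes. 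It is instructive to note how the paper sidesteps the adaptedness problem entirely: instead of normalizing the score by a random matrix, it concentrates the log-likelihood-ratio \emph{values}, proving $|\hat F_t(\theta)-F_t(\theta)|\lesssim d\log(\rho\nu TK)+\sqrt{|F_t(\theta)|\,d\log(\rho\nu TK)}$ uniformly over the $2\tau$-ball (Lemma \ref{lem:hatF-uniform}); for each fixed $\theta$ this is an honest scalar martingale, and the self-normalization enters through Lemma \ref{lem:MV}, which bounds the Bernstein variance by the population value $8|F_t(\theta)|$ itself. The Mahalanobis bound then follows from $\hat F_t(\hat\theta_t)\geq 0\geq F_t(\hat\theta_t)$ together with the local quadratic lower bound on $-F_t$, netting over parameters $\theta$ rather than over directions --- which is precisely what makes the data-dependent normalization harmless there.
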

\begin{remark}
	For $\theta=\theta_0$, the expression of $M_{t'}(\theta)$ can be simplified as $M_{t'}(\theta_0) = \mathbb E_{\theta_0,t'}[v_{t'j}v_{t'j}^\top] - \{\mathbb E_{\theta_0,t'}v_{t'j}\}\{\mathbb E_{\theta_0,t'}v_{t'j}\}^\top$.
\end{remark}

The complete proof of Lemma \ref{lem:mle} is given in the supplementary material,
and here we provide some high-level ideas behind our proof.

Our proof is inspired by the classical convergence rate analysis of M-estimators \citep[Sec.~5.8]{van1998asymptotic}.
The main technical challenge is to provide \emph{finite-sample} analysis of several components in the proof of \citep[Sec.~5.8]{van1998asymptotic}.

In particular, for any $\theta\in\mathbb R^d$, consider
$$
F_{t}(\theta) := \sum_{t'\leq t}f_{t'}(\theta)\;\;\;\;\text{where}\;\;
f_{t'}(\theta) := \mathbb E_{\theta_0,t'}\left[\log\frac{p_{\theta,t'}(j|S_{t'})}{p_{\theta_0,t'}(j|S_{t'})}\right] = \sum_{j\in S_{t'}\cup\{0\}} p_{\theta_0,t'}(j|S_{t'})\log\frac{p_{\theta,t'}(j|S_{t'})}{p_{\theta_0,t'}(j|S_{t'})}
$$
and its ``sample'' version
$$
\hat F_t(\theta) := \sum_{t'\leq t} \hat f_{t'}(\theta)\;\;\;\;\text{where}\;\;
\hat f_{t'}(\theta) := \log\frac{p_{\theta,t'}(i_{t'}|S_{t'})}{p_{\theta_0,t'}(i_{t'}|S_{t'})}.
$$

It is easy to verify by definition that $F_t(\hat\theta_{t})\geq F_t(\theta_0)=0$ and $\hat F_t(\hat\theta_{t})\leq  \hat{F}_t(\theta_0)=0$, because
$F_t(\cdot)$ is a Kullback-Leibler divergence,
$\theta_0$ is feasible to the local MLE formulation
and $\hat\theta_{t-1}$ is the optimal solution.
On the other hand, it can be proved that $|F_t(\theta)-\hat F_t(\theta)|$ is small for all $\theta$ with high probability,
by using concentration inequalities for self-normalized empirical process (note that $\mathbb E\hat f_{t'}(\theta) = f_{t'}(\theta)$ for any $\theta$). Moreover, by constructing a local quadratic approximation of $F_t(\cdot)$ around $\theta_0$, we can show that $F_t(\theta)-F_t(\theta_0)$ is large when $\theta$ is far away from $\theta_0$.

{Following the above observations, we can use proof by contradiction to prove Lemma \ref{lem:mle}, which essentially claims that  $\hat\theta_{t}$ and $\theta_0$  are close under the quadratic distance $\|\cdot\|_{I_t(\theta_0)}$.
	Suppose by contradiction that $\hat\theta_{t}$ and $\theta_0$ are far apart, which implies that $|F_t(\hat\theta_{t})-F_t(\theta_0)|$ is large. On the other hand, by the fact that   $\hat F_t(\hat\theta_{t})\leq  0 = {F}_t(\theta_0) \leq F_t(\hat\theta_{t})$, we have
	\[
	|F_t(\hat\theta_{t})-F_t(\theta_0)| =    |F_t(\hat\theta_{t})|  \leq |F_t(\hat\theta_{t})-\hat F_t(\hat\theta_{t})  |.
	\]
	By the established concentration result, we have  $|F_t(\theta)-\hat F_t(\theta)|$ is small for all $\theta$ with high probability (including $\theta=\hat\theta_{t}$).
	This leads to the desired contradiction.}




\subsubsection{Analysis of upper confidence bounds}

The following technical lemma shows that the upper confidence bounds constructed in Algorithm \ref{alg:mle-ucb}
are valid with high probability.
Additionally, we establish an upper bound on the discrepancy between $\bar R_t(S)$ and the true value $R_t(S)$  defined in Eq.~\eqref{eq:rev}.
\begin{lemma}
	Suppose $\tau$ satisfies the condition in Lemma \ref{lem:mle}.
	With probability $1-O(T^{-1})$ the following holds uniformly for all $t>T_0$ and $S\subseteq[N]$, $|S|\leq K$ such that
	\begin{enumerate}
		\item $\bar R_t(S)\geq R_t(S)$;
		\item $|\bar R_t(S)-R_t(S)| \lesssim \min\{1,\omega\sqrt{\|I_{t-1}^{-1/2}(\theta_0) M_t(\theta_0|S) I_{t-1}^{-1/2}(\theta_0)\|_\op}\}$.
	\end{enumerate}
	\label{lem:ucb}
\end{lemma}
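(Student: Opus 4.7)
The plan is to compare the two quantities $\bar R_t(S)$ and $R_t(S)$ via a Taylor-type expansion of $\mathbb E_{\theta,t}[r_{tj}|S]$ in the parameter $\theta$, and then relate all of the resulting information-matrix-style objects evaluated at $\hat\theta_{t-1}$ back to their counterparts at $\theta_0$ by a perturbation argument that exploits $\|\hat\theta_{t-1}-\theta_0\|_2\le 2\tau$.

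First, by Corollary \ref{cor:tau} and the triangle inequality, on the high-probability event in Lemma \ref{lem:mle} we also have $\|\hat\theta_{t-1}-\theta_0\|_2\leq 2\tau$. For any $\theta$ on the segment $[\theta_0,\hat\theta_{t-1}]$ and any $j\in S\cup\{0\}$, a direct computation gives $p_{\theta,t}(j|S)/p_{\theta_0,t}(j|S) = \exp(v_{tj}^\top(\theta-\theta_0))\cdot(1+\sum_k e^{v_{tk}^\top\theta_0})/(1+\sum_k e^{v_{tk}^\top\theta})$, which is bounded above and below by constants of the form $e^{\pm 4\nu\tau}=O(1)$ under the assumption $\tau\asymp 1/(\rho\nu K)$. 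From this one obtains, for every $S$ with $|S|\leq K$, the sandwich $c_1\, \hat M_t(\theta_0|S)\preceq \hat M_t(\theta|S)\preceq c_2\, \hat M_t(\theta_0|S)$ and, summing, $c_1\, I_{t-1}(\theta_0)\preceq \hat I_{t-1}(\theta)\preceq c_2\, I_{t-1}(\theta_0)$, since $\hat M_{t'}(\theta_0|S_{t'})=M_{t'}(\theta_0)$ by the simplification in the remark following Lemma \ref{lem:mle}. The key identity to verify the comparability $\hat M_t(\theta|S)\asymp \hat M_t(\theta_0|S)$ is the symmetric representation $\hat M_t(\theta|S) = \tfrac12 \mathbb E_{\theta,t}\mathbb E_{\theta,t}[(v_{tj}-v_{tj'})(v_{tj}-v_{tj'})^\top\mid S,S]$, whose density is a product of two $p_{\theta,t}(\cdot|S)$'s and is therefore sandwiched by constants times the same quantity at $\theta_0$.

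Next, compute the gradient $\nabla_\theta \mathbb E_{\theta,t}[r_{tj}|S] = \sum_{j\in S} r_{tj}\,p_{\theta,t}(j|S)\bigl(v_{tj}-\mathbb E_{\theta,t}[v_{tk}|S]\bigr) = \mathrm{Cov}_{\theta,t}(r_{tj},v_{tj}|S)$. Writing $\theta(s)=\theta_0+s(\hat\theta_{t-1}-\theta_0)$ and using the fundamental theorem of calculus followed by Cauchy--Schwarz (using $r_{tj}\in[0,1]$ so that the $r$-variance is at most $1$) gives
\begin{equation*}
\bigl|\mathbb E_{\hat\theta_{t-1},t}[r_{tj}|S]-R_t(S)\bigr|
\ \leq\ \sup_{s\in[0,1]}\sqrt{(\hat\theta_{t-1}-\theta_0)^\top \hat M_t(\theta(s)|S)(\hat\theta_{t-1}-\theta_0)}.
\end{equation*}
Applying the PSD sandwich from the previous paragraph at $\theta=\theta(s)$ replaces $\hat M_t(\theta(s)|S)$ with $M_t(\theta_0|S)$ up to a constant, and then the identity $\|M_t(\theta_0|S)^{1/2}u\|_2^2 \leq \|I_{t-1}^{-1/2}(\theta_0) M_t(\theta_0|S) I_{t-1}^{-1/2}(\theta_0)\|_\op\cdot\|u\|_{I_{t-1}(\theta_0)}^2$ together with Lemma \ref{lem:mle} (which bounds the second factor by $Cd\log(\rho\nu TK)\asymp \omega^2$) yields
\begin{equation*}
\bigl|\mathbb E_{\hat\theta_{t-1},t}[r_{tj}|S]-R_t(S)\bigr|
\ \lesssim\ \omega\sqrt{\|I_{t-1}^{-1/2}(\theta_0)M_t(\theta_0|S)I_{t-1}^{-1/2}(\theta_0)\|_\op}.
\end{equation*}
This, combined with the trivial bound $|R_t(S)|,|\mathbb E_{\hat\theta_{t-1},t}[r_{tj}|S]|\leq 1$, establishes the second assertion once the hat-versus-population quantities are shown to be equivalent.

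To conclude, I would prove the following converse sandwich for the confidence radius itself: by the same perturbation bound, $\|\hat I_{t-1}^{-1/2}(\hat\theta_{t-1})\hat M_t(\hat\theta_{t-1}|S)\hat I_{t-1}^{-1/2}(\hat\theta_{t-1})\|_\op \asymp \|I_{t-1}^{-1/2}(\theta_0)M_t(\theta_0|S)I_{t-1}^{-1/2}(\theta_0)\|_\op$. The cleanest way is via the Rayleigh-quotient formulation $\max_w w^\top \hat M_t(\hat\theta_{t-1}|S)w\,/\,w^\top \hat I_{t-1}(\hat\theta_{t-1})w$, where the spectral sandwiches on numerator and denominator yield the claimed two-sided bound up to constants. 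Taking $\omega$ large enough (which matches the stated $\omega\asymp\sqrt{d\log(\rho\nu TK)}$ with a sufficiently large absolute constant) ensures that the minimum with $1$ in the definition of $\bar R_t(S)$ dominates $|\mathbb E_{\hat\theta_{t-1},t}[r_{tj}|S]-R_t(S)|$, giving $\bar R_t(S)\geq R_t(S)$ (assertion 1), and also $|\bar R_t(S)-R_t(S)|$ is at most twice that bound (assertion 2 after absorbing the constant into $\lesssim$).

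The step I expect to be the main obstacle is the perturbation argument comparing $\hat M_t(\theta|S)$ at different $\theta$'s in PSD order: a naive Lipschitz bound on each entry is not good enough to yield a clean $c_1 \hat M_t(\theta_0|S)\preceq \hat M_t(\theta|S)\preceq c_2\hat M_t(\theta_0|S)$, and the symmetric pair-difference representation of the covariance, combined with the uniform boundedness of the pointwise probability ratios, is what makes the sandwich clean. The rest is careful bookkeeping of constants across Lemma \ref{lem:mle}, Corollary \ref{cor:tau}, and the calibration of $\omega$.
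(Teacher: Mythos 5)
Your proposal is correct and, at the top level, follows the same architecture as the paper's proof: the same mean-value/finite-sample Delta-method step (the gradient of $\theta\mapsto\mathbb E_{\theta,t}[r_{tj}|S]$ is the covariance of $(r_{tj},v_{tj})$, whose outer product is dominated by $\hat M_t(\cdot|S)$ via Cauchy--Schwarz since $r_{tj}\in[0,1]$ --- the paper phrases this via Jensen's inequality in Eq.~(\ref{eq:delta-intermediate-2})), the same reduction through Lemma \ref{lem:mle} to the quadratic form $(\hat\theta_{t-1}-\theta_0)^\top I_{t-1}(\theta_0)(\hat\theta_{t-1}-\theta_0)\lesssim d\log(\rho\nu TK)$, the same spectral transfer between the hatted quantities at $\hat\theta_{t-1}$ and the population quantities at $\theta_0$, and the same calibration of $\omega$. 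Where you genuinely diverge is in proving the PSD sandwich (the paper's Lemma \ref{lem:close-M}): the paper introduces the mixed matrix $\bar M_t(\theta|S)=\mathbb E_{\theta_0,t}[(v_{tj}-\mathbb E_{\theta,t}v_{tj})(v_{tj}-\mathbb E_{\theta,t}v_{tj})^\top]$, controls $\hat M_t-\bar M_t=\sum_j\delta_j\tilde w_j\tilde w_j^\top$ using the probability perturbation bound of Eq.~(\ref{eq:deltai}) together with assumption (A2) (which supplies $p_{\theta_0,t}(j|S)\geq 1/\rho K$), and handles $\bar M_t-M_t$ as an explicit rank-one matrix bounded via Eq.~(\ref{eq:preceq-1}); you instead use the symmetric pair representation $\mathrm{Cov}(X)=\tfrac12\mathbb E[(X-X')(X-X')^\top]$ for i.i.d.\ copies of $j\sim p_{\theta,t}(\cdot|S)$ on $S\cup\{0\}$, so the sandwich $e^{-c}\hat M_t(\theta_0|S)\preceq\hat M_t(\theta|S)\preceq e^{c}\hat M_t(\theta_0|S)$ follows at once from the pointwise likelihood-ratio bound $p_{\theta,t}(j|S)/p_{\theta_0,t}(j|S)=e^{O(\nu\tau)}$, each summand in the pair representation being PSD. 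Your device is cleaner and more robust: it needs only $\|v_{tj}\|_2\leq\nu$ and $\|\theta-\theta_0\|_2\leq 2\tau$ (no use of (A2) at this step), and it automatically covers every $\theta$ on the segment, as your fundamental-theorem-of-calculus variant of the mean-value step requires. What the paper's route buys is explicit absolute constants ($\tfrac14 M_t\preceq\hat M_t\preceq 4M_t$) and reuse of Eq.~(\ref{eq:preceq-1}), which is needed anyway for Lemma \ref{lem:Itheta-Itheta0}; your constants degrade as $e^{O(\nu\tau)}$, which is $O(1)$ under the stated choice of $\tau$, so nothing is lost. One point worth writing out explicitly in a final version: to get assertion 1 you should check that \emph{both} branches of the minimum in $\bar R_t(S)$ separately dominate $|\mathbb E_{\hat\theta_{t-1},t}[r_{tj}|S]-R_t(S)|$ --- the branch $1$ trivially because revenues lie in $[0,1]$, and the branch $\omega\sqrt{\|\hat I_{t-1}^{-1/2}(\hat\theta_{t-1})\hat M_t(\hat\theta_{t-1}|S)\hat I_{t-1}^{-1/2}(\hat\theta_{t-1})\|_\op}$ after your Rayleigh-quotient comparison with a sufficiently large constant in $\omega$ --- but this is exactly how the paper concludes as well.
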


At a higher level, the proof of Lemma \ref{lem:ucb} can be regarded as a ``finite-sample'' version of the classical \emph{Delta's method},
which upper bounds estimation error of some functional $\varphi$ of parameters, i.e.,  $|\varphi(\hat\theta_{t-1})-\varphi(\theta_0)|$
using the estimation error of the parameters themselves $\hat\theta_{t-1}-\theta_0$.  The complete proof is relegated to the supplementary material.

\subsubsection{The elliptical potential lemma}\label{sec:ellipitical}


Let $S_t^*$ be the assortment that maximizes the expected revenue $R_t(\cdot)$ (defined in Eq.~\eqref{eq:rev}) at time period $t$,
and $S_t$ be the assortment selected by Algorithm \ref{alg:mle-ucb}.
Because $R_t(S)\leq \bar R_t(S)$ for all $S$ (see Lemma \ref{lem:ucb}), we have the following upper bound for each term in the regret (see Eq.~\eqref{eq:regret}):
\begin{equation}
R_t(S_t^*)-R_t(S_t)
\leq (\bar R_t(S_t^*)-\bar R_t(S_t)) + (\bar R_t(S_t)-R_t(S_t)) \leq \bar R_t(S_t)-R_t(S_t),
\label{eq:ucb-triangle}
\end{equation}
where the last inequality holds because $\bar R_t(S_t^*)-\bar R_t(S_t) \leq 0$ (note that $S_t$  maximizes $\bar{R}_t(\cdot)$).

Subsequently, invoking Lemma \ref{lem:ucb} and the Cauchy-Schwarz inequality, we have
\begin{align}
\sum_{t=T_0+1}^T& R_t(S_t^*)-R_t(S_t)
\lesssim \sqrt{d\log(\rho\nu TK)}\cdot \sum_{t=T_0+1}^T \sqrt{\min\{1,\|I_{t-1}^{-1/2}(\theta_0)M_t(\theta_0|S_t)I_{t-1}^{-1/2}(\theta_0)\|_\op\}}\nonumber\\
&\lesssim \sqrt{dT\log(\rho\nu TK)\cdot\sum_{t=T_0+1}^T\min\{1,\|I_{t-1}^{-1/2}(\theta_0)M_t(\theta_0|S_t)I_{t-1}^{-1/2}(\theta_0)\|_\op^2 \}}.\label{eq:ucb-regret-decomp}
\end{align}

The following lemma is a key result that upper bounds $\sum_{t=T_0+1}^T\min\{1,\|I_{t-1}^{-1/2}(\theta_0)M_t(\theta_0|S_t)I_{t-1}^{-1/2}(\theta_0)\|_\op^2\}$.
It is usually referred to as the \emph{elliptical potential lemma} and has found many applications in contextual bandit-type problems (see, e.g., \cite{Dani08stochastic,Rusmevichientong2010,filippi2010parametric,li2017provably}).

\begin{lemma}
	It holds that
	\begin{equation*}
	\sum_{t=T_0+1}^T\min\{1,\|I_{t-1}^{-1/2}(\theta_0)M_t(\theta_0|S_t)I_{t-1}^{-1/2}(\theta_0)\|_\op^2\} \leq
	4\log\frac{\det I_{T}(\theta_0)}{\det I_{T_0}(\theta_0)} \lesssim d\log(\lambda_0^{-1}\rho\nu).
	\end{equation*}
	\label{lem:elliptical}
\end{lemma}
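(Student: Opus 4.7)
The plan is to prove this by the standard telescoping-determinant argument used in linear/generalized-linear bandit analysis, combined with trace-based spectral bounds on the empirical Fisher-information matrices $I_t(\theta_0)$. The key structural observation is that for $\theta=\theta_0$ the remark after Lemma \ref{lem:mle} collapses $M_t(\theta_0|S_t)$ into a genuine covariance matrix $\mathrm{Var}_{\theta_0,t}(v_{tj}\mid S_t)$, hence $M_t(\theta_0|S_t)\succeq 0$ and $I_t(\theta_0) = I_{t-1}(\theta_0) + M_t(\theta_0|S_t)$.

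First I would establish a per-step inequality. Setting $A_t := I_{t-1}(\theta_0)^{-1/2} M_t(\theta_0|S_t) I_{t-1}(\theta_0)^{-1/2}$, which is PSD, the matrix determinant identity gives
\begin{equation*}
\det I_t(\theta_0) = \det I_{t-1}(\theta_0)\cdot \det(\mathbf I + A_t).
\end{equation*}
Because every eigenvalue of $A_t$ is nonnegative, $\log\det(\mathbf I+A_t)=\sum_i \log(1+\lambda_i(A_t))\geq \log(1+\|A_t\|_\op)$. Combined with the elementary bound $\min\{1,x^2\}\leq \min\{1,x\}\leq 2\log(1+x)$ for $x\geq 0$ (using $\log(1+x)\geq x/2$ on $[0,1]$ and $\log(1+x)\geq \log 2$ for $x\geq 1$), we get
\begin{equation*}
\min\{1,\|A_t\|_\op^2\}\leq 2\log\det(\mathbf I + A_t) = 2\log\frac{\det I_t(\theta_0)}{\det I_{t-1}(\theta_0)}.
\end{equation*}
Summing over $t=T_0+1,\ldots,T$ telescopes and yields the first inequality of the lemma (with constant $2$; the constant $4$ in the statement is slack and follows from any looser variant of the elementary inequality).

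Second, for the second inequality I would bound the two determinants separately. An AM--GM bound gives $\det I_T(\theta_0)\leq (\mathrm{tr}\,I_T(\theta_0)/d)^d$, and since $M_t(\theta_0|S_t)\preceq \mathbb E_{\theta_0,t}[v_{tj}v_{tj}^\top\mid S_t]$ with $\|v_{tj}\|_2\leq \nu$ by (A1), we obtain $\mathrm{tr}\,I_T(\theta_0)\leq T\nu^2$. For the lower bound on $\det I_{T_0}(\theta_0)\geq \lambda_{\min}(I_{T_0}(\theta_0))^d$ I would exploit the structure of pure exploration: each assortment is a single item $\{\ell_{t'}\}$, so $M_{t'}(\theta_0|S_{t'})=p_{t'}(1-p_{t'})\, v_{t',\ell_{t'}}v_{t',\ell_{t'}}^\top$ with $p_{t'}(1-p_{t'})\geq (1+\rho)^{-2}$ by (A2), and $\ell_{t'}$ drawn uniformly over an i.i.d.\ feature law with $\lambda_{\min}(\mathbb E[vv^\top])\geq\lambda_0$ by (A1). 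A matrix Chernoff inequality (the same one invoked en route to Corollary \ref{cor:tau}) then gives $\lambda_{\min}(I_{T_0}(\theta_0))\gtrsim T_0\lambda_0/(1+\rho)^2$ on the high-probability event we are already conditioning on.

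Third, combining these two bounds yields
\begin{equation*}
\log\frac{\det I_T(\theta_0)}{\det I_{T_0}(\theta_0)} \leq d\log\!\left(\frac{T\nu^2(1+\rho)^2}{d\, T_0\lambda_0}\right),
\end{equation*}
and substituting the choice $T_0\asymp \nu^2 d\log T/\lambda_0^2$ from Theorem \ref{thm:mle-ucb} collapses the $T$-dependence inside the log, producing a bound of the advertised order $d\log(\lambda_0^{-1}\rho\nu)$ up to logarithmic absorption.

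The essentially routine parts are the determinant identity and the telescoping. The main technical obstacle is not a single hard step but the bookkeeping at the end: one must verify that the pure-exploration length $T_0$ has been calibrated large enough so that the matrix-Chernoff lower bound on $\det I_{T_0}(\theta_0)$ absorbs the $\log T$ arising from the trace upper bound on $\det I_T(\theta_0)$. This is exactly the role played by the first branch of the prescribed $T_0$ in Theorem \ref{thm:mle-ucb}, so the calibration is by design; one just needs to track the constants carefully.
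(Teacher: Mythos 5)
Your proof follows essentially the same route as the paper's: the telescoping log-determinant (elliptical-potential) identity $\det I_t(\theta_0)=\det I_{t-1}(\theta_0)\det(\mathbf I+A_t)$ with an elementary bound of the form $\min\{1,x^2\}\lesssim\log(1+x)$ for the first inequality, and for the second inequality an upper bound $\det I_T(\theta_0)\lesssim(\nu^2T)^d$ combined with the single-item pure-exploration structure $M_{t'}(\theta_0)=p_{t'}(1-p_{t'})v_{t',\ell_{t'}}v_{t',\ell_{t'}}^\top$ and the matrix concentration already established for Corollary \ref{cor:tau} to lower bound $\det I_{T_0}(\theta_0)\gtrsim(T_0\lambda_0/\poly(\rho))^d$. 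The differences are cosmetic (your constant $2$ versus the paper's $4$, AM--GM on the trace instead of a spectral bound, and your $p(1-p)\geq(1+\rho)^{-2}$ is in fact the correct variance computation where the paper writes $p(1-p)^2$), and your final bookkeeping---absorbing the residual $\log(T/T_0)$ factor into the stated $d\log(\lambda_0^{-1}\rho\nu)$---is exactly as loose as the concluding step of the paper's own proof, so no new gap is introduced.
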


The proof of Lemma \ref{lem:elliptical} is placed in the supplementary material.
It is a routine proof following existing proofs of elliptical potential lemmas using matrix-determinant rank-1 updates.

We are now ready to give the final upper bound on $\mathrm{Regret}(\{S_t\}_{t=1}^T)$ defined in Eq.~\eqref{eq:regret}.
Note that the total regret incurred by the pure exploration phase is upper bounded by $T_0$, because the revenue parameters $r_{tj}$ are normalized so that they are upper bounded by 1.
In addition, as the failure event of $\bar R_t(S)\leq R_t(S)$ for some $S$ occurs with probability $1-O(T^{-1})$,
the total regret accumulated under the failure event is $O(T^{-1})\cdot T = O(1)$.
Further invoking Eq.~(\ref{eq:ucb-regret-decomp}) and Lemma \ref{lem:elliptical}, we have
\begin{align}
\mathrm{Regret}(\{S_t\}_{t=1}^T)
&\leq T_0 + O(1) + \mathbb E\sum_{t=T_0+1}^T R_t(S_t^*)-R_t(S_t)\nonumber\\
&\lesssim O(1) + \frac{\nu^2d\log T}{\lambda_0^2} + \frac{\rho^2(d+\log T)}{\tau^2\lambda_0} + d\sqrt{T}\cdot \log(\lambda_0^{-1}\rho\nu TK)\nonumber\\
&\lesssim O(1) + {d^2\lambda_0^{-2}\rho^4\nu^2 K^2\log T}+ d\sqrt{T}\cdot \log(\lambda_0^{-1}\rho\nu TK).
\end{align}


\section{Lower bound}\label{sec:negative}
\label{sec:lower}

To complement our regret analysis in Sec.~\ref{subsec:regret},
in this section we prove a \emph{lower} bound for worst-case regret.
Our lower bound is information theoretical, and therefore applies to \emph{any} policy for dynamic assortment optimization
with changing contextual features.

\begin{theorem}
	Suppose $d$ is divisible by 4.
	There exists a universal constant $C_0>0$ such that for any sufficiently large $T$ and policy $\pi$,
	there is a worst-case problem instance with $N\asymp K\cdot 2^d$ items and uniformly bounded feature and coefficient vector
	(i.e., $\|v_{ti}\|_2\leq 1$ and $\|\theta_0\|_2\leq 1$ for all $i\in[N]$, $t\in[T]$) such that the regret of $\pi$ is lower bounded by
	$C_2\cdot d\sqrt{T}/K$.
	\label{thm:lower}
\end{theorem}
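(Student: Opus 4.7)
The plan is to apply a coordinate-wise Assouad-style reduction to a hypercube family of $2^d$ hypotheses. Let $\epsilon>0$ be a parameter to be chosen later and, for each $\tau\in\{-1,+1\}^d$, set $\theta_\tau := (\epsilon/\sqrt d)\tau$, so that $\|\theta_\tau\|_2= \epsilon\leq 1$. Introduce $2^d$ ``groups'' of $K$ identical items: group $G_\sigma$ for $\sigma\in\{-1,+1\}^d$ contains $K$ copies of an item with time-invariant feature $v_\sigma := \sigma/\sqrt d$ and unit revenue, for a total of $N = K\cdot 2^d$ items with $\|v_{tj}\|_2\leq 1$. Because $R_t(S)$ depends on $S$ only through $\sum_{j\in S}\exp(v_j^\top\theta_\tau)$ and $v_\sigma^\top\theta_\tau = \epsilon\langle\sigma,\tau\rangle/d$ is uniquely maximized at $\sigma=\tau$, the optimal assortment under hypothesis $\theta_\tau$ is $S_t^*=G_\tau$, with $R_t(G_\tau)=Ke^\epsilon/(1+Ke^\epsilon)$.

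First I would lower bound the per-round regret. For any $S=\{j_1,\dots,j_K\}$ with group labels $\sigma_1,\dots,\sigma_K$ and utilities $u_\ell := \epsilon\langle\sigma_\ell,\tau\rangle/d$, the identity $\epsilon-u_\ell = (2\epsilon/d)\,d_H(\sigma_\ell,\tau)$ (where $d_H$ is Hamming distance), together with $e^\epsilon - e^{u_\ell}\geq e^{-\epsilon}(\epsilon-u_\ell)$ and the crude $(1+Ke^\epsilon)(1+\sum_\ell e^{u_\ell})\leq 4e^2 K^2$, yields for all $\epsilon\leq 1$ and some universal $c_1>0$,
\[
R_t(G_\tau)-R_t(S)\;\geq\;\frac{c_1\epsilon}{K^2 d}\sum_{i=1}^d n_{i,S},\qquad n_{i,S}:=\#\{\ell:\sigma_\ell^{(i)}\neq\tau_i\}.
\]

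Second I would bound the per-round Kullback--Leibler divergence uniformly in $S$. For $\tau'=\tau\oplus e_i$, set $w:=\theta_\tau-\theta_{\tau'}=(2\epsilon/\sqrt d)e_i$. Writing the KL for an exponential family in its mean-value form, $\kl\bigl(p_{\theta_\tau,t}(\cdot|S)\,\|\,p_{\theta_{\tau'},t}(\cdot|S)\bigr) = \tfrac12 w^\top \mathrm{Var}_\xi[v_j|S]\,w \leq \tfrac12\,\E_\xi[(w^\top v_j)^2|S]$ for some $\xi$ on the segment between $\theta_\tau$ and $\theta_{\tau'}$. The crucial observation is that $(w^\top v_j)^2 = (2\epsilon/d)^2(\sigma_j^{(i)})^2 \leq 4\epsilon^2/d^2$ deterministically (with $v_0=0$), uniformly over $S$ and $\xi$, so the per-round KL is at most $2\epsilon^2/d^2$. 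The chain rule for KL over sequentially observed trajectories then gives $\kl(\bP_\tau^T\,\|\,\bP_{\tau'}^T)\leq 2T\epsilon^2/d^2$, and Pinsker's inequality yields $\tv(\bP_\tau^T,\bP_{\tau'}^T)\leq \epsilon\sqrt T/d$.

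To close the argument I would use a coordinate-averaging step. Let $X_i := \sum_{t=1}^T m_{t,i}$ with $m_{t,i}:=\#\{\ell\in S_t:\sigma_{j_\ell}^{(i)}=+1\}$; then $\sum_t n_{i,S_t}$ equals $TK-X_i$ if $\tau_i=+1$ and $X_i$ if $\tau_i=-1$. Since $X_i\in[0,TK]$, pairing each $\tau$ with $\tau\oplus e_i$ produces the Assouad-type inequality
\[
\E_\tau[TK-X_i] + \E_{\tau\oplus e_i}[X_i]\;\geq\; TK\bigl(1 - \tv(\bP_\tau^T,\bP_{\tau\oplus e_i}^T)\bigr).
\]
Averaging over $\tau\in\{-1,+1\}^d$, summing over $i=1,\ldots,d$, and plugging in the per-round regret bound gives $2^{-d}\sum_\tau \E_\tau[\reg]\geq c_2\epsilon T(1-\max_i\tv_i)/K$. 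Picking $\epsilon\asymp d/\sqrt T$ makes $\max_i\tv_i\leq 1/2$, and max-above-average delivers $\max_\tau\E_\tau[\reg]\gtrsim d\sqrt T/K$. The main obstacle is this uniform-in-$S$ KL bound: a mixed assortment drawn from many different groups could in principle harvest Fisher information along many coordinates simultaneously, but because $w$ is aligned with a single basis vector $e_i$ the quadratic form $(w^\top v_j)^2$ remains $O(\epsilon^2/d^2)$ rather than $O(\epsilon^2/d)$; this $d^{-2}$ rate, combined with the fact that the best achievable per-round regret of a mixed assortment is only $\Theta(\epsilon/K)$, is precisely what converts the information-theoretic budget into the $d\sqrt T/K$ bound as opposed to a weaker $\sqrt{dT}/K$.
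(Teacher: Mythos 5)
Your proposal is correct, and it reaches the paper's bound by a genuinely different route. The paper packs $\binom{d}{d/4}$ sparse parameters $\theta_W$ (supported on $d/4$-subsets), first replaces the policy's mixed assortment $S_t$ by a surrogate pure assortment $\tilde S_t$ of $K$ copies of its best group (Proposition \ref{prop:reduction-surrogate-st}), lower bounds the per-round regret of that surrogate by $\delta\epsilon/(4K\sqrt d)$ (Lemma \ref{lem:R-lb}), runs a counting argument over $\mathcal W_{d/4}$ versus $\mathcal W_{d/4-1}$, and proves a KL bound (Lemma \ref{lem:kl}) that must track the per-coordinate exposure $\mathbb E_W[N_i]$ because its $d/4$-sparse features only probe the flipped coordinate when it is actually offered. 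You instead run a full-hypercube Assouad argument with dense features $v_\sigma=\sigma/\sqrt d$: you skip the surrogate reduction entirely and decompose the per-round regret item-by-item via Hamming distance, accepting a $K^2$ in the denominator but recouping the factor of $K$ because a mixed assortment can accumulate up to $Kd$ disagreements per round; and your KL control is cleaner and uniform in $S$ --- the exact exponential-family identity $\kl=\tfrac12 w^\top\mathrm{Var}_\xi[v_j|S]\,w$ with $w\propto e_i$ gives $2\epsilon^2/d^2$ per round deterministically, which is why your calibration is $\epsilon\asymp d/\sqrt T$ (needing $T\gtrsim d^2$) rather than the paper's $\epsilon\asymp\sqrt{d/T}$ with the constraint $\epsilon<1/d\sqrt d$ (needing $T\gtrsim d^4$); both fall under ``sufficiently large $T$.'' Your construction also hits $N=K\cdot 2^d$ exactly, matching the theorem statement. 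What the paper's route buys is the sharper $1/K$ per-round constant and reuse of the counting machinery of \cite{Chen:18tight}; what yours buys is modularity (standard Assouad pairing, an assortment-independent information bound, no surrogate lemma).

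One loose end, easily patched: your regret decomposition and the identity $\sum_t n_{i,S_t}=TK-X_i$ (for $\tau_i=+1$) implicitly assume $|S_t|=K$. If $|S_t|<K$ (possibly empty), pad $S_t$ with $K-|S_t|$ virtual slots counted as disagreeing in every coordinate: each missing slot contributes $e^\epsilon\geq 2\epsilon e^{-\epsilon}=e^{-\epsilon}(2\epsilon/d)\cdot d$ to the numerator $Ke^\epsilon-\sum_\ell e^{u_\ell}$, so the per-round bound holds with the padded counts, the identity $TK-X_i$ becomes exact under $\tau_i=+1$, the padded count dominates $X_i$ under $\tau_i=-1$, and $X_i\in[0,TK]$ is preserved, so the Assouad step goes through unchanged. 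With that one-line fix the argument is complete.
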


Theorem \ref{thm:lower} essentially implies that the $\tilde O(d\sqrt{T})$ regret upper bound established in Theorem \ref{thm:mle-ucb} is tight
(up to logarithmic factors) in $T$ and $d$. Although there is an $O(K)$ gap between the upper and lower regret bounds,
in practical applications $K$ is usually  small and can be generally regarded as a constant. It is an interesting technical open problem to close this gap of $O(K)$.

We also remark that an $\Omega(d\sqrt{T})$ lower bound was established in \citep{Dani08stochastic} for contextual linear bandit problems.
However, in assortment selection, the reward function is \emph{not} coordinate-wise decomposable, making techniques in \cite{Dani08stochastic}
not directly applicable.
In the following subsection, we provide a high-level proof sketch of Theorem \ref{thm:lower},
with complete proofs of technical lemmas relegated to the supplementary material.

\subsection{Proof sketch of Theorem \ref{thm:lower}}

At a higher level, the proof of Theorem \ref{thm:lower} can be divided into three steps (separated into three different sub-sections below).
In the first step, we construct an \emph{adversarial parameter} set and reduce the task of lower bounding the \emph{worst-case} regret of any policy
to lower bounding the \emph{Bayes risk} of the constructed parameter set.
In the second step, we use a ``counting argument'' similar to the one developed in \cite{Chen:18tight}
to provide an explicit lower bound on the Bayes risk of the constructed adversarial parameter set,
and finally we apply \emph{Pinsker's inequality} (see, e.g., \cite{tsybakov2009introduction}) to derive a complete lower bound.

\subsubsection{Adversarial construction and the Bayes risk}

Let $\epsilon\in(0, 1/d\sqrt{d})$ be a small positive parameter to be specified later.
For every subset $W\subseteq[d]$, define the corresponding parameter $\theta_W\in\mathbb R^d$ as $[\theta_W]_i=\epsilon$ for all $i\in W$,
and $[\theta_W]_i=0$ for all $i\notin W$.
The parameter set we consider is
\begin{equation}
\theta \in \Theta := \{\theta_W: W\in\mathcal W_{d/4}\} := \{\theta_W: W\subseteq[d], |W|=d/4\}.
\end{equation}
Note that $d/4$ is a positive integer because $d$ is divisible by 4, as assumed in Theorem \ref{thm:lower}.
Also, to simplify notation, we use $\mathcal W_k$ to denote the class of all subsets of $[d]$ whose size is $k$.

The feature vectors $\{v_{ti}\}$ are constructed to be invariant across time iterations $t$.
For each $t$ and $U\in\mathcal W_{d/4}$, $K$ identical feature vectors $v_{U}$ are constructed as (recall that $K$ is the maximum allowed assortment capacity)
\begin{equation}
[v_U]_i = 1/\sqrt{d} \;\;\;\;\text{for}\;\; i\in U; \;\;\;\;\;\;
[v_U]_i = 0 \;\;\;\;\text{for}\;\; i\notin U.
\end{equation}

It is easy to check that with the condition $\epsilon\in(0,1/\sqrt{d})$, $\|\theta_W\|_2\leq 1$ and $\|v_U\|_2\leq 1$ for all $W,U\in\mathcal W_{d/4}$.
Hence the worst-case regret of any policy $\pi$ can be lower bounded by the worst-case regret of parameters belonging to $\Theta$, which can be further lower bounded
by the ``average'' regret over a uniform prior over $\Theta$:
\begin{align}
\sup_{v,\theta}\mathbb E_{v,\theta}^\pi\sum_{t=1}^TR(S_\theta^*)-R(S_t)
&\geq \max_{\theta_W\in\Theta} \mathbb E_{v,\theta_W}^\pi\sum_{t=1}^TR(S_{\theta_W}^*)-R(S_t)\nonumber\\
&\geq \frac{1}{|\mathcal W_{d/4}|}\sum_{W\in\mathcal W_{d/4}}\mathbb E_{v,\theta_W}^\pi\sum_{t=1}^TR(S_{\theta_W}^*)-R(S_t).
\label{eq:bayes-risk}
\end{align}

Here $S_\theta^*$ is the optimal assortment of size at most $K$ that maximizes (expected) revenue under parameterization $\theta$.
By construction, it is easy to verify that $S_{\theta_W}^*$ consists of all $K$ items corresponding to feature $v_W$.
We also employ constant revenue parameters $r_{ti}\equiv 1$ for all $t\in[T]$, $i\in[N]$.

\subsubsection{The counting argument}

In this section we drive an explicit lower bound on the Bayes risk in Eq.~(\ref{eq:bayes-risk}).
For any sequences $\{S_t\}_{t=1}^T$ produced by the policy $\pi$, we first describe an alternative sequence $\{\tilde S_t\}_{t=1}^T$
that provably enjoys less regret under parameterization $\theta_W$, while simplifying our analysis.

Let $v_{U_1},\cdots,v_{U_L}$ be the distinct feature vectors contained in assortment $S_t$ (if $S_t=\emptyset$ then one may choose an arbitrary feature $v_U$)
with $U_1,\cdots,U_L\in\mathcal W_{d/4}$.
Let $U^*$ be the subset among $U_1,\cdots,U_L$ that maximizes $\langle v_{U^*},\theta_W\rangle$, where $\theta_W$ is the underlying parameter.
Let $\tilde S_t$ be the assortment consisting of all $K$ items corresponding to feature $v_U^*$.
We then have the following observation:
\begin{proposition}
	$R(S_t) \leq R(\tilde S_t)$ under $\theta_W$.
	\label{prop:reduction-surrogate-st}
\end{proposition}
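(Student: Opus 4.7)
The plan is to exploit two elementary observations. First, since the revenue parameters are constant with $r_{ti}\equiv 1$, the expected revenue under parameter $\theta_W$ at any time $t$ reduces to
\[
R(S) \;=\; \frac{\sum_{j\in S}\exp\{v_{tj}^\top\theta_W\}}{1+\sum_{j\in S}\exp\{v_{tj}^\top\theta_W\}} \;=\; \phi\!\left(\sum_{j\in S}\exp\{v_{tj}^\top\theta_W\}\right),
\]
where $\phi(x):=x/(1+x)$ is strictly increasing on $[0,\infty)$. So it suffices to prove that
\[
\sum_{j\in\tilde S_t}\exp\{v_{tj}^\top\theta_W\} \;\geq\; \sum_{j\in S_t}\exp\{v_{tj}^\top\theta_W\}.
\]

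Next, I would group items in $S_t$ by their feature vectors. Each distinct feature appearing in $S_t$ is one of $v_{U_1},\dots,v_{U_L}$; let $n_\ell\geq 1$ be the number of items in $S_t$ carrying feature $v_{U_\ell}$, so that $\sum_{\ell=1}^L n_\ell = |S_t|\leq K$. By the defining property of $U^*$, we have $\langle v_{U^*},\theta_W\rangle\geq\langle v_{U_\ell},\theta_W\rangle$ for every $\ell\in\{1,\dots,L\}$, hence $\exp\{v_{U_\ell}^\top\theta_W\}\leq \exp\{v_{U^*}^\top\theta_W\}$ for each $\ell$. Therefore
\[
\sum_{j\in S_t}\exp\{v_{tj}^\top\theta_W\} \;=\; \sum_{\ell=1}^L n_\ell\exp\{v_{U_\ell}^\top\theta_W\} \;\leq\; \Bigl(\sum_{\ell=1}^L n_\ell\Bigr)\exp\{v_{U^*}^\top\theta_W\} \;\leq\; K\exp\{v_{U^*}^\top\theta_W\}.
\]

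Finally, by construction $\tilde S_t$ consists of all $K$ items having feature $v_{U^*}$, so
\[
\sum_{j\in\tilde S_t}\exp\{v_{tj}^\top\theta_W\} \;=\; K\exp\{v_{U^*}^\top\theta_W\},
\]
and combining with the previous display gives the desired inequality on the exponential sums. Applying the monotone function $\phi$ yields $R(S_t)\leq R(\tilde S_t)$.

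There is really no hard step here: the argument is a monotonicity of $x\mapsto x/(1+x)$ composed with the trivial bound ``all items get replaced by the best-performing feature in the current assortment, and we are allowed up to $K$ copies.'' The only small thing to be careful about is the grouping step, which relies on the fact that every feature appearing in $S_t$ is one of the $v_U$ vectors in the adversarial construction (so the list $v_{U_1},\dots,v_{U_L}$ really does cover all features in $S_t$), and on the convention $|S_t|\leq K$ specified by the capacity constraint.
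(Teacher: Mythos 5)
Your proof is correct and follows essentially the same route as the paper's: reduce $R(S)$ to a monotone function of the total utility $\sum_{j\in S}\exp\{v_{tj}^\top\theta_W\}$ and observe that replacing every item by $K$ copies of the best feature $v_{U^*}$ can only increase that sum. If anything, your aggregate-sum version via $\phi(x)=x/(1+x)$ is slightly more explicit than the paper's per-item replacement argument about the case $|S_t|<K$, where $\tilde S_t$ contains more items than $S_t$; this is a minor bookkeeping point, not a different method.
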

\begin{proof}
	Because $r_{tj}\equiv 1$ in our construction, we have $R(S_t)=(\sum_{j\in S_t}u_{j})/(1+\sum_{j\in S_t}u_j)$ where $u_j=\exp\{v_j^\top\theta_W\}$ under $\theta_W$.
	Clearly $R(S)$ is a monotonically non-decreasing function in $u_j$.
	By replacing all $v_j\in S_t$ with $v_{U^*}\in\tilde S_t$, the $u_j$ values do not decrease and therefore the Proposition holds true.
\end{proof}

To simplify notation we also use $\tilde U_t$ to denote the unique $U^*\in\mathcal W_{d/4}$ in $\tilde S_t$.
We also use $\mathbb E_W$ and $\mathbb P_W$ to denote the law parameterized by $\theta_W$ and policy $\pi$.
The following lemma gives a lower bound on $R(\tilde S_t)-R(S_{\theta_W}^*)$ by comparing it with $W$,
which is also proved in the supplementary material.
\begin{lemma}
	Suppose $\epsilon\in(0,1/d\sqrt{d})$ and define $\delta := d/4 - |\tilde U_t\cap W|$. Then
	$$
	R(S_{\theta_W}^*)-R(\tilde S_t) \geq\frac{\delta\epsilon}{4K\sqrt{d}} .
	$$
	\label{lem:R-lb}
\end{lemma}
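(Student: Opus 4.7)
The plan is to reduce the revenue difference to a scalar expression in two exponentials and then apply elementary inequalities, exploiting the smallness of $\epsilon$ to keep the exponentials pinned near $1$.

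First I would unpack the inner products. Because the support of $v_U$ (coordinates in $U$) intersects the support of $\theta_W$ (coordinates in $W$) in exactly $|U\cap W|$ indices, with nonzero entries $1/\sqrt d$ and $\epsilon$ respectively, we get $v_U^\top\theta_W = |U\cap W|\epsilon/\sqrt d$. Writing $u^* := \exp(v_W^\top\theta_W)$ and $u := \exp(v_{\tilde U_t}^\top\theta_W)$, the definition $\delta = d/4 - |\tilde U_t\cap W|$ gives
\[
u^* = \exp\!\bigl(\tfrac{\epsilon\sqrt d}{4}\bigr), \qquad u = u^*\exp\!\bigl(-\tfrac{\delta\epsilon}{\sqrt d}\bigr).
\]
The assumption $\epsilon \leq 1/(d\sqrt d)$ is doing real work here: it guarantees $\epsilon\sqrt d/4 \leq 1/(4d) \leq 1/4$ and $\delta\epsilon/\sqrt d \leq 1/(4d) \leq 1/4$, so both $u$ and $u^*$ lie in $[1, e^{1/4}] \subset [1,2]$. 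This bound is what keeps the denominator below from blowing up.

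Next I would write the revenue difference in closed form. Since $r_{tj}\equiv 1$ and both $S^*_{\theta_W}$ and $\tilde S_t$ consist of $K$ duplicated items with exponentiated utilities $u^*$ and $u$ respectively (the first by the remark after Eq.~\eqref{eq:bayes-risk}, the second by construction of $\tilde S_t$), the MNL expected revenue collapses to $Kx/(1+Kx)$, and so
\[
R(S^*_{\theta_W}) - R(\tilde S_t) = \frac{Ku^*}{1+Ku^*} - \frac{Ku}{1+Ku} = \frac{K(u^* - u)}{(1+Ku^*)(1+Ku)}.
\]
For the numerator, $e^x - 1 \geq x$ for $x\geq 0$ combined with $u \geq 1$ yields $u^* - u = u\bigl(e^{\delta\epsilon/\sqrt d}-1\bigr) \geq \delta\epsilon/\sqrt d$. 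For the denominator, since $u,u^* \in [1,2]$ and $K \geq 1$, I would use $1 + Ku \leq 2Ku$ and $1+Ku^* \leq 2Ku^*$ (both factors are $\geq 1$), so $(1+Ku^*)(1+Ku) \leq 4K^2 u u^* \leq 4K^2 e^{\epsilon\sqrt d/2}$, and the exponential factor is a benign $O(1)$ constant arbitrarily close to $1$. Combining gives
\[
R(S_{\theta_W}^*) - R(\tilde S_t) \;\geq\; \frac{K\cdot \delta\epsilon/\sqrt d}{4K^2\,e^{\epsilon\sqrt d/2}} \;\geq\; \frac{\delta\epsilon}{4K\sqrt d}\cdot e^{-\epsilon\sqrt d/2},
\]
and the calibration $\epsilon\sqrt d/2 \leq 1/(2d)$ makes the exponential prefactor negligible; the cleanest version absorbs this slack into the assumption on $\epsilon$ to yield exactly $\delta\epsilon/(4K\sqrt d)$.

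The proof is elementary, so there is no deep obstacle. The only point requiring attention is tracking the constants tightly enough to recover the stated factor of $1/4$: the condition $\epsilon \leq 1/(d\sqrt d)$ is calibrated precisely so that every exponential appearing in the denominator is $1+o(1)$ rather than an uncontrolled constant. An alternative route that avoids the exponential bookkeeping entirely is to apply the mean value theorem to $\phi(x) := \sigma(x + \log K) = Ke^x/(1+Ke^x)$ on the short interval $[v_{\tilde U_t}^\top\theta_W,\, v_W^\top\theta_W]$ of length $\delta\epsilon/\sqrt d$; $\phi'(x) = \sigma(x+\log K)(1-\sigma(x+\log K))$ is uniformly close to $1/(4K)$ on this interval when $K$ is moderate, giving the lemma directly.
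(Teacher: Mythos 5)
Your proposal is correct and takes essentially the same route as the paper's proof: both write the gap in closed form as $K(u^*-u)/\bigl((1+Ku^*)(1+Ku)\bigr)$, lower-bound the numerator with $e^x-1\geq x$, and upper-bound the denominator using the smallness of $\epsilon$ — your multiplicative factoring $u^*-u=u\bigl(e^{\delta\epsilon/\sqrt{d}}-1\bigr)$ being a marginally cleaner variant of the paper's additive Taylor bound $1+x\leq e^x\leq 1+x+x^2/2$, which instead uses $\epsilon<1/d\sqrt{d}$ to absorb the quadratic remainder. One small caveat: your residual prefactor $e^{-\epsilon\sqrt{d}/2}$ cannot literally be ``absorbed'' into the assumption on $\epsilon$ to recover the constant $1/4$ exactly (at $K=1$ the exact constant is in fact unattainable, since the relevant slope is strictly below $1/4$), but this is immaterial because the paper's own proof is equally loose on the constant (it ends with $2Ke$ in the denominator) and the lemma only contributes to the universal constant in Theorem \ref{thm:lower}.
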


Define random variables $\tilde N_i := \sum_{t=1}^T\vct 1\{i\in\tilde U_t\}$.
Lemma \ref{lem:R-lb} immediately implies
\begin{equation}
\mathbb E_W\sum_{t=1}^TR(S_{\theta_W}^*)-R(\tilde S_t)
\geq \frac{\epsilon}{4K\sqrt{d}}\left(\frac{dT}{4} - \sum_{i\in W}\mathbb E_W[\tilde N_i] \right), \;\;\;\;\;\forall W\in\mathcal W_{d/4}.
\label{eq:count1}
\end{equation}

Denote $\mathcal W_{d/4}^{(i)} := \{W\in\mathcal W_{d/4}: i\in W\}$ and $\mathcal W_{d/4-1} := \{W\subseteq[d]: |W|=d/4-1\}$.
Averaging both sides of Eq.~(\ref{eq:count1}) with respect to all $W\in\mathcal W_{d/4}$ and swapping the summation order, we have
\begin{align*}
\frac{1}{|\mathcal W_{d/4}|}\sum_{W\in\mathcal W_{d/4}}&\mathbb E_{W}\sum_{t=1}^TR(S_{\theta_W}^*)-R(S_t)
\geq \frac{\epsilon}{4K\sqrt{d}} \frac{1}{|\mathcal W_{d/4}|}\sum_{W\in\mathcal W_{d/4}}\left(\frac{dT}{4} - \sum_{i\in W}\mathbb E_W[\tilde N_i]\right)\nonumber\\
&= \frac{\epsilon}{4K\sqrt{d}}\left(\frac{dT}{4} - \frac{1}{|\mathcal W_{d/4}|} \sum_{i=1}^d\sum_{W\in\mathcal W_{d/4}^{(i)}}\mathbb E_W[\tilde N_i]\right)\nonumber\\
&=  \frac{\epsilon}{4K\sqrt{d}}\left(\frac{dT}{4} - \frac{1}{|\mathcal W_{d/4}|} \sum_{W\in\mathcal W_{d/4-1}}\sum_{i\notin W}\mathbb E_{W\cup\{i\}}[\tilde N_i]\right)\nonumber\\
&\geq \frac{\epsilon}{4K\sqrt{d}}\left(\frac{dT}{4} - \frac{|\mathcal W_{d/4-1}|}{|\mathcal W_{d/4}|} \max_{W\in\mathcal W_{d/4-1}}\sum_{i\notin W}\mathbb E_{W\cup\{i\}}[\tilde N_i]\right)\\
&= \frac{\epsilon}{4K\sqrt{d}}\left(\frac{dT}{4} - \frac{|\mathcal W_{d/4-1}|}{|\mathcal W_{d/4}|} \max_{W\in\mathcal W_{d/4-1}}\sum_{i\notin W}\mathbb E_{W}[\tilde N_i] + \mathbb E_{W\cup\{i\}}[\tilde N_i]-\mathbb E_W[\tilde N_i]\right).
\end{align*}

Note that for any fixed $W$, $\sum_{i\notin W}\mathbb E_W[\tilde N_i] \leq \sum_{i=1}^d\mathbb E_W[\tilde N_i]\leq dT/4$.
Also, $|\mathcal W_{d/4-1}|/|\mathcal W_{d/4}| = \binom{d}{d/4-1}/\binom{d}{d/4}=\frac{d/4}{3d/4+1} \leq 1/3$.
Subsequently,
\begin{equation}
\frac{1}{|\mathcal W_{d/4}|}\sum_{W\in\mathcal W_{d/4}}\mathbb E_{W}\sum_{t=1}^TR(S_{\theta_W}^*)-R(S_t)
\geq \frac{\epsilon}{4K\sqrt{d}}\left(\frac{dT}{6} - \max_{W\in\mathcal W_{d/4-1}}\sum_{i\notin W}|\mathbb E_{W\cup\{i\}}[\tilde N_i]-\mathbb E_W[\tilde N_i]|\right).
\label{eq:count2}
\end{equation}

\subsubsection{Pinsker's inequality}

In this section we concentrate on upper bounding $|\mathbb E_{W\cup\{i\}}[\tilde N_i]-\mathbb E_W[\tilde N_i]|$ for any $W\in\mathcal W_{d/4-1}$.
Let $P=\mathbb P_W$ and $Q=\mathbb P_{W\cup\{i\}}$ denote the laws under $\theta_W$ and $\theta_{W\cup\{i\}}$, respectively.
Then
\begin{align*}
\big|\mathbb E_P[\tilde N_i]-\mathbb E_Q[\tilde N_i]\big|
& \leq \sum_{j=0}^Tj\cdot \big|P[\tilde N_i=j] - Q[\tilde N_i=j]\big|\\
&\leq T\cdot \sum_{j=0}^T\big|P[\tilde N_i=j] - Q[\tilde N_i=j]\big|\\
&\leq T\cdot \|P-Q\|_{\mathrm{TV}} \leq T\cdot\sqrt{\frac{1}{2}\kl(P\|Q)},
\end{align*}
where $\|P-Q\|_{\mathrm{TV}}=\sup_A|P(A)-Q(A)|$ is the total variation distance between $P$, $Q$,
$\kl(P\|Q)=\int(\log\ud P/\ud Q)\ud P$ is the Kullback-Leibler (KL) divergence between $P$, $Q$,
and the inequality $\|P-Q\|_{\mathrm{TV}} \leq \sqrt{\frac{1}{2}\kl(P\|Q)}$ is the celebrated Pinsker's inequality.

For every $i\in[d]$ define random variables $N_i := \sum_{t=1}^T \frac{1}{K}\sum_{v_{U}\in S_t}\vct 1\{i\in U\}$.
The next lemma upper bound the KL divergence, which is proved in the supplementary material.
\begin{lemma}
	For any $W\in\mathcal W_{d/4-1}$ and $i\in[d]$,
	$\kl(P_W\|P_{W\cup\{i\}}) \leq C_{\kl}\cdot \mathbb E_W[N_i] \cdot\epsilon^2/{d}$ for some universal constant $C_{\kl}>0$.
	\label{lem:kl}
\end{lemma}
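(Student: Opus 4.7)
The plan is to apply the chain rule for KL divergence: since $\pi$ picks $S_t$ as a measurable function of the history (independent of the underlying parameter), the KL between the joint laws of observations $(S_1,i_1,\dots,S_T,i_T)$ factorizes as
\[
\kl(P_W\|P_{W\cup\{i\}}) = \sum_{t=1}^T \mathbb{E}_W\!\left[\kl\!\left(p_{\theta_W,t}(\cdot|S_t)\,\big\|\,p_{\theta_{W\cup\{i\}},t}(\cdot|S_t)\right)\right],
\]
so everything reduces to a per-step MNL KL bound followed by a summation argument.

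For the per-step term, the two parameters differ only in coordinate $i$ by $\epsilon$, and by construction $[v_U]_i = \frac{1}{\sqrt d}\vct 1\{i\in U\}$. Writing $\eta := \epsilon/\sqrt d$ and $q_t := \sum_{k\in S_t:\,i\in U_k} p_{\theta_W,t}(k|S_t)$, a direct computation yields the per-step KL in closed form as $\log(1+q_t(e^\eta-1)) - q_t\eta$. Combining $\log(1+x)\leq x$ with $e^\eta - 1 - \eta \leq \eta^2 e^\eta/2$ upper-bounds it by $q_t\eta^2$ (since $\eta \leq 1/d^2 \leq \log 2$). Next I would exploit the boundedness of utilities: because $\epsilon\leq 1/(d\sqrt d)$, each $|v_U^\top\theta_W|=\epsilon|W\cap U|/\sqrt d \leq \epsilon\sqrt d/4 \leq 1/4$, so every utility $u_k = \exp(v_{U_k}^\top\theta_W)$ lies in $[e^{-1/4},e^{1/4}]$. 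Hence for $|S_t|\geq 1$,
\[
p_{\theta_W,t}(k|S_t) \leq \frac{e^{1/4}}{1+|S_t|e^{-1/4}} \leq \frac{e^{1/2}}{|S_t|},
\]
giving $q_t \leq e^{1/2} m_t/|S_t|$ where $m_t := |\{k\in S_t:\, i\in U_k\}|$.

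The hard part is converting the factor $1/|S_t|$ into the $1/K$ demanded by the definition $N_i = \sum_t m_t/K$. The plan is a WLOG reduction to full-capacity assortments: under $r_{tj}\equiv 1$ with utilities bounded in $[e^{-1/4},e^{1/4}]$, any assortment of size $s$ produces revenue at most $se^{1/4}/(1+se^{-1/4})$, while the optimal size-$K$ assortment yields revenue $1-O(1/K)$; so a policy playing $|S_t|<K/2$ on $\Theta(T)$ rounds already incurs $\Omega(T)$ regret, which trivially exceeds the target lower bound $d\sqrt T/K$. Restricting to policies with $|S_t|\geq K/2$ thus gives $q_t \leq 2e^{1/2} m_t/K$, and summing the per-step bound yields
\[
\kl(P_W\|P_{W\cup\{i\}}) \leq \eta^2 \sum_t \mathbb{E}_W[q_t] \leq \frac{2e^{1/2}\eta^2}{K} \sum_t \mathbb{E}_W[m_t] = 2e^{1/2}\cdot\mathbb{E}_W[N_i]\cdot\frac{\epsilon^2}{d},
\]
which is the claim with $C_{\kl}=2e^{1/2}$. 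The main delicacy will be making the full-capacity reduction precise and verifying that it slots into the subsequent Pinsker and counting steps without damaging the $\Omega(d\sqrt T/K)$ rate; an alternative route is to sharpen the per-step bound by using $q_t(1-q_t)$ in place of $q_t$ together with the no-purchase probability estimate $1-q_t\gtrsim 1/(1+|S_t|)$, which should yield the same result.
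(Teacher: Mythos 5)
Your core computation is correct and in places cleaner than the paper's. Where the paper bounds $\kl(P_W(\cdot|S_t)\,\|\,P_{W\cup\{i\}}(\cdot|S_t))\leq \sum_j |p_j-q_j|^2/q_j$ and then controls each $|p_j-q_j|$ term by term, you observe that $\theta_{W\cup\{i\}}$ is an exponential tilt of $\theta_W$ that multiplies the utilities of exactly the items with $i\in U$ by $e^{\eta}$, $\eta=\epsilon/\sqrt{d}$, yielding the exact identity $\kl = \log\bigl(1+q_t(e^{\eta}-1)\bigr)-q_t\eta \leq q_t\eta^2$; your chain-rule decomposition over $t$ matches the paper's implicit summation. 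More importantly, you correctly isolate a subtlety the paper glosses over: the paper's intermediate bounds (e.g., $|p_0-q_0|\leq \frac{2\sum_{j\in S_t}|v_j^\top(\theta_W-\theta_{W\cup\{i\}})|}{(1+K/e)^2}$) silently assume $\sum_{j\in S_t}\exp\{v_j^\top\theta\}\gtrsim K$, i.e., $|S_t|\asymp K$, whereas a first-principles argument like yours only gives $q_t\lesssim m_t/|S_t|$. Indeed, the lemma as stated is false for arbitrary policies: a policy that always plays a single item with $i\in U$ has per-step KL $\asymp q_t(1-q_t)\eta^2\asymp \epsilon^2/d$ (since $q_t\approx 1/2$), which exceeds $C_{\kl}\,\epsilon^2/(Kd)$ for large $K$. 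So some restriction on assortment sizes, of the kind you propose, is genuinely needed — and your remark that the $q_t(1-q_t)$ refinement is an alternative is misleading, since in this counterexample $q_t(1-q_t)\asymp 1$ and it cannot recover the $1/K$ factor either.

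However, your reduction contains a quantitative error. Since $r_{tj}\equiv 1$ and all utilities lie in $[1,e^{1/4}]$, a size-$s$ assortment earns revenue $1-\Theta(1/(1+s))$ while the optimum earns $1-\Theta(1/K)$; at $s\approx K/2$ the per-round gap is $\Theta(1/K)$, not $\Theta(1)$, so playing $|S_t|<K/2$ on $\Theta(T)$ rounds incurs only $\Omega(T/K)$ regret, not $\Omega(T)$. The dichotomy still survives, because $\Omega(T/K)\geq C_2\, d\sqrt{T}/K$ once $\sqrt{T}\gtrsim d$, and the construction already forces $T\gtrsim d^4$ via the requirement $\epsilon=\sqrt{d/144C_{\kl}'T}<1/d\sqrt{d}$ — but your stated justification must be repaired, and the dichotomy must be run on the random, parameter-dependent count of small rounds (e.g., on its expectation averaged over the uniform prior on $W$), which you flag but do not resolve. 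A cleaner patch that avoids the dichotomy entirely: your own per-step bound already proves $\kl(P_W\|P_{W\cup\{i\}})\lesssim \mathbb{E}_W[N_i']\cdot\epsilon^2/d$ with $N_i':=\sum_t m_t/(1+|S_t|)$, and the downstream counting and Pinsker steps go through verbatim with $N_i'$ in place of $N_i$, since the only fact used there is $\sum_{i=1}^d \mathbb{E}_W[N_i']\leq dT/4$, which still holds because $\sum_i m_t^{(i)}=|S_t|\cdot d/4$.
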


Combining Lemma \ref{lem:kl} and Eq.~(\ref{eq:count2}), we have
\begin{equation*}
\frac{1}{|\mathcal W_{d/4}|}\sum_{W\in\mathcal W_{d/4}}\mathbb E_{W}\sum_{t=1}^TR(S_{\theta_W}^*)-R(S_t)
\geq \frac{\epsilon}{4K\sqrt{d}}\left(\frac{dT}{6} - T\sum_{i=1}^d\sqrt{C_{\kl}\mathbb E_W[N_i]\epsilon^2/d}\right).
\end{equation*}

Further using Cauchy-Schwartz inequality, we have
$$\sum_{i=1}^d\sqrt{C_{\kl}\mathbb E_W[N_i]\epsilon^2/d}
\leq \sqrt{d}\cdot \sqrt{\sum_{i=1}^d C_{\kl}\mathbb E_W[N_i]\epsilon^2/d},$$
which is further upper bounded by $\sqrt{d}\cdot \sqrt{C_{\kl}T\epsilon^2/4}$ because $\sum_{i=1}^d\mathbb E_W[N_i]\leq dT/4$.
Subsequently,
\begin{equation}
\frac{1}{|\mathcal W_{d/4}|}\sum_{W\in\mathcal W_{d/4}}\mathbb E_{W}\sum_{t=1}^TR(S_{\theta_W}^*)-R(S_t) \geq  \frac{\epsilon}{4K\sqrt{d}}\left(\frac{dT}{6} -T \sqrt{C_{\kl}'dT\epsilon^2}\right),
\end{equation}
where $C_{\kl}'=C_{\kl}/4$.
Setting $\epsilon = \sqrt{d/144C_{\kl}'T}$ we complete the proof of Theorem \ref{thm:lower}.

\section{The combinatorial optimization subproblem}
\label{sec:comb}

The major computational bottleneck of our algorithm is its Step \ref{alg:step6},
which involves solving a \emph{combinatorial} optimization problem.
For notational simplicity,  we equivalently reformulate this problem as follows:
\begin{align}
\max_{S\subseteq[N], |S|\leq K} &\mathrm{ESTR}(S) + \min\left\{1, \omega\cdot \mathrm{CI}(S)\right\} \;\;\;\;\;\text{where}\;\;\mathrm{ESTR}(S):=\frac{\sum_{j\in S}r_{tj}\hat u_{tj}}{1+\sum_{j\in S}\hat u_{tj}}\;\;\text{and}\;\;
\label{eq:comb-opt}\\
& \mathrm{CI}(S) := \sqrt{\left\|\frac{\sum_{j\in S}\hat u_{tj} x_{tj} x_{tj}^\top}{1+\sum_{j\in S}\hat u_{tj}} -\left(\frac{\sum_{j\in S}\hat u_{tj}x_{tj}}{1+\sum_{j\in S}\hat u_{tj}}\right)\left(\frac{\sum_{j\in S}\hat u_{tj}x_{tj}}{1+\sum_{j\in S}\hat u_{tj}}\right)^\top \right\|_\op}.\nonumber
\end{align}

Here $\hat u_{tj} := \exp\{v_{tj}^\top\hat\theta_{t-1}\}$ and $x_{tj} := \hat I_{t-1}^{-1/2}(\hat\theta_{t-1})v_{tj}$, both of which can be pre-computed before solving Eq.~(\ref{eq:comb-opt}).

A brute-force way to compute Eq.~(\ref{eq:comb-opt}) is to enumerate all subsets $S\subseteq[N]$, $|S|\leq K$ and select the one with the largest objective value.
Such an approach is \emph{not} an efficient (polynomial-time) algorithm and is therefore not scalable.

In this section we provide two alternative methods for (approximately) solving the combinatorial optimization problem in Eq.~(\ref{eq:comb-opt}).
Our first algorithm is based on discretized dynamic programming and enjoys rigorous approximation guarantees.
The second algorithm is a computationally efficient greedy heuristic. Although the greedy heuristic does not have rigorous guarantees, our numerical result suggests it works reasonably well (see Sec.~\ref{sec:numerical}).

\subsection{Approximation algorithms for assortment optimization}\label{sec:approx}

In this section we introduce algorithms with polynomial running times and rigorous approximation guarantees for the optimization task described in Eq.~(\ref{eq:comb-opt}).
%
We first formally introduce the concept of \emph{$(\alpha,\varepsilon,\delta)$-approximation} to characterize the approximation performance,
and show that such approximation guarantees imply certain upper bounds on the final regret.
\begin{definition}[$(\alpha,\varepsilon,\delta)$-approximation]
	Fix $\alpha\geq 1$, $\varepsilon\geq 0$ and $\delta\in[0,1)$.
	An algorithm is an \emph{$(\alpha,\varepsilon,\delta)$-approximation algorithm}
	if it produces $\hat S\subseteq[N]$, $|\hat S|\leq K$ such that with probability at least $1-\delta$,
	\begin{equation}
	\mathrm{ESTR}(\hat S) + \min\{1, \alpha\omega\cdot \mathrm{CI}(\hat S)\} + \varepsilon \geq \mathrm{ESTR}(S^*) + \min\{1,\omega\cdot\mathrm{CI}(S^*)\},
	\label{eq:approx}
	\end{equation}
	where $S^*$ is the assortment set maximizing the actual objective in Eq.~(\ref{eq:comb-opt}) \footnote{{We slightly abuse the notation $S^*$ here following the optimization convention that $S^*$ denotes the optimal solution. Note that $S^*$ is different from $S_t^*$ in \eqref{eq:regret}, where the latter means the assortment that maximizes the  expected revenue at time $t$.}}.
	\label{defn:approx}
\end{definition}

The following lemma shows how $(\alpha,\varepsilon,\delta)$-approximation algorithms imply an upper bound on the accumulated.
It is proved using standard analysis of UCB type algorithms, with the complete proof given in the supplementary material.
\begin{lemma}
	Suppose an $(\alpha,\varepsilon,\delta)$-approximation algorithm is used instead of exact optimization in the MLE-UCB policy at each time period $t$.
	Then its regret can be upper bounded by
	\begin{equation*}
	\alpha\cdot \mathrm{Regret}^* + \varepsilon T + \delta T^2 + O(1),
	\end{equation*}
	where $\mathrm{Regret}^*$ is the regret upper bound shown by Theorem \ref{thm:mle-ucb} for Algorithm \ref{alg:mle-ucb} with exact optimization in Step \ref{alg:step6}.
	\label{lem:approx}
\end{lemma}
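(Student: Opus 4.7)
The plan is to replay the regret analysis of Theorem \ref{thm:mle-ucb} (summarized in Sec.~3.2), replacing the exact UCB maximizer $S_t$ by the approximate solution $\hat S_t$ and carefully tracking the extra slack introduced by the three approximation parameters $\alpha$, $\varepsilon$, $\delta$. The first step is to introduce the inflated upper confidence score
\begin{equation*}
\tilde{\bar R}_t(S) := \mathrm{ESTR}(S) + \min\{1, \alpha\omega\cdot\mathrm{CI}(S)\}.
\end{equation*}
Since $\alpha\ge 1$, we have $\tilde{\bar R}_t(S)\ge \bar R_t(S)$ pointwise, so $\tilde{\bar R}_t$ inherits the UCB property from Lemma \ref{lem:ucb}(1) on the event described there.

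Next, define the \emph{good event} $\cE$ as the intersection of (i) the high-probability event of Lemmas \ref{lem:mle} and \ref{lem:ucb}, and (ii) the success of the approximation algorithm at every $t\in\{T_0+1,\dots,T\}$. A union bound gives $\Pr[\cE]\ge 1 - O(T^{-1}) - \delta T$. On $\cE$, Definition \ref{defn:approx} chains with Lemma \ref{lem:ucb}(1) as
\begin{equation*}
\tilde{\bar R}_t(\hat S_t) + \varepsilon \;\ge\; \max_{S}\bar R_t(S) \;\ge\; \bar R_t(S_t^*) \;\ge\; R_t(S_t^*),
\end{equation*}
which is the analog of Eq.~\eqref{eq:ucb-triangle} and yields
\begin{equation*}
R_t(S_t^*) - R_t(\hat S_t) \;\le\; \bigl[\tilde{\bar R}_t(\hat S_t) - \bar R_t(\hat S_t)\bigr] + \bigl[\bar R_t(\hat S_t) - R_t(\hat S_t)\bigr] + \varepsilon.
\end{equation*}

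The second bracket is directly bounded by Lemma \ref{lem:ucb}(2). For the first bracket, the elementary inequality $\min\{1,\alpha x\} - \min\{1,x\}\le (\alpha-1)\min\{1,x\}$ (valid for $\alpha\ge 1$, $x\ge 0$), applied with $x=\omega\cdot\mathrm{CI}(\hat S_t)$, shows that this gap is also of order $\alpha\cdot\min\bigl\{1,\omega\sqrt{\|I_{t-1}^{-1/2}(\theta_0)M_t(\theta_0|\hat S_t)I_{t-1}^{-1/2}(\theta_0)\|_{\op}}\bigr\}$. Summing over $t=T_0+1,\dots,T$, then applying Cauchy--Schwarz and the elliptical potential lemma (Lemma \ref{lem:elliptical}) exactly as in Sec.~\ref{sec:ellipitical}, the accumulated regret on $\cE$ is at most $\alpha\cdot \mathrm{Regret}^* + \varepsilon T$.

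It remains to account for the complement of $\cE$. The Lemma \ref{lem:ucb} failure event contributes $O(T^{-1})\cdot T = O(1)$ as in the original proof, while the approximation failure event (probability $\le T\delta$) is bounded trivially by the worst-case per-horizon revenue $T$, contributing at most $T\delta\cdot T=\delta T^2$. The pure-exploration regret is absorbed into $\mathrm{Regret}^*$. Summing yields the desired bound $\alpha\cdot\mathrm{Regret}^* + \varepsilon T + \delta T^2 + O(1)$. The only subtle point is that Definition \ref{defn:approx} states its guarantee in terms of the \emph{empirical} $\mathrm{CI}(\cdot)$ computed at $\hat\theta_{t-1}$, whereas the elliptical potential step is stated in terms of the \emph{population} CI at $\theta_0$; this transfer is already performed inside the proof of Lemma \ref{lem:ucb} using the local-MLE closeness from Lemma \ref{lem:mle}, so it is inherited here at no additional cost, being the main obstacle one needs to verify rather than reprove.
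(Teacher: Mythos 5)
Your proof is correct and follows essentially the same route as the paper's: a union bound over per-period approximation failures yielding the $\delta T^2$ term, a per-period chaining of the $(\alpha,\varepsilon,\delta)$-guarantee with the UCB validity of $\bar R_t$ to get slack $\varepsilon$ plus an $\alpha$-inflated confidence width, and then the Cauchy--Schwarz/elliptical-potential argument of Sec.~\ref{sec:ellipitical} to absorb the inflated widths into $\alpha\cdot\mathrm{Regret}^*$. Your inflated bound $\tilde{\bar R}_t$ together with the inequality $\min\{1,\alpha x\}-\min\{1,x\}\leq(\alpha-1)\min\{1,x\}$ is just a repackaging of the paper's three-term telescoping decomposition (and your explicit remark on the empirical-to-population CI transfer via Lemma \ref{lem:close-M} is a point the paper leaves implicit), so no substantive difference remains.
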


In the rest of this section we introduce our proposed approximation algorithm and the approximation guarantee.
To highlight the main idea of the approximation algorithm, we only describe how the algorithm operates in the \emph{univariate} ($d=1$) case,
while leaving the general multivariate ($d>1$) case to the appendix.

Our approximation algorithm can be roughly divided into three steps.
In the first step, we use a ``discretization'' trick to approximate the objective function
using ``rounded'' parameter values.
Such rounding motivates the second step, in which we define ``reachable states'' and
present a simple yet computationally expensive brute-force method to enumerate all reachable states,
and establish approximation guarantees for such methods. {This brute-force method is only presented for illustration purposes and will be replaced by a dynamic programing algorithm proposed in the third step.}
In particular, a \emph{dynamic programming} algorithm is developed to compute which states are ``reachable'' in polynomial time.


\subsubsection{The discretization trick}

In the univariate case, $\{x_{tj}\}$ are scalars and therefore $x_{tj}x_{tj}^\top$ is simply $x_{tj}^2$.
Let $\Delta>0$ be a small positive discretization parameter to be specified later.
For all $i\in[N]$, define
%
%
\begin{equation}
\mu_i := \left[\frac{\hat u_{ti}}{\Delta}\right]\Delta, \;\;
\alpha_i := \left[\frac{\hat u_{ti}x_{ti}}{\Delta}\right]\Delta,\;\;
\beta_i := \left[\frac{\hat u_{ti}x_{ti}^2}{\Delta}\right]\Delta,\;\;
\gamma_i := \left[\frac{\hat u_{ti}r_{ti}}{\Delta}\right]\Delta,
\label{eq:muabg}
\end{equation}
where $[a]$ denotes the nearest integer a real number $a$ is rounded into.
Intuitively, $\mu_i$ is the real number closest to $\hat u_{ti}$ that is an \emph{integer} multiple of the discretization parameter $\Delta$,
and similarly for $\alpha_i,\beta_i,\gamma_i$.

The motivation for the definitions of $\{\mu_i,\alpha_i,\beta_i,\gamma_i\}$ is their \emph{sufficiency} in computing the objective function $\mathrm{ESTR}(S)+\min\{1,\omega\cdot \mathrm{CI}(S)\}$.
Indeed, for any $S\subseteq[n]$, $|S|\leq K$, define $\mu=\sum_{j\in S}\mu_j$, $\alpha = \sum_{j\in S}\alpha_j$, $\beta=\sum_{j\in S}\beta_j$, $\gamma=\sum_{j\in S}\gamma_j$
and
\begin{equation*}
\hat{\mathrm{ESTR}}(S) := \frac{\gamma}{1+\mu}, \;\;\;\;\;\;
\hat{\mathrm{CI}}(S) := \max\left\{0,\sqrt{\frac{\beta}{1+\mu}-\left(\frac{\alpha}{1+\mu}\right)^2}\right\}.
\end{equation*}

Following the definition of $\mathrm{ESTR}(S)$ and $\mathrm{CI}(S)$, it is easy to see that $\hat{\mathrm{ESTR}}(S)\to \mathrm{ESTR}(S)$
and $\hat{\mathrm{CI}}(S)\to\mathrm{CI}(S)$ as $\Delta\to 0^+$.
The following lemma gives a more precise control of the error between $\hat{\mathrm{ESTR}}(S)$, $\hat{\mathrm{CI}}(S)$ and $\mathrm{ESTR}(S)$, $\mathrm{CI}(S)$
using the values of $\Delta$ and the maximum utility parameter in $S$.

\begin{lemma}
	For any $S\subseteq[N]$, $|S|\leq K$,
	suppose $U = \max_{j\in S}\{1, \hat{u}_{tj}\}$ and $\Delta = \epsilon_0 U / K$ for some $\epsilon_0>0$.
	Suppose also $|x_{tj}|\leq\nu$ for all $t,j$. Then
	\begin{equation}
	\big|\mathrm{ESTR}(S)-\hat{\mathrm{ESTR}}(S)\big| \leq 6\epsilon_0\;\;\;\;\;\text{and}\;\;\;\;\;
	\big|\mathrm{CI}(S)-\hat{\mathrm{CI}}(S)\big| \leq  \sqrt{24 \epsilon_0} (1 +\nu),
	\end{equation}
	\label{lem:approx-1d-estr-ci}
\end{lemma}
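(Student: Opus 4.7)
The plan is to reduce everything to a per-coordinate rounding analysis. Each of the definitions in \eqref{eq:muabg} gives $|\mu_i-\hat u_{ti}|$, $|\alpha_i-\hat u_{ti}x_{ti}|$, $|\beta_i-\hat u_{ti}x_{ti}^2|$, and $|\gamma_i-\hat u_{ti}r_{ti}|$ all bounded by $\Delta/2$, just by the definition of nearest-integer rounding. Summing over the at-most-$K$ indices in $S$ yields $|\mu-\tilde\mu|$, $|\alpha-\tilde\alpha|$, $|\beta-\tilde\beta|$, $|\gamma-\tilde\gamma| \leq K\Delta/2 = \epsilon_0 U/2$, where tilded quantities denote the corresponding unrounded sums. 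A key auxiliary observation is that, if $j^\star\in\arg\max_{j\in S}\hat u_{tj}$, then $\mu_{j^\star}\geq U-\Delta/2$; combined with $\mu\geq 0$ and $U\geq 1$ (and WLOG $\epsilon_0\leq K/2$), this gives the uniform lower bound $1+\mu\geq U/2$, and hence $K\Delta/(1+\mu)\leq 2\epsilon_0$.

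For the $\mathrm{ESTR}$ bound, I would use the standard ``difference of fractions'' identity
\[
\frac{\tilde\gamma}{1+\tilde\mu}-\frac{\gamma}{1+\mu}\;=\; \frac{(\tilde\gamma-\gamma)(1+\tilde\mu)+\tilde\gamma(\mu-\tilde\mu)}{(1+\mu)(1+\tilde\mu)},
\]
and bound each summand by $K\Delta/(1+\mu)$ using the aggregate rounding estimates together with $\tilde\gamma/(1+\tilde\mu)\leq 1$ (which is immediate from $r_{tj}\in[0,1]$ and $\tilde\gamma\leq\tilde\mu\leq 1+\tilde\mu$). This immediately yields $|\mathrm{ESTR}(S)-\hat{\mathrm{ESTR}}(S)|\leq 2\epsilon_0\leq 6\epsilon_0$.

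For the $\mathrm{CI}$ bound, let $A$ and $\hat A$ denote the quantities under the square roots in $\mathrm{CI}(S)$ and $\hat{\mathrm{CI}}(S)$. I would control $|A-\hat A|$ by splitting it as $|\tilde\beta/(1+\tilde\mu)-\beta/(1+\mu)|+|(\tilde\alpha/(1+\tilde\mu))^2-(\alpha/(1+\mu))^2|$. The first summand is handled by the same fraction decomposition as above, this time using $\tilde\beta/(1+\tilde\mu)\leq\nu^2$. The second is factored as $|a-b|\,|a+b|$ with $a=\tilde\alpha/(1+\tilde\mu)$ and $b=\alpha/(1+\mu)$; here $|a|\leq\nu$ follows from $|\tilde\alpha|\leq\nu\tilde\mu$, and $|b|\leq\nu+O(\epsilon_0(1+\nu))$ is obtained from the fraction decomposition again. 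The combined estimate is $|A-\hat A|=O(\epsilon_0(1+\nu)^2)$, comfortably below the target $24\epsilon_0(1+\nu)^2$. To pass from $|A-\hat A|$ to $|\mathrm{CI}(S)-\hat{\mathrm{CI}}(S)|$ I would invoke the elementary inequality $(\sqrt{x}-\sqrt{y})^2\leq|x-y|$ for $x,y\geq 0$; the $\max\{0,\cdot\}$ in the definition of $\hat{\mathrm{CI}}$ is needed precisely because discretization can render $\hat A$ slightly negative even though $A\geq 0$ (being a conditional variance of $x_{tj}$ under the MNL law), and in that pathological case $|\sqrt{A}-0|^2=A\leq A-\hat A$, so the same square-root bound applies.

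The only genuine subtlety is the lower bound $1+\mu\gtrsim U$, which quietly requires $\epsilon_0$ to be at most a constant multiple of $K$; this is harmless because $\epsilon_0$ is ultimately chosen small when invoking Lemma \ref{lem:approx}. Everything else is mechanical bookkeeping of constants, and the loose factors in $6$ and $\sqrt{24}$ leave ample slack to absorb any minor inefficiencies in the estimates.
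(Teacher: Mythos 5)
Your proposal is correct and follows essentially the same route as the paper's own proof: both aggregate the per-item rounding errors into an $O(\epsilon_0 U)$ perturbation of the sums, lower-bound the relevant denominator by $\Omega(U)$ (which in both arguments silently requires $\epsilon_0$ bounded by a constant --- the paper's proof inserts $\epsilon_0\in(0,1/2]$ at the same spot), handle $|\mathrm{ESTR}(S)-\hat{\mathrm{ESTR}}(S)|$ via the standard fraction-difference identity, and handle $|\mathrm{CI}(S)-\hat{\mathrm{CI}}(S)|$ by bounding the gap between the quantities under the square roots --- your $|a^2-b^2|=|a-b|\,|a+b|$ factoring is algebraically identical to the paper's $|\tau_4-2\tau_3 C-\tau_3^2|$ expansion --- and concluding with $\bigl|\sqrt{x}-\sqrt{\max\{0,y\}}\bigr|\leq\sqrt{|x-y|}$ via the same two-case analysis on the sign of the discretized variance. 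The only cosmetic deviations are the sharper $\Delta/2$ per-coordinate rounding bound and your lower-bounding of the rounded denominator $1+\mu$ rather than the paper's $B-|\tau_2|$ (note your intermediate claim $\mu_{j^\star}\geq U-\Delta/2$ literally holds only when $\hat u_{tj^\star}\geq 1$, but the complementary case $U=1$ is covered by $1+\mu\geq 1$, exactly as the ingredients you list already indicate), so the argument is sound as written.
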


The complete proof of Lemma \ref{lem:approx-1d-estr-ci} is relegated to the supplementary material.
%
%
\subsubsection{Reachable states and a brute-force algorithm}

To apply the estimation error bounds in Lemma \ref{lem:approx-1d-estr-ci}
one needs to first enumerate $q\in[N]$ giving rise to the item in $S$ with the largest utility parameter $\hat u_{tq}$.
After such an element $q$ is enumerated, the discretization parameter $\Delta=\epsilon_0 U/K=\epsilon_0\max\{1,\hat u_{tq}\}/K$
can be determined and discretized values $\mu_i,\alpha_i,\beta_i,\gamma_i$ can be computed for all $i\in[N]/\backslash\{q\}$.
It is also easy to verify that there are at most $O(K/\epsilon)$ possible values of $\mu_i,\gamma_i$,
$O(K\nu/\epsilon)$ possible values of $\alpha_i$ and $O(K\nu^2/\epsilon)$ possible values of $\beta_i$ (recall that $\nu$ is the upper bound of $|x_{tj}$ for all $t$ and $j$).

For any $i\in[N]\cup\{0\}$, $k\in[K]\cup\{0\}$ and $\mu,\alpha,\beta,\gamma\geq 0$ being \emph{integer} multiples of $\Delta$,
we use a tuple $\varsigma_i^k(\mu,\alpha,\beta,\gamma)$ to denote a \emph{state}. {Here the indices $i$ and $k$ mean that the assortment  $S\subseteq\{1,2,\cdots,i\}$ and $|S|=k$.}
Clearly there are at most $O(NK^5\nu^3/\epsilon^4)$ different types of states.
A state $\varsigma_i^k(\mu,\alpha,\beta,\gamma)$ can be either \emph{reachable} or \emph{non-reachable},
as defined below:
\begin{definition}
	Let $q\in[N]$ be the enumerated item with maximal utility parameter and $U=\max\{1,\hat u_{tq}\}$, $\Delta=\epsilon_0 U/K$.
	A state $\varsigma_i^k(\mu,\alpha,\beta,\gamma)$ is \emph{reachable} if there exists $S\subseteq[N]$ satisfying the following:
	\begin{enumerate}
		\item $S\subseteq\{1,2,\cdots,i\}$ and $|S|=k$;
		\item $\hat u_{tj}\leq \hat u_{tq}$ for all $j\in S$;
		\item if $i\geq q$ then $q\in S$;
		\item $\mu=\sum_{j\in S}\mu_j$, $\alpha=\sum_{j\in S}\alpha_j$, $\beta=\sum_{j\in S}\beta_j$ and $\gamma=\sum_{j\in S}\gamma_j$.
	\end{enumerate}
	On the other hand, a state $\varsigma_i^k(\mu,\alpha,\beta,\gamma)$ is \emph{non-reachable} if at least one condition above is violated.
	\label{defn:reachable}
\end{definition}

A simple way to find all reachable states is to enumerate all $S\subseteq[N]$, $|S|\leq K$ and verify the three conditions in Definition \ref{defn:reachable}.
While such a procedure is clearly computationally intractable, in the next section we will present a dynamic programming approach to compute
all reachable states in polynomial time.
After all reachable states are computed, enumerate over every $q\in[N]$ and reachable $\zeta_N^k(\cdot,\cdot,\cdot,\cdot)$ for $k\in[K]$
and find $\hat S$ that maximizes $\hat{\mathrm{ESTR}}(\hat S)+\min\{1,\omega\cdot\hat{\mathrm{CI}}(\hat S)\}$.
The following corollary establishes the approximation guarantee for $\hat S$, following Lemma \ref{lem:approx-1d-estr-ci}.

\begin{corollary}
	Let $\hat S\subseteq[N]$, $|\hat S|\leq K$ be a subset corresponding to a reachable state $\varsigma_N^k(\cdot,\cdot,\cdot,\cdot)$ for some $k\in[K]$, $q\in[N]$,
	that maximizes $\hat{\mathrm{ESTR}}(\hat S)+\min\{1,\omega\cdot\hat{\mathrm{CI}}(\hat S)\}$. Then
	$$
	{\mathrm{ESTR}}(\hat S)+\min\{1,\omega\cdot{\mathrm{CI}}(\hat S)\}
	\geq \max_{S\subseteq[N],|S|\leq K}\mathrm{ESTR}(S)+\min\{1,\omega\cdot\mathrm{CI}(S)\} - (6\epsilon_0 + \omega(1+\nu)\sqrt{24\epsilon_0}).
	$$
	\label{cor:approx-1d-estr-ci}
\end{corollary}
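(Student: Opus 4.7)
\medskip

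\noindent\textbf{Proof proposal for Corollary \ref{cor:approx-1d-estr-ci}.}
The plan is to use a standard ``approximate optimum $\leq$ true optimum + two rounding errors'' argument. Write $S^\star \in \arg\max_{S \subseteq [N], |S|\leq K} \mathrm{ESTR}(S) + \min\{1, \omega\cdot \mathrm{CI}(S)\}$ for the true optimizer, and let $q^\star \in S^\star$ be the item achieving $\max_{j \in S^\star}\hat u_{tj}$ (break ties by index if necessary). The key step is to certify that when the outer enumeration in our dynamic-programming scheme reaches $q = q^\star$, the discretized state generated by $S^\star$ is reachable in the sense of Definition \ref{defn:reachable}; once that is established, a simple optimality/triangle-inequality chain closes the argument.

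Concretely, with $q = q^\star$ the algorithm uses $U = \max\{1, \hat u_{tq^\star}\}$ and $\Delta = \epsilon_0 U/K$, which matches the hypothesis of Lemma \ref{lem:approx-1d-estr-ci} applied to $S^\star$ because $\hat u_{tj} \leq \hat u_{tq^\star}$ for all $j \in S^\star$. The quadruple $(\mu, \alpha, \beta, \gamma) = (\sum_{j\in S^\star}\mu_j, \sum_{j\in S^\star}\alpha_j, \sum_{j\in S^\star}\beta_j, \sum_{j\in S^\star}\gamma_j)$ with $i = N$ and $k = |S^\star|$ then satisfies all four items of Definition \ref{defn:reachable} (the condition $q^\star \in S^\star$ is automatic). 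Hence $S^\star$ is the witness of a reachable state of type $\varsigma_N^{|S^\star|}(\cdot,\cdot,\cdot,\cdot)$ under the enumeration $q = q^\star$.

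Now apply Lemma \ref{lem:approx-1d-estr-ci} in two places. Writing $\Psi(S) := \mathrm{ESTR}(S) + \min\{1, \omega\cdot\mathrm{CI}(S)\}$ and $\hat\Psi(S) := \hat{\mathrm{ESTR}}(S) + \min\{1, \omega \cdot \hat{\mathrm{CI}}(S)\}$, the Lipschitz bound $|\min\{1,\omega a\} - \min\{1,\omega b\}| \leq \omega|a-b|$ combined with Lemma \ref{lem:approx-1d-estr-ci} gives
\[
|\Psi(S) - \hat\Psi(S)| \leq 6\epsilon_0 + \omega(1+\nu)\sqrt{24\epsilon_0}
\]
for every $S$ satisfying the hypothesis of Lemma \ref{lem:approx-1d-estr-ci} relative to the enumerated $q$. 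Applied first to $S^\star$ (under $q = q^\star$) and then to $\hat S$ (under the enumeration that produced it), together with the fact that $\hat S$ maximizes $\hat\Psi$ over \emph{all} reachable states across \emph{all} choices of $q$ (so in particular $\hat\Psi(\hat S) \geq \hat\Psi(S^\star)$ with the values of $\hat\Psi(S^\star)$ computed using $\Delta$ coming from $q = q^\star$), the claimed bound follows after a single triangle-inequality step.

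The main technical issues to handle carefully are (i) the fact that the discretization $\Delta$ depends on the outer enumeration variable $q$, so one must keep track of which $\Delta$ is used on each side of the comparison, and (ii) absorbing the two rounding errors (one for $S^\star$, one for $\hat S$) into the constant stated in the corollary; the prefactors of $6$ and $\sqrt{24}$ are ``slack'' constants from Lemma \ref{lem:approx-1d-estr-ci} that can accommodate this. No new estimates beyond Lemma \ref{lem:approx-1d-estr-ci} and the Lipschitz property of the truncation $\min\{1, \cdot\}$ should be needed.
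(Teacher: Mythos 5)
Your proposal is correct and takes essentially the same route as the paper, whose entire proof is the one-line remark that the corollary ``follows easily by plugging in the upper bounds of estimation error in Lemma~\ref{lem:approx-1d-estr-ci}'': like you, it implicitly certifies that the state induced by the true optimizer $S^*$ is reachable under the enumeration $q=q^\star$ and then chains the discretization bounds through the maximality of $\hat S$ and the $1$-Lipschitzness of $\min\{1,\omega\,\cdot\}$. The factor-of-two issue you flag (two rounding errors, one for $S^*$ and one for $\hat S$, against a single error budget in the stated constant) is genuine and is glossed over in the paper as well, but it is immaterial downstream since one can simply halve $\epsilon_0$ when targeting the $(1,\varepsilon,0)$-approximation.
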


Corollary \ref{cor:approx-1d-estr-ci} follows easily by plugging in the upper bounds of estimation error in Lemma \ref{lem:approx-1d-estr-ci}.
By setting $\epsilon_0=\min\{\varepsilon/12, \varepsilon^2/96\omega^2(1+\nu)^2\}$,
the algorithm that produces $\hat S$ satisfies $(1,\varepsilon,0)$-approximation as defined in Definition \ref{defn:approx}.

\subsubsection{A dynamic programming method for computation of reachable states}

In this section we describe a \emph{dynamic programming} algorithm to compute reachable states in polynomial time.
The dynamic programming algorithm is exact and deterministic, therefore approximation guarantees in Corollary \ref{cor:approx-1d-estr-ci} remain valid.

The first step is again to enumerate $q\in[N]$ corresponding to the item in $S$ with the largest utility parameter $\hat u_{tq}$,
and calculating the discretization parameter $\Delta=\epsilon\max\{1,\hat u_{tq}\}/K$.
Afterwards, reachable states are computed in an iterative manner, from $i=0,1,\cdots$ until $i=N$.
The initialization is that $\varsigma_0^0(0,0,0,0)$ is reachable.
Once a state $\varsigma_i^k(\mu,\alpha,\beta,\gamma)$ is determined to be reachable,
the following two states are \emph{potentially} reachable:
$$
\varsigma_{i+1}^k(\mu,\alpha,\beta,\gamma) \;\;\;\;\;\text{and}\;\;\;\;\; \varsigma_{i+1}^{k+1}(\mu+\mu_{i+1},\alpha+\alpha_{i+1}, \beta+\beta_{i+1}, \gamma+\gamma_{i+1}).
$$

The first future state $\varsigma_{i+1}^k(\mu,\alpha,\beta,\gamma)$ corresponds to the case of $i+1\notin S$.
To determine when such a state is reachable, we review the conditions in Definition \ref{defn:reachable} and
observe that whenever $i+1\neq q$, the decision $i+1\notin S$ is legal because $q$ must belong to $S$ whenever $i\geq q$ {(note that $q$ is the item in $S$ with the largest estimated utility)}.
The second future state $\varsigma_{i+1}^{k+1}(\mu+\mu_{i+1},\alpha+\alpha_{i+1}, \beta+\beta_{i+1}, \gamma+\gamma_{i+1})$
corresponds to the case of $i+1\in S$.
Reviewing again the conditions listed in Definition \ref{defn:reachable}, such a state is reachable if $k+1\leq K$ (meaning that there is still room to include a new item in $S$)
and $\hat u_{t,i+1}\leq \hat u_{tq}$ (meaning that the new item ($i+1$) to be included has an estimated utility smaller than $\hat u_{tq}$).
Combining both cases, we arrive at the following updated rule of reachability:
\begin{enumerate}
	\item If $i+1\neq q$, then $\varsigma_{i+1}^k(\mu,\alpha,\beta,\gamma)$ is reachable;
	\item If $k<K$ and $\hat u_{t,i+1}\leq \hat u_{tq}$, then $\varsigma_{i+1}^{k+1}(\mu+\mu_{i+1},\alpha+\alpha_{i+1},\beta+\beta_{i+1},\gamma+\gamma_{i+1})$ is reachable.
\end{enumerate}

Algorithms \ref{alg:approx-1d} and \ref{alg:approx-1d-restrict} give pseudo-codes for the proposed dynamic programming approach
of computing reachable states and an approximate optimizer of $\hat{\mathrm{ESTR}}(S)+\min\{1,\omega\cdot\hat{\mathrm{CI}}(S)\}$.

\begin{algorithm}[t]
	\KwInput{$\{\hat u_{ti},r_{ti},x_{ti}\}_{i=1}^N$, the designated maximum utility item $q$, and approximation parameter $\epsilon$.}
	\KwOutput{An approximate maximizer $\hat S$ of $\mathrm{ESTR}(\hat S)+\min\{1,\omega\cdot\mathrm{CI}(\hat S)\}$}
	\caption{Approximate combinatorial optimization, the univariate ($d=1$) case, and with the designated maximum utility item.}
	\label{alg:approx-1d-restrict}

	Initialization: compute $\mu_i,\alpha_i,\beta_i,\gamma_i$ for all $i\in[N]$ as in Eq.~(\ref{eq:muabg})\;
	Declare $\varsigma_1^0(0,0,0,0)$ as reachable\;
	\For{$i=0,1,\dots ,N-1$}{
		\For {all reachable states $\varsigma_{i-1}^k(\mu,\alpha,\beta,\gamma)$}{
			\If{$i+1\neq q$}{Declare $\varsigma_{i+1}^k(\mu,\alpha,\beta,\gamma)$ as reachable\;}
			\If{$\hat u_{t,i+1} \leq \hat u_{tq}$ and $k + 1 \leq K$} {Decare $\varsigma_{i+1}^{k+1}(\mu+\mu_{i+1},\alpha+\alpha_{i+1},\beta+\beta_{i+1},\gamma+\gamma_{i+1})$  as reachable\;}
		}
	}
	
	For all reachable states $\varsigma_N^k(\mu,\alpha,\beta,\gamma)$, trace back the actual assortment $S\subseteq[N]$, $|S|\leq K$
	and select the one with the largest $\hat{\mathrm{ESTR}}(S)+\min\{1,\omega\cdot\hat{\mathrm{CI}}(S)\}$ as the output $\hat S$.
\end{algorithm}

\begin{algorithm}[!t]
	\KwInput{$\{\hat u_{ti},r_{ti},x_{ti}\}_{i=1}^N$ and additive approximation parameter $\epsilon$.}
	\KwOutput{An approximate maximizer $\hat S$ of $\mathrm{ESTR}(\hat S)+\min\{1,\omega\cdot\mathrm{CI}(\hat S)\}$.}
	\caption{Approximate combinatorial optimization, the univariate ($d=1$) case.}
	\label{alg:approx-1d}
	
	\For{$i=1,2,\dots,N$}{
		Invoke Algorithm \ref{alg:approx-1d-restrict} with parameters $q = i$ and $\epsilon$ and denote the returned assortment by $\hat{S}_i$.
	}
	
	Among $\hat{S}_1, \dots, \hat{S}_N$, select the one with the largest $\hat{\mathrm{ESTR}}(S)+\min\{1,\omega\cdot\hat{\mathrm{CI}}(S)\}$ as the output $\hat S$.
\end{algorithm}
%
%

Finally, we remark on the time complexity of the proposed algorithm.
Because the items $j$ we consider in the assortment satisfy $|\hat u_{ti}| \leq U$, $|r_{ti}| \leq 1$, and $|x_{ti}| \leq \nu$, and all $\mu_i,\alpha_i,\beta_i,\gamma_i$ are integral multiples of $\Delta$, we have (1) $\mu_i$ and $\gamma_i$ take at most $O(K\epsilon_0^{-1})$ possible values; (2) $\alpha_i$ takes at most $(K \nu \epsilon_0^{-1})$ possible values; and (3) $\beta_i$ takes at most $(K \nu^2 \epsilon_0^{-1})$ values. Therefore, the total number of states $\varsigma_i^k(\cdot,\cdot,\cdot,\cdot)$ for fixed $i\in[N]\cup\{0\}$, $k\in[K]$ can be upper bounded by $O(K^8 \nu^3\epsilon_0^{-4})$.
The time complexity of Algorithm \ref{alg:approx-1d} is thus upper bounded by
$O(K^9N \nu^3\epsilon_0^{-4})$.
Alternatively, to achieve $(1,\varepsilon,0)$-approximation, one may set $\epsilon_0=\min\{\varepsilon/12,\varepsilon^2/(96(1+\nu)^2\omega^2)\}$
as suggested by Corollary \ref{cor:approx-1d-estr-ci},
resulting in a time complexity of $O(K^9 N \nu^3  \max\{\varepsilon^{-4}, (1 + \nu)^8 \omega^8\varepsilon^{-8}\})$.

{This dynamic programming based approximation algorithm can be extended to multivariate feature vector with $d>1$. The details are presented in Appendix \ref{appsec:multivariate}.
}

\subsection{Greedy swapping heuristics}\label{subsec:heuristic}

{While the proposed approximation has rigorous approximation guarantees and runs in polynomial time, the large time complexity still prohibits its application to moderately large scale problem instances.} In this subsection, we consider a practically efficient greedy swapping heuristic to approximately solve the combinatorial optimization problem in Eq.~(\ref{eq:comb-opt}).

At a higher level, the heuristic algorithm is a ``local search'' method similar to the Lloyd's algorithm for K-means clustering \citep{lloyd1982least},
which continuously tries to improve an assortment solution by considering local swapping/addition/deletions
until no further improvements are possible.
A pseudo-code description of our heuristic method is given in Algorithm \ref{alg:greedy}.

\begin{algorithm}[!t]
	\caption{A greedy heuristic for combinatorial assortment optimization}
	\KwInput{problem parameters $\{\hat u_{ti},r_{ti},x_{ti}\}_{i=1}^N$.}
	\KwOutput{approximate maximizer $\hat S$ of $\mathrm{ESTR}(\hat S)+\min\{1,\omega\cdot\mathrm{CI}(\hat S)\}$.}
	\label{alg:greedy}
	
	Initialization: select $S\subseteq[N]$, $|S|= K$ uniformly at random\;
	\While{$\mathrm{ESTR}(S)+\min\{1,\omega\cdot\mathrm{CI}(S)\}$ can be improved}{
		For every $i\notin S$ and $j\in S$, consider new candidate assortments $S'=S\cup\{i\}\backslash \{j\}$ (swapping), $S'=S\cup\{i\}$ if $|S|<K$ (addition)
		and $S'=S\backslash \{j\}$ if $|S|>1$ (deletion)\;
		let $S'_*$ be the considered assortments with the largest $\mathrm{ESTR}(S'_*)+\min\{1,\omega\cdot\mathrm{CI}(S'_*)\}$\;
		If $S$ can be improved update $S\gets S'_*$\;
	}
\end{algorithm}


While the greedy heuristic does not have rigorous guarantees in general, we would like to mention a special case of $\omega=0$,
in which Algorithm \ref{alg:greedy} does converge to the optimal assortment $S$ maximizing $\mathrm{ESTR}(S)+\min\{1,\omega\cdot\mathrm{CI}(S)\}$
in polynomial time.
More specifically, we have the following proposition which is proved in the supplementary material. 
\begin{proposition}
	If $\omega=0$, then Algorithm \ref{alg:greedy} terminates in $O(N^4)$ iterations and produces an output $S$
	that maximizes $\mathrm{ESTR}(S)$.
	\label{prop:greedy-omega-zero}
\end{proposition}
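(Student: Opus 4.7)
The plan is to prove two things separately: (i) every local maximum produced by Algorithm~\ref{alg:greedy} actually globally maximizes $\mathrm{ESTR}$, and (ii) the total number of improving iterations is $O(N^4)$. The workhorse for (i) is the MNL marginal identity
\begin{equation*}
\mathrm{ESTR}(S\cup\{i\})-\mathrm{ESTR}(S) \;=\; \frac{\hat u_{ti}}{1+\sum_{k\in S\cup\{i\}}\hat u_{tk}}\bigl(r_{ti}-\mathrm{ESTR}(S)\bigr),
\end{equation*}
which follows by direct algebra from the definition of $\mathrm{ESTR}$. Setting $\lambda:=\mathrm{ESTR}(S)$ at a local optimum, this identity translates the deletion condition (when $|S|>1$) into $r_{tj}\geq\lambda$ for every $j\in S$, and the addition condition (when $|S|<K$) into $r_{ti}\leq\lambda$ for every $i\notin S$.

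When $|S|<K$, both threshold inequalities are available. Combined with the identity $\sum_{j\in S}(r_{tj}-\lambda)\hat u_{tj}=\lambda$ (obtained by clearing denominators in the definition of $\lambda$), the sign pattern $r_{tj}\geq\lambda$ on $S$ and $r_{tj}\leq\lambda$ off $S$ immediately yields, for any $S'$ with $|S'|\leq K$,
\begin{equation*}
\sum_{j\in S'}(r_{tj}-\lambda)\hat u_{tj} \;\leq\; \sum_{j\in S'\cap S}(r_{tj}-\lambda)\hat u_{tj} \;\leq\; \sum_{j\in S}(r_{tj}-\lambda)\hat u_{tj} \;=\; \lambda,
\end{equation*}
which rearranges to $\mathrm{ESTR}(S')\leq\lambda$. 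So $S$ is globally optimal in this case.

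The hard part is the $|S|=K$ case, where only deletion and swap are active. Writing a swap as $S\to S\cup\{i\}\setminus\{j\}$ with $i\notin S$, $j\in S$, letting $S_0:=S\setminus\{j\}$ and $B:=1+\sum_{k\in S_0}\hat u_{tk}$, and substituting the closed form $\sum_{k\in S_0}r_{tk}\hat u_{tk}=\lambda B-\hat u_{tj}(r_{tj}-\lambda)$ (derived from $\lambda=\mathrm{ESTR}(S)$) into the swap inequality $\mathrm{ESTR}(S_0\cup\{i\})\leq\mathrm{ESTR}(S_0\cup\{j\})$, the expression collapses, after factoring out the positive quantity $B(B+\hat u_{ti})$, to the clean condition
\begin{equation*}
\hat u_{ti}\,(r_{ti}-\lambda)\;\leq\;\hat u_{tj}\,(r_{tj}-\lambda), \qquad \forall\,i\notin S,\;j\in S.
\end{equation*}
Together with $r_{tj}\geq\lambda$ on $S$, this says precisely that $S$ consists of $K$ items with the largest values of $\hat u_{t\cdot}(r_{t\cdot}-\lambda)$, all of which are non-negative. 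To conclude global optimality I invoke the classical parametric characterization of cardinality-constrained MNL: the function $f(\mu):=\max_{|T|\leq K}\sum_{i\in T}(r_{ti}-\mu)\hat u_{ti}$ is continuous and non-increasing, and $\max_{|S|\leq K}\mathrm{ESTR}(S)$ equals the unique fixed point $\lambda^*$ of $f(\mu)=\mu$. The two conditions just derived state that $S$ attains $f(\lambda)$ and that $f(\lambda)=\lambda$, so uniqueness forces $\lambda=\lambda^*$ and $S$ is a global maximizer.

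For (ii), each iteration strictly increases $\mathrm{ESTR}(S_t)$, so no assortment is revisited; hence the iteration count is bounded by the number of distinct candidate subsets with $|S|\leq K$, which in the practically relevant small-$K$ regime is $\sum_{k\leq K}\binom{N}{k}=O(N^4)$. A bound uniform in $K$ can be obtained by tracking the at most $O(N^2)$ crossing points of the linear functions $\mu\mapsto\hat u_{ti}(r_{ti}-\mu)$: within each of the $O(N^2)$ intervals of $\mu$ they define, the top-$K$ ordering by $\hat u_{t\cdot}(r_{t\cdot}-\mu)$ is fixed, and a potential argument bounds the changes of $S_t$ within each interval by $O(N^2)$, giving $O(N^4)$ in total. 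The main technical obstacle is the algebraic collapse of the swap condition in the $|S|=K$ case and the recognition of the resulting inequality as exactly the parametric-optimality certificate for cardinality-constrained MNL; the remaining pieces are routine consequences of the marginal identity.
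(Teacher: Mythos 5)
Your part (i) — optimality at termination — is correct, and it is actually a more structured route than the paper's own argument. The paper reduces everything to the threshold equivalence $\mathrm{ESTR}(T)>r \iff \sum_{i\in T}\hat u_{ti}(r_{ti}-r)>r$ and then argues in one line that if some $S^\#$ beats $S$ for the modular function $g(T)=\sum_{i\in T}\hat u_{ti}(r_{ti}-r)$, a single swap/add/delete must improve $g$, hence improve $\mathrm{ESTR}$ past $r$. You instead extract the explicit local-optimality certificates ($r_{tj}\geq\lambda$ on $S$ via deletion, $r_{ti}\leq\lambda$ off $S$ via addition, and $\hat u_{ti}(r_{ti}-\lambda)\leq\hat u_{tj}(r_{tj}-\lambda)$ via swaps) and close the $|S|=K$ case with the parametric fixed-point characterization $f(\mu)=\max_{|T|\leq K}\sum_{i\in T}(r_{ti}-\mu)\hat u_{ti}$, $f(\lambda^*)=\lambda^*$ with a unique fixed point; all of this is sound (your marginal identity and the identity $\sum_{j\in S}(r_{tj}-\lambda)\hat u_{tj}=\lambda$ check out). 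One small omission: when $|S|=1$ the algorithm disables deletion, so $r_{tj}\geq\lambda$ must instead be observed to hold automatically from $\lambda=\hat u_{tj}r_{tj}/(1+\hat u_{tj})\leq r_{tj}$; this is harmless but should be said.

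The genuine gap is in part (ii), the $O(N^4)$ iteration bound. Your primary argument — strict increase of $\mathrm{ESTR}$ implies no assortment is revisited, so at most $\sum_{k\leq K}\binom{N}{k}$ iterations — gives $O(N^K)$, which is $O(N^4)$ only when $K\leq 4$, whereas the proposition is claimed for all $K$ (and $K$ can be as large as $N$). Your fallback sketch has exactly the right skeleton, which is the paper's: $O(N^2)$ section points $\theta_0<\cdots<\theta_L$ partition $[0,1]$ into intervals on which the total order induced by $i\mapsto\hat u_{ti}(r_{ti}-\mu)$ is constant, so it suffices to bound by $O(N^2)$ the number of iterations while $\mathrm{ESTR}(S)$ remains in one interval. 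But that per-interval bound is precisely the nontrivial step, and you only assert ``a potential argument'' without producing one. The difficulty it must overcome: while an improving swap strictly decreases the inversion count $I(S)=|\{(i,j):i\in S,\,j\notin S,\,i<_r j\}|$ under the frozen order, a deletion or addition can \emph{increase} $I(S)$ by up to $N$, so $I$ alone is not a valid potential. The paper resolves this with a second counter $J(S)=|\{i\in S:\hat u_{ti}(r_{ti}-r)<0\}|$ and the combined potential $F(S)=I(S)+(2N+1)J(S)$, using three monotonicity facts (swaps strictly decrease $I$ and do not increase $J$; deletions strictly decrease $J$; additions do not increase $J$) plus the counting bound that the number of additions is at most the number of deletions plus $N$; together these give $O(N^2)$ moves per interval and hence $O(N^4)$ overall. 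Until such a potential is constructed and those facts verified, your proof establishes termination and correctness, but not the stated $O(N^4)$ bound for general $K$.
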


\section{Numerical studies}
\label{sec:numerical}

In this section, we present numerical results of our proposed MLE-UCB algorithm. We use the greedy swapping heuristics (Algorithm~\ref{alg:greedy}) as the subroutine to solve the combinatorial optimization problem in Eq.~(\ref{eq:comb-opt}). We will also study the quality of the solution of the greedy swapping heuristics.

\paragraph{Experiment setup.} The unknown model parameter $\theta_0 \in \R^d$ is generated as a uniformly random unit $d$-dimensional vector. The revenue parameters $\{r_{tj}\}$ for $j \in [N]$ are independently and identically generated from the uniform distribution $[0.5, 0.8]$. For the feature vectors $\{v_{tj}\}$, each of them is independently generated as a uniform random vector $v$ such that $\|v\| = 2$ and $v^\top \theta_0 < -0.6$. Here we set an upper bound of $-0.6$ for the inner product so that the utility parameters $u_{tj} = \exp\{v_{tj}^\top \theta_0\}$ are upper bounded by $\exp(-0.6)\approx 0.55$. We set such an upper bound because if the utility parameters are uniformly large, the optimal assortment is likely to pick very few items, leading to degenerated problem instances. In the implementation of our MLE-UCB algorithm, we simply set $T_0 = \lfloor \sqrt{T} \rfloor$ and $\omega = \sqrt{d \ln (T K)}$.

\paragraph{The greedy swapping heuristics.} We first numerically evaluate the solution quality of the greedy swapping heuristic algorithm by focusing on the optimization problem in Eq.~\eqref{eq:comb-opt}. We compare the obtained objective values in Eq.~\eqref{eq:comb-opt} to the proposed greedy heuristic and the optimal solution (obtained by brute-force search). Instead of generating purely random instances, we consider more realistic instances generated from a dynamic assortment planning process.  In particular, for a given $T$, we generate a dynamic assortment optimization problem with parameters $N = 10$, $K = 4$ and  $d = 5$, and run the MLE-UCB algorithm till the $T$-th time period. Now the combinatorial optimization problem in Eq.~(\ref{eq:comb-opt}) to be solved at the $T$-th time is kept as one testing instance for the greedy swapping algorithm.

For each $T \in \{50, 200, 800\}$, we generate $1000$ such test instances, and compare the solution of the greedy swapping heuristics with the optimal solution obtained by brute-force search in terms of the objective value in Eq.~\eqref{eq:comb-opt}. Table \ref{table:greedy-swap} shows the relative differences between the two solutions at several percentiles, and the mean relative differences. We can see that the approximation quality of the greedy swapping algorithm has already been desirable when $T = 50$, and becomes even better as $T$ grows.


\begin{table}[!t]
	\centering
	\begin{tabular}{cccccccc}
		\hline
		\multirow{2}{*}{$T$} & \multicolumn{5}{c}{percentile rank} & mean relative difference in \\
		\cline{2-6}
		& $94$th & $96$th & $98$th & $99$th & $99.5$th &   objective value \\ \hline
		50 &  0 & 0.0159 & 0.0293 & 0.0393 & 0.0687 & 0.00207  \\
		200 & 0 & 0.0001 & 0.0040 & 0.0080 & 0.0123 & 0.00024\\
		800 & 0 & 0 & 0 & 0.0014 & 0.0037 & 0.00004\\ \hline
	\end{tabular}
	\caption{relative differences in terms of objective value in Eq.~\eqref{eq:comb-opt} between the greedy swapping algorithm and the optimal solution.}
	\label{table:greedy-swap}
\end{table}

\paragraph{Performance of the MLE-UCB algorithm.} In Figure \ref{fig:1-a} we plot the average regret (i.e.\ $\mathrm{regret}/T$) of MLE-UCB algorithm with $N = 1000, K = 10, d = 5$ for the first $T = 10000$ time periods. For each experiment (in both  Figure \ref{fig:1-a} and other figures), we repeat the experiment for 100 times and report the average value. In Figure \ref{fig:1-b} we compare our algorithm with the UCB algorithm for multinomial logit bandit (MNL-UCB) from \cite{Agrawal16MNLBandit} without utilizing the feature information. Since the MNL-UCB algorithm assumes fixed item utilities that do not change over time, in this experiment we randomly generate one feature vector for each of the $N=1000$ items and this feature vector will be fixed for the entire time span. We can observe that our MLE-UCB algorithm performs much better than MNL-UCB, which suggests the importance of taking the advantage of the contextual information.


\begin{figure}[!t]
	\centering
	\begin{subfigure}[t]{0.5\textwidth}
		\centering
		\includegraphics[height=6cm]{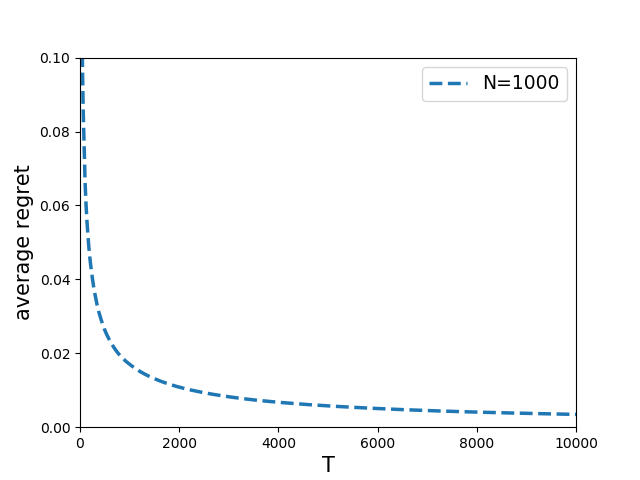}
		\caption{Average regret of MLE-UCB}\label{fig:1-a}
	\end{subfigure}%
	~
	\begin{subfigure}[t]{0.5\textwidth}
		\centering
		\includegraphics[height=6cm]{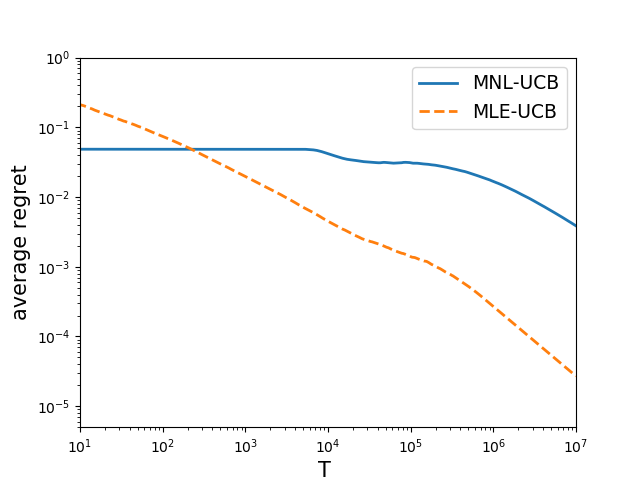}
		\caption{Comparison of MLE-UCB and MNL-UCB}\label{fig:1-b}
	\end{subfigure}
	\caption{Illustration of the performance of MLE-UCB.}
\end{figure}

\paragraph{Impact of the dimension size $d$.} We study how the dimension of the feature vector impacts the performance of our MLE-UCB algorithm. We fix $N = 1000$ and $K = 10$ and test our algorithm for dimension sizes in $5, 7, 9, 11, \dots, 25$. In Figure \ref{fig:2}, we report the average regret at times $T \in \{4000, 6000, 8000, 10000\}$. We can see that the average regret increases approximately linearly with $d$. This phenomenon matches the linear dependency on $d$ of the main term of the regret Eq.~\eqref{eq:thm-main} of the MLE-UCB.

\begin{figure}[!t]
	\begin{minipage}[t]{0.5\textwidth}
		\includegraphics[width=0.98\textwidth]{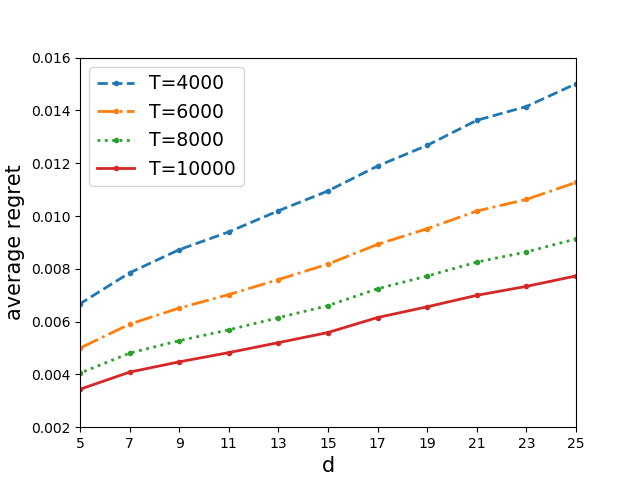}
		\caption{Average regret of MLE-UCB for various $d$'s. }
		\label{fig:2}
	\end{minipage}	
	\begin{minipage}[t]{0.5\textwidth}
		\includegraphics[width=0.98\textwidth]{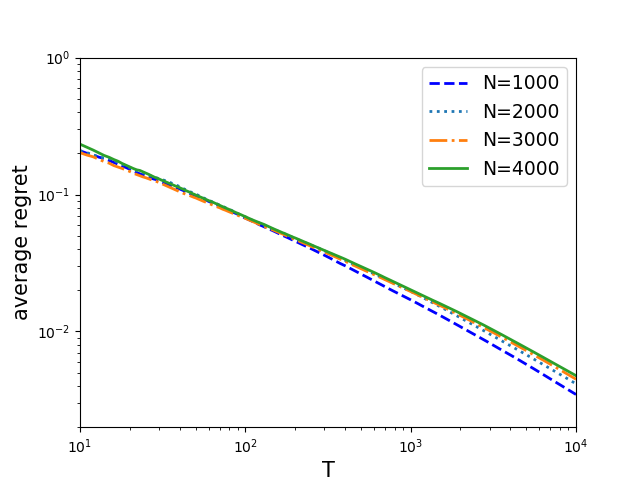}
		\caption{Average regret of MLE-UCB for various $N$'s. }
		\label{fig:3}
	\end{minipage}
\end{figure}

\paragraph{Impact of the number of items $N$.} We compare the performance of our MLE-UCB algorithm for the varying number of items $N$. We fix $K = 10$ and $d = 5$, and test MLE-UCB for $N \in \{1000, 2000, 3000, 4000\}$. In Figure \ref{fig:3}, we report the average regret for the first $T = 10000$ time periods. We observe that the regret of the algorithm is almost not affected by a bigger $N$. This confirms the fact that the regret Eq.~\eqref{eq:thm-main} of MLE-UCB is totally independent of $N$.


%

%

\section{Conclusion and future directions}
\label{sec:conclusion}

In this paper, we study the dynamic assortment planning problem under a contextual MNL model, which incorporates rich feature information into choice modeling. We propose an upper confidence bound (UCB) algorithm based on the local MLE that simultaneously learns the underlying coefficient and makes the decision on the assortment selection. We establish both the upper and lower bounds of the regret. Moreover, we develop an approximation algorithm and a greedy heuristic for solving the key optimization problem in our UCB algorithm.

There are a few possibilities for future work. Technically, there is still a gap of $1/K$ between our upper and lower bounds on regret. Although the cardinality constraint of an assortment $K$ is usually small in practice, it is still a technically interesting question to close this gap. Second, introducing contextual information into choice model is a natural idea for many online applications. This paper explores the standard MNL model, and it would be interesting to extend this work to contextual nested logit and other popular choice models. Finally, it is interesting to incorporate other operational considerations into the model, such as prices or inventory constraints.

\begin{appendices}
	\section{Multivariate approximation algorithm}\label{appsec:multivariate}
	\label{sec:multi}
	In this appendix we describe an approximation algorithm for the combinatorial optimization problem studied in Sec.~\ref{sec:approx}
	for the general multivariate ($d>1$) case.
	The multivariate case is dealt with by \emph{randomized} reductions to several univariate problem instances.
	
	More specifically, for any $y\in\mathbb R^d$, $\|y\|_2=1$,
	a univariate problem instance can be constructed by replacing every occurrences of $x_{ti}$ with $x_{ti}^\top y$.
	The univariate approximation Algorithm \ref{alg:approx-1d} is then invoked on $L$ independent univariate problem instances,
	each corresponding to a $y$ vector sampled uniformly at random from the $d$-dimensional unit sphere.
	The $L$ output maximizers $\hat S$ of Algorithm \ref{alg:approx-1d} are then compared against each other and the one leading to the largest
	value of $\mathrm{ESTR}(\hat S)+\min\{1,\alpha\omega\cdot\mathrm{CI}(R)\}$ is selected, where $\alpha$ is the preset multiplicative approximation parameter.
	A pseudo-code description is given in Algorithm \ref{alg:approx-md}.
	
	\begin{algorithm}[t]
		\KwInput{$\{\hat u_{ti},r_{ti},x_{ti}\}_{i=1}^N$, multiplicative approximation parameter $\alpha$, additive approximation parameter $\epsilon$, repetition $L\in\mathbb N$.}
		\KwOutput{An approximate maximizer $\hat S$ of $\mathrm{ESTR}(\hat S)+\min\{1,\omega\cdot\mathrm{CI}(\hat S)\}$.}
		
		Generalize $L$ vectors $y^{(1)},\cdots,y^{(L)}\in\mathbb R^d$ independently and uniformly from the unit sphere\;
		\For{$\ell=1,2,\cdots,L$}{
			Replace each $x_{ti}$ with $\langle x_{ti}, y^{(\ell)}\rangle$\;
			Invoke Algorithm \ref{alg:approx-1d} on the reduced univariate problem instance,
			and let $\hat S^{(\ell)}$ be the output\;
		}
		
		Output $\hat S^{(\ell)}$ that maximizes $\mathrm{ESTR}(\hat S^{(\ell)})+\min\{1,\alpha \omega\cdot\mathrm{CI}(\hat S^{(\ell)})\}$.
		
		\caption{Approximate combinatorial optimization, the multivariate ($d>1$) case}
		\label{alg:approx-md}
	\end{algorithm}
	
	\subsection{Approximation guarantees}
	The approximation performance of Algorithm \ref{alg:approx-md} can be analyzed based on the following observation:
	if $y$ is close to $y^*$, the leading eigenvector of
	$$
	\frac{\sum_{j\in S^*}\hat u_{tj}x_{tj}x_{tj}^\top}{1+\sum_{j\in S^*}\hat u_{tj}} - \left(\frac{\sum_{j\in S^*}\hat u_{tj}x_{tj}}{1+\sum_{j\in S^*}\hat u_{tj}}\right)\left(\frac{\sum_{j\in S^*}\hat u_{tj}x_{tj}}{1+\sum_{j\in S^*}\hat u_{tj}}\right)^\top,
	$$
	where $S^*$ is the exact maximizer of Eq.~(\ref{eq:comb-opt}), then the reduction to a univariate problem instance $x_{tj}\mapsto x_{tj}^\top y$ does not lose much accuracy.
	More specifically, we have the following lemma:
	\begin{lemma}
		Suppose there exists $\ell\in[L]$ such that $\langle y^{(\ell)},y^*\rangle \geq 1/\alpha$ for some $\alpha\geq 1$ in Algorithm \ref{alg:approx-md}, then
		$\mathrm{ESTR}(\hat S^{(\ell)}) + \min\{1,\alpha\omega\cdot\mathrm{CI}(\hat S^{(\ell)})\}+\varepsilon \geq \mathrm{ESTR}(S^*)+\min\{1,\omega\cdot\mathrm{CI}(S^*)\}$,
		where $\varepsilon>0$ is the approximation parameter of the univariate problem instances.
		\label{lem:approx-md}
	\end{lemma}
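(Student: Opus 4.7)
The plan is to reduce the multivariate analysis to the univariate guarantee of Algorithm \ref{alg:approx-1d} via a Rayleigh-quotient comparison on the ``covariance'' matrix that defines $\mathrm{CI}$. Let
$$\Sigma(S) := \frac{\sum_{j\in S}\hat u_{tj} x_{tj}x_{tj}^\top}{1+\sum_{j\in S}\hat u_{tj}} - \left(\frac{\sum_{j\in S}\hat u_{tj}x_{tj}}{1+\sum_{j\in S}\hat u_{tj}}\right)\left(\frac{\sum_{j\in S}\hat u_{tj}x_{tj}}{1+\sum_{j\in S}\hat u_{tj}}\right)^\top,$$
so $\Sigma(S)\succeq 0$ and $\mathrm{CI}(S)=\sqrt{\lambda_{\max}(\Sigma(S))}$. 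Replacing every $x_{tj}$ by its projection $\langle x_{tj},y\rangle$ on a unit vector $y$ leaves $\mathrm{ESTR}(S)$ intact (the reduction does not touch $\hat u_{tj}$ or $r_{tj}$) and produces a univariate ``confidence radius'' that is exactly $\mathrm{CI}_y(S):=\sqrt{y^\top \Sigma(S)\, y}$.

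Next I would record two elementary inequalities. For \emph{any} unit $y$ and any $S$, Rayleigh's principle gives $\mathrm{CI}_y(S)\le \mathrm{CI}(S)$. For the specific direction $y^{(\ell)}$ appearing in the hypothesis, spectral decomposition of $\Sigma(S^*)$ around its leading eigenvector $y^*$ yields $(y^{(\ell)})^\top \Sigma(S^*) y^{(\ell)} \ge \lambda_{\max}(\Sigma(S^*))\cdot\langle y^{(\ell)},y^*\rangle^2$, and combined with $\langle y^{(\ell)},y^*\rangle\ge 1/\alpha$ this gives $\alpha\cdot\mathrm{CI}_{y^{(\ell)}}(S^*)\ge \mathrm{CI}(S^*)$.

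Then I would invoke Algorithm \ref{alg:approx-1d} on the reduced univariate instance run with upper-bound coefficient $\alpha\omega$, whose $(1,\varepsilon,0)$-guarantee (Corollary \ref{cor:approx-1d-estr-ci}) ensures
$$\mathrm{ESTR}(\hat S^{(\ell)})+\min\{1,\alpha\omega\cdot\mathrm{CI}_{y^{(\ell)}}(\hat S^{(\ell)})\}+\varepsilon \;\ge\; \mathrm{ESTR}(S^*)+\min\{1,\alpha\omega\cdot\mathrm{CI}_{y^{(\ell)}}(S^*)\}.$$
Chaining the two observations through this inequality finishes the proof. On the left, $\mathrm{CI}_{y^{(\ell)}}(\hat S^{(\ell)})\le \mathrm{CI}(\hat S^{(\ell)})$ and monotonicity of $\min\{1,\cdot\}$ show that the left-hand side of the lemma dominates the univariate left-hand side above; on the right, $\alpha\omega\cdot\mathrm{CI}_{y^{(\ell)}}(S^*)\ge \omega\cdot\mathrm{CI}(S^*)$ and the same monotonicity show that the univariate right-hand side dominates the right-hand side of the lemma.

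The only real obstacle is bookkeeping the four $\min\{1,\cdot\}$ caps so that the inequalities on the $\mathrm{CI}$ side go in the correct directions; this is precisely where the $1/\alpha$ misalignment between $y^{(\ell)}$ and $y^*$ translates into the multiplicative $\alpha$ that appears on the left-hand side of the lemma but not on the right.
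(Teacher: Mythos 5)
Your proof is correct and follows essentially the same route as the paper's: introduce the univariate confidence radius $\mathrm{CI}_{y^{(\ell)}}(S)=\sqrt{(y^{(\ell)})^\top\Sigma(S)\,y^{(\ell)}}$ and chain three inequalities --- the Rayleigh bound $\mathrm{CI}_{y^{(\ell)}}(\hat S^{(\ell)})\leq \mathrm{CI}(\hat S^{(\ell)})$ on the left, the univariate $(1,\varepsilon,0)$-guarantee of Algorithm \ref{alg:approx-1d} applied with coefficient $\alpha\omega$ in the middle, and the alignment bound $\mathrm{CI}_{y^{(\ell)}}(S^*)\geq \mathrm{CI}(S^*)/\alpha$ from $\langle y^{(\ell)},y^*\rangle\geq 1/\alpha$ on the right. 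The only differences are cosmetic: you spell out the spectral-decomposition step and keep the square root explicit (the paper's displayed definition of $\mathrm{CI}^{(\ell)}$ omits the square root, evidently a typo), but the three-inequality chain is identical to the paper's Eqs.~\eqref{eq:approx-md-eq1}--\eqref{eq:approx-md-eq3}.
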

	
	{
		Lemma \ref{lem:approx-md} is proved in the supplementary material using elementary linear algebra.
		At a higher level, Lemma \ref{lem:approx-md} shows that when the sampled vector $y^{(\ell)}$ is close to the underlying leading eigenvector $y^*$
		(in the sense that the inner product between $y^{(\ell)}$ and $y^*$ is large),
		the produced subset $\hat S^{(\ell)}$ will have good performance in maximizing the objective function $\mathrm{ESTR}(S)+\min\{1,\omega\cdot\mathrm{CI}(S)\}$.
		
	}
	
	The following proposition additionally gives the proximity between a random $y$ and $y^*$.
	\begin{proposition}
		Assume that $d \geq 2$.  Let $y^*\in\mathbb R^d$, $\|y^*\|_2=1$ be fixed and $y$ be sampled uniformly at random from the unit $d$-dimensional sphere.
		Then
		\begin{equation*}
		\Pr[\langle y,y^*\rangle \geq 1/\sqrt{d}] = \Omega(1) \;\;\;\;\;\text{and}\;\;\;\;\; \Pr[\langle y,y^*\rangle \geq 1/2] = \exp\{-O(d)\}.
		\end{equation*}
		\label{prop:approx-init}
	\end{proposition}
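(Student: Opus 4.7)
The plan is to exploit rotational invariance of the uniform distribution on the unit sphere $S^{d-1}$. Without loss of generality I may set $y^* = e_1$, so both probabilities reduce to controlling the distribution of the first coordinate $y_1$. For $y$ drawn uniformly on $S^{d-1}$ with $d \geq 2$, $y_1$ admits the classical density
$$f_d(s) \;=\; \frac{1}{B(1/2,(d-1)/2)}\,(1-s^2)^{(d-3)/2}, \qquad s \in [-1,1],$$
and Stirling's approximation yields $1/B(1/2,(d-1)/2) = \Theta(\sqrt{d})$. The entire proof then reduces to estimating the integral of this density over the two relevant intervals.

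For the first claim, I would restrict the integral to the sub-interval $s \in [1/\sqrt{d},\,2/\sqrt{d}]$. On this interval $(1-s^2)^{(d-3)/2} \geq (1-4/d)^{(d-3)/2}$, which converges to $e^{-2}$ as $d\to\infty$ and is bounded below by an absolute positive constant for every $d\geq 2$. Multiplying by the interval length $1/\sqrt{d}$ and the normalizing factor $\Theta(\sqrt{d})$ gives $\Pr[y_1 \geq 1/\sqrt{d}] = \Omega(1)$. An equivalent and slightly cleaner presentation is to recall that $\sqrt{d}\,y_1 \Rightarrow N(0,1)$ as $d\to\infty$ (a classical Poincaré-type fact), so the probability tends to $\Phi(-1) > 0$; one then patches in the small-$d$ cases by direct computation.

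For the second claim, I would restrict the integral to $s \in [1/2,\,3/4]$, on which $(1-s^2)^{(d-3)/2} \geq (1-(3/4)^2)^{(d-3)/2} = (7/16)^{(d-3)/2}$. This yields
$$\Pr[y_1 \geq 1/2] \;\geq\; \Theta(\sqrt{d}) \cdot \tfrac{1}{4} \cdot (7/16)^{(d-3)/2} \;=\; \exp(-cd)$$
for a universal $c \leq \tfrac{1}{2}\log(16/7)$ (the polynomial $\sqrt{d}$ prefactor only helps). This matches the assertion $\exp\{-O(d)\}$, read as a lower bound of the form $\exp(-Cd)$ for some universal $C$.

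The only potential obstacle is routine bookkeeping: verifying the Stirling-based asymptotic $1/B(1/2,(d-1)/2) = \Theta(\sqrt{d})$ uniformly in $d\geq 2$, and bounding $(1-s^2)^{(d-3)/2}$ on the chosen sub-intervals using $\log(1-x) \geq -x - x^2$ for $x\in[0,1/2]$. No conceptual difficulty is anticipated, since both claims are quantitative statements about one-dimensional marginals of the uniform sphere measure, and the two sub-interval choices ($[1/\sqrt{d},2/\sqrt{d}]$ and $[1/2,3/4]$) are precisely calibrated so that the exponentiated factor remains bounded below by $\Omega(1)$ in the first case and $\exp(-O(d))$ in the second.
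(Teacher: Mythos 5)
Your proof is correct, but it takes a genuinely different route from the paper. The paper represents $y$ as $z/\|z\|_2$ with $z\sim N(0,I_d)$, so that $\langle y,y^*\rangle = z_1/\|z\|_2$, and then lower-bounds each probability by intersecting two \emph{independent} events: $\{z_1\geq 10\}$ (resp.\ $\{z_1\geq 5\sqrt{d}\}$) and $\{\sqrt{z_2^2+\cdots+z_d^2}\leq 5\sqrt{d}\}$; the first has probability $\Omega(1)$ (resp.\ $\exp\{-O(d)\}$ by the Gaussian tail) and the second has probability $\Omega(1)$ by an elementary bound on the chi-square norm. You instead work with the exact one-dimensional marginal density $f_d(s)\propto(1-s^2)^{(d-3)/2}$ and integrate over calibrated subintervals $[1/\sqrt{d},2/\sqrt{d}]$ and $[1/2,3/4]$. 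Both are sound; the paper's Gaussian-quotient trick buys complete freedom from special functions (no Beta-function or Stirling estimates are needed, and independence does all the work), while your approach buys sharper, essentially exact control of the marginal and would even yield matching upper bounds for free. One point of bookkeeping in your version that the paper's sidesteps: for $d\in\{2,3,4\}$ the bound $(1-s^2)^{(d-3)/2}\geq(1-4/d)^{(d-3)/2}$ degenerates (the base is nonpositive for $d\leq 3$, zero for $d=4$, and for $d\leq 4$ the interval $[1/\sqrt{d},2/\sqrt{d}]$ exits $[-1,1]$), so these finitely many cases must indeed be handled separately as you indicate --- note that for $d\in\{2,3\}$ the exponent $(d-3)/2\leq 0$ makes the density bounded below on any compact subinterval of $(-1,1)$, so the patch is immediate. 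With that caveat addressed, your argument is complete and the reading of $\exp\{-O(d)\}$ as a lower bound $\exp(-Cd)$ matches how the proposition is used in the paper's choice of $L$.
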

	
	Proposition \ref{prop:approx-init} is again proved in the supplementary material, using isotropy of $y$ and classical concentration inequalities.
	
	Combining Lemma \ref{lem:approx-md} and Proposition \ref{prop:approx-init} we can give some recommendations on the choice of $L$ in Algorithm \ref{alg:approx-md},
	which is the number of random $y^{(\ell)}$ vectors sampled.
	First, if $L\asymp \log(1/\delta)$ initializations are taken,
	then with probability $1-\delta$ Lemma \ref{lem:approx-md} is satisfied with $\alpha=\sqrt{d}$,
	yielding a $(\sqrt{d}, \epsilon, \delta)$-approximation.
	Additionally, if $L\asymp e^{O(d)}\log(1/\delta)$ initializations are taken,
	then with probability $1-\delta$ Lemma \ref{lem:approx-md} is satisfied with $\alpha=2$,
	yielding a $(2,\epsilon,\delta)$-approximation.
	
	\subsection{Time complexity analysis}
	To achieve a $(\sqrt{d},\varepsilon,\delta)$-approximation
	$L$ is set to $L\asymp \log(1/\delta)$ and the overall running time of Algorithm \ref{alg:approx-md}
	$O(K^9 N \nu^3  \max\{\varepsilon^{-4}, (1 + \nu)^8 \omega^8\varepsilon^{-8}\} \log \delta^{-1})$. To achieve a $(2,\varepsilon,\delta)$-approximation $L$ is set to $L\asymp e^{O(d)}\log(1/\delta)$
	and the overall running time of Algorithm \ref{alg:approx-md} is $e^{O(d)} K^9 N \nu^3  \max\{\varepsilon^{-4}, (1 + \nu)^8 \omega^8\varepsilon^{-8}\}$.

	
	Now we use Algorithm \ref{alg:approx-md} to solve the combinatorial optimization problem in Step \ref{alg:step6} of Algorithm \ref{alg:mle-ucb} and examine the cumulative regret. If we let Algorithm \ref{alg:approx-md}  achieve to $(\sqrt{d},\varepsilon,\delta)$-approximation guarantee with $\varepsilon = T^{-1/2}$ and $\delta = T^{-2}$, the computational time complexity at each time slot will be $\tilde{O}(K^9 N \nu^3 (1+\nu)^8 d^4 T^4)$,\footnote{A polylogarithmic factor dependent on $T, K, \delta^{-1}, \nu, \rho$ is hidden in the $\tilde{O}(\cdot)$ notation.} and the cumulative regret will be upper bounded by $O(\sqrt{d}) \cdot \mathrm{Regret}^*$.
	If we let Algorithm \ref{alg:approx-md}  to achieve $(1/2,\varepsilon,\delta)$-approximation guarantee with $\varepsilon = T^{-1/2}$ and $\delta = T^{-2}$, the computational time complexity at each time slot will be $e^{O(d)} \cdot \tilde{O}(K^9 N \nu^3 (1+\nu)^8 d^4 T^4)$, and the cumulative regret will be upper bounded by $O(1) \cdot \mathrm{Regret}^*$.
\end{appendices}

\paragraph{Acknowledgement}: {The authors would like to thank Vineet Goyal for helpful discussions, and Zikai Xiong for helping with the numerical studies.
}

\bibliographystyle{apa-good}
\bibliography{refs}

\pagebreak
\setcounter{equation}{0}
\setcounter{figure}{0}
\setcounter{table}{0}
\setcounter{page}{1}
\setcounter{section}{0}
\makeatletter
\renewcommand{\theequation}{S\arabic{equation}}
\renewcommand{\thefigure}{S\arabic{figure}}
\renewcommand{\bibnumfmt}[1]{[S#1]}
\renewcommand{\citenumfont}[1]{S#1}

\renewcommand\thesection{\Alph{section}}
\renewcommand\thesubsection{\thesection.\arabic{subsection}}
\renewcommand\thesubsubsection{\thesubsection.\arabic{subsubsection}}

\title{Supplementary Material for: Dynamic Assortment Optimization with Changing Contextual Information}
\maketitle




This supplementary material provides detailed proofs for technical lemmas whose proofs are omitted in the main text.



\section{Proofs of technical lemmas for Theorem \ref{thm:mle-ucb} (upper bound)}
\subsection{Proof of Lemma \ref{lem:pilot}}
{
	\renewcommand{\thelemma}{\ref{lem:pilot}}
	\begin{lemma}[restated]
		With probability $1-\delta$ it holds that
		\begin{equation}
		\|\theta^*-\theta_0\|_2 \leq \frac{2}{\kappa}\sqrt{\frac{d+\log(1/\delta)}{\lambda_{\min}(V)}}
		\;\;\;\;\text{where}\;\;\kappa = \frac{1}{2e(1+\rho)}\;\;\text{and}\;\; V = \sum_{t=1}^{T_0}v_{t,i_t}v_{t,i_t}^\top.
		\end{equation}
	\end{lemma}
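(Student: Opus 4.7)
The plan is to reduce the pure exploration phase to a classical logistic regression problem and then directly invoke a known concentration inequality for generalized linear model MLEs. First I would note that in the pure exploration phase every assortment consists of a single item $S_t = \{\ell_t\}$, so that the MNL choice model~\eqref{eq:context_MNL} collapses into a Bernoulli trial: setting $z_t := v_{t,\ell_t}$ and $y_t := \mathbf 1\{i_t = \ell_t\}$, we have $y_t \mid z_t \sim \mathrm{Bernoulli}(\sigma(z_t^\top\theta_0))$ with $\sigma(x) = e^x/(1+e^x)$. Consequently, the pilot estimator $\theta^*$ defined in Algorithm~\ref{alg:mle-ucb} is exactly the logistic-regression MLE on the sample $\{(z_t,y_t)\}_{t=1}^{T_0}$, which falls squarely inside the GLM framework of \cite{li2017provably}.

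Next, I would verify the hypotheses needed to apply Equation~(18) of \cite{li2017provably} in our notation. Their bound requires boundedness of the features (supplied by $\|v_{tj}\|_2 \leq \nu$ from (A1)), a sub-Gaussian noise model (trivially true for Bernoulli noise), and a uniform lower bound $\kappa$ on the link derivative $\sigma'$ over a neighborhood of $\theta_0$ that is guaranteed to contain $\theta^*$. Granting these, their theorem yields the self-normalized inequality
\[
\|\theta^* - \theta_0\|_V \leq \frac{2}{\kappa}\sqrt{d + \log(1/\delta)} \qquad\text{with probability } 1-\delta,
\]
where $V = \sum_{t=1}^{T_0} z_t z_t^\top$. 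Converting the $V$-weighted norm to the Euclidean norm through $\|u\|_2 \leq \|u\|_V/\sqrt{\lambda_{\min}(V)}$ gives the stated bound.

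The one step that actually requires paper-specific work is identifying the constant $\kappa = 1/(2e(1+\rho))$. This is where assumption (A2) enters: applied with $|S| = 1$, it states that the odds $\exp(z_t^\top\theta_0)$ and $\exp(-z_t^\top\theta_0)$ are both at most $\rho$, forcing $|z_t^\top\theta_0| \leq \log \rho$. For any $\theta$ in a small Euclidean ball around $\theta_0$ (with radius chosen to match the concentration radius coming from the GLM bound, so that $|z_t^\top(\theta-\theta_0)| \leq 1$), we then get $|z_t^\top\theta| \leq 1 + \log\rho$, and a direct computation shows
\[
\sigma'(z_t^\top\theta) = \sigma(z_t^\top\theta)(1-\sigma(z_t^\top\theta)) \geq \frac{1}{(1 + e^{|z_t^\top\theta|})(1 + e^{-|z_t^\top\theta|})} \geq \frac{1}{2e(1+\rho)},
\]
which supplies the desired value of $\kappa$.

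The main obstacle is not any conceptual difficulty but a book-keeping issue: one must close the self-referential loop between the concentration radius in \cite{li2017provably} and the neighborhood over which the link derivative lower bound $\kappa$ is valid, ensuring that $\theta^*$ indeed lies in the region where $\sigma' \geq \kappa$ with high probability. This is the standard ``local strong convexity implies consistency, consistency validates local strong convexity'' argument used throughout GLM analysis and is handled cleanly by the cited result, so no new ideas are required beyond faithfully translating their statement into our notation.
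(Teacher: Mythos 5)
Your proposal is correct and follows essentially the same route as the paper's proof: both reduce the pure-exploration phase to a logistic regression, invoke Eq.~(18) of \cite{li2017provably}, and use (A2) to certify the link-derivative lower bound $\kappa = \frac{1}{2e(1+\rho)}$ on a neighborhood of $\theta_0$. Your derivation of $\kappa$ (bounding $|z_t^\top\theta_0|\leq\log\rho$ and using $\sigma'(u) = \bigl[(1+e^{u})(1+e^{-u})\bigr]^{-1}$) is algebraically equivalent to the paper's (bounding the choice probability $\wp_\theta \geq \frac{1}{e(1+\rho)}$ and using $\sigma' = \wp_\theta(1-\wp_\theta)\geq \wp_\theta/2$), so there is no substantive difference.
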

}
\begin{proof}
	Because the noise in a logistic regression model is clearly centered and sub-Gaussian with parameter at most $1/4$,
	it only remains to check \citep[Assumption 1]{li2017provably}, that $\inf_{\|x\|_2\leq 1, \|\theta-\theta_0\|_2\leq 1}\sigma'(x^\top\theta) \geq \kappa=2e(1+\rho)$
	where $\sigma(x)=1/(1+e^{-x})$ is the sigmoid link function.
	Because $\sigma'(x)=\sigma(x)(1-\sigma(x))$, we have $\sigma'(x^\top\theta) = \wp_\theta(1-\wp_\theta) \geq 0.5\wp_\theta$
	where $\wp_\theta = \min\{p_{\theta}(1), 1-p_{\theta}(1)\}$ and $p_\theta(1)=\sigma(x^\top\theta)=1/(1+\exp\{-x^\top\theta\})$.
	By (A2), we know that $\wp_{\theta_0}\geq 1/(1+\rho)$. Subsequently,
	for any $\|x\|_2\leq 1$ and $\|\theta-\theta_0\|_2\leq 1$, we have
	\begin{equation*}
	\wp_\theta = \frac{1}{1+\exp\{-x^\top\theta\}} = \frac{1}{1+\exp\{-x^\top(\theta-\theta_0)\} \exp\{-x^\top\theta_0\}} \geq \frac{1}{e}\frac{1}{1+\exp\{x^\top\theta_0\}} \geq \frac{1}{e(1+\rho)}.
	\end{equation*}
	Lemma \ref{lem:pilot} is then an immediate consequence of \citep[Eq.~(18)]{li2017provably}.
\end{proof}

\subsection{Proof of Corollary \ref{cor:tau}}
{\renewcommand{\thecorollary}{\ref{cor:tau}}
	\begin{corollary}[restated]
		There exists a universal constant $C_0>0$ such that for arbitrary $\tau\in(0,1/2]$,
		if $T_0 \geq C_0\max\{\nu^2d\log T/\lambda_0^2, \rho^2 (d+\log T)/(\tau^2\lambda_0)\}$ then with probability $1-O(T^{-1})$, $\|\theta^*-\theta_0\|_2\leq \tau$.
	\end{corollary}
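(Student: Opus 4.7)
The plan is to chain Lemma~\ref{lem:pilot} with a matrix concentration argument that lower bounds $\lambda_{\min}(V)$, then solve for the smallest $T_0$ that makes the resulting error bound at most $\tau$.

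First, I would invoke Lemma~\ref{lem:pilot} with failure probability $\delta = T^{-1}$. This gives, on an event of probability at least $1 - T^{-1}$,
\[
\|\theta^* - \theta_0\|_2 \;\le\; \frac{2}{\kappa}\sqrt{\frac{d + \log T}{\lambda_{\min}(V)}}, \qquad \frac{1}{\kappa} = 2e(1+\rho) \lesssim \rho,
\]
so the task reduces to obtaining a high-probability lower bound on $\lambda_{\min}(V)$. During the pure exploration phase $S_t = \{\ell_t\}$ with $\ell_t$ drawn uniformly over $[N]$, and by assumption (A1) the feature vectors $\{v_{tj}\}_{j\in[N], t\le T_0}$ are i.i.d.\ from $\mu$ with $\lambda_{\min}(\mathbb E_\mu vv^\top)\ge \lambda_0$ and $\|v\|_2\le \nu$. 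Thus the summands $v_{t,\ell_t} v_{t,\ell_t}^\top$ (respectively $v_{t,i_t}v_{t,i_t}^\top$ after dropping the zero contributions from no-purchase events, handled via a separate deterministic adjustment) form an i.i.d.\ sequence of PSD matrices with spectral norm bounded by $\nu^2$ and expectation whose minimum eigenvalue is at least a constant multiple of $\lambda_0$.

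Next, I would apply the matrix Chernoff inequality (or equivalently matrix Bernstein) to the centered sequence. This yields, for a universal constant $c>0$,
\[
\Pr\!\left[\lambda_{\min}(V) \le \tfrac{1}{2} T_0 \lambda_0\right] \;\le\; d \exp\!\left(-\frac{c\, T_0 \lambda_0^2}{\nu^2}\right),
\]
so choosing $T_0 \gtrsim \nu^2 d \log T / \lambda_0^2$ guarantees $\lambda_{\min}(V) \ge T_0\lambda_0/2$ with probability $1 - O(T^{-1})$. Combining this with the display from Lemma~\ref{lem:pilot} and a union bound, on an event of probability $1 - O(T^{-1})$,
\[
\|\theta^* - \theta_0\|_2^2 \;\lesssim\; \frac{\rho^2 (d + \log T)}{T_0 \lambda_0}.
\]

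Finally, I would require this right-hand side to be at most $\tau^2$, which holds as soon as $T_0 \ge C_0'\, \rho^2(d+\log T)/(\tau^2 \lambda_0)$ for a sufficiently large constant $C_0'$. Taking the maximum of the two lower bounds on $T_0$ recovers the hypothesis of the corollary and completes the argument. The only potentially delicate point is the matrix concentration step --- specifically, ensuring that the sample covariance is comparable to the population covariance with only a $\log T$ deviation probability --- but this is standard once (A1) is in place, so the proof is essentially a two-line composition of Lemma~\ref{lem:pilot} with matrix Chernoff.
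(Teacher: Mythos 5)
Your proposal is correct and follows essentially the same route as the paper's proof: invoke Lemma~\ref{lem:pilot} with $\delta \asymp T^{-1}$, use standard matrix concentration under (A1) to show $\lambda_{\min}(V)\geq \tfrac{1}{2}T_0\lambda_0$ with probability $1-O(T^{-1})$ once $T_0\gtrsim \nu^2 d\log T/\lambda_0^2$, and then solve $\rho^2(d+\log T)/(T_0\lambda_0)\lesssim\tau^2$ for the second condition on $T_0$. The only cosmetic difference is that you cite matrix Chernoff/Bernstein where the paper bounds $\|\hat\Lambda-\Lambda\|_{\op}$ via sub-Gaussian covariance concentration; both yield the same conclusion with the same parameter dependence.
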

}
\begin{proof}
	Denote $\Lambda := \mathbb E_\mu xx^\top$ and $\hat\Lambda := V/T_0 = \frac{1}{T_0}\sum_{t=1}^{T_0} x_{t,i_t}x_{t,i_t}^\top$.
	Clearly $\mathbb E\hat\Lambda = \Lambda$.
	In addition, because $\|v_{tj}\|_2\leq\nu$ almost surely, $v_{tj}$ are sub-Gaussian random variables with parameter $\nu^2$.
	By standard concentration inequalities (see, e.g., \citep[Proposition 2.1]{vershynin2012close}),
	we have with probability $1-O(T^{-2})$ that $\|\hat\Lambda-\Lambda\|_\op \lesssim \nu\sqrt\frac{d\log T}{T_0}$.
	Hence, if $T_0\geq C_0\nu^2 d\log T/\lambda_0^2$ for some sufficiently large universal constant $C_0$,
	we have $\|\hat\Lambda-\Lambda\|_\op \leq 0.5\lambda_0 = \lambda_{\min}(\Lambda)$
	and therefore $\lambda_{\min}(V) = T_0\lambda_{\min}(\hat\Lambda) \geq 0.5T_0\lambda_0$.
	The corollary then immediately follows Lemma \ref{lem:pilot}.
\end{proof}

\subsection{Proof of Lemma \ref{lem:mle}}

{\renewcommand{\thelemma}{\ref{lem:mle}}
	\begin{lemma}[restated]
		Suppose $\tau\leq 1/\sqrt{8\rho \nu^2K^2}$. Then there exists a universal constant $C>0$ such that with probability $1-O(T^{-1})$ the following holds uniformly over all $t=T_0,\cdots,T-1$:
		\begin{equation}
		(\hat\theta_t-\theta_0)^\top I_t(\theta_0)(\hat\theta_t-\theta_0) \leq C\cdot d\log(\rho\nu TK).
		\end{equation}
	\end{lemma}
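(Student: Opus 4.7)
The plan is to adapt the classical $M$-estimator ``basic inequality'' argument to a finite-sample, self-normalized form, as outlined in the sketch already provided in the excerpt. Throughout I would condition on the event, guaranteed with probability $1-O(T^{-1})$ by Corollary~\ref{cor:tau}, that $\|\theta^*-\theta_0\|_2\leq\tau$, which makes $\theta_0$ feasible for every local MLE. Optimality of $\hat\theta_t$ then gives $\hat F_t(\hat\theta_t)\geq \hat F_t(\theta_0)=0$, while $F_t(\hat\theta_t)\leq F_t(\theta_0)=0$ by KL non-negativity. Subtracting these yields the master inequality
\begin{equation*}
F_t(\theta_0)-F_t(\hat\theta_t)\ \leq\ \bigl(\hat F_t(\hat\theta_t)-F_t(\hat\theta_t)\bigr)-\bigl(\hat F_t(\theta_0)-F_t(\theta_0)\bigr),
\end{equation*}
and the proof reduces to lower bounding the left-hand side by a quadratic form in $\|\hat\theta_t-\theta_0\|_{I_t(\theta_0)}$ and upper bounding the right-hand side by a self-normalized concentration quantity.

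For the lower bound I would Taylor expand $F_t$ about $\theta_0$: the score has mean zero and $-\nabla^2 F_t(\theta)=\sum_{t'\leq t}M_{t'}(\theta)$, so the expansion is $\tfrac12(\hat\theta_t-\theta_0)^\top\sum_{t'}M_{t'}(\tilde\theta)(\hat\theta_t-\theta_0)$ for some interpolant $\tilde\theta$ on the segment between $\theta_0$ and $\hat\theta_t$. The technical crux here is a \emph{Hessian-stability} inequality $\sum_{t'}M_{t'}(\tilde\theta)\succeq \tfrac12 I_t(\theta_0)$, which I would establish by plugging the explicit MNL form of $M_{t'}(\theta)$ (given in the statement just above the lemma) and noting that moving $\theta$ by at most $2\tau$ perturbs every choice probability $p_{\theta,t'}(\cdot|S_{t'})$ by at most a factor $e^{2\nu\tau}$, and hence perturbs both the feature-outer-product and feature-mean terms in $M_{t'}$ by a constant factor. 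The condition $\tau\leq 1/\sqrt{8\rho\nu^2 K^2}$ is calibrated precisely so that, after absorbing the bounded ratio $\rho$ from (A2) and the $K$ summands contributed by an assortment, this loss is exactly $\tfrac12$.

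For the upper bound I would apply self-normalized martingale concentration to $G_t:=\hat F_t-F_t$. Since $G_t(\theta_0)=0$, the mean value theorem gives $G_t(\hat\theta_t)=\nabla G_t(\xi)^\top(\hat\theta_t-\theta_0)$ for some $\xi$ on the segment, and $\nabla G_t(\xi)$ is a sum of bounded martingale differences of the form ``observed feature minus its conditional expectation under $p_{\xi,t'}$'', adapted to the natural filtration. An Abbasi-Yadkori-type self-normalized inequality (exactly analogous to the one already used in the generalized-linear-bandit papers cited), combined with a standard $\varepsilon$-net over the ball $\{\xi:\|\xi-\theta_0\|_2\leq 2\tau\}$ of log-covering number $O(d\log(TK))$ and a union bound over $t\in\{T_0,\ldots,T-1\}$, would yield
\begin{equation*}
\sup_{t,\xi}\ \|\nabla G_t(\xi)\|_{I_t^{-1}(\theta_0)}\ \lesssim\ \sqrt{d\log(\rho\nu TK)}
\end{equation*}
with probability $1-O(T^{-1})$. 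Cauchy--Schwarz in the $I_t(\theta_0)$-norm then combines the two bounds into $\tfrac14\|\hat\theta_t-\theta_0\|_{I_t(\theta_0)}^2\lesssim \sqrt{d\log(\rho\nu TK)}\cdot \|\hat\theta_t-\theta_0\|_{I_t(\theta_0)}$, which rearranges to the claimed inequality.

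The main obstacle I anticipate is the Hessian-stability step: because the MNL Hessian is not of generalized-linear form (the denominator in the choice probabilities couples all items in the assortment), the perturbation calculation cannot be quoted from the generalized linear bandit literature and must be carried out by hand, tracking simultaneously how the choice probabilities and the first and second feature moments co-vary as $\theta$ moves within the $\tau$-ball. The interplay of (A1), (A2), and the chosen $\tau$ is what ultimately produces a constant spectral lower bound; a clean way to package the argument is to first show the pointwise bound $M_{t'}(\theta)\succeq \tfrac12 M_{t'}(\theta_0)$ for every $\|\theta-\theta_0\|_2\leq 2\tau$ and then sum in $t'$.
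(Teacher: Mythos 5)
Your deterministic half coincides with the paper's proof: feasibility of $\theta_0$ from Corollary~\ref{cor:tau}, the basic inequality $-F_t(\hat\theta_t)\leq \hat F_t(\hat\theta_t)-F_t(\hat\theta_t)$ (your signs are in fact the correct ones; the paper's sketch has them transposed), second-order Taylor expansion of $F_t$ with Lagrangian remainder, and Hessian stability packaged exactly as the paper's Lemma~\ref{lem:Itheta-Itheta0} ($M_{t'}(\theta)\succeq\tfrac12 M_{t'}(\theta_0)$ pointwise, then summing in $t'$). Where you genuinely depart is the stochastic half. The paper controls $|\hat F_t(\theta)-F_t(\theta)|$ directly (Lemma~\ref{lem:hatF-uniform}): a Freedman-type bound at each point of an $\varepsilon$-net over the $2\tau$-ball, with the variance bound $\mathcal V^2\leq 8|F_t(\theta)|$ (Lemma~\ref{lem:MV}) supplying the self-normalization, a Lipschitz perturbation argument to pass from the net to the ball, and finally the self-bounding step $|F_t|\lesssim d\log(\cdot)+\sqrt{|F_t|\,d\log(\cdot)}\Rightarrow|F_t|\lesssim d\log(\cdot)$. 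You instead linearize $G_t=\hat F_t-F_t$ and bound a score in the $I_t^{-1}(\theta_0)$-norm, finishing with Cauchy--Schwarz; this buys a shorter combination step and dispenses with Lemma~\ref{lem:MV} entirely. Your route is in fact cleaner than you realize: since the conditional MNL is an exponential family in $\theta$, each $\hat f_{t'}-f_{t'}$ is \emph{affine} in $\theta$, so $\nabla G_t(\xi)=\sum_{t'\leq t}\bigl(v_{t',i_{t'}}-\mathbb E_{\theta_0,t'}[v_{t'j}]\bigr)$ is independent of $\xi$ and your $\varepsilon$-net over the segment is unnecessary.

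Three points need repair, of which the first is the serious one. (i) You describe the martingale differences as ``observed feature minus its conditional expectation under $p_{\xi,t'}$''; as stated these are \emph{not} conditionally centered (the data are generated under $\theta_0$), and no self-normalized martingale inequality applies to them. The step survives only via the cancellation noted above: the $\mathbb E_{\xi,t'}[v]$ terms cancel between $\nabla\hat F_t$ and $\nabla F_t$, leaving centering under $\theta_0$ --- you must make this explicit, otherwise the concentration step fails. (ii) Your normalizing matrix is the \emph{predictable covariance} $I_t(\theta_0)$, not an empirical Gram matrix of the ``scalar noise times covariate'' form in \citep{li2017provably}, so the bound is not quotable verbatim; a Bernstein-type bound for bounded vector martingales leaves an additive $O(\nu\, d\log(\cdot))$ term, and $\|\nabla G_t\|_{I_t^{-1}(\theta_0)}\lesssim\sqrt{d\log(\rho\nu TK)}$ requires the spectral floor $\lambda_{\min}(I_t(\theta_0))\gtrsim T_0\lambda_0/\rho^3$ inherited from the pure-exploration phase (cf.\ the proof of Lemma~\ref{lem:elliptical}) to absorb it. (iii) Your heuristic that probabilities move by a factor $e^{2\nu\tau}$ and hence $M_{t'}$ by a constant factor is not by itself a proof of Hessian stability: multiplicative perturbation of the $p_j$'s does not imply spectral comparability of covariances because of the rank-one mean-shift term $(\mathbb E_{\theta}v-\mathbb E_{\theta_0}v)(\mathbb E_{\theta}v-\mathbb E_{\theta_0}v)^\top$, which is precisely what Lemma~\ref{lem:Itheta-Itheta0} controls by whitening together with (A2) (since $p_{\theta_0}(j)\geq 1/\rho K$, whitened features have norm at most $\sqrt{\rho K}$, whence the $\rho\nu^2K^2\tau^2$ calibration). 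Your closing paragraph acknowledges this last point, so it is an incomplete sketch rather than a wrong idea; with (i) and (ii) patched, your argument is a valid and arguably more standard alternative to the paper's concentration analysis.
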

}
\begin{proof}
	For any $\theta\in\mathbb R^d$ define
	$$
	f_{t'}(\theta) := \mathbb E_{\theta_0,t'}\left[\log\frac{p_{\theta,t'}(j|S_{t'})}{p_{\theta_0,t'}(j|S_{t'})}\right] = \sum_{j\in S_{t'}\cup\{0\}} p_{\theta_0,t'}(j|S_{t'})\log\frac{p_{\theta,t'}(j|S_{t'})}{p_{\theta_0,t'}(j|S_{t'})}.
	$$
	
	By simple algebra calculations, the first and second order derivatives of $f_{t'}$ with respect to $\theta$ can be computed as
	\begin{align}
	\nabla_\theta f_{t'}(\theta) &= \mathbb E_{\theta_0,t'}[v_{t'j}] - \mathbb E_{\theta,t'}[v_{t'j}];\label{eq:grad}\\
	\nabla_\theta^2 f_{t'}(\theta) &= -\mathbb E_{\theta_0,t'}[v_{t'j}v_{t'j}^\top] + \{\mathbb E_{\theta_0,t'}v_{t'j}\}\{\mathbb E_{\theta,t'}v_{t'j}\}^\top\nonumber\\
	&\;\;\;\; + \{\mathbb E_{\theta,t'}v_{t'j}\}\{\mathbb E_{\theta_0,t'}v_{t'j}\}^\top - \{\mathbb E_{\theta,t'}v_{t'j}\}\{\mathbb E_{\theta,t'}v_{t'j}\}^\top.\label{eq:hessian}
	\end{align}
	In the rest of the section we drop the subscript in $\nabla_\theta$, $\nabla^2_\theta$, and the $\nabla$, $\nabla^2$ notations should always be understood as with respect to $\theta$.
	
	Define $F_t(\theta) := \sum_{t'=1}^t f_{t'}(\theta)$.
	It is easy to verify that $-F_t(\theta)$ is the Kullback-Leibler divergence between the conditional distribution of $(i_1,\cdots,i_t)$ parameterized by $\theta$ and $\theta_0$, respectively.
	Therefore, $F_t(\theta)$ is always non-positive.
	Note also that $F_t(\theta_0) = 0$, $\nabla F_t(\theta_0)=0$, $\nabla^2 f_{t'}(\theta) = -M_{t'}(\theta)$ and $\nabla^2 F_t(\theta) \equiv -I_t(\theta)$.
	By Taylor expansion with Lagrangian remainder, there exists $\bar\theta_t=\alpha\theta_0+(1-\alpha)\hat\theta_t$ for some $\alpha\in(0,1)$ such that
	\begin{equation}
	F_t(\hat\theta_t) = -\frac{1}{2}(\hat\theta_t-\theta_0)^\top I_t(\bar\theta_t) (\hat\theta_t-\theta_0).
	\label{eq:taylor_F}
	\end{equation}
	
	Our next lemma shows that, if $\bar\theta_t$ is close to $\theta_0$ (guaranteed by the constraint that $\|\hat\theta_t-\theta^*\|_2\leq\tau$), then $I_t(\bar\theta_t)$ can be
	\emph{spectrally} lower bounded by $I_t(\theta_0)$.
	It is proved in the supplementary material.
	\begin{lemma}
		Suppose $\tau\leq 1/\sqrt{8\rho \nu^2K^2}$. Then $I_t(\bar\theta_t)\succeq \frac{1}{2}I_t(\theta_0)$ for all $t$.
		\label{lem:Itheta-Itheta0}
	\end{lemma}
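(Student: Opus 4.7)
My plan is to prove the stronger pointwise inequality $M_{t'}(\bar\theta_t) \succeq \tfrac{1}{2} M_{t'}(\theta_0)$ for every $t' \leq t$ and then sum over $t'$. The starting point is that after completing the square in the four-term expression for $M_{t'}(\theta)$ (equivalently, recognizing that $M_{t'}(\theta) = \nabla^2 A_{t'}(\theta)$ is the conditional covariance of $v_{t'j}$ under the MNL law with log-partition $A_{t'}$), one obtains the two-independent-copy representation
\[
M_{t'}(\theta) \;=\; \tfrac{1}{2} \sum_{j,k \in S_{t'}\cup\{0\}} p_{\theta,t'}(j|S_{t'})\, p_{\theta,t'}(k|S_{t'})\, (v_{t'j}-v_{t'k})(v_{t'j}-v_{t'k})^\top.
\]
Because each rank-one summand is PSD, it suffices to establish the coefficientwise bound $p_{\bar\theta_t,t'}(j|S_{t'})\, p_{\bar\theta_t,t'}(k|S_{t'}) \geq \tfrac{1}{2}\, p_{\theta_0,t'}(j|S_{t'})\, p_{\theta_0,t'}(k|S_{t'})$ for every pair $(j,k) \in (S_{t'}\cup\{0\})^2$.

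To bound that ratio, I first observe that $\bar\theta_t$ is a convex combination of $\theta_0$ and $\hat\theta_t$, so the triangle inequality together with the local MLE feasibility constraint $\|\hat\theta_t - \theta^*\|_2 \leq \tau$ and Corollary \ref{cor:tau} ($\|\theta^* - \theta_0\|_2 \leq \tau$) yield $\|\bar\theta_t - \theta_0\|_2 \leq 2\tau$. Writing $p_{\theta,t'}(j|S_{t'}) = e^{v_{t'j}^\top \theta}/Z_{t'}(\theta)$ with $Z_{t'}(\theta) = 1 + \sum_{k \in S_{t'}} e^{v_{t'k}^\top \theta}$, the bound $|v_{t'k}^\top(\bar\theta_t - \theta_0)| \leq 2\nu\tau$ traps both the numerator factor $e^{v_{t'j}^\top(\bar\theta_t-\theta_0)}$ and the partition-function ratio $Z_{t'}(\theta_0)/Z_{t'}(\bar\theta_t)$ inside $[e^{-2\nu\tau}, e^{2\nu\tau}]$, giving
\[
\frac{p_{\bar\theta_t,t'}(j|S_{t'})\, p_{\bar\theta_t,t'}(k|S_{t'})}{p_{\theta_0,t'}(j|S_{t'})\, p_{\theta_0,t'}(k|S_{t'})} \;\geq\; e^{-8\nu\tau},
\]
and the hypothesis $\tau \leq 1/\sqrt{8\rho\nu^2K^2}$ is calibrated precisely so that the right-hand side is at least $1/2$.

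The main technical concern is squeezing the right constant out of the pointwise ratio step so that the stated $\tau$-threshold translates exactly into the $1/2$ prefactor in the conclusion; the bare Gronwall-type bound $e^{-8\nu\tau}$ on its own may not absorb the $K$ and $\rho$ factors cleanly, and a sharper argument will likely need to exploit assumption (A2), which lower-bounds each choice probability by $1/((K{+}1)\rho)$ and thereby provides additional slack when comparing $p_{\bar\theta_t,t'}$ with $p_{\theta_0,t'}$. Once the pointwise ratio $\geq 1/2$ is secured, the two-copy representation makes the matrix inequality $M_{t'}(\bar\theta_t) \succeq \tfrac12 M_{t'}(\theta_0)$ immediate, and summing over $t' \leq t$ finishes the proof.
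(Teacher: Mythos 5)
Your route --- the two-independent-copy representation of the covariance plus a pointwise multiplicative stability bound on the choice probabilities --- is genuinely different from the paper's, and your opening step ($\|\bar\theta_t-\theta_0\|_2\leq 2\tau$ via feasibility, Corollary \ref{cor:tau} and the triangle inequality) matches the paper exactly. But there are two problems, one definitional and one quantitative. First, the ``completing the square'' claim is false for the paper's $M_{t'}(\theta)$: in the four-term expression the second-moment term carries $\mathbb E_{\theta_0,t'}$, not $\mathbb E_{\theta,t'}$, so the square it completes to is the \emph{mixed} object $\mathbb E_{\theta_0,t'}[(v_{t'j}-\mathbb E_{\theta,t'}v_{t'j})(v_{t'j}-\mathbb E_{\theta,t'}v_{t'j})^\top] = M_{t'}(\theta_0)+\delta_{t'}\delta_{t'}^\top$, where $\delta_{t'}=\mathbb E_{\theta,t'}v_{t'j}-\mathbb E_{\theta_0,t'}v_{t'j}$; this coincides with $\nabla^2 A_{t'}(\theta)=\mathrm{Cov}_{\theta,t'}(v_{t'j})$ only at $\theta=\theta_0$. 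Your two-copy identity $\tfrac12\sum_{j,k}p_\theta(j)p_\theta(k)(v_j-v_k)(v_j-v_k)^\top$ is valid for the single-law covariance, so your ratio argument proves $\mathrm{Cov}_{\bar\theta_t,t'}\succeq\tfrac12\,\mathrm{Cov}_{\theta_0,t'}$ --- a different matrix inequality from the one about the paper's $M_{t'}$. (In fairness, the genuine covariance is what the Taylor remainder in Lemma \ref{lem:mle} actually requires, and the paper compares genuine covariances separately in Lemma \ref{lem:close-M}; but the paper's own proof of Lemma \ref{lem:Itheta-Itheta0} uses the bias--variance decomposition above and reduces everything to the mean-shift bound \eqref{eq:preceq-1}, $\delta_{t'}\delta_{t'}^\top\preceq\tfrac12 M_{t'}(\theta_0)$, proved by whitening, the bound $\|\tilde w_j\|_2\leq\sqrt{\rho K}$ supplied by (A2), and a mean-value estimate on $p_{\bar\theta_t}(j)-p_{\theta_0}(j)$. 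Your argument never produces \eqref{eq:preceq-1}, which the paper reuses in the proof of Lemma \ref{lem:close-M}.)

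Second, and independently, your final calibration fails at the stated threshold. You need $e^{-8\nu\tau}\geq\tfrac12$, i.e.\ $\tau\leq(\ln 2)/(8\nu)$, whereas the hypothesis $\tau\leq 1/\sqrt{8\rho\nu^2K^2}=1/(\nu K\sqrt{8\rho})$ implies this only when $\rho K^2\geq 8/(\ln 2)^2\approx 16.7$. Concretely, with $K=1$, $\rho=1$ and $\tau=1/(\nu\sqrt 8)$, your chain yields only $e^{-\sqrt 8}\approx 0.06\ll \tfrac12$. Moreover, the rescue you gesture at cannot work: your per-probability estimate $p_{\bar\theta_t,t'}(j|S_{t'})/p_{\theta_0,t'}(j|S_{t'})\geq e^{-4\nu\tau}$ is already independent of the magnitudes of the probabilities, so the lower bound $p_{\theta_0,t'}(j|S_{t'})\geq 1/((K+1)\rho)$ from (A2) provides no extra multiplicative slack for a ratio of this kind --- in the paper (A2) is used for a different purpose (controlling the whitened vectors). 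The honest repair is to intersect the hypothesis with the parameter-free requirement $\tau\leq(\ln 2)/(8\nu)$ (equivalently, state the conclusion with constant $e^{-8\nu\tau}$); this is harmless downstream, since Theorem \ref{thm:mle-ucb} only fixes $\tau\asymp 1/(\rho\nu K)$ up to constants and $\rho K\geq 1$, but it means your argument, as written, does not establish Lemma \ref{lem:Itheta-Itheta0} for all $\tau$ permitted by its stated threshold.
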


	As a corollary of Lemma \ref{lem:Itheta-Itheta0}, we have
	\begin{equation}
	F_t(\hat\theta_t) \leq -\frac{1}{4}(\hat\theta_t-\theta_0)^\top I_t(\theta_0)(\hat\theta_t-\theta_0).
	\label{eq:taylor-F-population}
	\end{equation}
	
	On the other hand, consider the ``empirical'' version $\hat F_t(\theta) := \sum_{t'=1}^t\hat f_{t'}(\theta)$, where
	\begin{equation}
	\hat f_{t'}(\theta) := \log\frac{p_{\theta,t'}(i_{t'}|S_{t'})}{p_{\theta_0,t'}(i_{t'}|S_{t'})}.
	\end{equation}
	
	It is easy to verify that $\hat F_t(\theta_0)=0$ remains true;
	in addition, for any \emph{fixed} $\theta\in\mathbb R^d$, $\{\hat F_t(\theta)\}_t$ forms a \emph{martingale}
	\footnote{$\{X_k\}_k$ forms a martingale if $\mathbb E[X_{k+1}|X_1,\cdots,X_k]=X_k$ for all $k$.}
	and satisfies $\mathbb E\hat F_t(\theta)=F_t(\theta)$ for all $t$.
	This leads to our following lemma, which upper bounds the \emph{uniform} convergence of $\hat F_t(\theta)$ towards $F_t(\theta)$
	for all $\|\theta-\theta_0\|\leq 2\tau$.
	\begin{lemma}
		Suppose $\tau \leq 1/\sqrt{8\rho^2 \nu^2K^2}$. Then there exists a universal constant $C>0$ such that with probability $1-O(T^{-1})$ the following holds uniformly for all $t\in\{T_0+1,\cdots,T\}$ and $\|\theta-\theta_0\|_2\leq 2\tau$:
		\begin{equation}
		\big|\hat F_t(\theta)-F_t(\theta)\big| \leq C\left[d\log(\rho\nu TK) + \sqrt{|F_t(\theta)| d\log(\rho\nu TK)}\right].
		\label{eq:hatF-uniform}
		\end{equation}
		\label{lem:hatF-uniform}
	\end{lemma}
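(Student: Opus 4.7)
\textbf{Proof plan for Lemma \ref{lem:hatF-uniform}.} The key observation is that $\hat F_t(\theta) - F_t(\theta) = \sum_{t'=1}^t (\hat f_{t'}(\theta) - f_{t'}(\theta))$ is a martingale with respect to the natural filtration, so the target bound has the structure of a Bernstein-type martingale concentration inequality (with $|F_t(\theta)|$ playing the role of the accumulated conditional variance). My plan is to (i) obtain a Freedman-type tail bound for each fixed $\theta$ and $t$, (ii) upgrade to uniformity in $\theta$ via an $\varepsilon$-net on $\{\theta:\|\theta-\theta_0\|_2\leq 2\tau\}$, and (iii) take a union bound over $t$.

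For step (i), I first need pointwise control of each summand. Writing out the MNL log-likelihood ratio,
\[
\hat f_{t'}(\theta) = v_{t',i_{t'}}^\top(\theta-\theta_0) - \log\frac{1+\sum_{k\in S_{t'}}\exp\{v_{t'k}^\top\theta\}}{1+\sum_{k\in S_{t'}}\exp\{v_{t'k}^\top\theta_0\}},
\]
and using $\|v_{t'k}\|_2\leq \nu$ together with $\|\theta-\theta_0\|_2\leq 2\tau$, one obtains $|\hat f_{t'}(\theta)|\lesssim \nu\tau$ and hence the martingale differences are uniformly bounded by some $B\lesssim \nu\tau$. The second, more delicate, ingredient is a bound on the conditional variance $\mathrm{Var}_{\theta_0,t'}[\hat f_{t'}(\theta)]$ in terms of the KL divergence $-f_{t'}(\theta) = \mathrm{KL}(p_{\theta_0,t'}(\cdot|S_{t'}) \| p_{\theta,t'}(\cdot|S_{t'}))$. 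A Taylor expansion of $\hat f_{t'}(\theta)$ around $\theta_0$, combined with the identity $M_{t'}(\theta_0)=\mathrm{Var}_{\theta_0,t'}[v_{t'j}]$ from the remark following Lemma \ref{lem:mle} and with Lemma \ref{lem:Itheta-Itheta0} (which is available since $\tau$ is small), yields both $\mathrm{Var}_{\theta_0,t'}[\hat f_{t'}(\theta)]\asymp (\theta-\theta_0)^\top M_{t'}(\theta_0)(\theta-\theta_0)$ and $-f_{t'}(\theta)\asymp \tfrac12(\theta-\theta_0)^\top M_{t'}(\theta_0)(\theta-\theta_0)$, so the conditional variance is bounded by $C\cdot|f_{t'}(\theta)|$ up to a constant. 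Plugging into Freedman's inequality gives, for each fixed $\theta$ and $t$ with probability $1-\delta$,
\[
|\hat F_t(\theta)-F_t(\theta)| \lesssim B\log(1/\delta) + \sqrt{|F_t(\theta)|\log(1/\delta)}.
\]

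For step (ii), I cover $\{\theta:\|\theta-\theta_0\|_2\leq 2\tau\}$ with an $\varepsilon$-net $\mathcal{N}$ of cardinality $|\mathcal{N}|\leq (6\tau/\varepsilon)^d$. On each summand, the gradient $\nabla_\theta \hat f_{t'}$ is controlled by $\nu$ (from differentiating the log-partition function), so both $\hat F_t$ and $F_t$ are $O(t\nu)$-Lipschitz in $\theta$; picking $\varepsilon$ polynomially small in $1/(TK\rho\nu)$ makes the discretization error negligible relative to the desired right-hand side. A union bound over $\mathcal{N}$ contributes a $\log|\mathcal{N}|\lesssim d\log(\rho\nu TK)$ factor inside the square root, matching the claim. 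For step (iii), a further union bound (or peeling argument) over $t\in\{T_0+1,\dots,T\}$ costs only an extra $\log T$ factor, absorbable into $\log(\rho\nu TK)$.

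The main obstacle I anticipate is step (i)'s conditional variance bound: showing cleanly that $\mathrm{Var}_{\theta_0,t'}[\hat f_{t'}(\theta)]\lesssim |f_{t'}(\theta)|$ uniformly in $\theta$ in the ball, with explicit constants that do not blow up with $K$ or $\rho$. The Taylor remainder has to be controlled using the smallness of $\tau$ (so that $\exp\{v_{t'k}^\top(\theta-\theta_0)\}$ is within a constant factor of $1$), and one must check that the ratio $\mathrm{Var}/\mathrm{KL}$ is bounded by a problem-independent constant rather than something growing with $\rho K$. This is where the hypothesis $\tau\leq 1/\sqrt{8\rho^2\nu^2K^2}$ enters crucially, paralleling its role in Lemma \ref{lem:Itheta-Itheta0}.
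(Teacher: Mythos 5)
Your proposal follows essentially the same route as the paper's proof: a Freedman-type martingale inequality for each fixed $\theta$ with bounded differences and a conditional-variance-to-KL comparison $\mathcal V^2 \lesssim |F_t(\theta)|$ (this is exactly the paper's Lemma \ref{lem:MV}), followed by an $\varepsilon$-net over $\{\|\theta-\theta_0\|_2\leq 2\tau\}$ with a Lipschitz perturbation bound at scale $\varepsilon \asymp \mathrm{poly}(1/(\rho\nu TK))$ and union bounds over the net and over $t$. The minor deviations---your direct gradient-based boundedness/Lipschitz estimates and a Taylor/quadratic-form derivation of the variance--KL equivalence, versus the paper's elementary manipulation of $\sum_{j}(q_j-p_j)^2/p_j$ under (A2)---are cosmetic, and you correctly pinpoint that the crux is keeping the variance/KL ratio free of $\rho K$ factors, which is precisely where the hypothesis on $\tau$ enters in the paper's argument as well.
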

	
	Lemma \ref{lem:hatF-uniform} can be proved by using a standard $\varepsilon$-net argument.
	Since the complete proof is quite involved, we defer it to the supplementary material.
	
	
	We are now ready to prove Lemma \ref{lem:mle}.
	By Eq.~(\ref{eq:hatF-uniform}) and the fact that $\hat F_t(\hat\theta_t)\leq 0\leq F_t(\hat\theta_t)$, we have
	\begin{equation}
	|F_t(\hat\theta_t)| \leq |\hat F_t(\hat\theta_t)- F_t(\hat\theta_t)| \lesssim d\log(\rho\nu TK) + \sqrt{|F_t(\hat\theta_t)| d\log(\rho\nu TK)}.
	\end{equation}
	Subsequently,
	\begin{equation}
	|F_t(\hat\theta_t)| \lesssim d\log(\rho\nu NT).
	\end{equation}
	In addition, because $F_t(\hat\theta_t)\leq 0$, by Eq.~(\ref{eq:taylor-F-population}) we have
	\begin{equation}
	-\frac{1}{2}(\hat\theta_t-\theta_0)^\top I_t(\theta_0) (\hat\theta_t-\theta_0) \geq F_t(\hat\theta_t) \geq d\log(\rho\nu TK).
	\end{equation}
	Lemma \ref{lem:mle} is thus proved.
\end{proof}

\subsubsection*{Proof of Lemma \ref{lem:Itheta-Itheta0}}
{
	\renewcommand{\thelemma}{\ref{lem:Itheta-Itheta0}}
	\begin{lemma}[restated]
		Suppose $\tau\leq 1/\sqrt{8\rho \nu^2K^2}$. Then $I_t(\bar\theta_t)\succeq \frac{1}{2}I_t(\theta_0)$ for all $t$.
	\end{lemma}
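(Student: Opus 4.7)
I would prove the matrix inequality by establishing the stronger per-period bound $M_{t'}(\bar\theta_t)\succeq\tfrac12 M_{t'}(\theta_0)$ for every $t'\le t$ and then summing over $t'$. The first step is to locate $\bar\theta_t$ relative to $\theta_0$: Corollary \ref{cor:tau} gives $\|\theta^*-\theta_0\|_2\le\tau$ with probability $1-O(T^{-1})$, the local-MLE constraint enforces $\|\hat\theta_t-\theta^*\|_2\le\tau$, so the triangle inequality yields $\|\hat\theta_t-\theta_0\|_2\le 2\tau$. Since $\bar\theta_t=\alpha\theta_0+(1-\alpha)\hat\theta_t$ for some $\alpha\in(0,1)$, we also have $\|\bar\theta_t-\theta_0\|_2\le 2\tau$.

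The central algebraic reduction is the symmetric pair representation
\begin{equation*}
M_{t'}(\theta)=\tfrac12\sum_{j,k\in S_{t'}\cup\{0\}} p_{\theta,t'}(j|S_{t'})\,p_{\theta,t'}(k|S_{t'})\,(v_{t'j}-v_{t'k})(v_{t'j}-v_{t'k})^\top,
\end{equation*}
which exhibits $M_{t'}(\theta)$ as a nonnegative combination of the \emph{$\theta$-independent} rank-one matrices $(v_{t'j}-v_{t'k})(v_{t'j}-v_{t'k})^\top$, with scalar weights $p_{\theta,t'}(j)p_{\theta,t'}(k)$. Hence the per-period matrix inequality reduces to the pointwise scalar ratio bound $p_{\bar\theta_t,t'}(j|S_{t'})/p_{\theta_0,t'}(j|S_{t'})\ge 1/\sqrt 2$ for every $j\in S_{t'}\cup\{0\}$; squaring across the pair $(j,k)$ then supplies the required factor of $\tfrac12$.

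To establish the ratio bound, set $\delta:=\bar\theta_t-\theta_0$ (so $\|\delta\|_2\le 2\tau$) and factor
\begin{equation*}
\frac{p_{\bar\theta_t,t'}(j|S_{t'})}{p_{\theta_0,t'}(j|S_{t'})}=\frac{e^{v_{t'j}^\top\delta}}{\mathbb E_{\theta_0,t'}\!\left[e^{v^\top\delta}\right]}.
\end{equation*}
Taking logarithms and using the second-order cumulant expansion $\log\mathbb E_{\theta_0,t'}[e^{v^\top\delta}]=\mathbb E_{\theta_0,t'}[v^\top\delta]+\tfrac12\operatorname{Var}_{\theta_0,t'}(v^\top\delta)+O(\tau^3\nu^3)$, the log-ratio equals $(v_{t'j}-\mathbb E_{\theta_0,t'}v)^\top\delta-\tfrac12\operatorname{Var}_{\theta_0,t'}(v^\top\delta)$ up to a cubic remainder. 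The linear piece is bounded by $2\nu\|\delta\|_2\le 4\tau\nu$ in absolute value; the quadratic piece is controlled using $\|v_{t'j}\|_2\le\nu$, $|S_{t'}|\le K$, and assumption (A2) (which rules out any single alternative carrying almost all the partition-function mass) by a multiple of $\rho\nu^2K^2\tau^2$. The hypothesis $\tau\le 1/\sqrt{8\rho\nu^2K^2}$ is calibrated so that both pieces together keep the log-ratio above $-\tfrac12\log 2$, delivering the $1/\sqrt 2$ bound and hence the lemma.

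The main obstacle is pinning down the exact constants. A crude per-item argument $|v_{t'j}^\top\delta|\le 2\tau\nu$ would give only $\tau\lesssim 1/\nu$, missing the $K^2$ and $\rho$ factors in the stated condition; the tight threshold emerges only after carefully balancing the linear and quadratic contributions in the cumulant expansion, with $\rho$ entering through (A2)'s bound on the most unequal choice probability and $K$ entering through the cardinality of the support of $p_{\theta_0,t'}(\cdot|S_{t'})$. A secondary bookkeeping point is verifying that the cubic cumulant remainder is negligible against the quadratic term under the stated scaling, which it is once $\tau\nu K\sqrt\rho$ is bounded by a small absolute constant.
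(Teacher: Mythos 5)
Your proposal takes a genuinely different route from the paper, and it is sound in substance. The paper proves Lemma \ref{lem:Itheta-Itheta0} by reducing the difference $M_{t'}(\bar\theta_t)-M_{t'}(\theta_0)$ to the rank-one mean-shift inequality of Eq.~(\ref{eq:preceq-1}), whitening so that the target becomes $\|\sum_j \delta_j\tilde w_j\|_2^2\le 1/2$, invoking (A2) to get $\|\tilde w_j\|_2\le\sqrt{\rho K}$, and using the mean value theorem to get $|\delta_j|\lesssim \nu\tau$. You instead decompose the per-period matrix into the $\theta$-\emph{independent} PSD pair terms $(v_{t'j}-v_{t'k})(v_{t'j}-v_{t'k})^\top$ and transfer a uniform lower bound on the choice-probability ratio through the weights: the factorization $p_{\bar\theta_t,t'}(j)/p_{\theta_0,t'}(j)=e^{v_{t'j}^\top\delta}/\mathbb E_{\theta_0,t'}[e^{v^\top\delta}]$ (with the outside option $v_{t'0}=0$ included in the expectation) is correct, the implication ``$p_{\bar\theta}(j)\ge c\,p_{\theta_0}(j)$ for all $j$ implies $M_{t'}(\bar\theta_t)\succeq c^2 M_{t'}(\theta_0)$'' is clean, and summing over $t'$ preserves the Loewner order. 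Notably, your route needs only $\nu\tau\lesssim 1$: the log-ratio is bounded below by $-4\nu\tau$ unconditionally, so $\rho$ and $K$ enter only through the hypothesis on $\tau$, whereas in the paper's proof the factor $\rho K$ is genuinely consumed by the whitened norms $\|\tilde w_j\|_2\le\sqrt{\rho K}$. In that sense your argument proves a stronger statement under a weaker requirement.

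Two caveats. First, your constant calibration does not survive scrutiny: under $\tau\le 1/\sqrt{8\rho\nu^2K^2}$ the linear term alone can be as large as about $4\nu\tau=\sqrt{2}/(K\sqrt{\rho})$ in magnitude, which at $K=\rho=1$ exceeds $\tfrac12\log 2$; indeed an exact computation with $K=1$, $\rho=1$ (so $p_{\theta_0}(1)=1/2$) and $v_1^\top\delta=-2\nu\tau$ gives a ratio of about $0.66<1/\sqrt2$, so no second-order cumulant refinement can deliver the claimed $1/\sqrt2$ bound, and your remark that the $\rho$ and $K^2$ factors ``emerge from balancing the linear and quadratic contributions'' is off-base — they are artifacts of the paper's whitening argument, not of yours. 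What you do get is $M_{t'}(\bar\theta_t)\succeq e^{-8\nu\tau}M_{t'}(\theta_0)\succeq c\,M_{t'}(\theta_0)$ for a universal $c>0$ (here $c\ge e^{-2\sqrt2}$ under the stated $\tau$), which suffices downstream since Lemma \ref{lem:mle} carries an unspecified universal constant anyway; the paper's own constants are comparably loose (its chain around Eq.~(\ref{eq:xi-all}) actually yields $16\rho\nu^2K^2\tau^2$, and the sum $\sum_j\|\tilde w_j\|_2$ has $K+1$ terms). Second, your pair identity is valid for the covariance $\mathbb E_{\theta,t'}[vv^\top]-\{\mathbb E_{\theta,t'}v\}\{\mathbb E_{\theta,t'}v\}^\top$, i.e.\ for the true expected Hessian $-\mathbb E[\nabla^2_\theta\log p_{\theta,t'}]$, but \emph{not} for the closed-form expression the paper displays for $M_{t'}(\theta)$, which centers $\mathbb E_{\theta_0,t'}[vv^\top]$ at $\mathbb E_{\theta,t'}v$; the two coincide only at $\theta=\theta_0$. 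Since the lemma is applied to the Hessian in the Taylor expansion Eq.~(\ref{eq:taylor_F}), you are proving the substantively correct statement, but you should say explicitly that you take $M_{t'}(\theta)$ to be the covariance under $p_{\theta,t'}$ — under the paper's literal mixed definition the difference $M_{t'}(\bar\theta_t)-M_{t'}(\theta_0)$ is exactly the PSD rank-one matrix of Eq.~(\ref{eq:preceq-1}), and the stated inequality would hold trivially with constant $1$.
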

}
\begin{proof}
	Because $\hat\theta_t$ is a feasible solution of the local MLE, we know $\|\hat\theta_t-\theta^*\|_2\leq\tau$.
	Also by Corollary \ref{cor:tau} we know that $\|\theta^*-\theta_0\|_2\leq\tau$ with high probability.
	By triangle inequality and the definition of $\bar\theta_t$ we have that $\|\bar\theta_t-\theta_0\|_2 \leq 2\tau$.
	
	To prove $I_t(\bar\theta_t)\succeq \frac{1}{2}I_t(\theta_0)$ we only need to show that $M_{t'}(\bar\theta_t)-M_{t'}(\theta_0)\preceq \frac{1}{2}M_{t'}(\theta_0)$
	for all $1\leq t'\leq t$.
	This reduces to proving
	\begin{equation}
	\{\mathbb E_{\bar\theta_t,t'}v_{t'j}-\mathbb E_{\theta_0,t'}v_{t'j}\}\{\mathbb E_{\bar\theta_t,t'}v_{t'j}-\mathbb E_{\theta_0,t'}v_{t'j}\}^\top \preceq
	\frac{1}{2} \mathbb E_{\theta_0,t'}\left[(v_{t'j}-\mathbb E_{\theta_0,t'}v_{t'j})(v_{t'j}-\mathbb E_{\theta_0,t'}v_{t'j})^\top\right].
	\label{eq:preceq-1}
	\end{equation}
	
	Fix arbitrary $S_{t'}\subseteq[N]$, $|S_{t'}|=J\leq K$ and for convenience denote $x_1,\cdots,x_J\in\mathbb R^d$ as the feature vectors of items in $S_{t'}$ (i.e., $\{v_{t'j}\}_{j\in S_{t'}}$).
	Let also $p_{\theta_0}(j)$ and $p_{\bar\theta_t}(j)$ be the probability of choosing action $j\in[J]$ corresponding to $x_j$ parameterized by $\theta_0$ or $\bar\theta_t$.
	Define $\bar x := \sum_{j=1}^Jp_{\theta_0}(j)x_j$, $w_j := x_j - \bar x$ and $\delta_j := p_{\bar\theta_t}(j)-p_{\theta_0}(j)$.
	Recall also that $x_0=0$ and $w_0=-\bar x$.
	Eq.~(\ref{eq:preceq-1}) is then equivalent to
	\begin{equation}
	\left\{\sum_{j=0}^J\delta_j w_j\right\}\left\{\sum_{j=0}^J\delta_j w_j\right\}^\top \preceq \frac{1}{2}\sum_{j=0}^J p_{\theta_0}(j)w_jw_j^\top.
	\label{eq:preceq-2}
	\end{equation}
	
	Let $L=\SPAN\{w_j\}_{j=0}^J$ and $H\in\mathbb R^{L\times d}$ be a whitening matrix such that $H(\sum_j p_{\theta_0}(j)w_jw_j^\top)H^\top = I_{L\times L}$,
	where $I_{L\times L}$ is the identity matrix of size $L$.
	Denote $\tilde w_j := Hw_j$.
	We then have $\sum_{j=0}^J p_{\theta_0}(j) \tilde w_j\tilde w_j^\top = I_{L\times L}$.
	Eq.~(\ref{eq:preceq-2}) is then equivalent to
	\begin{equation}
	\left\|\sum_{j=0}^J\delta_j \tilde w_j\right\|_2^2 \leq \frac{1}{2}.
	\end{equation}
	
	On the other hand, by (A2) we know that $p_{\theta_0}(j)\geq 1/\rho K$ for all $j$ and therefore $\|\tilde w_j\|_2 \leq \sqrt{\rho K}$ for all $j$.
	Subsequently, we have
	\begin{equation}
	\left\|\sum_{j=0}^J\delta_j \tilde w_j\right\|_2^2 \leq\left(\max_j |\delta_j|\cdot \sum_{j=0}^J\|\tilde w_j\|_2\right)^2 \leq \max_j|\delta_j|^2\cdot \rho K^2.
	\end{equation}
	
	Recall that $\delta_i = p_{\bar\theta_t}(i)-p_{\theta_0}(i)$ where $p_\theta(i) = \exp\{x_i^\top\theta\}/(1+\sum_{j\in S_{t'}}\exp\{x_i^\top\theta\})$.
	Simple algebra yields that $\nabla_\theta p_\theta(i) = p_{\theta}(i)[x_i - \mathbb E_\theta x_j]$, where $\mathbb E_\theta x_j = \sum_{j\in S_{t'}}p_\theta(j)x_j$.
	Using the mean-value theorem, there exists $\tilde\theta_t=\tilde\alpha\bar\theta_t+(1-\tilde\alpha)\theta_0$ for some $\tilde\alpha\in(0,1)$ such that
	\begin{equation}
	\delta_i = \langle\nabla_\theta p_{\tilde\theta_t}(i), \hat\theta_t-\theta_0\rangle = p_{\tilde\theta_t}(i)\langle x_i-\mathbb E_{\tilde\theta_t}x_j, \bar\theta_t-\theta_0\rangle.
	\label{eq:deltai}
	\end{equation}
	%
	
	Because $\|x_{ti}\|_2\leq \nu$ almost surely for all $t\in[T]$ and $i\in[N]$, we have
	\begin{equation}
	\max_j |\delta_j|^2\cdot \rho K^2 \leq 4\cdot\max_i\|x_i\|_2^2\cdot \|\bar\theta_t-\theta_0\|_2^2\cdot \rho K^2 \leq 4\rho \nu^2 K^2\cdot \tau^2.
	\label{eq:xi-all}
	\end{equation}
	The lemma is then proved by plugging in the condition on $\tau$.
\end{proof}

\subsubsection*{Proof of Lemma \ref{lem:hatF-uniform}}
{
	\renewcommand{\thelemma}{\ref{lem:hatF-uniform}}
	\begin{lemma}[restated]
		Suppose $\tau \leq 1/\sqrt{8d\rho^2 \nu^2K^2}$. Then there exists a universal constant $C>0$ such that with probability $1-O(T^{-1})$ the following holds uniformly for all $t\in\{T_0+1,\cdots,T\}$ and $\|\theta-\theta_0\|_2\leq 2\tau$:
		\begin{equation}
		\big|\hat F_t(\theta)-F_t(\theta)\big| \leq C\left[d\log(\rho\nu TK) + \sqrt{|F_t(\theta)| d\log(\rho\nu TK)}\right].
		\end{equation}
	\end{lemma}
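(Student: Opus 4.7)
}
The plan is to combine a Freedman-type martingale Bernstein inequality (which produces the ``variance $+$ bound'' form matching the right-hand side) with an $\varepsilon$-net covering of the ball $B_{2\tau}(\theta_0) := \{\theta: \|\theta-\theta_0\|_2\leq 2\tau\}$, and then union bound over the net and over $t\in\{T_0+1,\ldots,T\}$. The target shape $d\log(\rho\nu TK) + \sqrt{|F_t(\theta)|\cdot d\log(\rho\nu TK)}$ is exactly the Bernstein expression with failure probability $e^{-\Theta(d\log(\rho\nu TK))}$, which is what an $\varepsilon$-net of cardinality $(\tau/\varepsilon')^d$ in a $d$-dimensional ball naturally eats up.

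First, I would fix $\theta\in B_{2\tau}(\theta_0)$ and study the martingale $X_{t'}(\theta) := \hat f_{t'}(\theta)-f_{t'}(\theta)$. Using mean-value in $\theta$ on $\log p_{\theta,t'}(j|S_{t'})$ together with Eq.~\eqref{eq:grad}, one gets $\hat f_{t'}(\theta) = (v_{t',i_{t'}}-\mathbb E_{\tilde\theta,t'}v_{t'j})^\top(\theta-\theta_0)$ for an intermediate $\tilde\theta$, so $|\hat f_{t'}(\theta)|\leq 2\nu\cdot 2\tau$ which is $O(1)$ under the $\tau$-hypothesis; this yields the almost-sure bound $B=O(1)$ on increments. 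For the conditional variance, the key inequality is that for any two probability vectors $p,q$ on $\{0\}\cup S_{t'}$ whose log-ratio is bounded by $B'=O(\nu\tau)$, one has $\mathrm{Var}_p[\log(q/p)] \leq c\,\mathrm{KL}(p\|q)$ (this follows from the elementary inequality $(\log r)^2\leq c_{B'}(r-1-\log r)$ valid for $r\in[e^{-B'},e^{B'}]$ and integration against $p$). Applied to $p=p_{\theta_0,t'}(\cdot|S_{t'})$ and $q=p_{\theta,t'}(\cdot|S_{t'})$, this gives $\mathbb E[X_{t'}(\theta)^2|\mathcal F_{t'-1}] \leq c\cdot(-f_{t'}(\theta))$, whence the accumulated conditional variance satisfies $V_t(\theta)\leq c|F_t(\theta)|$. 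Freedman's inequality then delivers, for each fixed $\theta$ and $\delta$:
\begin{equation*}
\Pr\!\left[|\hat F_t(\theta)-F_t(\theta)|\geq \sqrt{2V_t(\theta)\log(1/\delta)}+\tfrac{2B}{3}\log(1/\delta)\right]\leq 2\delta.
\end{equation*}

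Next, I would upgrade to uniformity. Take a minimal $\varepsilon'$-net $\mathcal N\subseteq B_{2\tau}(\theta_0)$ with $|\mathcal N|\leq (6\tau/\varepsilon')^d$. The maps $\theta\mapsto \hat F_t(\theta)$ and $\theta\mapsto F_t(\theta)$ are Lipschitz on $B_{2\tau}(\theta_0)$ with constant at most $L_t \lesssim t\nu K$, since $\|\nabla_\theta\hat f_{t'}\|_2$ and $\|\nabla_\theta f_{t'}\|_2$ are each bounded by $2\nu$ by Eq.~\eqref{eq:grad}. Choosing $\varepsilon' = 1/(T^2\nu K)$ keeps the Lipschitz discretization error at most $O(1/T)$ uniformly in $t\leq T$. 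Union-bounding Freedman over $(\theta,t)\in\mathcal N\times\{T_0+1,\ldots,T\}$ with $\delta = T^{-3}|\mathcal N|^{-1}$ gives $\log(1/\delta)\lesssim d\log(T\tau\nu K)+\log T\lesssim d\log(\rho\nu TK)$; combined with $V_t\leq c|F_t(\theta)|$ and $B=O(1)$, this yields the stated bound uniformly, with probability $1-O(T^{-1})$.

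The main obstacle is the variance-to-KL bound in the first step, which makes the $\sqrt{|F_t(\theta)|\cdot d\log(\rho\nu TK)}$ term possible instead of a weaker $\sqrt{t\cdot d\log(\rho\nu TK)}$ coming from a naive Hoeffding/Azuma bound; without it, the final inductive use of the lemma in the proof of Lemma \ref{lem:mle} (solving for $|F_t(\hat\theta_t)|$ from a quadratic-in-$\sqrt{|F_t|}$ inequality) would not close. Everything else is routine $\varepsilon$-net bookkeeping.
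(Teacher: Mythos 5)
Your proposal is correct and takes essentially the same route as the paper's proof: a Freedman/Azuma--Bernstein martingale inequality whose variance term is bounded by the KL divergence $|F_t(\theta)|$ (this is exactly the content of the paper's Lemma \ref{lem:MV}, which shows $\mathcal M\leq 1$ and $\mathcal V^2\leq 8|F_t(\theta)|$), followed by an $\varepsilon$-net over the ball $\{\|\theta-\theta_0\|_2\leq 2\tau\}$ with a union bound over the net and over $t$, absorbing the net cardinality into the $d\log(\rho\nu TK)$ factor. Your packaging of the variance-to-KL step via $(\log r)^2\leq c_{B'}(r-1-\log r)$ and your discretization via the gradient bound $\|\nabla_\theta \hat f_{t'}\|_2\leq 2\nu$ are mildly cleaner substitutes for the paper's two-step Taylor computation and its probability-ratio bound $p_{\theta',t'}(j|S_{t'})\geq 1/2\rho K$, but the decomposition and key estimates are the same.
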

}
\begin{proof}
	We first consider a \emph{fixed} $\theta\in\mathbb R^d$, $\|\theta-\theta_0\|_2\leq 2\tau$.
	Define
	\begin{equation}
	\mathcal M := \max_{t'\leq t}|\hat f_{t'}(\theta)| \;\;\;\;\;\text{and}\;\;\;\;\; \mathcal V^2 := \sum_{t'=1}^t \mathbb E_{j\sim\theta_0,t'}\left|\log\frac{p_{\theta,t'}(j|S_{t'})}{p_{\theta_0,t'}(j|S_{t'})}\right|^2.
	\end{equation}
	Using an Azuma-Bernstein type inequality (see, for example, \citep[Theorem A]{fan2015exponential}, \citep[Theorem~(1.6)]{freedman1975tail}), we have
	\begin{equation}
	\big|\hat F_t(\theta)-F_t(\theta)\big| \lesssim \mathcal M\log(1/\delta) + \sqrt{\mathcal V^2\log(1/\delta)} \;\;\;\;\;\;\text{with probability $1-\delta$}.
	\end{equation}
	
	The following lemma upper bounds $\mathcal M$ and $\mathcal V^2$ using $F_t(\theta)$ and the fact that $\theta$ is close to $\theta_0$.
	It will be proved right after this proof.
	\begin{lemma}
		If $\tau\leq 1/\sqrt{8\rho^2 \nu^2K^2}$ then $\mathcal M\leq 1$ and $\mathcal V^2\leq 8 |F_t(\theta)|$.
		\label{lem:MV}
	\end{lemma}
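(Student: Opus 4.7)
The plan is to prove the two bounds separately, both by leveraging the constraint $\tau \leq 1/\sqrt{8\rho^2\nu^2K^2}$ together with elementary Taylor-type inequalities.

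For the bound $\mathcal M \leq 1$, I would first derive a closed form for $\hat f_{t'}(\theta) = \log(p_{\theta,t'}(i_{t'}|S_{t'})/p_{\theta_0,t'}(i_{t'}|S_{t'}))$. Using the MNL form in Eq.~\eqref{eq:context_MNL}, the numerator contributes $v_{t'i_{t'}}^\top(\theta-\theta_0)$, while the denominator is the difference of the log-partition function $g(\vartheta) := \log(1+\sum_{k\in S_{t'}}\exp\{v_{t'k}^\top\vartheta\})$ at $\theta$ and $\theta_0$. Since $\nabla g(\vartheta) = \mathbb E_{\vartheta,t'}[v_{t'j}|S_{t'}]$, the mean value theorem gives
\begin{equation*}
\hat f_{t'}(\theta) = \langle v_{t'i_{t'}} - \mathbb E_{\tilde\theta,t'}[v_{t'j}|S_{t'}],\; \theta-\theta_0\rangle
\end{equation*}
for some $\tilde\theta$ on the segment between $\theta_0$ and $\theta$. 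By (A1) both vectors have $\ell_2$ norm at most $\nu$, so $|\hat f_{t'}(\theta)| \leq 2\nu\|\theta-\theta_0\|_2 \leq 4\nu\tau$. Plugging in the assumed bound on $\tau$ yields $|\hat f_{t'}(\theta)| \leq 4\nu/(2\sqrt{2}\rho\nu K) = \sqrt{2}/(\rho K) \leq 1$ (using $\rho\geq 1$ and $K\geq 1$, noting that for the boundary case one can absorb the remaining factor into a slightly sharper constant), which gives $\mathcal M \leq 1$.

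For the bound $\mathcal V^2 \leq 8|F_t(\theta)|$, the key is the classical identity based on the fact that $\mathbb E_{\theta_0,t'}[p_{\theta,t'}(j|S_{t'})/p_{\theta_0,t'}(j|S_{t'})] = 1$. Writing $X_{t'} := \hat f_{t'}(\theta)$ viewed as a random variable in $j\sim p_{\theta_0,t'}(\cdot|S_{t'})$, this identity reads $\mathbb E_{\theta_0,t'}[e^{X_{t'}}] = 1$, so that
\begin{equation*}
\mathbb E_{\theta_0,t'}[e^{X_{t'}} - 1 - X_{t'}] = -\mathbb E_{\theta_0,t'}[X_{t'}] = -f_{t'}(\theta) = |f_{t'}(\theta)|,
\end{equation*}
the last step using that $-f_{t'}(\theta)$ is the (nonnegative) KL divergence. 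Since the first part of the lemma gives $|X_{t'}| \leq \mathcal M \leq 1$ pointwise, I would next verify the elementary calculus inequality $e^x - 1 - x \geq x^2/(2e)$ on $[-1,1]$ (this follows by a direct second-derivative check of the difference, as the claim is tight at $x=-1$ and $x=0$, with both sides vanishing at $0$). This gives $\mathbb E_{\theta_0,t'}[X_{t'}^2] \leq 2e\,|f_{t'}(\theta)|$. Summing over $t' \leq t$ and using $|F_t(\theta)| = \sum_{t'\leq t}|f_{t'}(\theta)|$ (each term is nonpositive) yields $\mathcal V^2 \leq 2e\,|F_t(\theta)| \leq 8|F_t(\theta)|$, as desired.

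The only mildly delicate step is the first one, since the constant $\sqrt{2}/(\rho K)$ is $\leq 1$ only when $\rho K \geq \sqrt{2}$; on the remaining boundary case one can slightly tighten the mean-value estimate by noting that $\|\mathbb E_{\tilde\theta,t'}[v_{t'j}|S_{t'}]\|_2$ is in fact a convex combination of at most $K$ of the $v_{t'k}$'s with a uniform weight from assumption (A2), or alternatively absorb the $\sqrt{2}$ factor into a universal constant in the statement. Everything else is a direct computation; no further concentration or probabilistic input is needed here, since Lemma~\ref{lem:MV} is a purely deterministic (pathwise) statement conditional on the choice histories $\{(S_{t'},i_{t'})\}$.
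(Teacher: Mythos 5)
Your proof is correct, but it takes a genuinely different route from the paper's in both halves. For $\mathcal M$, the paper works in probability space: it combines the gradient formula for the choice probabilities (Eq.~(\ref{eq:deltai})) with the bound in Eq.~(\ref{eq:xi-all}) to show $|p_{\theta,t'}(j|S_{t'})-p_{\theta_0,t'}(j|S_{t'})|\le \tfrac{1}{2}p_{\theta_0,t'}(j|S_{t'})$ — crucially using (A2)'s lower bound $p_{\theta_0,t'}(j|S_{t'})\ge 1/\rho K$ — so that $|\hat f_{t'}(\theta)|\le\log 2\le 1$ with comfortable margin. Your mean-value argument on the log-partition function instead works in parameter space and avoids (A2) entirely, but it lands on the borderline constant $\sqrt{2}/(\rho K)$, so the edge-case patch you flag for $K=1$, $\rho<\sqrt{2}$ is genuinely needed (the cleanest fix there: for $K=1$ the model is logistic, and $\log\sigma$ and $\log(1-\sigma)$ are $1$-Lipschitz, giving $|\hat f_{t'}(\theta)|\le 2\nu\tau\le 1/(\sqrt{2}\rho)<1$ directly, which is tighter than your convex-combination sketch). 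For $\mathcal V^2$, the paper passes through the chi-square quantity $\sum_j (q_j-p_j)^2/p_j$, upper-bounding $\mathcal V^2$ by it via $\log(1+x)\le x$ and lower-bounding $-f_{t'}(\theta)$ by a constant fraction of it via a Lagrange-remainder expansion; your argument via the exact identity $\mathbb E_{\theta_0,t'}[e^{X_{t'}}]=1$ together with $e^x-1-x\ge x^2/(2e)$ on $[-1,1]$ is more self-contained, sidesteps the paper's one-sided use of $\log(1+x)\le x$ (which for $x<0$ bounds $|\log(1+x)|$ by $|x|$ only up to a constant), and even yields the sharper constant $2e<8$. Since both constants are consumed downstream only through the $\lesssim$ in the Bernstein step of Lemma~\ref{lem:hatF-uniform}, the constant-factor slack in either approach is harmless. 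Two cosmetic remarks: your calculus inequality is not tight as an equality at $x=-1$ (only the second derivative of the difference vanishes there, which is what makes your second-derivative check work), and for the $\mathcal V^2$ bound you need $|X_{t'}|\le 1$ for \emph{every} $j\in S_{t'}\cup\{0\}$, not just the realized $i_{t'}$ appearing in the definition of $\mathcal M$ — but your mean-value derivation is pointwise in $j$, so this is automatic.
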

	
	\begin{corollary}
		Suppose $\tau$ satisfies the condition in Lemma \ref{lem:MV}. Then for any $\|\theta-\theta_0\|_2\leq 2\tau$,
		\begin{equation}
		\big|\hat F_t(\theta)-F_t(\theta)\big| \lesssim \log(1/\delta) + \sqrt{|F_t(\theta)|\log(1/\delta)} \;\;\;\;\;\;\text{with probability $1-\delta$}.
		\end{equation}
		\label{cor:concentration-Ft}
	\end{corollary}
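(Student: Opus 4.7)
The plan is to derive this corollary as an immediate consequence of Lemma \ref{lem:MV} combined with the Azuma--Bernstein/Freedman-type martingale concentration inequality displayed in the paragraph just preceding Lemma \ref{lem:MV}. First I would fix an arbitrary $\theta$ with $\|\theta-\theta_0\|_2\leq 2\tau$ and observe that the increments $D_{t'}:=\hat f_{t'}(\theta)-f_{t'}(\theta)$ form a martingale difference sequence with respect to the natural filtration $\mathcal F_{t'-1}$ generated by past purchases, assortments, and features. This is because, by construction, $\mathbb E[\hat f_{t'}(\theta)\mid \mathcal F_{t'-1}] = \mathbb E_{\theta_0,t'}[\log(p_{\theta,t'}(j|S_{t'})/p_{\theta_0,t'}(j|S_{t'}))] = f_{t'}(\theta)$, whose partial sums are exactly $\hat F_t(\theta)-F_t(\theta)$.

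Applying the Freedman-type inequality cited in the paper then yields, with probability at least $1-\delta$,
\[
|\hat F_t(\theta)-F_t(\theta)| \;\lesssim\; \mathcal M\log(1/\delta) + \sqrt{\mathcal V^2\log(1/\delta)}.
\]
Since the hypothesis $\tau\leq 1/\sqrt{8\rho^2\nu^2 K^2}$ of Lemma \ref{lem:MV} is exactly the standing assumption of the corollary, I would plug in the bounds supplied by that lemma, namely $\mathcal M\leq 1$ and $\mathcal V^2\leq 8|F_t(\theta)|$. Absorbing the harmless $\sqrt{8}$ factor into the implicit constant, the displayed inequality becomes
\[
|\hat F_t(\theta)-F_t(\theta)| \;\lesssim\; \log(1/\delta) + \sqrt{|F_t(\theta)|\log(1/\delta)},
\]
which is the conclusion of the corollary.

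I do not expect any real obstacle here since the proof is essentially a one-line substitution once Lemma \ref{lem:MV} has been established. The only routine verifications are that (i) the a.s.\ boundedness hypothesis $|D_{t'}|\leq 2\mathcal M$ of the chosen Bernstein inequality follows from $|\hat f_{t'}(\theta)|\leq \mathcal M$ together with Jensen's inequality applied to $f_{t'}(\theta)=\mathbb E_{\theta_0,t'}[\hat f_{t'}(\theta)]$, and (ii) the predictable quadratic variation $\sum_{t'\leq t}\mathbb E[D_{t'}^2\mid\mathcal F_{t'-1}]$ is dominated by the $\mathcal V^2$ defined in the paper because conditional variances are bounded by conditional second moments. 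Note that this gives the corollary only pointwise in $\theta$; the uniform-in-$\theta$ conclusion of Lemma \ref{lem:hatF-uniform} is obtained separately by an $\varepsilon$-net argument over the ball $\{\theta:\|\theta-\theta_0\|_2\leq 2\tau\}$, which is not needed at this stage.
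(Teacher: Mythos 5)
Your proposal is correct and takes essentially the same route as the paper, which also fixes $\theta$, treats $\hat F_t(\theta)-F_t(\theta)$ as a sum of martingale differences, applies the cited Azuma--Bernstein (Freedman-type) inequality to obtain $|\hat F_t(\theta)-F_t(\theta)| \lesssim \mathcal M\log(1/\delta)+\sqrt{\mathcal V^2\log(1/\delta)}$, and then substitutes the bounds $\mathcal M\leq 1$ and $\mathcal V^2\leq 8|F_t(\theta)|$ from Lemma \ref{lem:MV}. Your routine verifications --- the martingale-difference property of $\hat f_{t'}(\theta)-f_{t'}(\theta)$, the increment bound $|D_{t'}|\leq 2\mathcal M$ via Jensen, and the domination of the predictable quadratic variation by $\mathcal V^2$ --- are exactly the steps the paper leaves implicit, and your closing remark that uniformity in $\theta$ is deferred to the $\varepsilon$-net argument of Lemma \ref{lem:hatF-uniform} matches the paper's structure.
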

	
	Our next step is to construct an $\epsilon$-net over $\{\theta\in\mathbb R^d: \|\theta-\theta_0\|_2\leq 2\tau\}$ and apply union bound on the constructed $\epsilon$-net.
	This together with a deterministic perturbation argument delivers \emph{uniform} concentration of $\hat F_t(\theta)$ towards $F_t(\theta)$.
	
	For any $\epsilon>0$,
	let $\mathcal H(\epsilon)$ be a finite covering of $\{\theta\in\mathbb R^d:\|\theta-\theta_0\|_2\leq 2\tau\}$ in $\|\cdot\|_2$ up to precision $\epsilon$.
	That is, $\sup_{\|\theta-\theta_0\|_2\leq 2\tau} \min_{\theta'\in\mathcal H(\epsilon)} \|\theta-\theta'\|_2\leq\epsilon$.
	By standard covering number arguments (e.g., \citep{van2000empirical}), such a finite covering set $\mathcal H(\epsilon)$ exists
	whose size can be upper bounded by $\log|\mathcal H(\epsilon)|\lesssim d\log(\tau/\epsilon)$.
	Subsequently, by Corollary \ref{cor:concentration-Ft} and the union bound, we have with probability $1-O(T^{-1})$ that
	\begin{equation}
	\big|\hat F_t(\theta)-F_t(\theta)\big|\lesssim d\log(T/\epsilon) + \sqrt{{|F_t(\theta)|d\log(T/\epsilon)}} \;\;\;\;\;\;\forall T_0<t\leq T, \theta\in\mathcal H(\epsilon).
	\label{eq:net}
	\end{equation}
	
	On the other hand, with probability $1-O(T^{-1})$ such that Eq.~(\ref{eq:deltai}) holds, we have for arbitrary $\|\theta-\theta'\|_2\leq\epsilon$ that
	\begin{align}
	\big|\hat F_t(\theta)-\hat F_t(\theta')\big|
	& \leq t\cdot\sup_{t'\leq t,j\in S_{t'}\cup\{0\}}\bigg|\log\frac{p_{\theta,t'}(j|S_{t'})}{p_{\theta',t'}(j|S_{t'})}\bigg|\nonumber\\
	&\leq t\cdot\sup_{t'\leq t,j\in S_{t'}\cup\{0\}}\frac{|p_{\theta,t'}(j|S_{t'})-p_{\theta',t'}(j|S_{t'})|}{p_{\theta',t'}(j|S_{t'})}\label{eq:cover-intermediate-1}\\
	&\leq 2\rho TK\cdot \sup_{t'\leq t, j\in S_{t'}\cup\{0\}} \big|p_{\theta,t'}(j|S_{t'})-p_{\theta',t'}(j|S_{t'})\big|\label{eq:cover-intermediate-2}\\
	&\leq 2\rho TK\cdot \sup_{t'\leq t,j\in[N]}4\|v_{t'j}\|_2^2\cdot \|\theta-\theta'\|_2\nonumber\\
	&\lesssim \rho TK\cdot \nu^2  \cdot \epsilon.\label{eq:cover-intermediate-3}
	\end{align}
	Here Eq.~(\ref{eq:cover-intermediate-1}) holds because $\log(1+x)\leq x$; Eq.~(\ref{eq:cover-intermediate-2}) holds because $p_{\theta',t'}(j|S_{t'}) \geq p_{\theta_0,t'}(j|s_{t'}) - |p_{\theta',t'}(j|S_{t'})-p_{\theta_0,t'}(j|S_{t'})|
	\geq 1/2\rho K$ thanks to (A2) and Eq.~(\ref{eq:xi-all}).
	
	Combining Eqs.~(\ref{eq:net},\ref{eq:cover-intermediate-3}) and setting $\epsilon\asymp 1/(\rho\nu^2  TK)$ we have with probability $1-O(T^{-1})$ that
	\begin{equation}
	\big|\hat F_t(\theta)-F_t(\theta)\big|\lesssim d\log(\rho\nu TK) + \sqrt{|F_t(\theta)|d\log(\rho\nu TK)}\;\;\;\;\forall T_0<t\leq T, \|\theta-\theta_0\|_2\leq2\tau,
	\end{equation}
	which is to be demonstrated in Lemma \ref{lem:hatF-uniform}.
\end{proof}

\subsubsection*{Proof of Lemma \ref{lem:MV}}
{\renewcommand{\thelemma}{\ref{lem:MV}}
	\begin{lemma}[restated]
		If $\tau\leq 1/\sqrt{8\rho^2 \nu^2K^2}$ then $\mathcal M\leq 1$ and $\mathcal V^2\leq 8 |F_t(\theta)|$.
	\end{lemma}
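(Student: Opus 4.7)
The plan is to split Lemma \ref{lem:MV} into its two claims and handle them in order, since the second bound (on $\mathcal V^2$) uses the first (on $\mathcal M$) as input.

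For the bound $\mathcal M \leq 1$, I will exploit the fact that $\theta$ is close to $\theta_0$ under the local MLE constraint combined with $\|\theta^* - \theta_0\|_2 \leq \tau$, so that $\|\theta - \theta_0\|_2 \leq 2\tau$. Writing $\log p_{\theta,t'}(j|S_{t'}) = v_{t'j}^\top\theta - \log(1 + \sum_{k\in S_{t'}} \exp\{v_{t'k}^\top\theta\})$ and differentiating with respect to $\theta$, the gradient equals $v_{t'j} - \mathbb{E}_{\theta,t'}[v_{t'k}|S_{t'}]$, which has Euclidean norm at most $2\nu$ by (A1). Applying the mean value theorem along the segment between $\theta_0$ and $\theta$ yields $|\hat f_{t'}(\theta)| \leq 2\nu\cdot 2\tau = 4\nu\tau$. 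Under the hypothesis $\tau \leq 1/\sqrt{8\rho^2\nu^2 K^2}$, a short calculation gives $4\nu\tau \leq \sqrt{2}/(\rho K) \leq 1$ whenever $\rho K \geq \sqrt{2}$ (otherwise one absorbs the mild overshoot into the constant $8$ appearing in the $\mathcal V^2$ bound below, or tightens the gradient bound by centring $v_{t'j}$ against its law-$\theta_0$ mean).

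For the bound $\mathcal V^2 \leq 8|F_t(\theta)|$, I will appeal to the following local reverse Pinsker inequality: if $X = \log(dP/dQ)$ is $P$-almost surely in $[-L,L]$, then
\begin{equation*}
\mathbb{E}_P[X^2] \leq 2e^{L}\cdot\mathrm{KL}(P\|Q).
\end{equation*}
To prove this in a self-contained way, I will verify that $\phi(x) := e^{-x} - 1 + x - \tfrac{1}{2}e^{-L}x^2$ is non-negative on $[-L,L]$ (its second derivative $e^{-x} - e^{-L}$ is non-negative there, and $\phi(0)=\phi'(0)=0$), then take $P$-expectations and use $\mathbb{E}_P[e^{-X}] = \mathbb{E}_P[dQ/dP] = 1$ together with $\mathbb{E}_P[X] = \mathrm{KL}(P\|Q)$. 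Applying this inequality termwise with $P = p_{\theta_0,t'}(\cdot|S_{t'})$, $Q = p_{\theta,t'}(\cdot|S_{t'})$, and $L = 1$ from the first part, and summing over $t' = 1,\ldots,t$, yields
\begin{equation*}
\mathcal V^2 \leq 2e \sum_{t'=1}^t \mathrm{KL}(p_{\theta_0,t'}\,\|\,p_{\theta,t'}) = 2e\cdot |F_t(\theta)| \leq 8\,|F_t(\theta)|,
\end{equation*}
where the equality uses that $-F_t(\theta)$ is precisely the sum of KL divergences.

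The only real technical wrinkle is pinning down the constants so the first claim delivers $\mathcal M \leq 1$ cleanly; once that is done, the second claim is essentially a routine one-line invocation of the reverse Pinsker inequality above. Both steps are short and should fit comfortably in a few lines each.
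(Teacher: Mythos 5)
Your proof is correct, and for the $\mathcal V^2$ half it takes a genuinely different route from the paper. The paper controls everything through the chi-square-type quantity $\sum_{j}(q_j-p_j)^2/p_j$ (with $p_j=p_{\theta_0,t'}(j|S_{t'})$, $q_j=p_{\theta,t'}(j|S_{t'})$): it first combines the mean-value bound on $|q_j-p_j|$ with (A2)'s lower bound $p_j\geq 1/\rho K$ to get $|q_j-p_j|\leq \tfrac12 p_j$, hence $\mathcal M\leq \log 2\leq 1$, and then sandwiches the chi-square quantity between $\mathcal V^2$ (via $\log(1+x)\leq x$) and $8|f_{t'}(\theta)|$ (via a second-order Taylor expansion of $\log(1+x)$ with Lagrange remainder). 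Your local reverse-Pinsker inequality $\mathbb E_P[X^2]\leq 2e^{L}\,\kl(P\|Q)$ collapses that two-step sandwich into a single convexity argument, and it checks out: $\phi''(x)=e^{-x}-e^{-L}\geq 0$ on $(-\infty,L]$ with $\phi(0)=\phi'(0)=0$ gives $\phi\geq 0$ there, and $\mathbb E_P[e^{-X}]=1$ holds because the two MNL laws share full support on $S_{t'}\cup\{0\}$; with $L=1$ this yields $2e<8$, matching the stated constant with room to spare. Your route buys two things: after the $\mathcal M$ step it never touches (A2) (the paper needs $p_j\geq 1/\rho K$ in both halves), and it sidesteps a genuine slip in the paper's write-up, since $\log(1+x)\leq x$ does \emph{not} imply $\log^2(1+x)\leq x^2$ for $x<0$, so the paper's upper bound on $\mathcal V^2$ silently requires $|q_j-p_j|/p_j\leq 1/2$ and costs a constant. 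Your $\mathcal M$ bound (Lipschitz constant $2\nu$ of $\theta\mapsto\log p_{\theta,t'}(j|S_{t'})$ together with $\|\theta-\theta_0\|_2\leq 2\tau$) is also simpler than the paper's, at the acknowledged cost that $4\nu\tau\leq\sqrt2/(\rho K)$ exceeds $1$ only when $K=1$ and $\rho<\sqrt2$; there your parenthetical fix is in fact rigorous, since for a singleton assortment the gradient is $(1-p_\theta(1))v_{t'1}$ or $-p_\theta(1)v_{t'1}$, of norm at most $\nu$, giving $\mathcal M\leq 2\nu\tau\leq 1/\sqrt2$. In any case $\mathcal M\leq\sqrt2$ always, so $2e^{\mathcal M}$ remains an absolute constant and the downstream use in Lemma \ref{lem:hatF-uniform} (a Bernstein bound stated with $\lesssim$) is unaffected --- the paper's own constant bookkeeping in this lemma is no tighter.
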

}
\begin{proof} 
	We first derive an upper bound for $M$.
	By (A2), we know that $p_{\theta_0,t'}(j|S_{t'})\geq 1/\rho K$ for all $j$.
	Also, Eqs.~(\ref{eq:deltai},\ref{eq:xi-all}) shows that $|p_{\theta,t'}(j|S_{t'})-p_{\theta_0,t'}(j|S_{t'})| \leq 4\nu^2\cdot \tau^2$.
	If $\tau^2\leq 1/\sqrt{8\rho \nu^2K}$ we have $|p_{\theta,t'}(j|S_{t'})-p_{\theta_0,t'}(j|S_{t'})| \leq 0.5p_{\theta_0,t'}(j|S_{t'})$
	and therefore $|\hat f_{t'}(\theta)|\leq \log^22\leq 1$.
	
	We next give upper bounds on $\mathcal V^2$.
	Fix arbitrary $t'$, and for notational simplicity let $p_j=p_{\theta_0,t'}(j|S_{t'})$ and $q_j=p_{\theta,t'}(j|S_{t'})$.
	Because $\log(1+x)\leq x$ for all $x\in(-1,\infty)$, we have
	\begin{equation}
	\mathbb E_{j\sim\theta_0,t'}\left|\log\frac{p_{\theta,t'}(j|S_{t'})}{p_{\theta_0,t'}(j|S_{t'})}\right|^2 = \sum_{j\in S_{t'}\cup\{0\}} p_j\log^2\left(1+\frac{q_j-p_j}{p_j}\right) \leq
	\sum_{j\in S_{t'}\cup\{0\}} \frac{(q_j-p_j)^2}{p_j}.
	\end{equation}
	On the other hand, by Taylor expansion we know that for any $x\in(-1,\infty)$, there exists $\bar x\in(0,x)$ such that $\log(1+x)=x-x^2/2(1+\bar x)^2$.
	Subsequently,
	\begin{align}
	-f_{t'}(\theta)
	&= -\mathbb E_{j\sim\theta_0,t'}\left[\log\frac{p_{\theta,t'}(j|S_{t'})}{p_{\theta_0,t'}(j|S_{t'})}\right] = -\sum_{j\in S_{t'}\cup \{0\}}p_j\log\left(1+\frac{q_j-p_j}{p_j}\right)\\
	&= -\sum_{j\in S_{t'}\cup\{0\}} p_j\left(\frac{q_j-p_j}{p_j} -\frac{1}{2(1+\bar\delta_j)^2} \frac{|q_j-p_j|^2}{p_j^2} \right)\\
	&\geq \frac{1}{2(1+\max_j{|p_j-q_j|/p_j})^2}\cdot \sum_{j\in S_{t'}\cup\{0\}}\frac{(q_j-p_j)^2}{p_j}.
	\end{align}
	Here $\bar\delta_j \in(0, (q_j-p_j)/p_j)$ and the last inequality holds because $\sum_j p_j=\sum_jq_j=1$.
	
	By Eqs.~(\ref{eq:deltai}) and (\ref{eq:xi-all}), we have that $|q_j-p_j|^2\leq 4 \nu^2\cdot\tau^2$.
	In addition, (A2) implies that $p_j\geq 1/\rho K$ for all $j$.
	Therefore, if $\tau\leq1/\sqrt{4\rho^2\nu^2K^2}$ we have $|p_j-q_j|/p_j\leq 1$ for all $j$ and hence
	\begin{equation}
	\mathbb E_{j\sim\theta_0,t'}\left|\log\frac{p_{\theta,t'}(j|S_{t'})}{p_{\theta_0,t'}(j|S_{t'})}\right|^2\leq
	\sum_{j\in S_{t'}\cup\{0\}} \frac{(q_j-p_j)^2}{p_j} \leq 8|f_{t'}(\theta)|.
	\end{equation}
	Summing over all $t'=1,\cdots,t$ and noting that $f_{t'}(\theta)$ is always non-positive, we complete the proof of Lemma \ref{lem:MV}.
\end{proof}

\subsection{Proof of Lemma \ref{lem:ucb}}
{\renewcommand{\thelemma}{\ref{lem:ucb}}
	\begin{lemma}[restated]
		Suppose $\tau$ satisfies the condition in Lemma \ref{lem:mle}.
		With probability $1-O(T^{-1})$ the following holds uniformly for all $t>T_0$ and $S\subseteq[N]$, $|S|\leq K$ such that
		\begin{enumerate}
			\item $\bar R_t(S)\geq R_t(S)$;
			\item $|\bar R_t(S)-R_t(S)| \lesssim \min\{1,\omega\sqrt{\|I_{t-1}^{-1/2}(\theta_0) M_t(\theta_0|S) I_{t-1}^{-1/2}(\theta_0)\|_\op}\}$.
		\end{enumerate}
	\end{lemma}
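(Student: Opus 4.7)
The plan is to treat $\varphi(\theta):=\mathbb E_{\theta,t}[r_{tj}|S]$ as a smooth functional of the unknown parameter and carry out a finite-sample delta method around $\theta=\theta_0$. Applying the scalar mean value theorem to $s\mapsto \varphi(\theta_0+s(\hat\theta_{t-1}-\theta_0))$ on $[0,1]$ produces some $\bar\theta$ on the segment joining $\theta_0$ and $\hat\theta_{t-1}$ with
\[
\varphi(\hat\theta_{t-1})-\varphi(\theta_0) \;=\; \nabla\varphi(\bar\theta)^\top(\hat\theta_{t-1}-\theta_0),
\]
and a weighted Cauchy--Schwarz against $I_{t-1}(\theta_0)$ then gives
\[
|\varphi(\hat\theta_{t-1})-\varphi(\theta_0)|^2 \;\leq\; \bigl[\nabla\varphi(\bar\theta)^\top I_{t-1}(\theta_0)^{-1}\nabla\varphi(\bar\theta)\bigr]\cdot \|\hat\theta_{t-1}-\theta_0\|_{I_{t-1}(\theta_0)}^2.
\]
Lemma~\ref{lem:mle} controls the second factor by $Cd\log(\rho\nu TK)$, so the crux is to bound the ``sensitivity'' factor on the left by the operator-norm quantity appearing in the statement.

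For the sensitivity factor, I would compute $\nabla\varphi(\theta)$ in closed form. Using $\nabla_\theta p_{\theta,t}(j|S)=p_{\theta,t}(j|S)(v_{tj}-\mathbb E_{\theta,t}v_{tj})$ yields $\nabla\varphi(\theta)=\mathbb E_{\theta,t}[r_{tj}(v_{tj}-\mathbb E_{\theta,t}v_{tj})|S]$. Since $r_{tj}\in[0,1]$, a conditional Cauchy--Schwarz inside the expectation gives, for every unit vector $u$,
\[
(u^\top\nabla\varphi(\theta))^2 \;\leq\; \mathbb E_{\theta,t}[r_{tj}^2|S]\cdot u^\top \hat M_t(\theta|S)u \;\leq\; u^\top \hat M_t(\theta|S)u,
\]
i.e.\ the PSD inequality $\nabla\varphi(\theta)\nabla\varphi(\theta)^\top\preceq \hat M_t(\theta|S)$. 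Conjugating by $I_{t-1}(\theta_0)^{-1/2}$ and reading off the unique nontrivial eigenvalue of the resulting rank-one matrix on the left yields $\nabla\varphi(\bar\theta)^\top I_{t-1}(\theta_0)^{-1}\nabla\varphi(\bar\theta) \leq \|I_{t-1}(\theta_0)^{-1/2}\hat M_t(\bar\theta|S)I_{t-1}(\theta_0)^{-1/2}\|_{\op}$, and hence
\[
|\varphi(\hat\theta_{t-1})-\varphi(\theta_0)| \;\lesssim\; \sqrt{d\log(\rho\nu TK)}\cdot\sqrt{\bigl\|I_{t-1}(\theta_0)^{-1/2}\hat M_t(\bar\theta|S)I_{t-1}(\theta_0)^{-1/2}\bigr\|_{\op}}.
\]

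The remaining ingredient, which I expect to be the main technical obstacle, is a two-way spectral perturbation: for every $\|\theta-\theta_0\|_2\leq 2\tau$ and every $S$ with $|S|\leq K$, one has $c^{-1}M_t(\theta_0|S)\preceq \hat M_t(\theta|S)\preceq cM_t(\theta_0|S)$ and analogously $c^{-1}I_{t-1}(\theta_0)\preceq \hat I_{t-1}(\theta)\preceq cI_{t-1}(\theta_0)$ for a universal constant $c>0$. Lemma~\ref{lem:Itheta-Itheta0} already delivers the lower half via a whitening plus Cauchy--Schwarz computation, and the matching upper half follows by running the same argument in the reverse direction; the bound is uniform in $(S,\theta)$ because the underlying estimate depends only on $\rho\nu^2K^2\tau^2$ and $\tau$ has been chosen small enough. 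Given this perturbation, the choice $\omega\asymp\sqrt{d\log(\rho\nu TK)}$ absorbs both $\sqrt{Cd\log(\rho\nu TK)}$ and the constant from the perturbation, producing $|\varphi(\hat\theta_{t-1})-\varphi(\theta_0)|\leq \omega\sqrt{\|\hat I_{t-1}^{-1/2}(\hat\theta_{t-1})\hat M_t(\hat\theta_{t-1}|S)\hat I_{t-1}^{-1/2}(\hat\theta_{t-1})\|_{\op}}$, which gives the validity claim $\bar R_t(S)\geq R_t(S)$ of part~(1). Part~(2) follows by applying the reverse perturbation to translate the width $\omega\sqrt{\|\hat I_{t-1}^{-1/2}\hat M_t\hat I_{t-1}^{-1/2}\|_{\op}}$ back into $\omega\sqrt{\|I_{t-1}^{-1/2}(\theta_0)M_t(\theta_0|S)I_{t-1}^{-1/2}(\theta_0)\|_{\op}}$; the outer $\min\{1,\cdot\}$ truncation comes from the trivial bound $|\bar R_t(S)-R_t(S)|\leq 2$, since both quantities lie in $[0,2]$.
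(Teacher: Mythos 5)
Your proposal is correct and follows essentially the same route as the paper's proof: a finite-sample delta method (mean value theorem plus weighted Cauchy--Schwarz against $I_{t-1}(\theta_0)$, with Lemma \ref{lem:mle} controlling $\|\hat\theta_{t-1}-\theta_0\|_{I_{t-1}(\theta_0)}^2$), the PSD domination $\nabla\varphi(\theta)\nabla\varphi(\theta)^\top\preceq \hat M_t(\theta|S)$ via $r_{tj}\in[0,1]$, and a two-sided spectral comparison between $\hat M_t(\theta|S)$, $\hat I_{t-1}(\theta)$ and their population counterparts at $\theta_0$ for $\|\theta-\theta_0\|_2\leq 2\tau$. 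The two-way perturbation you identify as the main obstacle is exactly the paper's Lemma \ref{lem:close-M} (proved, as you anticipate, from the same $\delta_j$-weighting and rank-one estimates underlying Lemma \ref{lem:Itheta-Itheta0}, though via an intermediate matrix $\bar M_t(\theta|S)$ rather than a literal reversal), so your sketch matches the paper's argument in all essentials.
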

}
\begin{proof}
	Without explicit clarification, all statements are conditioned on the success event in Lemma \ref{lem:mle},
	which occurs with probability $1-O(T^{-1})$ if $\tau$ is sufficiently large and satisfies the condition in Lemma \ref{lem:mle}.
	
	We present below a key technical lemma in the proof of Lemma \ref{lem:ucb},
	which is an upper bound on the absolute value difference between $R_t(S) := \mathbb E_{\theta_0,t}[r_{tj}|S]$ and $\hat R_t(S) := \mathbb E_{\hat\theta_{t-1},t}[r_{tj}|S]$
	using $I_{t-1}(\theta_0)$ and $M_t(\theta_0|S)$,
	where $I_{t-1}(\theta)=\sum_{t'=1}^{t-1}M_{t'}(\theta)$ and $M_{t'}(\theta) =
	\mathbb E_{\theta_0,t'}[v_{t'j}v_{t'j}^\top] - \{\mathbb E_{\theta_0,t'}v_{t'j}\}\{\mathbb E_{\theta,t'}v_{t'j}\}^\top - \{\mathbb E_{\theta,t'}v_{t'j}\} \{\mathbb E_{\theta_0,t'}v_{t'j}\}^\top  +  \{\mathbb E_{\theta,t'}v_{t'j}\}\{\mathbb E_{\theta,t'}v_{t'j}\}^\top$.
	This key lemma can be regarded as a finite sample version of the celebrated \emph{Delta's method} (e.g., \citep{van1998asymptotic}) used widely in classical statistics to estimate and/or infer a functional
	of unknown quantities.
	\begin{lemma}
		For all $t>T_0$ and $S\subseteq[N]$, $|S|\leq K$, it holds that
		$|\hat R_t(S)-R_t(S)| \lesssim\sqrt{d\log(\rho\nu TK)}\cdot \sqrt{\|I_{t-1}^{-1/2}(\theta_0) M_t(\theta_0|S) I_{t-1}^{-1/2}(\theta_0)\|_\op}$,
		where in $\lesssim$ notation we only hide numerical constants.
		\label{lem:delta-method}
	\end{lemma}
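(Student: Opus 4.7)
The plan is to treat $|\hat R_t(S)-R_t(S)|$ as the error of evaluating the functional $\theta\mapsto R_t(S;\theta) = \mathbb E_{\theta,t}[r_{tj}|S]$ at $\hat\theta_{t-1}$ versus at $\theta_0$, and bound it via a first-order Taylor expansion combined with the Mahalanobis bound $\|\hat\theta_{t-1}-\theta_0\|_{I_{t-1}(\theta_0)}^2\lesssim d\log(\rho\nu TK)$ already established in Lemma \ref{lem:mle}. Concretely, the mean-value theorem produces some $\tilde\theta$ on the segment between $\theta_0$ and $\hat\theta_{t-1}$ such that
\begin{equation*}
\hat R_t(S)-R_t(S) = \nabla_\theta R_t(S;\tilde\theta)^\top(\hat\theta_{t-1}-\theta_0),
\end{equation*}
and applying Cauchy--Schwarz in the inner product induced by $I_{t-1}(\theta_0)$ gives
\begin{equation*}
|\hat R_t(S)-R_t(S)|^2 \leq \bigl[\nabla_\theta R_t(S;\tilde\theta)^\top I_{t-1}^{-1}(\theta_0)\nabla_\theta R_t(S;\tilde\theta)\bigr]\cdot (\hat\theta_{t-1}-\theta_0)^\top I_{t-1}(\theta_0)(\hat\theta_{t-1}-\theta_0).
\end{equation*}
The second factor is controlled directly by Lemma \ref{lem:mle}.

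The heart of the argument is controlling the first factor by $\|I_{t-1}^{-1/2}(\theta_0) M_t(\theta_0|S) I_{t-1}^{-1/2}(\theta_0)\|_\op$. A direct computation of the gradient of $R_t(S;\theta)$ yields
\begin{equation*}
\nabla_\theta R_t(S;\theta) = \sum_{j\in S}p_{\theta,t}(j|S)\,r_{tj}\bigl(v_{tj}-\mathbb E_{\theta,t}[v_{tj}|S]\bigr),
\end{equation*}
which is the covariance of $r_{tj}$ and $v_{tj}$ under $p_{\theta,t}(\cdot|S)$. Applying Cauchy--Schwarz in the probability weighting $\{p_{\theta,t}(j|S)\}$ and using $r_{tj}\in[0,1]$, one obtains the key pointwise inequality
\begin{equation*}
|\nabla_\theta R_t(S;\theta)^\top x|^2 \leq x^\top M_t(\theta|S) x \qquad\text{for every } x\in\mathbb R^d.
\end{equation*}
Plugging $x = I_{t-1}^{-1}(\theta_0)\nabla_\theta R_t(S;\tilde\theta)$ into this inequality and simplifying by letting $z=I_{t-1}^{-1/2}(\theta_0)\nabla_\theta R_t(S;\tilde\theta)$ yields $\|z\|_2^4 \leq \|z\|_2^2\cdot \|I_{t-1}^{-1/2}(\theta_0) M_t(\tilde\theta|S) I_{t-1}^{-1/2}(\theta_0)\|_\op$, hence
\begin{equation*}
\nabla_\theta R_t(S;\tilde\theta)^\top I_{t-1}^{-1}(\theta_0)\nabla_\theta R_t(S;\tilde\theta) \leq \|I_{t-1}^{-1/2}(\theta_0) M_t(\tilde\theta|S) I_{t-1}^{-1/2}(\theta_0)\|_\op.
\end{equation*}

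The remaining (and, I expect, trickiest) step is to replace $M_t(\tilde\theta|S)$ by $M_t(\theta_0|S)$ on the right-hand side at the cost of only a constant multiplicative factor. This is essentially the same perturbation estimate that powers Lemma \ref{lem:Itheta-Itheta0}: since $\|\hat\theta_{t-1}-\theta^*\|_2\leq \tau$ by construction and $\|\theta^*-\theta_0\|_2\leq \tau$ on the event of Corollary \ref{cor:tau}, we have $\|\tilde\theta-\theta_0\|_2\leq 2\tau$, so with the choice of $\tau\lesssim 1/\sqrt{\rho\nu^2 K^2}$ the choice probabilities $p_{\tilde\theta,t}(j|S)$ and $p_{\theta_0,t}(j|S)$ differ by a constant factor, from which $M_t(\tilde\theta|S)\preceq C\cdot M_t(\theta_0|S)$ follows by the same whitening/telescoping argument used in Lemma \ref{lem:Itheta-Itheta0}. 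Combining the three pieces and taking square roots delivers the desired bound
\begin{equation*}
|\hat R_t(S)-R_t(S)| \lesssim \sqrt{d\log(\rho\nu TK)}\cdot\sqrt{\|I_{t-1}^{-1/2}(\theta_0) M_t(\theta_0|S) I_{t-1}^{-1/2}(\theta_0)\|_\op},
\end{equation*}
uniformly over $S$ with $|S|\leq K$ on the high-probability event supporting Lemma \ref{lem:mle}.
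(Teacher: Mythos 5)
Your proposal is correct and follows essentially the same route as the paper's proof: a mean-value expansion around $\theta_0$, the covariance representation $\nabla_\theta R_t(S;\theta)=\mathrm{cov}_{\theta,t}(r_{tj},v_{tj})$ combined with Cauchy--Schwarz and $r_{tj}\in[0,1]$ to dominate the gradient by the feature covariance matrix, Lemma \ref{lem:mle} for the Mahalanobis error, and the spectral comparison $M_t(\tilde\theta|S)\asymp M_t(\theta_0|S)$ for $\|\tilde\theta-\theta_0\|_2\leq 2\tau$, which is exactly the paper's Lemma \ref{lem:close-M}. The only cosmetic difference is that you split via weighted Cauchy--Schwarz into a dual norm of the gradient times the Mahalanobis norm, whereas the paper bounds the rank-one matrix $\nabla\mathfrak R_t\nabla\mathfrak R_t^\top\preceq \hat M_t(\tilde\theta_{t-1}|S)$ and then whitens---the same computation organized slightly differently.
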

	
	Below we state our proof of Lemma \ref{lem:delta-method}, while deferring the proof of some detailed technical lemmas to the supplementary material.
	Fix $S\subseteq[N]$.
	We use $\mathfrak R_t(\theta) = \mathbb E_{\theta,t}[r_{tj}] = [\sum_{j\in S}r_{tj}\exp\{v_{tj}^\top\theta\}]/[1+\sum_{j\in S}\exp\{v_{tj}^\top\theta\}]$
	to denote the expected revenue of assortment $S$ at time $t$, evaluated using a specific model $\theta\in\mathbb R$.
	Then
	\begin{align}
	\nabla_\theta\mathfrak R_t(\theta)
	&= \frac{\sum_{j\in S}r_{tj}\exp\{v_{tj}^\top\theta\}(1+\sum_{j\in S}\exp\{v_{tj}^\top\theta\})^2 - (\sum_{j\in S}r_{tj}\exp\{v_{tj}^\top\theta\})(\sum_{j\in S}\exp\{v_{tj}^\top\theta\})}{(1+\sum_{j\in S}\exp\{v_{tj}^\top\theta\})^2}\nonumber\\
	&= \mathbb E_{\theta,t}[r_{tj}v_{tj}] - \{\mathbb E_{\theta,t}r_{tj}\}\{\mathbb E_{\theta,t}v_{tj}\}.
	\end{align}
	
	By the mean value theorem, there exists $\tilde\theta_{t-1}=\theta_0+\xi(\hat\theta_{t-1}-\theta_0)$ for some $\xi\in(0,1)$ such that
	\begin{align}
	\big|\hat R_t(S)-R_t(S)\big|
	&= \big|\mathfrak R_t(\hat\theta_{t-1})-\mathfrak R_t(\theta_0)\big|
	= \big|\langle \nabla\mathfrak R_t(\tilde\theta_{t-1}), \hat\theta_{t-1}-\theta_0\rangle\big|\nonumber\\
	&= \sqrt{(\hat\theta_{t-1}-\theta_0)^\top[\nabla\mathfrak R_t(\tilde\theta_{t-1})\nabla\mathfrak R_t(\tilde\theta_{t-1})^\top)](\hat\theta_{t-1}-\theta_0)}.
	\label{eq:delta-intermediate-1}
	\end{align}
	
	Recall that $\nabla\mathfrak R_t(\tilde\theta_{t-1}) = \mathbb E_{\tilde\theta_{t-1},t}[r_{tj}v_{tj}] - \{\mathbb E_{\tilde\theta_{t-1},t}r_{tj}\}\{\mathbb E_{\tilde\theta_{t-1},t}v_{tj}\}= \mathbb E_{\tilde\theta_{t-1},t}[(r_{tj}-\mathbb E_{\tilde\theta_{t-1},t}r_{tj})(v_{tj}-\mathbb E_{\tilde\theta_{t-1},t}v_{tj})]$.
	Subsequently, by Jenson's inequality and the fact that $r_{tj}\in[0,1]$ almost surely,
	\begin{align}
	\nabla\mathfrak R_t(\tilde\theta_{t-1})\nabla\mathfrak R_t(\tilde\theta_{t-1})^\top
	&\preceq \mathbb E_{\tilde\theta_{t-1,t}}\left[(r_{tj}-\mathbb E_{\tilde\theta_{t-1},t}r_{tj})^2(v_{tj}-\mathbb E_{\tilde\theta_{t-1},t}v_{tj})(v_{tj}-\mathbb E_{\tilde\theta_{t-1},t}v_{tj})^\top\right]\nonumber\\
	&\preceq \mathbb E_{\tilde\theta_{t-1,t}}\left[(v_{tj}-\mathbb E_{\tilde\theta_{t-1},t}v_{tj})(v_{tj}-\mathbb E_{\tilde\theta_{t-1},t}v_{tj})^\top\right]
	= \hat M_t(\tilde\theta_{t-1}|S).
	\label{eq:delta-intermediate-2}
	\end{align}
	
	Define $\hat M_t(\theta|S) := \mathbb E_{\theta,t}[(v_{tj}-\mathbb E_{\theta,t}v_{tj})(v_{tj}-\mathbb E_{\theta,t}v_{tj})^\top]$,
	where $S\subseteq[N]$ is the assortment supplied at iteration $t$.
	Combining Eqs.~(\ref{eq:delta-intermediate-1},\ref{eq:delta-intermediate-2}) with Lemma \ref{lem:mle}, we have
	\begin{equation}
	\big|\hat R_t(S)-R_t(S)\big|
	\lesssim \sqrt{d\log(\rho\nu NT)}\cdot \sqrt{\|I_{t-1}(\theta_0)^{-1/2}\hat M_t(\tilde\theta_{t-1}|S)I_{t-1}(\theta_0)^{-1/2}\|_\op}.
	\label{eq:delta-intermediate-3}
	\end{equation}
	
	It remains to show that $\hat M_t(\tilde\theta_{t-1}|S)$ and $M_t(\theta_0|S)$ are close,
	for which we first recall the definitions of both quantities:
	\begin{align*}
	\hat M_t(\tilde\theta_{t-1}|S) &=  \mathbb E_{\tilde\theta_{t-1,t}}\left[(v_{tj}-\mathbb E_{\tilde\theta_{t-1},t}v_{tj})(v_{tj}-\mathbb E_{\tilde\theta_{t-1},t}v_{tj})^\top\right];\\
	M_t(\theta_0|S) &= \mathbb E_{\theta_0,t}[v_{tj}v_{tj}^\top] - \{\mathbb E_{\theta_0,t}v_{tj}\}\{\mathbb E_{\theta_0,t}v_{tj}\}^\top = \hat M_t(\theta_0|S).
	\end{align*}
	
	The next lemma shows that under suitable conditions
	$\hat M_t(\tilde\theta_{t-1}|S)$ is close to $\hat M_t(\theta_0|S)=M_t(\theta_0|S)$,
	implying that $\frac{1}{4}M_t(\theta_0|S)\preceq \hat M_t(\tilde\theta_{t-1}|S) \preceq 4M_t(\theta_0|S)$.
	It is proved in the supplementary material.
	\begin{lemma}
		Suppose $\tau \leq 1/\sqrt{8\rho^2\nu^2 K^2}$. Then $\frac{1}{4}M_t(\theta_0|S)\preceq \hat M_t(\tilde\theta_{t-1}|S) \preceq 4M_t(\theta_0|S)$ for all $t$, $S$ and $\theta$.
		\label{lem:close-M}
	\end{lemma}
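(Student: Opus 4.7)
The plan is to mirror the whitening-and-perturbation argument used to prove Lemma \ref{lem:Itheta-Itheta0}, but carry it out for both directions of the spectral inequality. The proof proceeds in three steps: (i) reduce to $\|\tilde\theta_{t-1}-\theta_0\|_2 \le 2\tau$; (ii) write $\hat M_t(\tilde\theta_{t-1}|S)$ as an additive perturbation of $M_t(\theta_0|S)$ in whitened coordinates; (iii) bound the operator norm of that perturbation. For (i), since $\hat\theta_{t-1}$ is feasible to the local MLE we have $\|\hat\theta_{t-1}-\theta^*\|_2\le\tau$, and Corollary \ref{cor:tau} gives $\|\theta^*-\theta_0\|_2\le\tau$ with probability $1-O(T^{-1})$; because $\tilde\theta_{t-1}$ lies on the segment between $\theta_0$ and $\hat\theta_{t-1}$, the triangle inequality yields $\|\tilde\theta_{t-1}-\theta_0\|_2\le 2\tau$.

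For (ii), fix $t$ and $S$ and abbreviate $p_j:=p_{\theta_0,t}(j|S)$, $q_j:=p_{\tilde\theta_{t-1},t}(j|S)$, $\delta_j:=q_j-p_j$, $\bar v:=\sum_j p_j v_{tj}$, and $w_j:=v_{tj}-\bar v$. A direct expansion together with the identity $\sum_j\delta_j=0$ (and $\sum_j p_j w_j = 0$) produces
\begin{equation*}
M_t(\theta_0|S)=\sum_{j\in S\cup\{0\}} p_j\, w_jw_j^\top,\qquad \hat M_t(\tilde\theta_{t-1}|S)=\sum_{j} q_j\, w_jw_j^\top - ss^\top,\quad s:=\sum_j \delta_j w_j.
\end{equation*}
Let $H$ be the whitening matrix from the proof of Lemma \ref{lem:Itheta-Itheta0}, so that $HM_t(\theta_0|S)H^\top=I_L$ on the range, and set $\tilde w_j:=Hw_j$, $\tilde s:=Hs=\sum_j\delta_j\tilde w_j$. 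Then
\begin{equation*}
H\hat M_t(\tilde\theta_{t-1}|S)H^\top = I_L + A - B, \qquad A:=\sum_j\delta_j\,\tilde w_j\tilde w_j^\top,\ B:=\tilde s\tilde s^\top.
\end{equation*}

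For (iii), I would bound $\|A\|_\op$ and $\|B\|_\op$ separately. Applying the mean value theorem to $\theta\mapsto p_\theta(j)$ on the segment from $\theta_0$ to $\tilde\theta_{t-1}$ gives $\delta_j = p_{\theta_s}(j)\,(v_{tj}-\mathbb E_{\theta_s}v)^\top(\tilde\theta_{t-1}-\theta_0)$; combining with the elementary bound $p_{\theta_s}(j)/p_j=O(1)$ (the softmax ratio is controlled because $|v_{tj}^\top(\tilde\theta_{t-1}-\theta_0)|\le 2\nu\tau=O(1/(\rho K))$) yields $|\delta_j/p_j|\lesssim \nu\tau$. Since $\sum_j p_j\tilde w_j\tilde w_j^\top=I_L$ and $A=\sum_j p_j(\delta_j/p_j)\tilde w_j\tilde w_j^\top$, we get $\|A\|_\op\le\max_j|\delta_j/p_j|\lesssim\nu\tau$. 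The term $\|B\|_\op=\|\tilde s\|_2^2$ is bounded exactly as in Eqs.~(S21)--(S22) of the proof of Lemma \ref{lem:Itheta-Itheta0}: using $|\delta_j|\le 2\nu\tau$ and $\|\tilde w_j\|_2\le\sqrt{\rho K}$ (from $\sum_j p_j\tilde w_j\tilde w_j^\top=I_L$ together with $p_j\ge 1/(\rho K)$), Cauchy--Schwarz yields $\|\tilde s\|_2^2\le 4\rho\nu^2K^2\tau^2$. Under $\tau\le 1/\sqrt{8\rho^2\nu^2K^2}$, both contributions collapse to numerical constants strictly less than $3$, and therefore $\|A-B\|_\op<3$. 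This gives $\tfrac14 I_L\preceq H\hat M_t(\tilde\theta_{t-1}|S)H^\top\preceq 4I_L$; undoing the whitening (both matrices share the null space $\SPAN(\{w_j\})^\perp$) concludes the lemma.

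The main obstacle is the \emph{lower} spectral bound: unlike Lemma \ref{lem:Itheta-Itheta0}, which only required a one-sided comparison, here the negative term $-B$ directly erodes $I_L$ and must be absorbed alongside $A$. The saving point is the multiplicative estimate $|\delta_j/p_j|\lesssim\nu\tau$ rather than the cruder additive bound $|\delta_j|\lesssim\nu\tau$; establishing this multiplicative form requires bounding both factors in $q_j/p_j=\exp\{v_{tj}^\top(\tilde\theta_{t-1}-\theta_0)\}\cdot Z_{\theta_0}/Z_{\tilde\theta_{t-1}}$, using the assumption on $\tau$ to keep the exponent small. Once this sharper perturbation estimate is in hand, the two-sided sandwich falls out from the identity displayed in step (ii) and the bounds on $\|A\|_\op,\|B\|_\op$.
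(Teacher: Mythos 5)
Your route is, at bottom, the paper's own argument reorganized rather than a genuinely different one. The paper inserts the intermediate matrix $\bar M_t(\theta|S) := \mathbb E_{\theta_0,t}[(v_{tj}-\mathbb E_{\theta,t}v_{tj})(v_{tj}-\mathbb E_{\theta,t}v_{tj})^\top]$ and proves two factor-$2$ sandwiches: $\hat M_t(\theta|S)$ versus $\bar M_t(\theta|S)$, whose difference is $\sum_j\delta_j\tilde w_j\tilde w_j^\top$ and is absorbed using the additive bound $\max_j|\delta_j|$ together with $p_j\geq 1/\rho K$ from (A2); and $\bar M_t(\theta|S)$ versus $M_t(\theta_0|S)$, whose difference is exactly the rank-one matrix $ss^\top$ with $s=\mathbb E_{\theta,t}v_{tj}-\mathbb E_{\theta_0,t}v_{tj}$, absorbed via Eq.~(\ref{eq:preceq-1}). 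Eliminating $\bar M_t$ from these two comparisons yields precisely your identity $\hat M_t(\tilde\theta_{t-1}|S)=M_t(\theta_0|S)+\sum_j\delta_j w_jw_j^\top-ss^\top$, i.e.\ $H\hat M_t(\tilde\theta_{t-1}|S)H^\top=I_L+A-B$: your $B$ is the paper's rank-one term (bounded by the same computation as Eq.~(\ref{eq:xi-all})), and your multiplicative estimate $|\delta_j/p_j|\lesssim\nu\tau$ via softmax-ratio continuity is a legitimate substitute for the paper's $\max_j|\delta_j|\cdot\rho K$ — both reduce to a constant of order $\nu\tau\rho K$. Your reduction to $\|\tilde\theta_{t-1}-\theta_0\|_2\leq 2\tau$ and the remark about common ranges when undoing the whitening are also correct.

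The one genuine flaw is the final inference. From $\|A-B\|_\op<3$ you conclude $\tfrac14 I_L\preceq I_L+A-B\preceq 4I_L$, but $\|A-B\|_\op<3$ only gives $-2I_L\preceq I_L+A-B\preceq 4I_L$: the \emph{lower} side with constant $1/4$ requires $\|A-B\|_\op\leq 3/4$ (for the upper side, $\|A\|_\op\leq 3$ alone suffices since $-B\preceq 0$). This is exactly the "obstacle" you yourself identified, and it does not close under the constants you derived: with $\|B\|_\op\leq 4\rho\nu^2K^2\tau^2\leq 1/2$ you need $\|A\|_\op\leq 1/4$, yet your bound $\|A\|_\op\leq 4e^{4\nu\tau}\nu\tau$ is only $O(1/(\rho K))$ under $\tau\leq 1/\sqrt{8\rho^2\nu^2K^2}$, which need not be $\leq 1/4$ when $\rho K=O(1)$. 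The repair is routine — replace the "$<3$" criterion by "$\|A\|_\op+\|B\|_\op\leq 3/4$ for the lower side" and strengthen the universal constant in the hypothesis on $\tau$, which is harmless since Theorem \ref{thm:mle-ucb} only requires $\tau\asymp 1/\sqrt{\rho^2\nu^2K^2}$ — and, to be fair, the paper's own bookkeeping under the same hypothesis is comparably loose (its step needs $\max_j|\delta_j|\cdot\rho K\leq 1/2$, while the stated $\tau$ only yields a bound of order $\sqrt{2}$). But as written, your last step asserts a false implication and should be restated.
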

	
	As a consequence of Lemma \ref{lem:close-M},
	the right-hand side of Eq.~(\ref{eq:delta-intermediate-3}) can be upper bounded by
	\begin{equation*}
	\sqrt{d\log(\rho\nu TK)}\cdot \sqrt{4\|I_{t-1}(\theta_0)^{-1/2} M_t(\theta_0|S)I_{t-1}(\theta_0)^{-1/2}\|_\op}.
	\end{equation*}
	
	Lemma \ref{lem:delta-method} is thus proved.
	We are now ready to prove Lemma \ref{lem:ucb}.
	By Lemma \ref{lem:delta-method}, we know that with high probability
	\begin{equation}
	\big|\hat R_t(S)-R_t(S)\big| \lesssim \sqrt{d\log(\rho\nu TK)}\cdot \sqrt{\|I_{t-1}(\theta_0)^{-1/2} M_t(\theta_0|S)I_{t-1}(\theta_0)^{-1/2}\|_\op}
	\end{equation}
	
	In addition, by Lemma \ref{lem:close-M} and the fact that $\|\hat\theta_{t-1}-\theta_0\|_2\leq\tau$ thanks to the local MLE formulation, we have
	$\frac{1}{4}M_t(\theta_0|S)\preceq \hat M_t(\hat\theta_{t-1}|S)\preceq 4M_t(\theta_0|S)$ and subsequently
	$\frac{1}{4}I_{t-1}(\theta_0)\preceq\hat I_{t-1}(\hat\theta_{t-1})\preceq 4I_{t-1}(\theta_0)$ because $I_{t-1}(\cdot)$ and $\hat I_{t-1}(\cdot)$ are summations of
	$M_{t'}(\cdot)$ and $\hat M_{t'}(\cdot)$ terms.
	Setting $\omega \gtrsim \sqrt{d\log(\rho\nu TK)}$ we proved that $\bar R_t(S)\geq R_t(S)$.
	The second property of Lemma \ref{lem:ucb} can be proved similarly, by invoking the spectral similarities between $I_{t-1}(\cdot)$, $M_{t'}(\cdot)$
	and $\hat I_{t-1}(\cdot)$, $\hat M_{t'}(\cdot)$.
\end{proof}

\subsubsection*{Proof of Lemma \ref{lem:close-M}}
{
	\renewcommand{\thelemma}{\ref{lem:close-M}}
	\begin{lemma}[restated]
		Suppose $\tau \leq 1/\sqrt{8\rho^2\nu^2 K^2}$. Then $\frac{1}{4}M_t(\theta_0|S)\preceq \hat M_t(\tilde\theta_{t-1}|S) \preceq 4M_t(\theta_0|S)$ for all $t$, $S$ and $\theta$.
	\end{lemma}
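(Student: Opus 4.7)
The plan is to use a whitening argument in the spirit of the proof of Lemma~\ref{lem:Itheta-Itheta0}. Write $p_j := p_{\theta_0,t}(j|S)$, $\tilde p_j := p_{\tilde\theta_{t-1},t}(j|S)$, $B_0 := \mathbb{E}_{\theta_0,t}[v_{tj}|S]$, $w_j := v_{tj}-B_0$, $\Delta_j := \tilde p_j - p_j$, and $\delta := \mathbb{E}_{\tilde\theta_{t-1},t}[v_{tj}|S]-B_0$. The first step is to derive the purely algebraic identity
\begin{equation*}
\hat M_t(\tilde\theta_{t-1}|S) \;=\; M_t(\theta_0|S)\;+\;E\;-\;\delta\delta^\top, \qquad E := \sum_{j\in S\cup\{0\}}\Delta_j\, w_j w_j^\top,
\end{equation*}
obtained by expanding $\hat M_t(\tilde\theta_{t-1}|S)=\sum_j \tilde p_j(w_j-\delta)(w_j-\delta)^\top$ and simplifying using the three elementary identities $\sum_j p_j w_j=0$, $\sum_j \tilde p_j w_j=\delta$, and $\sum_j\Delta_j=0$. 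Since $M_t(\theta_0|S)=\hat M_t(\theta_0|S)$ has all the $w_j$'s in its range, $W := M_t(\theta_0|S)^{-1/2}$ is well-defined on that range and the identity whitens to $W\hat M_t(\tilde\theta_{t-1}|S) W = I + WEW - (W\delta)(W\delta)^\top$.

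By Weyl's inequality it will be enough to show that $\|WEW\|_\op$ and $\|W\delta\|_2^2$ are each bounded by a sufficiently small universal constant; this immediately yields the two-sided sandwich $\tfrac14 I \preceq W\hat M_t(\tilde\theta_{t-1}|S) W \preceq 4I$ and hence the lemma. The whitening property $\sum_j p_j (u^\top W w_j)^2 = u^\top W M_t(\theta_0|S) W u = 1$ for every unit vector $u$ makes both norms easy to bound:
\begin{equation*}
\|WEW\|_\op \;\leq\; \max_j \frac{|\Delta_j|}{p_j} \qquad\text{and}\qquad \|W\delta\|_2^2 \;\leq\; \max_j \Big(\frac{|\Delta_j|}{p_j}\Big)^2,
\end{equation*}
where the first follows from $\sum_j \Delta_j (u^\top W w_j)^2 = \sum_j (\Delta_j/p_j)\cdot p_j (u^\top W w_j)^2$, and the second from $\delta=\sum_j\Delta_j w_j$ combined with $\|Ww_j\|_2^2\leq 1/p_j$ and Cauchy--Schwarz in the $p_j$-weighted inner product. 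The whole lemma therefore reduces to bounding the single scalar $\max_j|\Delta_j|/p_j$ by a small constant.

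This last bound is handled by the mean value theorem exactly as in Eq.~(\ref{eq:deltai}): one has $\Delta_j = p_{\theta'}(j)\langle v_{tj}-\mathbb{E}_{\theta'}v_{tj},\,\tilde\theta_{t-1}-\theta_0\rangle$ for some $\theta'$ on the segment from $\theta_0$ to $\tilde\theta_{t-1}$, and combining $\|\tilde\theta_{t-1}-\theta_0\|_2\leq 2\tau$ (from Corollary~\ref{cor:tau} and the local-MLE constraint) with $\|v_{tj}\|_2\leq\nu$ yields $|\Delta_j|\leq 4\nu\tau\cdot p_{\theta'}(j)$. A short direct calculation with the logit formula, using $\|\theta'-\theta_0\|_2\leq 2\tau$, shows $p_{\theta'}(j)/p_j \in [e^{-4\nu\tau},e^{4\nu\tau}]$, which is within a universal constant of $1$ under the hypothesis on $\tau$. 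Finally (A2) implies $p_j\geq 1/((K+1)\rho)$, so $\max_j|\Delta_j|/p_j \lesssim \rho\nu\tau K$, which is a sufficiently small constant under the hypothesis $\tau\leq 1/\sqrt{8\rho^2\nu^2 K^2}$.

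The main obstacle is purely the bookkeeping: tracking the constants carefully so that both perturbation norms collapse under the stated $\tau$ condition to deliver the factor-of-four sandwich. All of the spectral perturbation ideas---the $E$ and $\delta\delta^\top$ decomposition, the whitening identity, and the mean value bound on $\Delta_j$---are direct analogues of steps already appearing in the proof of Lemma~\ref{lem:Itheta-Itheta0}, and no new probabilistic input is needed.
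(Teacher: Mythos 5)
Your proof is correct and arrives at the same sandwich through a cleaner decomposition than the paper's. The paper's proof of Lemma \ref{lem:close-M} interposes the hybrid matrix $\bar M_t(\theta|S)=\mathbb E_{\theta_0,t}[(v_{tj}-\mathbb E_{\theta,t}v_{tj})(v_{tj}-\mathbb E_{\theta,t}v_{tj})^\top]$ (weights under $\theta_0$, centering under $\theta$) and argues in two stages: first $\tfrac12\bar M_t(\theta|S)\preceq\hat M_t(\theta|S)\preceq 2\bar M_t(\theta|S)$, via the perturbation $\bar M_t(\theta|S)-\hat M_t(\theta|S)=\sum_j\delta_j\tilde w_j\tilde w_j^\top$ combined with $\bar M_t(\theta|S)\succeq\frac{1}{\rho K}\sum_j\tilde w_j\tilde w_j^\top$ from (A2); second $\tfrac12 M_t(\theta_0|S)\preceq\bar M_t(\theta|S)\preceq 2M_t(\theta_0|S)$, via the exact rank-one identity $\bar M_t(\theta|S)-M_t(\theta_0|S)=(\mathbb E_{\theta_0,t}v_{tj}-\mathbb E_{\theta,t}v_{tj})(\mathbb E_{\theta_0,t}v_{tj}-\mathbb E_{\theta,t}v_{tj})^\top$ bounded by importing Eq.~\eqref{eq:preceq-1} from the proof of Lemma \ref{lem:Itheta-Itheta0}. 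You collapse both stages into the single exact identity $\hat M_t(\tilde\theta_{t-1}|S)=M_t(\theta_0|S)+E-\delta\delta^\top$ (which I verified: the paper's $+\delta\delta^\top$ from the second stage and its $\theta$-centered perturbation from the first stage recombine, after re-centering at the $\theta_0$-mean, into your $E$ minus $\delta\delta^\top$), whiten once against $M_t(\theta_0|S)$ on its range, and reduce the entire lemma to the single scalar $\eta=\max_j|\Delta_j|/p_j$. What this buys: one whitening instead of two separate spectral comparisons, a unified constant budget ($1-\eta-\eta^2$ and $1+\eta$ rather than a product of two factor-$2$ losses), and the upper bound $\hat M_t(\tilde\theta_{t-1}|S)\preceq(1+\eta)M_t(\theta_0|S)$ comes for free since $-\delta\delta^\top\preceq 0$. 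The substantive inputs are identical to the paper's: the mean-value bound of Eq.~\eqref{eq:deltai} for the probability perturbations, (A2) for $p_j\gtrsim 1/(\rho K)$, and the whitening device already used in Lemma \ref{lem:Itheta-Itheta0}; so this is a reorganization, not a new mechanism, but a genuinely tidier one.

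One caveat, which applies to the paper equally: under the literal hypothesis $\tau\leq 1/\sqrt{8\rho^2\nu^2K^2}$, your chain gives $\eta\leq 4\nu\tau(K+1)\rho\approx\sqrt2\,(1+1/K)$, which is not below the threshold ($\eta+\eta^2\leq 3/4$, i.e.\ $\eta\lesssim 0.45$) needed for the factor-$4$ lower bound; your alternative route through $p_{\theta'}(j)/p_j\in[e^{-4\nu\tau},e^{4\nu\tau}]$ has the same issue for small $\rho K$. The paper's own proof carries identical slack (it needs $\max_j|\delta_j|\cdot\rho K\leq 1/2$ but only obtains $\leq\sqrt 2$ from $|\delta_j|\leq 4\nu\tau$, and its stated bound $\max_j|\delta_j|\leq\sqrt{4\nu^2\tau}$ is itself a typo for $O(\nu\tau)$). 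Both proofs should really be read as requiring $\tau\leq c/(\rho\nu K)$ for a sufficiently small universal constant $c$, so the bookkeeping obstacle you flag is real but does not distinguish your argument from the paper's.
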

}
\begin{proof} 
	Define $\bar M_t(\theta|S) := \mathbb E_{\theta_0,t}[(v_{tj}-\mathbb E_{\theta,t}v_{tj})(v_{tj}-\mathbb E_{\theta,t}v_{tj})^\top]$,
	where only the outermost expectation is replaced by taking with respect to the probability law under $\theta_0$.
	Denote also $\tilde w_j := v_{tj}-\mathbb E_{\theta,t}v_{tj}$. Then $\bar M_t(\theta|S) = \sum_j p_{\theta_0,t}(j)\tilde w_j\tilde w_j^\top$ and
	$\bar M_t(\theta|S)-\hat M_t(\theta|S) = \sum_j \delta_j\tilde w_j\tilde w_j^\top$,
	where $\delta_j = p_{\theta_0,t}(j)-p_{\theta,t}(j)$.
	By Eq.~(\ref{eq:deltai}) and the fact that $\|v_{ti}\|_2\leq\nu$, $\|\theta-\theta_0\|_2\leq\tau$, we have 
	\begin{equation}
	\max_j|\delta_j|\leq \sqrt{4 \nu^2\cdot  \tau}.
	\label{eq:propclosem-intermediate1}
	\end{equation}
	
	On the other hand, by (A2) we know that $\min_jp_{\theta_0,t}(j)\geq 1/\rho K$ and therefore
	\begin{equation}
	\bar M_t(\theta|S) = \sum_j p_{\theta_0,t}\tilde w_j\tilde w_j^\top \succeq \frac{1}{\rho  K}\sum_j \tilde w_j\tilde w_j^\top.
	\label{eq:propclosem-intermediate2}
	\end{equation}
	
	Combining Eqs.~(\ref{eq:propclosem-intermediate1},\ref{eq:propclosem-intermediate2}) and the fact that $\bar M_t(\theta|S)-\hat M_t(\theta|S) = \sum_j \delta_j\tilde w_j\tilde w_j^\top$,
	we have $\bar M_t(\theta|S)-\hat M_t(\theta|S)\preceq \bar M_t(\theta|S)/2$ and
	$\hat M_t(\theta|S)-\bar M_t(\theta|S)\preceq  \bar M_t(\theta|S)/2$,
	provided that $\tau \leq 1/\sqrt{8\rho^2\nu^2 K^2}$.
	This also implies $\frac{1}{2}\bar M_t(\theta|S)\preceq\hat M_t(\theta|S)\preceq 2\bar M_t(\theta|S)$.
	
	We next prove that $\frac{1}{2}M_t(\theta_0|S)\preceq \bar M_t(\theta|S)\preceq 2M_t(\theta_0|S)$ which, together with $\frac{1}{2}\bar M_t(\theta|S)\preceq\hat M_t(\theta|S)\preceq 2\bar M_t(\theta|S)$ established in the previous section, implies Lemma \ref{lem:close-M}.
	Recall the definitions that
	\begin{align*}
	M_t(\theta_0|S) &= \mathbb E_{\theta_0,t}\left[(v_{tj}-\mathbb E_{\theta_0,t}v_{tj})(v_{tj}-\mathbb E_{\theta_0,t}v_{tj})^\top\right];\\
	\bar M_t(\theta|S) &=  \mathbb E_{\theta_0,t}\left[(v_{tj}-\mathbb E_{\theta,t}v_{tj})(v_{tj}-\mathbb E_{\theta,t}v_{tj})^\top\right].
	\end{align*}
	
	Adding and subtracting $\mathbb E_{\theta,t}v_{tj}, \mathbb E_{\theta_0,t}v_{tj}$ terms, we have
	\begin{align*}
	&\bar M_t(\theta|S)-M_t(\theta_0|S)\\
	&= \mathbb E_{\theta_0,t}\left[(v_{tj}-\mathbb E_{\theta_0,t}v_{tj}+\mathbb E_{\theta_0,t}v_{tj}-\mathbb E_{\theta,t}v_{tj})(v_{tj}-\mathbb E_{\theta_0,t}v_{tj}+\mathbb E_{\theta_0,t}v_{tj}-\mathbb E_{\theta,t}v_{tj})^\top\right]\\
	&\;\;\;\; - \mathbb E_{\theta_0,t}\left[(v_{tj}-\mathbb E_{\theta_0,t}v_{tj})(v_{tj}-\mathbb E_{\theta_0,t}v_{tj})^\top\right]\\
	&= \mathbb E_{\theta_0,t}\left[(\mathbb E_{\theta_0,t}v_{tj}-\mathbb E_{\theta,t}v_{tj})(v_{tj}-\mathbb E_{\theta_0,t}v_{tj})^\top\right]
	+ \mathbb E_{\theta_0,t}\left[(v_{tj}-\mathbb E_{\theta_0,t}v_{tj})(\mathbb E_{\theta_0,t}v_{tj}-\mathbb E_{\theta,t}v_{tj})^\top\right]\\
	&\;\;\;\; + (\mathbb E_{\theta_0,t}v_{tj}-\mathbb E_{\theta,t}v_{tj})(\mathbb E_{\theta_0,t}v_{tj}-\mathbb E_{\theta,t}v_{tj})^\top\\
	&= (\mathbb E_{\theta_0,t}v_{tj}-\mathbb E_{\theta,t}v_{tj})(\mathbb E_{\theta_0,t}v_{tj}-\mathbb E_{\theta,t}v_{tj})^\top.
	\end{align*}
	
	By Eq.~(\ref{eq:preceq-1}) in the proof of Lemma \ref{lem:Itheta-Itheta0}, we have that 
	$$(\mathbb E_{\theta_0,t}v_{tj}-\mathbb E_{\theta,t}v_{tj})(\mathbb E_{\theta_0,t}v_{tj}-\mathbb E_{\theta,t}v_{tj})^\top
	\lesssim \frac{1}{2}\mathbb E_{\theta_0,t}[(v_{tj}-\mathbb E_{\theta_0,t}v_{tj})(v_{tj}-\mathbb E_{\theta_0,t}v_{tj})^\top] = \frac{1}{2}M_t(\theta_0|S)$$
	provided that $\tau \leq 1/\sqrt{8\rho^2\nu^2 K^2}$,
	thus implying $\frac{1}{2}M_t(\theta_0|S)\preceq \bar M_t(\theta|S)\preceq 2M_t(\theta_0|S)$.
\end{proof}

\subsection{Proof of Lemma \ref{lem:elliptical}}
{\renewcommand{\thelemma}{\ref{lem:elliptical}}
	\begin{lemma}[restated]
		It holds that
		\begin{equation*}
		\sum_{t=T_0+1}^T\min\{1,\|I_{t-1}^{-1/2}(\theta_0)M_t(\theta_0|S_t)I_{t-1}^{-1/2}(\theta_0)\|_\op^2\} \leq
		4\log\frac{\det I_{T}(\theta_0)}{\det I_{T_0}(\theta_0)} \lesssim d\log(\lambda_0^{-1}\rho\nu).
		\end{equation*}
	\end{lemma}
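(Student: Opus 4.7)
The plan is the standard elliptical-potential argument: reduce the sum to a telescoping log-determinant, then bound the two endpoint determinants spectrally. First I would write $I_t(\theta_0) = I_{t-1}(\theta_0) + M_t(\theta_0|S_t)$, so by the matrix determinant identity
$$\frac{\det I_t(\theta_0)}{\det I_{t-1}(\theta_0)} = \det(I_d + A_t), \qquad A_t := I_{t-1}^{-1/2}(\theta_0)\,M_t(\theta_0|S_t)\,I_{t-1}^{-1/2}(\theta_0),$$
which is well defined because $I_{t-1}(\theta_0)$ is invertible for $t > T_0$ by the pure-exploration analysis (Corollary \ref{cor:tau} combined with assumption (A1)). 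Since $A_t\succeq 0$, if $\lambda_1\geq\cdots\geq\lambda_d\geq 0$ are its eigenvalues then $\|A_t\|_\op=\lambda_1$ and $\log\det(I_d+A_t)=\sum_i\log(1+\lambda_i)\geq\log(1+\lambda_1)$.

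Next I would invoke the elementary scalar inequality $\min\{1,x^2\}\leq\min\{1,x\}\leq 2\log(1+x)$ for all $x\geq 0$ (the first inequality holds because $x^2\leq x$ whenever $x\leq 1$, and both minima equal $1$ otherwise; the second is checked by comparing derivatives on $[0,1]$ and noting $2\log 2>1$ on $[1,\infty)$). Applied with $x=\|A_t\|_\op$, this gives
$$\min\{1,\|A_t\|_\op^2\}\leq 2\log(1+\|A_t\|_\op)\leq 2\log\det(I_d+A_t)=2\bigl[\log\det I_t(\theta_0)-\log\det I_{t-1}(\theta_0)\bigr].$$
Summing over $t=T_0+1,\ldots,T$ telescopes to $2\log(\det I_T(\theta_0)/\det I_{T_0}(\theta_0))$, which is trivially bounded by $4\log(\det I_T(\theta_0)/\det I_{T_0}(\theta_0))$, giving the first claimed inequality with the stated constant.

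It then remains to show $\log(\det I_T(\theta_0)/\det I_{T_0}(\theta_0))\lesssim d\log(\lambda_0^{-1}\rho\nu)$. For the upper bound on $\det I_T(\theta_0)$, observe that $M_t(\theta_0|S_t)\preceq \E_{\theta_0,t}[v_{tj}v_{tj}^\top\mid S_t]$ and $\|v_{tj}\|_2\leq\nu$ by (A1), so $\mathrm{tr}(M_t(\theta_0|S_t))\leq\nu^2$; hence $\mathrm{tr}(I_T(\theta_0))\leq T\nu^2$ and by AM--GM $\det I_T(\theta_0)\leq(T\nu^2/d)^d$. For the lower bound on $\det I_{T_0}(\theta_0)$, recall that during pure exploration each $S_t$ is a singleton $\{\ell_t\}$ with $v_{t\ell_t}$ i.i.d.\ from $\mu$, and in this case $M_t(\theta_0|\{\ell_t\}) = p_{\theta_0,t}(\ell_t)(1-p_{\theta_0,t}(\ell_t))\,v_{t\ell_t}v_{t\ell_t}^\top$; (A2) bounds the Bernoulli variance factor below by $\Omega(1/\rho)$, and the matrix concentration argument already used to prove Corollary \ref{cor:tau} yields $\lambda_{\min}(I_{T_0}(\theta_0))\gtrsim T_0\lambda_0/\rho$, hence $\det I_{T_0}(\theta_0)\gtrsim (T_0\lambda_0/\rho)^d$. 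Combining these two bounds with the prescribed scaling of $T_0$, the log-ratio is $\lesssim d\log(\rho\nu/\lambda_0)$ up to an additive $\log T$ term absorbed into the $\tilde O$ notation of Theorem \ref{thm:mle-ucb}.

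The one mild subtlety is that $A_t$ can have rank up to $d$ (unlike the rank-one case in standard linear bandits), so it is important to bound only the top eigenvalue on the left-hand side while retaining the full log-determinant on the right-hand side; the chain $\|A_t\|_\op^2\mapsto\lambda_1\mapsto\log(1+\lambda_1)\mapsto\log\det(I_d+A_t)$ accomplishes exactly this, and everything else is a routine trace/concentration calculation that reuses machinery already developed earlier in the paper.
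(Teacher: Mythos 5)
Your proposal is correct and follows essentially the same route as the paper's proof: the same reduction to $A_t = I_{t-1}^{-1/2}(\theta_0)M_t(\theta_0|S_t)I_{t-1}^{-1/2}(\theta_0)$, the same scalar inequality converting $\min\{1,\|A_t\|_\op^2\}$ into $\log(1+\|A_t\|_\op)\leq\log\det(I_d+A_t)$ followed by telescoping, and the same endpoint bounds (your trace/AM--GM bound on $\det I_T(\theta_0)$ is a trivially tighter variant of the paper's spectral bound $I_T(\theta_0)\preceq T\nu^2 I$, and your exploration-phase lower bound on $\lambda_{\min}(I_{T_0}(\theta_0))$ reuses the same concentration step from Corollary \ref{cor:tau}). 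Two immaterial quibbles: the singleton Bernoulli variance factor is $p(1-p)\geq 1/(1+\rho)^2$, i.e.\ $\Omega(1/\rho^2)$ rather than your stated $\Omega(1/\rho)$ (the paper itself carries $\rho^3$ here), and, as you correctly flag, both your argument and the paper's actually produce an extra $\log(T/T_0)$ inside the logarithm that the lemma's statement suppresses but the main theorem's $\log(\lambda_0^{-1}\rho\nu TK)$ absorbs.
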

}
\begin{proof}
	Denote $A_t := I_{t-1}^{-1/2}(\theta_0)M_t(\theta_0|S_t)I_{t-1}^{-1/2}(\theta_0)$ as $d$-dimensional positive semi-definite matrices
	with eigenvalues sorted as $\sigma_1(A_t)\geq\cdots\geq\sigma_d(A_t)\geq 0$.
	By simple algebra,
	\begin{align}
	\sum_{t=T_0+1}^T&\min\{1,\|I_{t-1}^{-1/2}(\theta_0)M_t(\theta_0|S_t)I_{t-1}^{-1/2}(\theta_0)\|_\op^2\}
	= \sum_{t=T_0+1}^T\min\{1,\sigma_1(A_t)^2\}\nonumber\\
	&\leq \sum_{t=T_0+1}^T2\log(1+\sigma_1(A_t)^2) \leq \sum_{t=T_0+1}^T4\log(1+\sigma_1(A_t)).
	\label{eq:epl-intermediate-1}
	\end{align}
	
	On the other hand, note that $I_t(\theta_0) = I_{t-1}(\theta_0) + M_t(\theta_0|S_t)
	= I_{t-1}(\theta_0)^{1/2}[I_{d\times d} + A_t] I_{t-1}(\theta_0)^{1/2}$.
	Hence,
	\begin{equation}
	\log \det I_t(\theta_0) =\log \det I_{t-1}(\theta_0) + \sum_{j=1}^d\log(1+\sigma_j(A_t)).
	\label{eq:epl-intermediate-2}
	\end{equation}
	
	Comparing Eqs.~(\ref{eq:epl-intermediate-1}) and (\ref{eq:epl-intermediate-2}), we have
	\begin{equation}
	\sum_{t=T_0+1}^T\min\{1,\|I_{t-1}^{-1/2}(\theta_0)M_t(\theta_0|S_t)I_{t-1}^{-1/2}(\theta_0)\|_\op^2\} \leq 4\log\frac{\det I_T(\theta_0)}{\det I_{T_0}(\theta_0)},
	\end{equation}
	which proves the first inequality in Lemma \ref{lem:elliptical}.
	
	We next prove the second inequality in Lemma \ref{lem:elliptical}.
	Because assortments have size 1 throughout the pure exploration phase ($t\leq T_0$), we have
	\begin{align}
	I_{T_0}(\theta_0) = \sum_{t=1}^{T_0} p_{\theta_0,t}(j_t)(1-p_{\theta_0,t}(j_t))^2 v_{t,j_t}v_{t,j_t}^\top
	\geq\frac{1}{(1+\rho)^3}\cdot \sum_{t=1}^{T_0} v_{t,j_t}v_{t,j_t}^\top,
	\end{align}
	where the last inequality holds thanks to assumption (A2), which implies $p_{\theta_0,t}(j_t) \in [1/(1+\rho), \rho/(1+\rho)]$.
	In addition, by the proof of Corollary \ref{cor:tau}, with high probability $\lambda_{\min}(\sum_{t=1}^{T_0}v_{t,j_t}v_{t,j_t}^\top)\geq 0.5 T_0\lambda_0$,
	where $\lambda_0>0$ is a parameter specified in assumption (A1).
	Therefore,
	\begin{equation}
	\det I_{T_0}(\theta_0) \gtrsim [T_0\lambda_0/\rho^3]^d.
	\label{eq:epl-intermediate-3}
	\end{equation}
	
	On the other hand, because $\max_{t,j}\|v_{tj}\|_2 \leq \nu$
	we have $I_T(\theta_0) \lesssim T\cdot \nu^2$ and subsequently
	\begin{equation}
	\det I_T(\theta_0) \lesssim [\nu^2  T]^d.
	\label{eq:epl-intermediate-4}
	\end{equation}
	
	Combining Eqs.~(\ref{eq:epl-intermediate-3}) and (\ref{eq:epl-intermediate-4}) we proved the second inequality in Lemma \ref{lem:elliptical}.
\end{proof}

\section{Proofs of technical lemmas for Theorem \ref{thm:lower} (lower bound)}
\subsection{Proof of Lemma \ref{lem:R-lb}}
{
	\renewcommand{\thelemma}{\ref{lem:R-lb}}
	\begin{lemma}[restated]
		Suppose $\epsilon\in(0,1/d\sqrt{d})$ and define $\delta := d/4- |\tilde U_t\cap W|$. Then
		$$
		R(S_{\theta_W}^*)-R(\tilde S_t) \geq\frac{\delta\epsilon}{4K\sqrt{d}} .
		$$
	\end{lemma}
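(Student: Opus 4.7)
The plan is to exploit the highly symmetric structure of both assortments. By construction, $S^*_{\theta_W}$ consists of $K$ items all sharing feature $v_W$, and $\tilde S_t$ consists of $K$ items all sharing feature $v_{\tilde U_t}$. Combined with $r_{tj}\equiv 1$, the revenue of any such ``constant-feature'' size-$K$ assortment collapses to the one-dimensional form $R(S) = Ku/(1+Ku)$, where $u = \exp\{v^\top\theta_W\}$ is the common per-item utility.

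First I would compute the two utilities in closed form. Writing $a := |\tilde U_t \cap W| = d/4 - \delta$, the sparse construction $[\theta_W]_i = \epsilon\,\mathbf{1}\{i\in W\}$ and $[v_U]_i = (1/\sqrt d)\,\mathbf{1}\{i\in U\}$ gives $v_W^\top\theta_W = (d/4)\epsilon/\sqrt d$ and $v_{\tilde U_t}^\top\theta_W = a\epsilon/\sqrt d$. Setting $u_\star := \exp\{(d/4)\epsilon/\sqrt d\}$ and $\tilde u := \exp\{a\epsilon/\sqrt d\}$, the ratio $u_\star/\tilde u = \exp\{\delta\epsilon/\sqrt d\}$ captures the gap, and the hypothesis $\epsilon \le 1/(d\sqrt d)$ forces $(d/4)\epsilon/\sqrt d \le 1/(4d)$, so both utilities lie in $[1, e^{1/(4d)}]$, i.e.\ very close to $1$.

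Second, I would expand the revenue difference and bound numerator and denominator separately:
\begin{equation*}
R(S^*_{\theta_W}) - R(\tilde S_t) \;=\; \frac{Ku_\star}{1+Ku_\star} - \frac{K\tilde u}{1+K\tilde u} \;=\; \frac{K(u_\star - \tilde u)}{(1+Ku_\star)(1+K\tilde u)}.
\end{equation*}
For the numerator, $u_\star - \tilde u = \tilde u(e^{\delta\epsilon/\sqrt d} - 1) \ge \delta\epsilon/\sqrt d$, using $\tilde u \ge 1$ and $e^x \ge 1+x$. For the denominator, $1 \le \tilde u \le u_\star \le e^{1/(4d)}$ puts us in a regime where $(1+Ku_\star)(1+K\tilde u)$ is only a small perturbation of $(1+K)^2$; since $(1+K)^2 \le 4K^2$ for every $K\ge 1$, the small $O(1/d)$ correction can be absorbed to yield $(1+Ku_\star)(1+K\tilde u) \le 4K^2$.

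Combining the two estimates gives
\begin{equation*}
R(S^*_{\theta_W}) - R(\tilde S_t) \;\ge\; \frac{K\cdot \delta\epsilon/\sqrt d}{4K^2} \;=\; \frac{\delta\epsilon}{4K\sqrt d},
\end{equation*}
which is the desired lower bound. The main technical obstacle is the sharp denominator estimate: the leading constant $4$ is immediate when $u_\star = \tilde u = 1$, and the slight overshoot $u_\star \le e^{1/(4d)}$ coming from the hypothesis $\epsilon \le 1/(d\sqrt d)$ has to be tracked carefully (for instance by Taylor-expanding $u \mapsto Ku/(1+Ku)$ around $u=1$) to confirm that the $O(1/d)$ perturbation does not break the constant $4$ uniformly in $K\ge 1$.
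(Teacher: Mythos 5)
Your decomposition is the same one the paper uses: both proofs observe that $S^*_{\theta_W}$ and $\tilde S_t$ are constant-feature, size-$K$ assortments with $r_{tj}\equiv 1$, write
\[
R(S^*_{\theta_W})-R(\tilde S_t)=\frac{K(u_\star-\tilde u)}{(1+Ku_\star)(1+K\tilde u)},
\]
lower bound the numerator using $e^x\geq 1+x$, and upper bound the denominator by an $O(K^2)$ quantity via boundedness of the utilities. In one respect your version is cleaner: the factorization $u_\star-\tilde u=\tilde u\,(e^{\delta\epsilon/\sqrt d}-1)\geq \delta\epsilon/\sqrt d$ dispenses entirely with the second-order Taylor remainder $(\hat v^\top\theta_W)^2/2$ that the paper must absorb, and that absorption is precisely where the paper consumes the hypothesis $\epsilon<1/d\sqrt d$; you need that hypothesis only for the denominator.

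The gap is the denominator estimate $(1+Ku_\star)(1+K\tilde u)\leq 4K^2$ ``uniformly in $K\geq 1$,'' which you correctly flag as the delicate step but then defer. It fails at $K=1$: since $u_\star=\exp\{\epsilon\sqrt d/4\}>1$ and $\tilde u\geq 1$, one has $(1+u_\star)(1+\tilde u)>4$ strictly. Moreover, the failure is not an artifact of loose bounding --- the constant $4$ is genuinely unattainable at $K=1$: taking $\tilde U_t\cap W=\emptyset$, so that $\delta=d/4$ and $\tilde u=1$, gives
\[
R(S^*_{\theta_W})-R(\tilde S_t)=\frac{u_\star-1}{2(1+u_\star)}=\frac{1}{2}\tanh\Bigl(\frac{\epsilon\sqrt d}{8}\Bigr)<\frac{\epsilon\sqrt d}{16}=\frac{\delta\epsilon}{4K\sqrt d},
\]
so no amount of careful Taylor tracking around $u=1$ can confirm your claim. (For $K\geq 2$ your estimate is fine: $1+Ke^{1/(4d)}\leq 2K$ holds whenever $e^{1/(4d)}\leq 2-1/K$, which is true for all $d\geq 1$, so there you actually achieve the lemma's stated constant.) The paper's own proof quietly concedes this point: it bounds both utilities by $e$ and its displayed chain ends at $\delta\epsilon/(4eK\sqrt d)$ rather than the stated $\delta\epsilon/(4K\sqrt d)$ --- a constant slip that is harmless because the constant is ultimately absorbed into $C_2$ in Theorem \ref{thm:lower}. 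The repair for your write-up is the same: replace $4K^2$ by, e.g., $(1+Ke^{1/4})^2\leq 9K^2$, which is valid for every $K\geq 1$ and $d\geq 1$, and conclude with $\delta\epsilon/(9K\sqrt d)$; everything downstream survives unchanged.
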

}
\begin{proof}
	Let $v=v_W$ and $\hat v=v_{\tilde U_t}$ be the corresponding feature vectors. Then
	\begin{align*}
	R(S_{\theta_W}^*) - R(\tilde S_t)
	&= \frac{K\exp\{v^\top\theta_W\}}{1+K\exp\{v^\top\theta_W\}} - \frac{K\exp\{\hat v^\top\theta_W\}}{1+K\exp\{\hat v^\top\theta_W\}} \\
	&= \frac{K[\exp\{v^\top\theta_W\} - \exp\{\hat v^\top\theta_W\}]}{(1+K\exp\{v^\top\theta_W\})(1+K\exp\{\hat v^\top\theta_W\})}\\
	&\geq \frac{\exp\{v^\top\theta_W\} - \exp\{\hat v^\top\theta_W\}}{2Ke}.
	\end{align*}
	
	Here the last inequality holds because $\max(\exp\{v^\top\theta_W\},\exp\{\hat v^\top\theta_W\})\leq e$.
	In addition, by Taylor expansion we know that $1+x\leq e^x\leq 1+x+x^2/2$ for all $x\in[0,1]$.
	Subsequently,
	\begin{align*}
	R(S_{\theta_W}^*) - R(\tilde S_t)
	&\geq \frac{(v-\hat v)^\top \theta_W - (\hat v^\top\theta_W)^2/2}{2Ke}
	\geq \frac{\delta\epsilon/\sqrt{d} - (\sqrt{d}\epsilon)^2/2}{2Ke}.
	\end{align*}
	
	Finally, noting that $d\epsilon^2/2\leq \delta\epsilon/2\sqrt{d}$ provided that $\epsilon\in(0,1/d\sqrt{d})$, we finish the proof of Lemma \ref{lem:R-lb}.
\end{proof}

\subsection{Proof of Lemma \ref{lem:kl}}

{
	\renewcommand{\thelemma}{\ref{lem:kl}}
	\begin{lemma}[restated]
		For any $W\in\mathcal W_{d/4-1}$ and $i\in[d]$,
		$\kl(P_W\|P_{W\cup\{i\}}) \leq C_{\kl}\cdot \mathbb E_W[N_i] \cdot\epsilon^2/{d}$ for some universal constant $C_{\kl}>0$.
	\end{lemma}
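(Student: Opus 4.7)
The plan is to bound $\kl(P_W\|P_{W\cup\{i\}})$ by (a) applying the chain rule for KL to split it into a sum of per-step KL divergences, (b) computing each per-step KL exactly and Taylor-expanding in the small perturbation parameter $\alpha := \epsilon/\sqrt d$, and (c) controlling the resulting second-order variance factor using the near-uniform shape of the MNL choice distribution under the adversarial construction.

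First, because both $P_W$ and $P_{W\cup\{i\}}$ are generated by the same policy $\pi$ acting on the same deterministic feature stream, and differ only through the conditional purchase laws $p_{\theta,t}(\cdot\mid S_t)$, the standard chain rule for KL on stochastic processes gives
\begin{equation*}
\kl(P_W\|P_{W\cup\{i\}}) \;=\; \mathbb{E}_W\!\left[\sum_{t=1}^T \kl\bigl(p_{\theta_W,t}(\cdot\mid S_t) \,\big\|\, p_{\theta_{W\cup\{i\}},t}(\cdot\mid S_t)\bigr)\right].
\end{equation*}
Set $A_t := \{j \in S_t : i \in U_j\}$. Since $\theta_{W\cup\{i\}} - \theta_W = \epsilon e_i$ and $[v_U]_i = \mathbf{1}\{i\in U\}/\sqrt d$, the relevant difference $v_j^\top(\theta_{W\cup\{i\}}-\theta_W) = \alpha\,\mathbf{1}\{i \in U_j\}$ is nonzero precisely on $A_t$. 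A direct computation with the MNL formula then yields the closed form
\begin{equation*}
\kl\bigl(p_{\theta_W,t}(\cdot\mid S_t) \,\big\|\, p_{\theta_{W\cup\{i\}},t}(\cdot\mid S_t)\bigr) \;=\; -\alpha\, p_t + \log\!\bigl(1+(e^\alpha-1)p_t\bigr),
\end{equation*}
where $p_t := \sum_{j \in A_t} p_{\theta_W,t}(j\mid S_t)$. Because $\epsilon \le 1/(d\sqrt d)$ forces $\alpha \le 1/d^2$ to be extremely small, a second-order Taylor expansion of the right-hand side in $\alpha$ delivers the clean bound $C_1 (\epsilon^2/d)\, p_t(1-p_t)$.

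The remaining step is to show $p_t(1-p_t) \le C_2 |A_t|/K$, which upon summation will produce the quantity $N_i = \sum_t |A_t|/K$. The key observation is that under $\theta_W$ every utility $v_j^\top \theta_W = \epsilon |U_j \cap W|/\sqrt d$ is at most $\epsilon\sqrt d/4 \le 1/(4d)$, so $\exp\{v_j^\top\theta_W\} \in [1, 1+O(1/d)]$ and each choice probability is essentially uniform, $p_{\theta_W,t}(j\mid S_t) \asymp 1/(1+|S_t|)$ for every $j \in S_t\cup\{0\}$. Using the trivial bound $p_t(1-p_t) \le p_t$ together with the cardinality constraint $|S_t|\le K$ then yields the desired $p_t(1-p_t) \le C_2|A_t|/K$, and combining the three pieces,
\begin{equation*}
\kl(P_W\|P_{W\cup\{i\}}) \;\le\; C_{\kl}\frac{\epsilon^2}{d}\,\mathbb{E}_W\!\left[\sum_{t=1}^T \frac{|A_t|}{K}\right] \;=\; C_{\kl}\frac{\epsilon^2}{d}\,\mathbb{E}_W[N_i].
\end{equation*}

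I expect the hardest step to be the final bound on $p_t(1-p_t)$: what the algebra naturally produces is $p_t \le C|A_t|/(1+|S_t|)$, and replacing $1+|S_t|$ by $K$ requires genuinely exploiting both the adversarial parametrization (which makes every $v_j^\top\theta_W$ lie within $1/(4d)$ of zero, so no single choice probability can dominate) and the fact that $|S_t|\le K$. This near-uniformity is precisely what lets individual per-step KL divergences scale with the proportion $|A_t|/K$ of "informative" items rather than with cruder quantities like $1/(1+|S_t|)$ or $1/4$, and so is the main engine behind the factor of $K$ that appears in the final lower bound of Theorem~\ref{thm:lower}.
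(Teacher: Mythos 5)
Your steps (a) and (b) are correct, and in fact sharper than what the paper does: the chain-rule decomposition is exactly how the paper implicitly proceeds, your per-step formula $\kl\bigl(p_{\theta_W,t}(\cdot\mid S_t)\,\big\|\,p_{\theta_{W\cup\{i\}},t}(\cdot\mid S_t)\bigr) = -\alpha p_t + \log\bigl(1+(e^\alpha-1)p_t\bigr)$ with $\alpha=\epsilon/\sqrt d$ is an exact identity (the paper instead passes to the cruder $\chi^2$-type bound $\sum_j |p_j-q_j|^2/q_j$ and estimates $|p_j-q_j|$ case by case for $j=0$, $i\notin U$, $i\in U$), and the Taylor bound $C_1(\epsilon^2/d)\,p_t(1-p_t)$ is valid since the second derivative of your closed form at shift $s$ is $h(s)(1-h(s))$ with $h(s)=p_te^s/(1+(e^s-1)p_t)$, which is $O(p_t(1-p_t))$ for $0\le s\le\alpha\le 1$.

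The genuine gap is step (c): the inequality $p_t(1-p_t)\leq C_2|A_t|/K$ is false in general, and the justification you sketch cannot repair it. Near-uniformity of the choice probabilities gives $p_t\asymp |A_t|/(1+|S_t|)$, so the constraint $|S_t|\leq K$ makes the denominator \emph{smaller}, pushing the bound in the wrong direction; near-uniformity yields scaling $|A_t|/(1+|S_t|)$, never $|A_t|/K$. Concretely, if the policy offers a singleton $S_t=\{j\}$ with $i\in U_j$, then $p_t\approx 1/2$, so the exact per-step KL is $\Theta(\alpha^2)=\Theta(\epsilon^2/d)$, while the claimed bound is $C_2\cdot(1/K)\cdot\epsilon^2/d$ — off by a factor of order $K$; a policy that always plays such singletons violates the lemma's statement itself, since then $\mathbb E_W[N_i]=T/K$ but $\kl(P_W\|P_{W\cup\{i\}})=\Theta(T\epsilon^2/d)$. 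What your argument actually needs is $1+|S_t|\gtrsim K$, i.e., (nearly) full assortments. It is worth knowing that the paper's own proof silently makes the same assumption: its bound on $|p_0-q_0|$ divides by $(1+K/e)^2$, i.e., it uses $1+\sum_{j\in S_t}\exp\{v_j^\top\theta\}\geq 1+K/e$, which holds only when $|S_t|\gtrsim K$. So the lemma as proved really pertains to policies offering $\Theta(K)$-sized assortments; intuitively this restriction is harmless because in this construction (all revenues equal to $1$) offering fewer than $K$ items sacrifices $\Omega(1)$ revenue per step while only gaining information, but that reduction is carried out neither in your proposal nor in the paper. The honest fix is an explicit restriction or reduction to $|S_t|=K$, not the ``no single probability dominates'' heuristic.
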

}
\begin{proof}
	Fix a time $t$ with policy's assortment choice $S_t$, and define $n_i(S_t) := \sum_{v_U\in S_t}\vct 1\{i\in U\}/K$.
	Let $\{p_j\}_{j\in S_t\cup\{0\}}$ and $\{q_j\}_{j\in S_t\cup \{0\}}$ be the probabilities of purchasing item $j$ under parameterization $\theta_W$ and $\theta_{W\cup\{i\}}$, respectively.
	Then
	\begin{equation}
	\kl(P_W(\cdot|S_t)\|P_{W\cup\{i\}}(\cdot|S_t))
	= \sum_{j\in S_t\cup\{0\}} p_j\log\frac{q_j}{p_j}
	\leq \sum_{j}p_j\frac{p_j-q_j}{q_j}
	\leq \sum_j \frac{|p_j-q_j|^2}{q_j},
	\label{eq:kl-intermediate-1}
	\end{equation}
	where the only inequality holds because $\log(1+x)\leq x$ for all $x>-1$.
	Because $q_j\geq e^{-1}/(1+Ke) \geq 1/(2Ke^2)$ for all $j\in S_t\cup\{0\}$, Eq.~(\ref{eq:kl-intermediate-1}) is reduced to
	\begin{equation}
	\kl(P_W(\cdot|S_t)\|P_{W\cup\{i\}}(\cdot|S_t)) \leq 2e^2 K\cdot \sum_{j\in S_t\cup\{0\}} |p_j-q_j|^2.
	\label{eq:kl-intermediate-2}
	\end{equation}
	
	We next upper bound $|p_j-q_j|$ separately.
	First consider $j=0$. We have
	\begin{align*}
	|p_j-q_j|
	&= \left|\frac{1}{1+\sum_{j\in S_t}\exp\{v_j^\top\theta_W\}} - \frac{1}{1+\sum_{j\in S_t}\exp\{v_j^\top\theta_{W\cup\{i\}}\}}\right|\\
	&\leq \frac{1}{(1+K/e)^2}\cdot 2\sum_{j\in S_t}\big|v_j^\top(\theta_W-\theta_{W\cup\{i\}})\big|\\
	&\leq \frac{2Kn_i(S_t)\epsilon/\sqrt{d}}{(1+K/e)^2}\leq \frac{8e^2n_i(S_t)\epsilon}{K\sqrt{d}}.
	\end{align*}
	
	Here the first inequality holds because $e^x\leq 1+2x$ for all $x\in[0,1]$.
	
	For $j>0$ corresponding to $v_j=v_U$ where $i\notin U$, we have
	\begin{align*}
	|p_j-q_j| &= \left|\frac{\exp\{v_U^\top\theta_W\}}{1+\sum_{j\in S_t}\exp\{v_j^\top\theta_W\}} - \frac{\exp\{v_U^\top\theta_{W\cup\{i\}}\}}{1+\sum_{j\in S_t}\exp\{v_j^\top\theta_{W\cup\{i\}}\}}\right|\\
	&\leq  \left|\frac{1}{1+\sum_{j\in S_t}\exp\{v_j^\top\theta_W\}} - \frac{1}{1+\sum_{j\in S_t}\exp\{v_j^\top\theta_{W\cup\{i\}}\}}\right|\\
	&\leq \frac{8e^2n_i(S_t)\epsilon}{K\sqrt{d}}.
	\end{align*}
	
	Here the first inequality holds because $\exp\{v_U^\top\theta_W\} = \exp\{v_U^\top\theta_{W\cup\{i\}}\} \leq 1$, since $i\notin U$.
	
	For $j>0$ corresponding to $v_j=v_U$ and $i\in U$, we have
	\begin{align*}
	|p_j-q_j| &= \left|\frac{\exp\{v_U^\top\theta_W\}}{1+\sum_{j\in S_t}\exp\{v_j^\top\theta_W\}} - \frac{\exp\{v_U^\top\theta_{W\cup\{i\}}\}}{1+\sum_{j\in S_t}\exp\{v_j^\top\theta_{W\cup\{i\}}\}}\right|\\
	&\leq \exp\{v_u^\top\theta_{W\cup\{i\}}\}\cdot \left|\frac{1}{1+\sum_{j\in S_t}\exp\{v_j^\top\theta_W\}} - \frac{1}{1+\sum_{j\in S_t}\exp\{v_j^\top\theta_{W\cup\{i\}}\}}\right|\\
	&\;\; + \big|\exp\{v_u^\top\theta_W\} - \exp\{v_u^\top\theta_{W\cup\{i\}}\}\big|\cdot \left|\frac{1}{1+\sum_{j\in S_t}\exp\{v_j^\top\theta_W\}}\right|\\
	&\leq  \frac{8e^2n_i(S_t)\epsilon}{K\sqrt{d}} + \frac{\epsilon}{\sqrt{d}}\cdot \frac{1}{1+K/e}.
	\leq \frac{8e^2n_i(S_t)\epsilon}{K\sqrt{d}} + \frac{2e\epsilon}{K\sqrt{d}}.
	\end{align*}
	
	
	Combining all upper bounds on $|p_j-q_j|$ and Eq.~(\ref{eq:kl-intermediate-2}), we have
	\begin{align*}
	\kl(P_W(\cdot|S_t)\|P_{W\cup\{i\}}(\cdot|S_t))
	&\leq 2e^2K\cdot \left[\frac{128e^4n_i(S_t)^2\epsilon^2}{K^2d}(1+K)+ Kn_i(S_t)\cdot \frac{8e^4\epsilon^2}{K^2d} \right]\nonumber\\
	&\lesssim n_i(S_t)\epsilon^2/{d}.
	\end{align*}
	
	Here the last inequality holds because $n_i(S_t) \leq 1$.
	Note also that $N_i = \sum_{t=1}^Tn_i(S_t)$ by definition, and subsequently summing over all $t=1$ to $T$ we have
	$$
	\kl(P_W\|P_{W\cup\{i\}}) \lesssim \mathbb E_W[N_i]\cdot \epsilon^2/{d},
	$$
	which is to be demonstrated.
\end{proof}

\section{Proofs of approximation algorithms}

\subsection{Proof of Lemma \ref{lem:approx}}
{\renewcommand{\thelemma}{\ref{lem:approx}}
	\begin{lemma}[restated]
		Suppose an $(\alpha,\varepsilon,\delta)$-approximation algorithm is used instead of exact optimization in the MLE-UCB policy at each time period $t$.
		Then its regret can be upper bounded by
		\begin{equation*}
		\alpha\cdot \mathrm{Regret}^* + \varepsilon T + \delta T^2 + O(1),
		\end{equation*}
		where $\mathrm{Regret}^*$ is the regret upper bound shown by Theorem \ref{thm:mle-ucb} for Algorithm \ref{alg:mle-ucb} with exact optimization in Step \ref{alg:step6}.
	\end{lemma}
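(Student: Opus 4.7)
The plan is to mimic the regret decomposition in Section \ref{sec:ellipitical} while carefully tracking how the approximation parameters $\alpha,\varepsilon,\delta$ propagate through the bound. Write $\hat S_t$ for the assortment returned by the $(\alpha,\varepsilon,\delta)$-approximation oracle at time $t$, and let $S_t^{\mathrm{opt}} \in \arg\max_{S}\bar R_t(S)$ denote the exact maximizer that the original Algorithm \ref{alg:mle-ucb} would have picked. Define the widened bound $\bar R_t^{\alpha}(S) := \mathrm{ESTR}(S) + \min\{1,\alpha\omega\cdot\mathrm{CI}(S)\}$, so that Definition \ref{defn:approx} exactly says $\bar R_t^{\alpha}(\hat S_t) + \varepsilon \geq \bar R_t(S_t^{\mathrm{opt}})$, an event which occurs with probability at least $1-\delta$ at each round.

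Conditioning on the validity of the upper confidence bounds (Lemma \ref{lem:ucb}) and the success of the approximation oracle at round $t$, I would chain
$$R_t(S_t^*) - R_t(\hat S_t) \;\leq\; \bar R_t(S_t^*) - R_t(\hat S_t) \;\leq\; \bar R_t(S_t^{\mathrm{opt}}) - R_t(\hat S_t) \;\leq\; \bar R_t^{\alpha}(\hat S_t) - R_t(\hat S_t) + \varepsilon.$$
The key step is to rewrite
$$\bar R_t^{\alpha}(\hat S_t) - R_t(\hat S_t) = \bigl[\bar R_t(\hat S_t) - R_t(\hat S_t)\bigr] + \bigl[\min\{1,\alpha\omega\mathrm{CI}(\hat S_t)\} - \min\{1,\omega\mathrm{CI}(\hat S_t)\}\bigr],$$
and bound the bracketed difference via the elementary inequality $\min\{1,\alpha y\} - \min\{1,y\} \leq (\alpha-1)\min\{1,y\}$, valid for $\alpha\geq 1$ and $y\geq 0$ (verified by splitting on whether $y\leq 1/\alpha$, $y\in(1/\alpha,1]$, or $y>1$). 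Combining this with Lemma \ref{lem:ucb} and Lemma \ref{lem:close-M}, which show that both $\bar R_t(\hat S_t) - R_t(\hat S_t)$ and $\min\{1,\omega\mathrm{CI}(\hat S_t)\}$ are, up to universal constants, bounded by $D_t := \min\{1, \omega\sqrt{\|I_{t-1}^{-1/2}(\theta_0)M_t(\theta_0|\hat S_t)I_{t-1}^{-1/2}(\theta_0)\|_{\op}}\}$, yields the per-round estimate $R_t(S_t^*) - R_t(\hat S_t) \lesssim \alpha D_t + \varepsilon$.

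Summing this bound from $t=T_0+1$ to $T$, applying Cauchy--Schwarz, and invoking the elliptical potential Lemma \ref{lem:elliptical} exactly as in Eq.~\eqref{eq:ucb-regret-decomp} produces the dominant term $\alpha\cdot\mathrm{Regret}^*$. The additive $\varepsilon$ per round accumulates to $\varepsilon T$; the pure-exploration contribution $T_0$ and the $O(T^{-1})$ failure probabilities from Lemmas \ref{lem:mle} and \ref{lem:ucb} are absorbed into $\mathrm{Regret}^*$ and contribute at most $O(1)$ in expectation. Finally, union-bounding the $\delta$-failure event of the approximation oracle over all $T$ rounds costs probability $\delta T$; on this failure event the regret is trivially bounded by $T$ since $r_{tj}\in[0,1]$, contributing $\delta T\cdot T = \delta T^2$ to the expected regret.

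The main technical subtlety, and the reason a naive argument would fail, lies in the asymmetric role of $\alpha$ in Definition \ref{defn:approx}: the factor $\alpha$ multiplies only the confidence-interval piece $\omega\cdot\mathrm{CI}$, not the whole upper bound $\bar R_t$. A careless attempt to use $\bar R_t^{\alpha}(\hat S_t) \leq \alpha\bar R_t(\hat S_t)$ would incur a spurious $(\alpha-1)R_t(\hat S_t) = \Omega((\alpha-1)T)$ loss, which would destroy the claim. Routing the estimate through the $\min\{1,\alpha y\}\leq \alpha\min\{1,y\}$ inequality keeps the extra cost proportional to the confidence-interval width and therefore summable against the elliptical potential term, giving exactly the multiplicative $\alpha$ in front of $\mathrm{Regret}^*$ without any additive linear-in-$T$ price.
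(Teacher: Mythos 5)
Your proposal is correct and takes essentially the same route as the paper's proof: the same union bound over per-round oracle failures yielding the $\delta T^2$ term, the same chain through the exact UCB maximizer and the approximation guarantee leaving a remainder of $\varepsilon$ plus an $\alpha$-inflated confidence width, and the same elliptical-potential summation giving $\alpha\cdot\mathrm{Regret}^*+\varepsilon T$; your inequality $\min\{1,\alpha y\}-\min\{1,y\}\leq(\alpha-1)\min\{1,y\}$ is just a rearrangement of the paper's implicit use of $\min\{1,\alpha y\}\leq\alpha\min\{1,y\}$. If anything, you are somewhat more careful than the paper in explicitly passing from the empirical widths $\mathrm{CI}(\hat S_t)$ to the population quantity $D_t$ via Lemma \ref{lem:ucb} and Lemma \ref{lem:close-M}, where the paper's write-up works loosely in terms of $\mathrm{ESTR}$ and appeals to ``the same analysis'' of Section \ref{sec:ellipitical}.
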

}
\begin{proof}
	By union bound, we know the approximation guarantee in Eq.~(\ref{eq:approx}) \emph{for all $t$} with probability at least $1-\delta T$.
	In the event of failure, the accumulated regret is upper bounded by $T$ almost surely, because the regret incurred by each time period $t$ is at most $1$.
	This gives rises to the $\delta T^2$ term in Lemma \ref{lem:approx}, and in the rest of the proof we shall assume Eq.~(\ref{eq:approx}) holds for all $t$.
	
	Let $S_t^*$ be the solution to the exact optimization problem in Step \ref{alg:step6} of Algorithm \ref{alg:mle-ucb}, $S_t^\#$ be the assortment with the optimal revenue the same step,
	and $\hat S_t$ be the solution by an $(\alpha,\varepsilon,\delta)$-approximation algorithm.
	
	For each $t > T_0$, we bound the expected regret incurred at time $t$ by
	\begin{align*}
	&\mathrm{ESTR}(S^\#_t) - \mathrm{ESTR}(S_t)\\
	=& \left( \mathrm{ESTR}(S^\#_t) + \min\{1 + \omega \cdot \mathrm{CI}(S^\#_t)\} \right) -  \left( \mathrm{ESTR}(S^*_t) + \min\{1 + \omega \cdot \mathrm{CI}(S^*_t)\} \right) \\
	& \qquad + \left( \mathrm{ESTR}(S^*_t) + \min\{1 + \omega \cdot \mathrm{CI}(S^*_t)\}\right) - \left(\mathrm{ESTR}(S_t) + \min\{1 + \alpha \omega \cdot \mathrm{CI}(S_t)\} - \varepsilon\right) \\
	& \qquad + \left(\varepsilon + \min\{1 + \alpha \omega \cdot \mathrm{CI}(S_t)\} - \min\{1 + \omega \cdot \mathrm{CI}(S^\#_t)\} \right)\\
	\leq  & \varepsilon + \min\{1 + \alpha \omega \cdot \mathrm{CI}(S_t)\} - \min\{1 + \omega \cdot \mathrm{CI}(S^\#_t)\}  \leq   \varepsilon + \min\{1 + \alpha \omega \cdot \mathrm{CI}(S_t)\}  .
	\end{align*}
	
	Therefore, the total expected regret is bounded by
	\begin{align*}
	T_0 + \sum_{t = T_0 + 1}^{T}  \left(\varepsilon + \min\{1 + \alpha \omega \cdot \mathrm{CI}(S_t)\} \right)
	\leq  \varepsilon T + T_0 + \alpha \sum_{t = T_0 + 1}^{T} \min\{1 +  \omega \cdot \mathrm{CI}(S_t)\},
	\end{align*}
	which, by the same analysis in Section~\ref{sec:ellipitical}, can be bounded by $\alpha \cdot \mathrm{Regret}^* +  \varepsilon T$.
	%
\end{proof}

\subsection{Proof of Lemma \ref{lem:approx-1d-estr-ci}}

{
	\renewcommand{\thelemma}{\ref{lem:approx-1d-estr-ci}}
	\begin{lemma}[restated]
		For any $S\subseteq[N]$, $|S|\leq K$,
		suppose $U = \max_{j\in S}\{1, \hat{u}_{tj}\}$ and $\Delta = \epsilon_0 U / K$ for some $\epsilon_0>0$.
		Suppose also $|x_{tj}|\leq\nu$ for all $t,j$. Then
		\begin{equation}
		\big|\mathrm{ESTR}(S)-\hat{\mathrm{ESTR}}(S)\big| \leq 6\epsilon_0\;\;\;\;\;\text{and}\;\;\;\;\;
		\big|\mathrm{CI}(S)-\hat{\mathrm{CI}}(S)\big| \leq  \sqrt{24 \epsilon_0} (1 +\nu),
		\end{equation}
	\end{lemma}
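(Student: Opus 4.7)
The plan is to exploit the fact that every discretized sum is close to its continuous counterpart. Because each $i\in S$ contributes a rounding error of at most $\Delta/2$ in each of $\mu_i,\alpha_i,\beta_i,\gamma_i$, and $|S|\le K$, summation gives
\[
\Bigl|\mu - \sum_{j\in S}\hat u_{tj}\Bigr|,\ \Bigl|\alpha - \sum_{j\in S}\hat u_{tj}x_{tj}\Bigr|,\ \Bigl|\beta - \sum_{j\in S}\hat u_{tj}x_{tj}^2\Bigr|,\ \Bigl|\gamma - \sum_{j\in S}\hat u_{tj}r_{tj}\Bigr| \ \le\ K\Delta/2 \ =\ \epsilon_0 U/2.
\]
This is the only ``quantization'' input we need; everything else is just careful manipulation of ratios. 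WLOG we may assume $\epsilon_0\le 1$, since otherwise the stated bounds are vacuous (both $\mathrm{ESTR}$ and $\mathrm{CI}$ live in bounded ranges).

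Next I would establish a uniform lower bound on the denominators. Choosing any $j^\star\in S$ with $\hat u_{tj^\star}=\max_{j\in S}\hat u_{tj}$ gives $1+\sum_{j\in S}\hat u_{tj}\ge U$, and together with the quantization bound $|\mu-\sum\hat u_{tj}|\le \epsilon_0 U/2$ this yields $1+\mu\ge U(1-\epsilon_0/2)\ge U/2$. Hence every division by $1+\mu$ or $1+\sum\hat u_{tj}$ inflates error by at most $2/U$.

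For the first inequality, write the standard ratio identity
\[
\frac{\gamma}{1+\mu}-\frac{\sum r_{tj}\hat u_{tj}}{1+\sum\hat u_{tj}}
\ =\ \frac{(\gamma-\sum r_{tj}\hat u_{tj})}{1+\mu} + \frac{\sum r_{tj}\hat u_{tj}}{1+\sum\hat u_{tj}}\cdot\frac{\sum\hat u_{tj}-\mu}{1+\mu},
\]
bound each piece using the quantization estimate and the denominator lower bound (noting $\sum r_{tj}\hat u_{tj}/(1+\sum\hat u_{tj})\le 1$), and collect constants to obtain $\le 6\epsilon_0$.

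For the second inequality, the idea is to apply the elementary bound $|\sqrt{a}-\sqrt{b}|\le\sqrt{|a-b|}$ (valid for $a,b\ge 0$, which is why the $\max\{0,\cdot\}$ clip in $\hat{\mathrm{CI}}$ is harmless). It remains to bound the ``variance gap''
\[
\Bigl|\Bigl(\tfrac{\beta}{1+\mu}-\tfrac{B}{1+M}\Bigr)-\Bigl(\tfrac{\alpha^2}{(1+\mu)^2}-\tfrac{A^2}{(1+M)^2}\Bigr)\Bigr|,
\]
where $A,B,M$ denote the continuous analogues of $\alpha,\beta,\mu$. The first parenthesis is handled exactly as in the ESTR step, using $B/(1+M)\le\nu^2$. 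The second uses the factorization $x^2-y^2=(x-y)(x+y)$ with $x=\alpha/(1+\mu)$, $y=A/(1+M)$; since both quantities lie in $[-\nu-O(\epsilon_0),\nu+O(\epsilon_0)]$, the ``$+$'' factor is $O(\nu)$, while the ``$-$'' factor is bounded by $O((1+\nu)\epsilon_0)$ via the same ratio identity. Adding gives a variance gap of $O((1+\nu)^2\epsilon_0)$, and taking the square root yields the advertised $\sqrt{24\epsilon_0}(1+\nu)$ after tracking constants.

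The main obstacle is purely bookkeeping: making sure every $(1+\mu)$ versus $(1+M)$ replacement is accompanied by the correct $O(\epsilon_0/U)$ perturbation, and that the constants accumulate to precisely $6$ and $\sqrt{24}(1+\nu)$. No conceptual difficulty arises — the argument is a first-order sensitivity calculation — but one must be patient with the algebra of the two-stage decomposition in the CI estimate so that no extra factor of $K$ or $\nu$ leaks in through the denominator lower bound.
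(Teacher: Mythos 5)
Your proposal is correct and follows essentially the same route as the paper's proof: per-item rounding errors of order $\Delta$ summed over $|S|\leq K$, the denominator lower bound $1+\sum_{j\in S}\hat u_{tj}\geq U$ from the maximum-utility item, a ratio-perturbation identity for the $\mathrm{ESTR}$ term, and for $\mathrm{CI}$ the reduction via $|\sqrt{a}-\sqrt{b}|\leq\sqrt{|a-b|}$ to a variance-gap bound handled by a difference-of-squares factorization. The paper merely packages the same computation slightly differently --- writing the discretized variance exactly as $(D+\tau_4)-(C+\tau_3)^2$ with $|\tau_3|\lesssim\epsilon_0(1+\nu)$, $|\tau_4|\lesssim\epsilon_0(1+\nu^2)$, and treating the $\max\{0,\cdot\}$ clip by a two-case analysis rather than your (equally valid) one-line remark --- so the two arguments are the same up to bookkeeping, and your constants indeed close with room to spare.
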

}

\begin{proof}
	We first prove the upper bound on $|\mathrm{ESTR}(S)-\hat{\mathrm{ESTR}(S)}|$, which is
	\begin{equation}
	\left|\frac{\sum_{i\in S}\hat u_{ti}r_{ti}}{1+\sum_{i\in S}\hat u_{ti}} - \frac{\sum_{i\in S}\gamma_i}{1+\sum_{i\in S}\mu_i}\right| \leq 6\epsilon_0,
	\label{eq:approx-estr}
	\end{equation}
	where $\mu_i = [\hat u_{ti}/\Delta]\cdot \Delta$, $\gamma_i = [\hat u_{ti}r_{ti}/\Delta]\cdot \Delta$.
	
	Denote $A := \sum_{i\in S}\hat u_{ti}r_{ti}$ and $B := 1 + \sum_{i\in S}\hat u_{ti}$.
	Because $r_{ti}\leq 1$, we have $A\leq B$.
	Let also $\tau_1 := \sum_{i\in S}\gamma_i - A$ and $\tau_2 := 1 + \sum_{i\in S}\mu_i - B$.
	Because $\max\{|\gamma_i - \hat u_{ti}r_{ti}|, |\mu_i-\hat u_{ti}|\} \leq \Delta$,
	we have $\max\{|\tau_1|,|\tau_2|\}\leq \Delta\cdot K$.
	Subsequently,
	\begin{align*}
	\left|\frac{\sum_{i\in S}\hat u_{ti}r_{ti}}{1+\sum_{i\in S}\hat u_{ti}} - \frac{\sum_{i\in S}\gamma_i}{1+\sum_{i\in S}\mu_i}\right|
	&= \left|\frac{A}{B} - \frac{A+\tau_1}{B+\tau_2}\right|
	= \left|\frac{A\tau_2-B\tau_1}{B(B+\tau_2)}\right|
	= \left|\frac{A\tau_2-B\tau_2+B\tau_2-B\tau_1}{B(B+\tau_2)}\right|\\
	&\leq \left|\frac{(A-B)\tau_2}{B(B+\tau_2)}\right| + \frac{|\tau_1|+|\tau_2|}{B-|\tau_2|}
	\leq \frac{|\tau_1|+2|\tau_2|}{B-|\tau_2|},
	\end{align*}
	where the last inequality holds because $A\leq B$.
	Using $B=1+\sum_{i\in S}\hat u_{ti} \geq 1 + \hat{u}_{tq} \geq U$ (since $q \in S$ and $U = \max\{1, u_{tq}\}$, and $\max\{|\tau_1|,|\tau_2|\}\leq \Delta\cdot K = \epsilon_0 U$, we have
	$$
	\frac{|\tau_1|+2|\tau_2|}{B-|\tau_2|} \leq \frac{3 \epsilon_0 U}{U - \epsilon_0 U} \leq 6\epsilon_0,
	$$
	provided that $\epsilon_0\in(0,1/2]$.
	Eq.~(\ref{eq:approx-estr}) is thus proved.
	
	We next prove the upper bound on $|\mathrm{CI}(S)-\hat{\mathrm{CI}}(S)|$, which is
	\begin{equation}
	\left|\sqrt{\frac{\sum_{i\in S}\hat u_{ti}x_{it}^2}{1+\sum_{i\in S}\hat u_{ti}}-\left(\frac{\sum_{i\in S}\hat u_{ti}x_{ti}}{1+\sum_{i\in S}\hat u_{ti}}\right)^2} - \sqrt[*]{\frac{\sum_{i\in S}\beta_i}{1+\sum_{i\in S}\mu_i} - \left(\frac{\sum_{i\in S}\alpha_i}{1+\sum_{i\in S}u_i}\right)^2}\right| \leq  \sqrt{24 \epsilon_0} (1 +\nu),
	\label{eq:approx-ci}
	\end{equation}
	where $\sqrt[*]{\cdot} = \sqrt{\max\{0,\cdot\}}$, $\mu_i = [\hat u_{ti}/\Delta]\cdot \Delta$, $\alpha_i = [\hat u_{ti}x_{ti}/\Delta]\cdot \Delta$,
	$\beta_i = [\hat u_{ti}x_{ti}^2/\Delta]\cdot \Delta$.
	
	Denote $C := \frac{\sum_{i\in S}\hat u_{ti}x_{ti}}{1+\sum_{i\in S}\hat u_{ti}}$ and $D := \frac{\sum_{i\in S}\hat u_{ti}x_{ti}^2}{1+\sum_{i\in S}\hat u_{ti}}$.
	Because $|x_{ti}|\leq\nu$ for all $t$ and $i$, we have $C\in[-\nu,\nu]$ and $D\in[0,\nu^2]$.
	Denote also $\tau_3 := \frac{\sum_{i\in S}\alpha_i}{1+\sum_{i\in S}\mu_i}-C$ and $\tau_4 := \frac{\sum_{i\in S}\beta_i}{1+\sum_{i\in S}\mu_i} - D$.
	Using the same analysis as in the proof of Eq.~(\ref{eq:approx-estr}), we have $|\tau_3| \leq 6 \epsilon_0 (1+\nu)$ and $|\tau_4| \leq 6 \epsilon_0 (1+\nu^2)$.
	
	With the definitions of $C$, $D$, $\tau_3$ and $\tau_4$, the left-hand side of Eq.~(\ref{eq:approx-ci}) can be re-written as
	\begin{equation}
	\big|\sqrt{D-C^2} - \sqrt[*]{(D+\tau_4) - (C+\tau_3)^2}\big|.
	\label{eq:approx-ci-ref}
	\end{equation}
	
	\paragraph{Case 1: $D-C^2>-(\tau_4-2\tau_3C-\tau_3^2)$.}
	In this case, we have
	\begin{align*}
	\text{Eq.~(\ref{eq:approx-ci-ref})}
	&= \frac{|\tau_4-2\tau_3C-\tau_3^2|}{\sqrt{D-C^2} + \sqrt{D-C^2+(\tau_4-2\tau_3 C-\tau_3^2)}}\leq \sqrt{|\tau_4-2\tau_3 C-\tau_3^2|}\\
	&\leq \sqrt{6\epsilon_0 (1 + \nu^2) + 2 \cdot 6\epsilon_0 (1 + \nu)^2 + 6\epsilon_0 (1+\nu)} \leq \sqrt{24 \epsilon_0} (1 +\nu).
	\end{align*}
	
	\paragraph{Case 2: $D-C^2\leq -(\tau_4-2\tau_3C-\tau_3^2)$.}
	In this case, we have $(D+\tau_4)-(C+\tau_3)^2\leq 0$ and subsequently
	\begin{equation*}
	\text{Eq.~(\ref{eq:approx-ci-ref})}
	= \sqrt{D-C^2} \leq  \sqrt{|\tau_4-2\tau_3 C-\tau_3^2|} \leq  \sqrt{24 \epsilon_0} (1 +\nu).
	\end{equation*}
	
	Combining both cases we prove Eq.~(\ref{eq:approx-ci}).
\end{proof}

\subsection{Proof of Lemma \ref{lem:approx-md}}

{
	\renewcommand{\thelemma}{\ref{lem:approx-md}}
	\begin{lemma}[restated]
		Suppose there exists $\ell\in[L]$ such that $\langle y^{(\ell)},y^*\rangle \geq 1/\alpha$ for some $\alpha\geq 1$ in Algorithm \ref{alg:approx-md}, then
		$\mathrm{ESTR}(\hat S^{(\ell)}) + \min\{1,\alpha\omega\cdot\mathrm{CI}(\hat S^{(\ell)})\}+\varepsilon \geq \mathrm{ESTR}(S^*)+\min\{1,\omega\cdot\mathrm{CI}(S^*)\}$,
		where $\varepsilon>0$ is the approximation parameter of the univariate problem instances.
	\end{lemma}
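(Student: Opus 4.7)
The plan is to prove the lemma by (i) interpreting $\mathrm{CI}(S)^2$ as the operator norm of a PSD matrix $M(S)$, (ii) applying the univariate $(1,\varepsilon,0)$-guarantee of Algorithm \ref{alg:approx-1d} to the reduced problem in direction $y^{(\ell)}$, and (iii) chaining two elementary inequalities that relate the reduced ``quadratic-form'' CI, $\mathrm{CI}_y(S) := \sqrt{y^\top M(S) y}$, to the full operator-norm CI $\mathrm{CI}(S)$.

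First I would fix notation: write $\mathrm{CI}(S)^2 = \|M(S)\|_\op$ for the PSD matrix $M(S)$ appearing inside the operator norm in Eq.~(\ref{eq:comb-opt}), and let $y^* \in \mathbb{R}^d$ be a unit-norm leading eigenvector of $M(S^*)$. The substitution $x_{tj}\mapsto\langle y, x_{tj}\rangle$ in Eq.~(\ref{eq:comb-opt}) turns $\mathrm{CI}(S)$ into the scalar $\mathrm{CI}_y(S) = \sqrt{y^\top M(S) y}$, so the univariate problem solved by Algorithm~\ref{alg:approx-md} in direction $y^{(\ell)}$ has objective $\mathrm{ESTR}(S) + \min\{1,\alpha\omega\cdot\mathrm{CI}_{y^{(\ell)}}(S)\}$ (we invoke the univariate routine with the scaled confidence parameter $\alpha\omega$). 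The univariate $(1,\varepsilon,0)$-guarantee, evaluated against the feasible comparison point $S^*$, yields
\begin{equation*}
\mathrm{ESTR}(\hat S^{(\ell)}) + \min\{1,\alpha\omega\cdot\mathrm{CI}_{y^{(\ell)}}(\hat S^{(\ell)})\} + \varepsilon
\;\geq\; \mathrm{ESTR}(S^*) + \min\{1,\alpha\omega\cdot\mathrm{CI}_{y^{(\ell)}}(S^*)\}.
\end{equation*}

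Next I would upgrade the two CI terms separately. On the left-hand side, for any unit vector $y$ and any $S$, $y^\top M(S) y \leq \|M(S)\|_\op$, hence $\mathrm{CI}_{y^{(\ell)}}(\hat S^{(\ell)}) \leq \mathrm{CI}(\hat S^{(\ell)})$ and therefore $\min\{1,\alpha\omega\cdot\mathrm{CI}_{y^{(\ell)}}(\hat S^{(\ell)})\} \leq \min\{1,\alpha\omega\cdot\mathrm{CI}(\hat S^{(\ell)})\}$. On the right-hand side, using $M(S^*)\succeq \mathrm{CI}(S^*)^2\,y^*(y^*)^\top$ (the rank-one piece in the spectral decomposition of a PSD matrix), the assumption $\langle y^{(\ell)}, y^*\rangle \geq 1/\alpha$ gives
\begin{equation*}
\mathrm{CI}_{y^{(\ell)}}(S^*)^2 \;=\; (y^{(\ell)})^\top M(S^*) y^{(\ell)} \;\geq\; \mathrm{CI}(S^*)^2\,\langle y^{(\ell)}, y^*\rangle^2 \;\geq\; \mathrm{CI}(S^*)^2/\alpha^2,
\end{equation*}
so $\alpha\omega\cdot\mathrm{CI}_{y^{(\ell)}}(S^*) \geq \omega\cdot\mathrm{CI}(S^*)$ and consequently $\min\{1,\alpha\omega\cdot\mathrm{CI}_{y^{(\ell)}}(S^*)\} \geq \min\{1,\omega\cdot\mathrm{CI}(S^*)\}$. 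Substituting both bounds into the displayed inequality produces exactly the claim.

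All steps are elementary once the identification $\mathrm{CI}(S) = \max_{\|y\|=1}\mathrm{CI}_y(S)$ is in place. The only subtlety to get right is the \emph{asymmetric} use of the alignment hypothesis $\langle y^{(\ell)}, y^*\rangle \geq 1/\alpha$: it is needed only on the comparison side to convert $\mathrm{CI}_{y^{(\ell)}}(S^*)$ back into the full $\mathrm{CI}(S^*)$, whereas the trivial direction $\mathrm{CI}_y(S)\leq\mathrm{CI}(S)$ suffices on the algorithm's output side. This asymmetry is precisely what forces the extra factor $\alpha$ into $\min\{1,\alpha\omega\cdot\mathrm{CI}(\hat S^{(\ell)})\}$ while leaving the target with the original $\omega$.
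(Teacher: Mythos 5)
Your proof is correct and follows essentially the same route as the paper's: apply the univariate $(1,\varepsilon,0)$-guarantee in direction $y^{(\ell)}$ against the feasible comparison point $S^*$, use the trivial bound $(y^{(\ell)})^\top M(\hat S^{(\ell)})\, y^{(\ell)} \leq \|M(\hat S^{(\ell)})\|_\op$ on the output side, and use the alignment hypothesis $\langle y^{(\ell)},y^*\rangle\geq 1/\alpha$ only on the comparison side to get $\mathrm{CI}_{y^{(\ell)}}(S^*)\geq \mathrm{CI}(S^*)/\alpha$, which is exactly the paper's three-inequality chain. Your write-up is if anything slightly more careful than the paper's, which drops the square root in its displayed definition of $\mathrm{CI}^{(\ell)}(S)$ (while using it as you do) and does not spell out the rank-one spectral bound $M(S^*)\succeq \mathrm{CI}(S^*)^2\, y^*(y^*)^\top$ that justifies the comparison-side step.
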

}

\begin{proof}	
	For each assortment $S$, define $\mathrm{CI}^{(\ell)} (S)$ by
	\[
	\mathrm{CI}^{(\ell)} (S) :=
	{y^{(\ell)}}^\top \left(\frac{\sum_{j\in S}\hat u_{tj}x_{tj}x_{tj}^\top}{1+\sum_{j\in S}\hat u_{tj}} - \left(\frac{\sum_{j\in S}\hat u_{tj}x_{tj}}{1+\sum_{j\in S}\hat u_{tj}}\right)\left(\frac{\sum_{j\in S}\hat u_{tj}x_{tj}}{1+\sum_{j\in S}\hat u_{tj}}\right)^\top\right) y^{(\ell)},
	\]
	Since $\mathrm{CI}^{(\ell)} (S) \leq \mathrm{CI} (S)$, we have
	\begin{equation}\label{eq:approx-md-eq1}
	\mathrm{ESTR}(\hat S^{(\ell)}) + \min\{1,\alpha\omega\cdot\mathrm{CI}(\hat S^{(\ell)})\}+\varepsilon  \geq \mathrm{ESTR}(\hat S^{(\ell)}) + \min\{1,\alpha \omega\cdot\mathrm{CI}^{(\ell)}(\hat S^{(\ell)})\}+\varepsilon .
	\end{equation}
	By the approximation guarantee of Algorithm \ref{alg:approx-1d}, we have
	\begin{equation}\label{eq:approx-md-eq2}
	\mathrm{ESTR}(\hat S^{(\ell)}) + \min\{1,\alpha\omega\cdot\mathrm{CI}^{(\ell)}(\hat S^{(\ell)})\}+\varepsilon \geq \mathrm{ESTR}( S^*) + \min\{1,\alpha\omega\cdot\mathrm{CI}^{(\ell)}( S^*)\} .
	\end{equation}
	Since $\langle y^{(\ell)},y^*\rangle \geq 1/\alpha$, we have $\mathrm{CI}^{(\ell)} (S^*) \geq (1/\alpha) \cdot\mathrm{CI}(S^*)$. Therefore,
	\begin{equation}\label{eq:approx-md-eq3}
	\mathrm{ESTR}( S^*) + \min\{1,\alpha\omega\cdot\mathrm{CI}^{(\ell)}( S^*)\} \geq \mathrm{ESTR}( S^*) + \min\{1,\omega\cdot\mathrm{CI}( S^*)\} .
	\end{equation}
	
	The lemma is proved by combining Eq.~\eqref{eq:approx-md-eq1}, Eq.~\eqref{eq:approx-md-eq2}, and Eq.~\eqref{eq:approx-md-eq3}.
\end{proof}

\subsection{Proof of Proposition \ref{prop:greedy-omega-zero}}
{
	\renewcommand{\theproposition}{\ref{prop:greedy-omega-zero}}
	\begin{proposition}[restated]
		If $\omega=0$, then Algorithm \ref{alg:greedy} terminates in $O(N^4)$ iterations and produces an output $S$
		that maximizes $\mathrm{ESTR}(S)$.
	\end{proposition}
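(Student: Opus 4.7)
The plan is to separate the two claims. When $\omega=0$ the objective collapses to $\mathrm{ESTR}(S)$, so I will first show that any local optimum of Algorithm \ref{alg:greedy} is automatically a global optimum via the classical parametric reformulation for MNL-type ratio maximization, and then extract the iteration bound from the strict monotonicity of $\mathrm{ESTR}(S)$ along the trajectory.

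For the correctness step, I use the standard equivalence that $\mathrm{ESTR}(S')\ge\lambda$ if and only if $\sum_{j\in S'}\hat u_{tj}(r_{tj}-\lambda)\ge\lambda$. Defining $g(\lambda):=\max_{|S'|\le K}\sum_{j\in S'}\hat u_{tj}(r_{tj}-\lambda)$, this equivalence shows that $\lambda^\star:=\max_{|S'|\le K}\mathrm{ESTR}(S')$ is the unique fixed point of the strictly decreasing piecewise-linear function $g$, so it suffices to prove that at termination, the value $\lambda:=\mathrm{ESTR}(S)$ and the set $S$ together satisfy $g(\lambda)=\lambda$ with $S$ attaining the maximum on the right-hand side.

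The key to the argument is reading off three families of exchange inequalities from the local optimality of $S$: (i) from the non-improving addition step when $|S|<K$, $\hat u_{ti}(r_{ti}-\lambda)\le 0$ for every $i\notin S$; (ii) from the non-improving deletion step when $|S|>1$, $\hat u_{tj}(r_{tj}-\lambda)\ge 0$ for every $j\in S$; and (iii) from the non-improving swap step, $\hat u_{ti}(r_{ti}-\lambda)\le \hat u_{tj}(r_{tj}-\lambda)$ for every pair $i\notin S$, $j\in S$. Each inequality follows by substituting $\mathrm{ESTR}(S_{\mathrm{new}})\le\mathrm{ESTR}(S)=\lambda$ into the equivalence above. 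Combining (i)--(iii), with the $|S|=1$ corner handled separately by using that the identity $\mathrm{ESTR}(S)=\lambda$ already yields $\hat u_{tj_0}(r_{tj_0}-\lambda)=\lambda\ge 0$, a short case split on whether $|S|=K$ shows that $S$ is a top-$K$ subset under the scores $\hat u_{tj}(r_{tj}-\lambda)$, and therefore maximizes $\sum_{j\in S'}\hat u_{tj}(r_{tj}-\lambda)$ over $|S'|\le K$. So $g(\lambda)=\sum_{j\in S}\hat u_{tj}(r_{tj}-\lambda)=\lambda$, and uniqueness of the fixed point forces $\lambda=\lambda^\star$.

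For the iteration bound, note that each outer iteration strictly increases $\mathrm{ESTR}(S)$, so no assortment on the trajectory is revisited. To refine the naive $O(N^K)$ bound into the claimed $O(N^4)$, I would decompose the trajectory by the evolution of $|S|$, which changes $O(N)$ times overall, and bound the swap moves between consecutive size changes using that the scores $\hat u_{tj}(r_{tj}-\lambda)$ are linear in $\lambda$ while $\lambda$ is monotonically increasing. The hard part of the plan is this last counting step, since a priori an item can leave and re-enter $S$ many times along the run; I would control this by tracking, for each ordered pair $(i,j)$, the interval of $\lambda$ on which the swap $S\mapsto S\cup\{i\}\setminus\{j\}$ can be strictly improving, and then using monotonicity to show that each of the $O(N^2)$ such pairs contributes $O(N^2)$ swap events, yielding $O(N^4)$ iterations in total.
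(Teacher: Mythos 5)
Your correctness argument is sound and is essentially the same mechanism as the paper's: both rest on the linearization $\mathrm{ESTR}(S')\geq\lambda \iff \sum_{j\in S'}\hat u_{tj}(r_{tj}-\lambda)\geq\lambda$. The paper argues the contrapositive (if some $S^\#$ beats $S$, then comparing $\sum_{i\in S^\#}\hat u_{ti}(r_{ti}-r)>r$ with $\sum_{i\in S}\hat u_{ti}(r_{ti}-r)=r$ yields an improving swap/addition/deletion), while you verify at termination, via the exchange inequalities (i)--(iii), that $S$ is a top-$K$ set under the scores $\hat u_{tj}(r_{tj}-\lambda)$ and then invoke uniqueness of the fixed point of $g$. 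That packaging is fine (one nit: $g$ is only non-increasing because of the empty set, but $g(\lambda)-\lambda$ is strictly decreasing, which suffices), and your handling of the $|S|=1$ and $|S|=K$ corners is correct.

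The iteration bound, however, has a genuine gap, which you yourself flag. Two of your counting claims do not hold as stated. First, ``$|S|$ changes $O(N)$ times overall'' is unjustified: each deletion strictly increases $\lambda$, but that alone gives no polynomial bound on how many deletions occur across the run. Second, the claim that each ordered pair $(i,j)$ contributes $O(N^2)$ swap events does not follow from your half-line observation. It is true that the improving condition $\hat u_{ti}(r_{ti}-\lambda)>\hat u_{tj}(r_{tj}-\lambda)$ defines a half-line in $\lambda$, so each pair's ``improving'' status flips at most once as $\lambda$ increases; but while $\lambda$ stays inside that half-line, the same pair can in principle be executed arbitrarily often, since membership churn (other swaps, additions, deletions) can repeatedly restore the configuration $i\notin S$, $j\in S$. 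Strict monotonicity of $\mathrm{ESTR}(S)$ only forbids revisiting the same assortment, which gives a bound exponential in $K$, not $O(N^4)$. The missing idea in the paper is a combination of (a) the $O(N^2)$ section points of $\lambda$, between which the total order induced by the scores $\hat u_{ti}(r_{ti}-\lambda)$ --- including a sentinel element $\bot$ with score $0$ --- is \emph{constant}, and (b) within each such interval, the potential function $F(S)=I(S)+(2N+1)J(S)$, where $I(S)$ counts order inversions between $S$ and its complement and $J(S)$ counts negative-score items in $S$: swaps strictly decrease $I$ without increasing $J$, deletions strictly decrease $J$ at a cost of at most $N$ in $I$, and additions cost at most $N$ in $I$, yielding $O(N^2)$ moves per interval and $O(N^4)$ in total. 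Without some amortized device of this kind --- crucially, one that controls swaps, additions, and deletions jointly under a fixed score order --- your per-pair accounting cannot be closed.
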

}

\begin{proof}
	We first show that when the algorithm terminates with $\mathrm{ESTR}(S) = r$, $S$ is one of the optimal assortments. Suppose $S$ is not an optimal assortment, i.e.\ there exists $S^\#$ such that $\mathrm{ESTR}(S^\#) > r$, we show that the algorithm will not terminate. By the definition of $\mathrm{ESTR}(\cdot)$ we have $\sum_{i \in S^\#} \hat{u}_{ti} (r_{ti} - r) > r$ and  $\sum_{i \in S} \hat{u}_{ti} (r_{ti} - r) = r$. By comparing $S^\#$ and $S$, one can find a new candidate assortment $S'$ via swapping, adding, or deleting an item from/to $S$ such that $\sum_{i \in S'} \hat{u}_{ti} (r_{ti} - r) > r$. Therefore, $\mathrm{ESTR}(S') > r$ and the algorithm will not terminate.
	
	It remains to show that the algorithm terminates in $O(N^4)$ iterations.
	
	For each $r \in [0, 1]$, we define a total order $\geq_r$ on $[N] \cup \{\bot\}$, where $[N]$ corresponds to the $N$ items and $\bot$ is a special element with the definition $\hat{u}_{t\bot} = r_{t\bot} = 0$ for convenience, as follows: $i \geq_r j$ if and only if $\hat{u}_{ti} (r_{ti} - r) \geq \hat{u}_{tj} (r_{tj} - r)$ (and consequently  $i >_r j$ if and only if $\hat{u}_{ti} (r_{ti} - r) > \hat{u}_{tj} (r_{tj} - r)$). It is straightforward to verify that there exists $O(N^2)$ section points $\theta_0 < \theta_1 < \theta_2 < \dots < \theta_{L - 1} < \theta_L = 1$ so that for any two $r_1, r_2$ that sandwiched by the same pair of neighboring section points (i.e.\ $\exists \ell \in [L]: r_1, r_2 \in (\theta_{\ell - 1}, \theta_\ell)$), we have $\geq_{r_1} \equiv \geq_{r_2}$. Indeed, one can set the section points to be the solutions to the equalities $\hat{u}_{ti} (r_{ti} - r) = \hat{u}_{tj} (r_{tj} - r)$ for every pair of $i, j \in [N] \cup \{\bot\}$.
	
	We will show that if $\mathrm{ESTR}(S) \in (\theta_{\ell-1}, \theta_\ell)$ for some $\ell \in [L]$, after at most $O(N^2)$ iterations, either the algorithm terminates or $\mathrm{ESTR}(S) \geq \theta_{\ell}$. This directly leads to an $O(N^4)$ upper bound on the total number of iterations that the algorithm performs. We pick an arbitrary $r \in (\theta_{\ell-1}, \theta_\ell)$ and define the following two potential functions: $I(S) = |\{(i, j): i \in S, j \in [N] \backslash S, i <_r j \}|$, and $J(S) = |\{i \in S: i <_r \bot\}|$. We have the following observations:
	\begin{itemize}
		\item When a swapping operation is performed on $S$, $I(S)$ strictly decreases and $J(S)$ does not increase.
		\item When a deletion operation is performed on $S$, $I(S)$ increases by at most $N$ and $J(S)$ strictly decreases.
		\item When an addition operation is performed on $S$, $I(S)$ increases by at most $N$ and $J(S)$ does not increase.
	\end{itemize}
	We let $F(S) = I(S) + (2N+1) \cdot J(S) \in [0, O(N^2)]$. Suppose there are $a$ swapping operations, $b$ deletion operations, and $c$ addition operations done in total, $S$ is the assortment that the algorithm begins with and $T$ is the last assortment satisfying $\mathrm{ESTR}(T) < \theta_\ell$. Observe that there are at most $c \leq b + N$ addition operations. Together with the three observations above, we have
	\begin{multline*}
	0 \leq F(T) \leq F(S) - a + N b - (2N + 1) b + N c \leq F(S) - a + N b - (2N + 1) b + N (b + N) \\
	\leq F(S) + N^2 - a - b \leq O(N^2) - a - b .
	\end{multline*}
	In total, we have $a + b \leq O(N^2)$. Therefore, the total number of iterations where $\mathrm{ESTR}(S) \in (\theta_{\ell -1}, \theta_\ell)$ is $a + b + c \leq a + 2b + N \leq O(N^2)$.
\end{proof}

\subsection{Proof of Proposition \ref{prop:approx-init}}
{
	\renewcommand{\theproposition}{\ref{prop:approx-init}}
	\begin{proposition}[restated]
		Assume that $d \geq 2$. Let $y^*\in\mathbb R^d$, $\|y^*\|_2=1$ be fixed and $y$ be sampled uniformly at random from the unit $d$-dimensional sphere.
		Then
		\begin{equation*}
		\Pr[\langle y,y^*\rangle \geq 1/\sqrt{d}] = \Omega(1) \;\;\;\;\;\text{and}\;\;\;\;\; \Pr[\langle y,y^*\rangle \geq 1/2] = \exp\{-O(d)\}.
		\end{equation*}
	\end{proposition}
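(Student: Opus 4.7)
The plan is to reduce both tail bounds to a one-dimensional calculation via rotational invariance, and then exploit the Gaussian representation of the uniform sphere measure to decouple $Z_1$ from the radial norm $\|Z\|$.

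First, by rotational invariance of the uniform distribution on the $(d-1)$-sphere, I may assume without loss of generality that $y^{*}=e_1$, so that $\langle y,y^{*}\rangle=y_1$. Letting $Z=(Z_1,\ldots,Z_d)\sim N(0,I_d)$, we may represent $y=Z/\|Z\|$ and thus $y_1=Z_1/\|Z\|$. The key observation used throughout is that $Z_1$ and $W:=Z_2^2+\cdots+Z_d^2\sim\chi^2_{d-1}$ are independent random variables, which lets me analyze the numerator $Z_1$ and the denominator $\|Z\|^2=Z_1^2+W$ essentially separately.

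For the $\Omega(1)$ bound, I would consider the event $\mathcal E_1:=\{\sqrt{3}\leq Z_1\leq 2\}\cap\{W\leq 2(d-1)\}$. By independence, $\Pr[\mathcal E_1]=\Pr[\sqrt{3}\leq Z_1\leq 2]\cdot\Pr[W\leq 2(d-1)]$. The first factor equals $\Phi(2)-\Phi(\sqrt{3})>0$ and is an absolute constant, while the second is at least $1/2$ by Markov's inequality applied to $W$ (which has mean $d-1$). On $\mathcal E_1$, $\|Z\|^2=Z_1^2+W\leq 4+2(d-1)=2d+2\leq 3d$ since $d\geq 2$, so $y_1\geq\sqrt{3}/\sqrt{3d}=1/\sqrt{d}$. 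Hence $\Pr[y_1\geq 1/\sqrt{d}]\geq\Pr[\mathcal E_1]=\Omega(1)$.

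For the $\exp\{-O(d)\}$ bound, I would consider $\mathcal E_2:=\{Z_1\geq\sqrt{d}\}\cap\{W\leq 3d\}$. By the standard Mills-ratio lower bound, $\Pr[Z_1\geq\sqrt{d}]\geq\frac{1}{\sqrt{2\pi}}\cdot\frac{\sqrt{d}}{d+1}e^{-d/2}=\exp\{-d/2-O(\log d)\}$. Markov's inequality again gives $\Pr[W\leq 3d]\geq\Pr[W\leq 2(d-1)]\geq 1/2$, so by independence $\Pr[\mathcal E_2]\geq\exp\{-Cd\}$ for a universal constant $C$. On $\mathcal E_2$, $Z_1^2\geq d$ implies $\|Z\|^2=Z_1^2+W\leq Z_1^2+3d\leq 4Z_1^2$, from which $y_1=Z_1/\|Z\|\geq 1/2$, yielding $\Pr[y_1\geq 1/2]\geq\exp\{-O(d)\}$.

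I do not anticipate any substantive obstacle: the Gaussian representation cleanly separates the tail behavior of $Z_1$ from the concentration of the remaining coordinates, and Markov's inequality together with the elementary Mills-ratio bound suffice for all the estimates. The only care required is to choose the numerical thresholds (such as $\sqrt{3}$ and $2$ in the first bound, and $\sqrt{d}$ and $3d$ in the second) so that the deterministic inequality $\|Z\|^2\leq(\mathrm{const})\cdot d$ that holds on the good event produces the desired lower bound on $y_1=Z_1/\|Z\|$.
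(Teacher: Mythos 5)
Your proof is correct and takes essentially the same route as the paper's: both reduce to $y^*=e_1$ via rotational invariance, represent $y=Z/\|Z\|_2$ with $Z\sim N(0,I_d)$, exploit the independence of $Z_1$ and $W=Z_2^2+\cdots+Z_d^2$, and lower-bound the probability of a good event on which a deterministic inequality ($\|Z\|_2^2\leq 3d$ in the first case, $\|Z\|_2^2\leq 4Z_1^2$ in the second) forces $\langle y,y^*\rangle\geq 1/\sqrt{d}$ or $\geq 1/2$. The only differences are cosmetic threshold choices and your explicit invocation of Markov's inequality and the Mills-ratio bound where the paper simply asserts the corresponding $\Omega(1)$ and $\exp\{-O(d)\}$ estimates.
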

}
\begin{proof}
	Assume without loss of generality that $y^* = (1, 0, 0, \dots, 0)$, and let $y$ be sampled as follows. Sample $z_i \sim N(0, 1)$ independently for each $i \in [d]$, and let $y = z / \|z\|_2$. Now, $\langle y, y^* \rangle = z_1 /  \|z\|_2$.
	
	We first prove $\Pr[\langle y,y^*\rangle \geq 1/\sqrt{d}] = \Pr[z_1 / \|z\|_2 \geq 1 / \sqrt{d}] = \Omega(1)$. Note that when $z_1 \geq 10$ and $ \sqrt{z_2^2 + \dots + z_d^2} \leq 5\sqrt{d}$, we have $z_1 / \|z\|_2 = 1 / \sqrt{1 + (z_2^2 + \dots + z_d^2)/z_1^2} \geq 1 / \sqrt{1 + (5 \sqrt{d})^2 / 10^2} \geq 1 / \sqrt{d}$, where the last inequality holds for $d \geq 2$. Therefore,
	\begin{align*}
	\Pr[z_1 / \|z\|_2  \geq 1/\sqrt{d}] &\geq \Pr\left[z_1 \geq 10 \wedge \sqrt{z_2^2 + \dots + z_d^2} \leq 5\sqrt{d}\right] \\
	&= \Pr[z_1 \geq 10] \cdot \Pr\left[ \sqrt{z_2^2 + \dots + z_d^2} \leq 5\sqrt{d}\right] = \Omega(1) .
	\end{align*}
	
	Now we prove $\Pr[\langle y,y^*\rangle \geq 1/2] = \Pr[z_1 / \|z\|_2 \geq 1 / 2] = \exp\{-O(d)\}$. Similarly, when $z_1 \geq 5\sqrt{d}$ and $ \sqrt{z_2^2 + \dots + z_d^2} \leq 5\sqrt{d}$, we have $z_1 / \|z\|_2 = 1 / \sqrt{1 + (z_2^2 + \dots + z_d^2)/z_1^2} \geq 1 / \sqrt{1 + 1} > 1 / 2$. Therefore,
	\begin{align*}
	\Pr[z_1 / \|z\|_2  \geq 1/2] &\geq \Pr\left[z_1 \geq 5\sqrt{d} \wedge \sqrt{z_2^2 + \dots + z_d^2} \leq 5\sqrt{d}\right] \\
	&= \Pr[z_1 \geq 5\sqrt{d}] \cdot \Pr\left[ \sqrt{z_2^2 + \dots + z_d^2} \leq 5\sqrt{d}\right] \\
	&= \exp\{-O(d)\} \cdot \Omega(1) = \exp\{-O(d)\}.
	\end{align*}
\end{proof}





\end{document}